\newtheorem{definition}{Definition}
\newtheorem{lemma}{Lemma}
\newtheorem{proposition}{Proposition}
\newcommand{\norm}[1]{\left\Vert{#1}\right\Vert}%
\newcommand{\smallnorm}[1]{\Vert{#1}\Vert}
\newcommand{\K}{\textnormal{K}}
\newcommand{\U}{\textnormal{U}}
\newcommand{\abs}[1]{\left | {#1}\right |}
\newcommand{\smallabs}[1]{| {#1} |}
\newcommand{\vecket}[1]{|{#1} \rrangle}
\newcommand{\vecbra}[1]{\llangle {#1}|}
\newcommand\numeq[1]%
\newcommand\numleq[1]%
\newcommand\numgeq[1]%
\newcommand{\wickright}{\overrightarrow{\textnormal{W}}}
\newcommand{\wickleft}{\overleftarrow{\textnormal{W}}}
\begin{document}

% Use the \preprint command to place your local institutional report
% number in the upper righthand corner of the title page in preprint mode.
% Multiple \preprint commands are allowed.
% Use the 'preprintnumbers' class option to override journal defaults
% to display numbers if necessary
%\preprint{}

%Title of paper
\title{A Lieb-Robinson bound for open quantum systems with memory}

% repeat the \author .. \affiliation  etc. as needed
% \email, \thanks, \homepage, \altaffiliation all apply to the current
% author. Explanatory text should go in the []'s, actual e-mail
% address or url should go in the {}'s for \email and \homepage.
% Please use the appropriate macro foreach each type of information

% \affiliation command applies to all authors since the last
% \affiliation command. The \affiliation command should follow the
% other information
% \affiliation can be followed by \email, \homepage, \thanks as well.
\author{Rahul Trivedi}
% \affiliation{Max-Planck-Institut für Quantenoptik, Hans-Kopfermann-Str.~1, 85748 Garching, Germany.}
% \affiliation{Munich Center for Quantum Science and Technology (MCQST), Schellingstr. 4, D-80799 Munich, Germany. }
\email{rahul.trivedi@mpq.mpg.de}
\author{Xiehang Yu}
\affiliation{Max Planck Institute of Quantum Optics, Garching bei M\"unchen --- 85748, Germany}
\author{Mark Rudner}
\affiliation{Department Physics, University of Washington, Seattle, WA -  98195, USA}

%\homepage[]{Your web page}
%\thanks{}

%Collaboration name if desired (requires use of superscriptaddress
%option in \documentclass). \noaffiliation is required (may also be
%used with the \author command).
%\collaboration can be followed by \email, \homepage, \thanks as well.
%\collaboration{}
%\noaffiliation

\date{\today}

% insert suggested keywords - APS authors don't need to do this
%\keywords{}
\begin{abstract}
    We consider a general class of spatially local non-Markovian open quantum lattice models, with a bosonic environment that is approximated as Gaussian. Under the assumption of a finite environment memory time, formalized as a finite total variation of the memory kernel, we show that these models satisfy a Lieb-Robinson bound. Our work generalizes Lieb Robinson bounds for open quantum systems, which have previously only been established in the Markovian limit. Using these bounds, we then show that these non-Markovian models can be well approximated by a larger Markovian model, which contains the system spins together with only a finite number of environment modes. In particular, we establish that as a consequence of our Lieb-Robinson bounds, the number of environment modes {per system site} needed to accurately capture local observables is independent of the size of the system.
\end{abstract}
%\maketitle must follow title, authors, abstract, and keywords
\maketitle

% body of paper here - Use proper section commands
% References should be done using the \cite, \ref, and \label commands
A lattice of quantum spins that only interact with each other locally are physically expected to have only a finite velocity at which correlations can propagate from one spin to another. This qualitative expectation is formalized in the form of Lieb-Robinson bounds \cite{chen2023speed} --- given two spatially local operators $A$ and $B$, they typically upper bound %how large s
the commutator ${[A(t), B]}$, where $A(t)$ is the observable $A$ in the Heisenberg picture at time $t$. %, can be. 
These bounds were {originally} derived for quantum spin lattices with only Hamiltonian interactions, with the initial results requiring an assumption of strict spatial locality \cite{lieb1972finite, hastings2004lieb}. Subsequent results relaxed this requirement to exponentially decaying interactions \cite{hastings2006spectral, nachtergaele2006lieb}, and more recently, to polynomially decaying interactions \cite{chen2019finite, chen2021optimal, foss2015nearly, else2020improved, eldredge2017fast, tran2020hierarchy, tran2021optimal, hong2021fast}. 
Lieb-Robinson bounds for lattice Hamiltonians have found an immense number of applications in %understanding
{establishing} quasi-locality properties of dynamics \cite{kliesch2014lieb,osborne2006efficient, bravyi2006lieb, hastings2004locality} and %understanding 
{characterizing} ground state properties for gapped Hamiltonians \cite{hastings2004locality, hastings2007area, hastings2005quasiadiabatic, hastings2010quasi, hastings2021gapped, nachtergaele2019quasi}, as well as in developing classical and quantum algorithms for simulation of lattice Hamiltonians \cite{haah2021quantum, tran2019locality}.

An important question in the study of quantum many-body systems was then to generalize Lieb-Robinson bounds to open quantum systems, where the system evolution cannot be described by a Hamiltonian. Within the Markovian approximation, the open quantum dynamics of a lattice of quantum spins can often be described by a spatially local Lindbladian \cite{breuer2002theory}. Lieb Robinson bounds for such models have also been extensively studied --- in particular, they have been developed for strictly local Lindbladians \cite{poulin2010lieb, kliesch2014lieb, barthel2012quasilocality, kliesch2011dissipative}, as well as Lindbladians with exponentially and polynomially decaying interactions \cite{sweke2019lieb}. These bounds have subsequently lent rigorous insights into many-body physics of open quantum systems.
In particular, they have been {used} %applied
to study {the} %ing
stability of Lindbladian fixed points \cite{lucia2015rapid, cubitt2015stability, brandao2015area}, and developing protocols for simulating %on 
open quantum lattice models on both digital quantum computers \cite{kliesch2011dissipative, barthel2012quasilocality} and analog quantum simulators \cite{kashyap2024accuracy}.

However, the %question 
{existence and nature} of Lieb-Robinson bounds for open quantum systems, beyond the Markovian regime, remain {important open questions}. %less well investigated. 
A key challenge in developing Lieb-Robinson bounds for non-Markovian open quantum systems is the difficulty in describing the reduced system dynamics.
In the Markovian limit, the environment can be traced out {to yield a time-local} %analytically and we can obtain a 
Lindblad master equation for the reduced state of the system. 
In the non-Markovian setting, the environment state %can 
stores the history of the system, {which can in turn influence the system's evolution} \cite{de2017dynamics, breuer2016colloquium, gribben2022using}. % and needs to be accounted for. 

%However,
{Importantly}, the environment, {which is generally taken to be large}, %itself 
can {essentially have an} %be 
infinite-dimensional {Hilbert space} % (e.g.~described by a quantum field theory in the case quantum optical systems) 
and the system-environment interaction Hamiltonian can be unbounded {(as for example when the system is coupled to a bosonic field)}. This creates a significant challenge in obtaining a Lieb-Robinson bound. Indeed, it has been recognized for quite some time that lattice models with infinite local Hilbert space dimension need not even have a Lieb-Robinson bound.
For example, it is possible to construct Hamiltonian lattice models with unbounded local Hilbert space dimension that permit supersonic propagation of correlations \cite{eisert2009supersonic}. In some cases, a Lieb-Robinson bound can be derived despite this issue --- in particular, for models of spins interacting with a lattice of linearly coupled bosonic modes \cite{junemann2013lieb}, as well as for lattice of interacting bosonic modes \cite{schuch2011information, kuwahara2021lieb, raz2009estimating, yin2022finite, kuwahara2024effective}.

In this paper, we develop a Lieb-Robinson bound for non-Markovian open quantum systems. The only assumption we make on the environment is that it, in the absence of interaction with the system, is non-interacting and can thus be described by a free quantum field. We show that if the environment has finite memory (a notion that we formalize), then the system satisfies a Lieb-Robinson bound with a linear light cone. Equivalently, we establish that the channel on the system generated by the non-Markovian dynamics is quasi-local: in finite-time, it maps local operators to  approximately local (or quasi-local) operators.

We next revisit the problem of Markovian dilations of Gaussian non-Markovian environments. For numerical simulations of non-Markovian models, as well as for theoretically {characterizing} %understanding 
the ``amount" of memory that the non-Markovian environment retains of the system, a commonly followed approach is to attempt to approximate the non-Markovian environment by a finite number of bosonic modes \cite{tamascelli2018nonperturbative, chin2010exact, pleasance2020generalized}. A standard approach to constructing these bosonic modes is the star-to-chain transformation \cite{chin2010exact} --- this transformation has been extensively studied from a theoretical standpoint, and it has been shown that it can well approximate a wide variety of non-Markovian environments \cite{trivedi2021convergence, trivedi2022description, woods2015simulating, mascherpa2017open}. However, {in} %the 
current analyses of this transformation the number of environment modes needed to approximate the state of a many-body open quantum system scales polynomially with the system size. For many-body systems that are geometrically local, it can be physically expected that to well approximate only local observables, the number of environment modes {per system site} needed should be independent of the system size and depend instead only on the evolution time since the local observable dynamics is not expected to depend on the entire many-body system. As an application of the Lieb Robinson bounds that we develop in this paper, we make this qualitative expectation precise.
\begin{figure}
    \centering
    \includegraphics[scale=0.8]{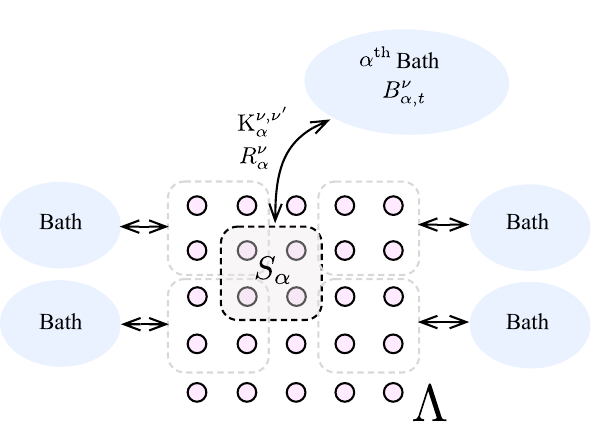}
    \caption{Schematic depiction of the non-Markovian many-body model considered in this paper. The many-body system {is} defined on a $d-$dimensional lattice, {$\Lambda$, and} interacts locally to a non-Markovian bath via the system operators $R^{\nu}_\alpha$ and memory kernels $\K^{\nu, \nu'}_\alpha$.
    }
    \label{fig:schematic}
\end{figure}

\emph{Model}. We consider a many-body system of $n$  qudits arranged on a $d-$dimensional lattice $\Lambda$. These qudits interact with each other locally either through a coherent Hamiltonian or through interaction with an environment. We will assume a Gaussian bosonic environment throughout this paper, although our conclusions would also straightforwardly apply to non-interacting fermionic baths. The dynamics of the system, in the absence of the environment, will be described by the possibly time-dependent Hamiltonian $H_\text{S}(t) = \sum_{\alpha} h_\alpha(t)$, where {each} $h_\alpha(t)$ %are
{is a} geometrically local operator supported in the region $S_\alpha \subseteq \Lambda$. The environment that we will consider will have independent baths that will locally interact with the system. 
The system-environment interaction will be described by {products of} geometrically local system operators $R_\alpha^x(t), R_\alpha^p(t)$ supported on $S_\alpha$ and operators $B^{x}_{\alpha, t}, B^p_{\alpha, t}$ (with $x, p$ labelling the two quadratures) acting on %the 
{bath $\alpha$}. %$\alpha^\text{th}$ bath that 
{The quadrature operators $B^{x}_{\alpha, t}, B^p_{\alpha, t}$} are linear {in the bath's} % its 
annihilation and creation operators. The system-environment interaction Hamiltonian, $H_\text{SE}(t)$, will thus be given by 
\begin{align}\label{eq:sys_env_hamiltonian}
H_\text{SE}(t) =  \sum_{\alpha} \sum_{\nu \in \{x, p\}} B_{\alpha, t}^\nu R^\nu_\alpha(t). 
\end{align}
%We remark that that
{Here} the time-dependent operators $B^\nu_{\alpha, t}$ %already capture the bath Hamiltonian i.e.~they 
are in the interaction picture with respect to the bath {Hamiltonian}. The full system-environment Hamiltonian %will then by 
{is} given by
\begin{align}\label{eq:full_sys_env}
    H(t) = H_\text{S}(t) + H_\text{SE}(t)
\end{align}
%We will also 
{Below we will} assume that $\norm{h_\alpha(t)}, \norm{R_\alpha^\nu(t)}\leq 1$ and $\norm{h'_\alpha(t)}, \norm{{R_\alpha^{\nu}}'(t)} < \infty$, and that there are positive constants $a_0$ (the support diameter) and $\mathcal{Z}$ (the support coordination number) such that $\text{diam}(S_\alpha) \leq a_0$ and the number of sub-regions $S_{\alpha'}$ that intersect with any one sub-region $S_\alpha$ is at-most $\mathcal{Z}$.
{These assumptions help to establish a regime where linear light cone like behavior can be guaranteed for the open system.}

Assuming the environment to {initially} be in a Gaussian state $\rho_E$ at $t = 0$, the dynamics of the system can be formulated entirely in terms of the two-point correlation functions%is governed entirely by 
\begin{subequations}
\begin{align}\label{eq:main:kernel_def}
\text{K}^{\nu, \nu'}_\alpha(t - t') = \text{Tr}(B^\nu_{\alpha, t} B^{\nu'}_{\alpha, t'} \rho_E),
\end{align}
%It can be noted 
{We note} that if $\text{K}_{\alpha}^{\nu, \nu'}(\tau) \sim \delta(\tau)$, %then
the reduced system dynamics is Markovian. As a generalization to the non-Markovian setting, we will assume that
\begin{align}\label{eq:kernel_form}
\text{K}^{\nu, \nu'}_\alpha(\tau) = \text{K}^{\nu, \nu'}_{\alpha, c}(\tau) + \sum_{j} k_{\alpha, j}^{\nu, \nu'} \delta(\tau - T_j),
\end{align}
\end{subequations}
where $\text{K}^{\nu, \nu'}_{\alpha, c}$ is a continuous function and $k^{\nu, \nu'}_{\alpha, j}$ are complex constants.
{We discuss the physical meaning of the corresponding contributions below.}

Several models studied in quantum optics and solid-state physics are captured by Eq.~(\ref{eq:sys_env_hamiltonian}). 
For instance, an Ohmic bath (with a high-frequency cut-off) %can be described with just
{is a case with} 
a continuous $\text{K}^{\nu, \nu'}_\alpha$, %without any 
{and no} $\delta-$function contributions \cite{shi2016bound}. As another example, models in quantum optics such as lossy Jaynes Cummings models, where the environment can be described by a finite number of lossy bosonic environment modes, admit a similar description \cite{mazzola2009pseudomodes, dalton2001theory, garraway2006theory}. A bath which has reflections and time-delayed feedback can be described by $\delta-$function contributions in Eq.~(\ref{eq:kernel_form}), {with the times $T_j$ capturing the corresponding sharp delay times} \cite{grimsmo2015time, whalen2017open, pichler2016photonic}. 
In addition to being physically relevant, the dynamics resulting from the model in Eq.~(\ref{eq:full_sys_env}) can be rigorously defined despite the $\delta-$function divergences in the kernels $\text{K}^{\nu, \nu'}_\alpha$ \cite{trivedi2022description, lonigro2022generalized}.
{This makes Eq.~(\ref{eq:full_sys_env})} %making it 
a suitable starting point for developing a Lieb-Robinson bound.

\emph{Lieb Robinson bound}. Our first result is to establish a Lieb-Robinson bound for this model. Consider a local obervable $O_X$ supported on a subregion $X$ of the lattice $\Lambda$.
%Given a time $t' > 0$ and a state $\sigma_S$ of the system qudits, suppose $\rho(t') $ is
Let $\sigma_S$ be the initial state of the system, and let $\rho(t')$ be
the system-environment state generated by evolving $\sigma_S \otimes \rho_E$ under the system-environment Hamiltonian %[Eq.~(\ref{eq:full_sys_env})] 
for time $t'$:  $\rho(t') = U(t', 0) (\sigma_S \otimes \rho_E) U^\dagger(t', 0)$, where $U(t_f, t_i) = \mathcal{T}\exp(-i\int_{t_i}^{t_f} H(s') ds')$ is the unitary evolution generated by %corresponding to 
$H(t)$ in Eq.~(\ref{eq:full_sys_env}). Note that even though $\rho_E$ is a Gaussian environment state, $\rho(t')$ need not be a Gaussian in the environment. 

Next, consider the Heisenberg picture evolution of $O_X$ from time $t'$ to time $t > t'$ under two Hamiltonians: %--- \textit{first}, 
(i) the full system-environment Hamiltonian $H(t)$ [Eq.~(\ref{eq:full_sys_env})], and %\emph{second}, 
(ii) the Hamiltonian $H_{X[l]}(t)$ obtained by restricting $H(t)$ to just the terms acting within a distance $l$ of $X$:
\[
H_{X[l]} = \sum_{\alpha: S_\alpha \cap X[l] \neq \emptyset} \bigg(h_\alpha(t) + \sum_{\nu \in \{x, p\}} R_\alpha^\nu(t) B^\alpha_{\nu, t}\bigg),
\]
where $X[l]$ is the set of sites that are within distance $l$ from $X$. 
In particular, we define
\begin{align*}
&\,(\textrm{i})\ \,O_X(t, t') = U^\dagger(t, t') O_X U(t, t'), \nonumber\\
&(\textrm{ii})\ O_{X}(t, t'; l)=U_{X[l]}^\dagger(t, t') O_X U_{X[l]}(t, t'),
\end{align*}
where $U_{X[l]}(t_f, t_i) = \mathcal{T}\exp(-i\int_{t_i}^{t_f} H_{X[l]}(s') ds')$.
% the Lieb-Robinson bounds for Markovian systems described by a geometrically local Lindbladian $\mathcal{L} = \sum_\alpha \mathcal{L}_\alpha$ establish that the Heisenberg picture evolution of $O_X$ for $t$ under the can be exponentially well approximated by its evolution under the Lindbladian obtained on restricting the original Lindbladian to a region of distance $l$ around $X$. More specifically,
% \begin{align}\label{eq:lr_normal}
% \smallnorm{e^{\mathcal{L}^\dagger t}(O_X) - e^{\mathcal{L}_{X[l]}^\dagger t}(O_X)} \leq O(e^{vt - \mu l}),
% \end{align}
% where $X[l]$ is the set of sites that are within distance $l$ from $X$ and $\mathcal{L}_{X[l]} = \sum_{\alpha : S_\alpha \cap X[l] \neq \phi} \mathcal{L}_\alpha$. Importantly, Eq.~\ref{eq:lr_normal} is a bound that holds irrespective of the initial system state (i.e.~it is a bound on the \emph{norm} of the error incurred on locally truncating the Lindbladian). For the non-Markovian case, since we are unable to derive a closed-form and exact equation of motion for the system, we need to account for both the system and environment state. However, since the environment is an infinite-dimensional field theory, we do not expect a Lieb-Robinson bound to hold for arbitrary initial states which can have arbitrarily large number of excitations with arbitrary spectral profiles.

We remark that, had the system-environment Hamiltonian been bounded, then the usual Lieb Robinson bounds for geometrically local lattice Hamiltonians \cite{hastings2006spectral} would yield that, for sufficiently large $l$, $\norm{O_X(t, t') - O_{X}(t, t'; l)} \leq e^{O(v_\text{LR}t - l)}$, where the Lieb-Robinson velocity $v_\text{LR}$ would be determined by the norms of the interaction terms in the Hamiltonian. 
However, as mentioned above, we consider an {\it unbounded} system-environment Hamiltonian --- the interaction terms in the Hamiltonian can be arbitrarily large {\it depending on the system-environment state}.
We therefore do not expect such an error bound to hold in general on the operators norm $\norm{O_X(t, t') - O_{X}(t, t'; l)}$ directly and instead analyze the physically relevant error
\begin{align}\label{eq:lr_target}
&\Delta_{O_X}(t, t'; l) =  \abs{\textnormal{Tr}\big(\big(O_X(t, t') - O_{X}(t, t'; l)\big) \rho(t'))},
\end{align}
%where we will consider
characterizing the deviation between the operators $O_X(t, t')$ and $O_{X[l]}(t, t'; l)$ %\emph{only} 
in terms of their expected values on %the
specific states $\rho(t')$.
\begin{proposition}\label{prop:lr_bound_main}
    Given a local observable $O_X$ supported in $X \subseteq \Lambda$, then $\exists\ v_\textnormal{LR} > 0$ such that, for all initial states $\sigma_S$ of the system qudits, $\Delta_{O_X}(t, t'; l)$ defined in Eq.~(\ref{eq:lr_target}) satisfies
    \begin{align*}
        &\Delta_{O_X}(t, t'; l) \nonumber\\
    &\qquad\leq \norm{O_X} f(l)\exp(-l/a_0)(\exp(v_\textnormal{LR}\abs{t -t'}/a_0) - 1),
    \end{align*}
    where, for large $l$, $f(l) \leq O(l^{d-1}) $.
\end{proposition}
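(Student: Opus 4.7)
My strategy is to bypass the unboundedness of $H_\text{SE}$ by tracing out the Gaussian environment via an influence-functional/Wick expansion, reducing $\Delta_{O_X}(t,t';l)$ to a convergent series of bounded system operator products weighted by the memory kernel. The finite-total-variation hypothesis on $\text{K}^{\nu,\nu'}_\alpha$ then plays the role that interaction operator norms play in the standard Hastings proof of the Lieb--Robinson bound.

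The starting point is the integral identity
\begin{equation*}
O_X(t,t') - O_X(t,t';l) = i\!\int_{t'}^{t}\! U^\dagger(s,t')[H(s) - H_{X[l]}(s), O_X(t,s;l)] U(s,t') \, ds,
\end{equation*}
which, after tracing against $\rho(t')$, gives $\Delta_{O_X}(t,t';l) \leq \int_{t'}^{t} |\text{Tr}([H(s) - H_{X[l]}(s), O_X(t,s;l)] \rho(s))|\, ds$ with $\rho(s) = U(s,0)(\sigma_S \otimes \rho_E) U^\dagger(s,0)$. Because the environments at different sites are independent, for every $\alpha$ with $S_\alpha \cap X[l] = \emptyset$ the operator $B^{\nu}_{\alpha,s}$ commutes with all bath operators that appear in $O_X(t,s;l)$ (which only involves environment degrees of freedom at sites whose $S_{\alpha'}$ intersects $X[l]$), so the relevant commutator reduces to $[h_\alpha(s),O_X(t,s;l)] + \sum_{\nu}[R^{\nu}_\alpha(s), O_X(t,s;l)] B^{\nu}_{\alpha,s}$. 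Iterating the integral identity peels off successive nested commutators of $O_X(t,s;l)$ with single terms of $H_{X[l]}$, each step paying at most $\norm{h_\alpha}, \norm{R^\nu_\alpha} \leq 1$ or producing a single uncontracted bath factor.

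Next, I would also Dyson-expand $\rho(s)$ in powers of the system--environment coupling. Because $\rho_E$ is Gaussian and site-independent, Wick's theorem reduces every multi-time environment expectation to a sum over pair contractions, each producing a factor of $\text{K}^{\nu,\nu'}_\alpha(\tau)$ (or a $\delta$-piece $k^{\nu,\nu'}_{\alpha,j}$) localized at a single site $\alpha$. The resulting diagrammatic expansion is bounded term by term, with each ``memory line'' contributing at most $\int|\text{K}^{\nu,\nu'}_\alpha(\tau)|\, d\tau + \sum_j |k^{\nu,\nu'}_{\alpha,j}|$, which is finite by assumption. Standard Lieb--Robinson combinatorics now apply: any diagram that contributes to $\Delta_{O_X}(t,t';l)$ must link $X$ to a site outside $X[l]$ through a chain of at least $l/a_0$ vertices of diameter $\leq a_0$ and coordination $\leq \mathcal{Z}$; the $1/n!$ from nested Dyson integrals cancels the count of ordered chains of length $n$, leaving $\leq (c|t-t'|)^n$ for a constant $c$ set by $\mathcal{Z}$ and the effective coupling. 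Summing over $n \geq l/a_0$ and over the $O(l^{d-1})$ starting vertices on the boundary of $X[l]$ produces the claimed bound, with $v_\text{LR}$ determined by $c$.

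The main obstacle is the Wick combinatorics: naively, pair-contracting $n$ environment operators gives an $n!$ factor that would defeat the $1/n!$ of the Dyson series. The remedy is a linked-cluster argument --- disconnected subdiagrams factorize into site-local ``self-energy'' contributions that cancel between the $H$- and $H_{X[l]}$-expansions, leaving only connected diagrams that bridge $X$ to $\alpha \notin X[l]$, whose number is controlled by $\mathcal{Z}$ rather than by $n!$. The $\delta$-function part of the kernel in Eq.~(\ref{eq:kernel_form}) needs a regularization along the lines of \cite{trivedi2022description, lonigro2022generalized}, but contributes only bounded multiplicative factors to the effective per-step coupling, modifying $v_\text{LR}$ and $f(l)$ without altering the qualitative light-cone structure.
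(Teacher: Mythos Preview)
Your overall strategy is in the right direction, and the paper shares the broad philosophy of exploiting Gaussianity and Wick's theorem to reduce to bounded system-operator data. But the specific mechanism you invoke to tame the Wick combinatorics --- a linked-cluster cancellation of disconnected ``self-energy'' subdiagrams between the $H$- and $H_{X[l]}$-expansions --- does not do the job. You are comparing two Heisenberg evolutions traced against a state $\rho(s)$ that is itself evolved under the \emph{full} $H$ from time $0$; this is not a partition-function ratio, and there is no clean factorization into disconnected pieces that cancel between the two expansions. Even once restricted to connected diagrams, memory lines can pair bath insertions at arbitrary times, and the number of connected pairings is not bounded by $\mathcal{Z}^n$ --- the path-counting of the bounded Hastings proof does not survive the Wick step once you have fully expanded both the commutator and $\rho(s)$.

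The paper avoids summing any full diagrammatic series. Instead it identifies families of system--environment operators $\mathcal{S}_\zeta(\rho_E)$, indexed by a ``memory-budget'' function $\zeta$, with two closure properties: (i) evolution by $\mathcal{U}_\mathcal{B}(t,s)$ maps $\mathcal{S}_\zeta \to \mathcal{S}_{\zeta + \mu_{[s,t]}}$, and (ii) a single Wick contraction of a bath operator against any $\phi \in \mathcal{S}_\zeta(\rho_E)$ costs at most a factor $4\norm{\zeta}_\infty$ in trace norm. These make the unbounded $B^\nu_{\alpha,s}$ effectively bounded on the relevant state space, so one can define a commutator bound $\gamma^{X,Y}_{\zeta,\mathcal{B}}(t,t') = \sup_{\phi \in \mathcal{S}_\zeta(\rho_E)} \abs{\vecbra{O_X,I_E}\mathcal{U}_\mathcal{B}(t,t')\mathcal{L}_Y\vecket{\phi}}$ and derive a step-by-step recursion for it that mimics the bounded case, with one extra memory term $\int_{t'}^{t}\U(s'-t')\gamma^{X,S_\alpha}(t,s')\,ds'$ coming from the forward Wick contraction. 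Bounding that term by $\textnormal{TV}(\U)\int\gamma\,ds'$ closes the standard Lieb--Robinson integral inequality with $v_{\textnormal{LR}} \propto a_0\mathcal{Z}(1 + \textnormal{TV}(\U))$. The closure lemmas are what replace your linked-cluster step: they allow the recursion to be run one bath operator at a time, so that one never faces the full pairing combinatorics at once.
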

\noindent Here $a_0$ is the support diameter defined below Eq.~(\ref{eq:full_sys_env}); we discuss the dependence of $v_{\rm LR}$ on system and environment data below.

 The proof of this proposition departs significantly from the derivation of a Lieb-Robinson bound for finite-dimensional systems due to the infinite-dimensional environment. Additionally, we cannot approximate the environment with a finite-dimensional system by using particle number bounds as was used in Refs.~\cite{yin2022finite, kuwahara2024effective} for obtaining a Lieb Robinson bound on the Bose Hubbard model --- this is due to the fact that, even with a single excitation in the environment, the system-environment Hamiltonian could still be unbounded due to possible $\delta-$function divergences in the memory kernel. Our key insight to deal with the unbounded system-environment interaction, which we detail in the supplement, has been to identify spaces of system-environment states that contain states of the form $\rho_S\otimes \rho_E$ and which remain closed under evolution by the system-environment dynamics as well as on application of the environment operators $B_{\nu, t}^\alpha$. Having identified and restricted ourselves to these state spaces, we then proceed with an analysis similar to that of finite-dimensional lattice models with bounded interactions. 
 \begin{figure*}
    \centering
    \includegraphics[width=1.01\linewidth]{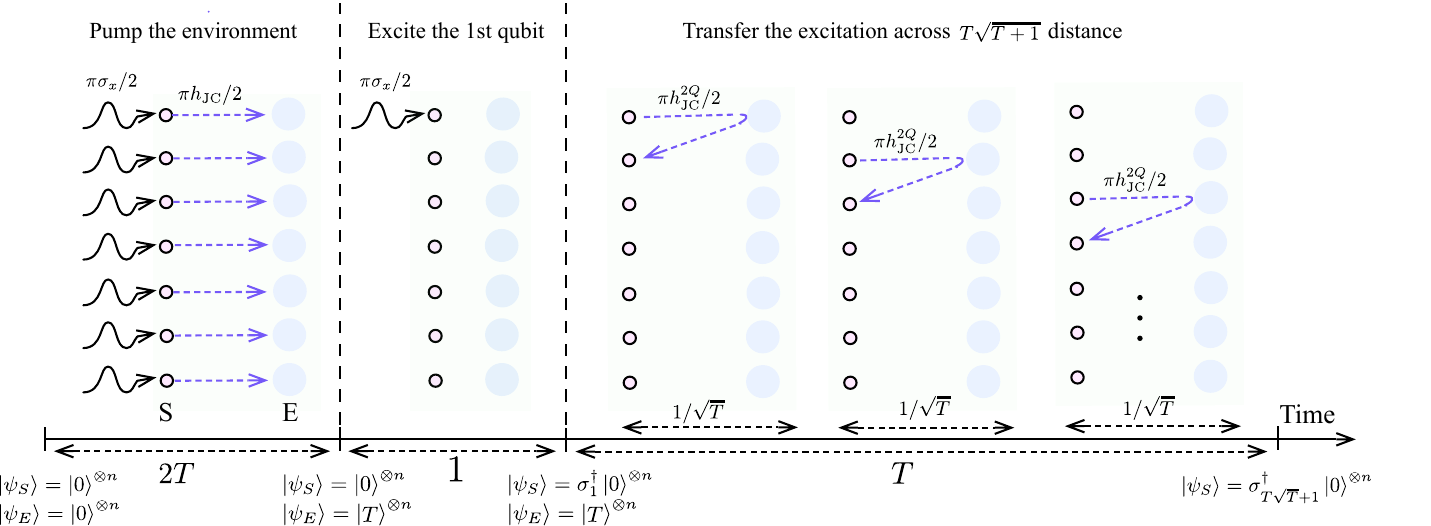}
    \caption{Schematic depicting a time-dependent 1D model with supersonic transport that violates a Lieb-Robinson bound with a linear light cone. The model is constructed by interleaving three different time-dependent system-environment Hamiltonians --- for $0 \leq t \leq 2T$, the system qubits are continuous excited and they consequently transfer a total of $T$ particles into the bath. Then, from $2T \leq t \leq 2 T + 1$, the first qubit in the system is excited and from $2T + 1 \leq t \leq 3T + 1$, the bath oscillators, which have $T$ particles in them, are used to mediate a transport of this excitation at a velocity $\sim \sqrt{T}$. }
    \label{fig:supersonic}
\end{figure*}

Our analysis also provides us with an explicit expression for a Lieb-Robinson velocity $v_\text{LR}$: we find that it can be chosen to be
\begin{align}\label{eq:total_variation}
v_\text{LR} = ea_0 \mathcal{Z} + 56 ea_0 \mathcal{Z}\textnormal{TV}(\U),
\end{align}
where $a_0, \mathcal{Z}$ are as defined below Eq.~(\ref{eq:full_sys_env}) and $\text{TV}(\U)$ is the total variation
\[
\textnormal{TV}(\text{U}) := \int_{-\infty}^\infty \text{U}_c(\tau) d\tau + \sum_{j} u_j,
\]
with $\text{U}(\tau) = \text{U}_c(\tau) + \sum_{j = 1}^M u_j \delta(\tau - T_j)$ being a kernel that upper bounds $\K_{\alpha}^{\nu, \nu'}$: % i.e.
\[
\text{U}_c(\tau) = \sup_{\alpha, \nu, \nu'} \smallabs{\text{K}_{\alpha,c}^{\nu, \nu'}(\tau)} \text{ and } u_j = \sup_{\alpha, \nu, \nu'} \smallabs{k_{\alpha, j}^{\nu, \nu'}}.
\]
%with $\text{TV}(\U)$ being its total variation
%\[
%\textnormal{TV}(\text{U}) := \int_{-\infty}^\infty %\text{U}_c(\tau) d\tau + \sum_{j} u_j.
%\]
Since, by construction, $\U$ upper bounds the kernels $\K_{\alpha}^{\nu, \nu'}$, it also follows that $\text{TV}(\U)$ upper bounds $\text{TV}(\K_\alpha^{\nu, \nu'})$ --- the total variation of the kernels $\K_\alpha^{\nu, \nu'}$. It can also be noted that $\text{TV}(\U)$, and therefore the Lieb-Robinson velocity $v_\text{LR}$, depends on the initial environment state.

The dependence of $v_\text{LR}$ on the diameter $a_0$ of the subregions $S_\alpha$  as well as on their coordination number $\mathcal{Z}$ is % expected from
similar to that displayed for Lieb-Robinson bounds for Markovian systems \cite{poulin2010lieb}. For the non-Markovian case, $v_\text{LR}$ additionally depends on the total variation of $\text{U}$ --- this can be physically interpreted as a measure of the memory effects in the environment and of the extent to which system state can be time-correlated with its past. With this interpretation, this dependence of $v_\text{LR}$ on $\textnormal{TV}(\text{U})$ is physically expected since if, due to the environment, the system state can be correlated with its past then this effectively increases the strength of the local interactions by a factor proportional to these memory effects.

The discussion above indicates that, for
%If 
$\textnormal{TV}(\text{U}) \to \infty$, %then we do not 
%we may expect a 
the Lieb-Robinson bound %of the form 
provided in Proposition \ref{prop:lr_bound_main}, with $v_{\rm LR}$ given in Eq.~(\ref{eq:total_variation}), becomes ill-defined. %to break down.
Indeed, in this case, %exist and 
the system can have supersonic transport of correlations: %i.e. 
in time $t$, the support of a local observable can grow to a distance $\propto t^\alpha$ for $\alpha > 1$. 
We illustrate this explicitly in a system-environment model with time-dependent system and system-environment coupling Hamiltonians --- we remark that our construction differs from that of Ref.~\cite{eisert2009supersonic}, which also developed a lattice of infinite-dimensional systems with supersonic transport, in that we explicitly consider models that have a Hamiltonian of the form in Eq.~\ref{eq:full_sys_env} while the construction of Ref.~\cite{eisert2009supersonic} required more complex nearest neighbour interactions and didnt satisfy Eq.~\ref{eq:full_sys_env}.   
We consider a one-dimensional chain of $N$ qubits as the system, with each coupled to its own harmonic oscillator mode that plays the role of a local environment (Fig.~\ref{fig:supersonic}).  %and $N$ harmonic oscillators with
For the $N$ harmonic modes of the environment we define the annihilation operators $a_1, a_2 \dots a_N$, and take the environment to initially be in the vacuum state.
For the operators $B^{\nu}_{\alpha, t}$ in Eq.~(\ref{eq:sys_env_hamiltonian}) we take $B^x_{\alpha, t} = (a_\alpha + a_\alpha^\dagger)/\sqrt{2}$ and $B^p_{\alpha, t} = (a_\alpha - a_\alpha^\dagger)/\sqrt{2}i$. With this choice, it is clear that $\smallabs{\K^{\nu, \nu'}_\alpha(t)} = 1/2 \text{ for any }t$ and therefore the environment has unbounded total variation. 

%An evolution protocol that yields supersonic transport in this model is illustrated schematically in Fig.~\ref{fig:supersonic}.
To achieve supersonic transport in this model, we 
consider a total evolution time of $3T + 1$, in units of an inverse interaction strength that we set to 1 throughout.
For simplicity we will choose $T = m^2$ for some integer $m > 0$. 
%Since the Lieb-Robinson bound provided in proposition 1 holds for any initial system state, to illustrate its violation it is sufficient to analyze a specific system state --- we will therefore restrict ourselves to the initial system state $\ket{0}^{\otimes n}$.
%A system-environment Hamiltonian that violates the Lieb-Robinson bound is schematically depicited in Fig.  --- 
The key idea %behind violating a strict light cone 
is to first populate the environment modes with $T$ photons over time $2T$ by pumping them via the system qubits.
The excited environment modes then mediate enhanced propagation within the system over the remaining time.
%and then use these photons to enhance the interaction between the system qubits in the remaining time. More
Specifically, from $t = 0$ to $t = 2T$ we alternate between exciting each qubit from $\ket{0}\to\ket{1}$ via a $\pi$-pulse with a system Hamiltonian $H_{\rm S} = \frac{\pi}{2}\sigma_x$, %Hamiltonian, 
and then transferring this excitation to the environment oscillator with a Jaynes-Cumming interaction $H_{\rm SE} = h_\text{JC} = \frac{\pi}{2}(\sigma^\dagger a + \text{h.c.})$.
These steps prepare $T$ bosons in each environment oscillator and leave each system qubit in the state $\ket{0}$. 
From time $2T$ to $2T + 1$, we only drive the first qubit from $\ket{0}\to \ket{1}$ without any system-environment interaction, %--- at time $2T + 1$, the system-environment state would be 
thus preparing the system-environment state $\sigma_1^\dagger \ket{0}^{\otimes n}\otimes \ket{T}^{\otimes n}$.
Finally, from $2T + 1$ to $3T + 1$, we use the $T$-bosons in the environment oscillators to transfer the excitation at the first qubit across a distance $T\sqrt{T}$.
To do this, we use the fact that given two qubits and a harmonic oscillator, a Jaynes-Cumming like Hamiltonian $h^{2Q}_\text{JC} = \frac{\pi}{2}(\ket{01}\!\bra{10} a + \text{h.c.})$ maps the state $\ket{10}\otimes \ket{T} \to \ket{01}\otimes \ket{T - 1}$ in time $1/\sqrt{T}$. Applying this Hamiltonian for the first $1/\sqrt{T}$ time-period between qubits 1 and 2 and the first oscillator, then for the second $1/\sqrt{T}$ time-period to qubits $2$ and $3$ and the second oscillator and so on, we see that at time $3T + 1$, the state of the system qubits would be $\sigma_{T\sqrt{T} + 1}^\dagger \ket{0}^{\otimes n}$. This protocol thus accomplishes the \emph{supersonic} transport of the excitation at the first qubit across a distance $T\sqrt{T}$ in a time linear in $T$.

We remark that all the time-dependent terms in the system-environment Hamiltonian can be switched on and off smoothly to ensure that they are differentiable with respect to time as required in our setup [see text below Eq.~(\ref{eq:sys_env_hamiltonian})]. In the supplement, we explicitly consider the observable $\sigma_{Tm + 1}^\dagger \sigma_{Tm + 1}$, which measures the number of excitations at the $(Tm + 1)^\text{th}$ qubit, and show that $\Delta_{\sigma_{Tm + 1}^\dagger \sigma_{Tm + 1}}(3T + 1, 0; l)$ (as defined in Eq.~\ref{eq:lr_target}), is 1 if $l < T\sqrt{T + 1}$, thus exceeding the Lieb-Robinson bound provided in Proposition 1.

\emph{Implication on bath approximations}. Next, we consider the problem of approximating the non-Markovian environment with a finite number of bosonic modes per site. 
For a many-body system, if we are interested in approximating the full system state at a given time and to a target precision then the number of modes in the environment is expected to scale with the system size since we effectively demand a high precision on high-order correlators in the system. However, for a geometrically local many-body system interacting locally with independent baths, it is physically expected that local observables or few-point correlation functions in the system can be approximated well by a number of modes in each local bath that is independent of the system size, and depends only on the evolution time and the target precision.

\begin{figure}[b]
    \centering
    \includegraphics[width=1.0\linewidth]{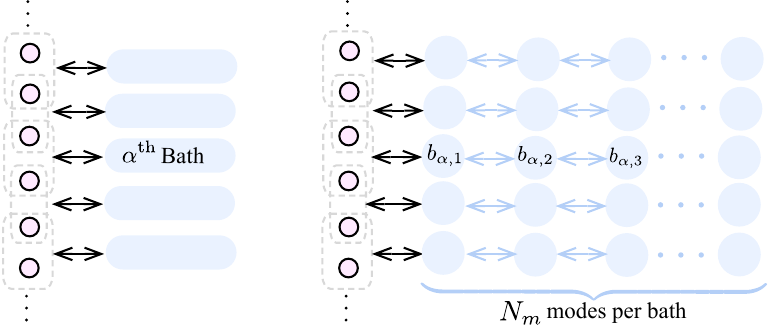}
    \caption{Schematic depiction of the star-to-chain transformation used to approximate a non-Markovian environment by a discrete set of bosonic modes. Each bath is replaced with $N_m$ modes, with $N_m$ controls the accuracy of this approximation, and these modes themselves 1D nearest-neighbour coupled lattice.}
    \label{fig:star_to_chain}
\end{figure}

This physical expectation can be made precise by applying the Lieb Robinson bounds from Proposition 1. We will analyze the star-to-chain transformation as the method to construct the discrete-mode approximation of each bath \cite{chin2010exact, woods2015simulating, trivedi2022description}. 
As depicted in Fig.~\ref{fig:star_to_chain}, each bath is approximated by a 1D chain of $N_m$ discrete bosonic modes with nearest neighbour coupling, where $N_m$ controls the accuracy of the approximation. Denoting the annihilation operators of these bosonic modes by $b_{\alpha, 1}, b_{\alpha, 2} \dots b_{\alpha, N_m}$, the system-environment Hamiltonian in Eq.~(\ref{eq:sys_env_hamiltonian}) is approximated instead by a Hamiltonian of the form
\[
{H}_{\text{SE}}^{N_m}(t) = \sum_{\alpha}\left[ g_\alpha x_{\alpha, 1}(t) R_\alpha^x(t) + g_\alpha p_{\alpha, 1}(t) R_\alpha^p(t)\right],
%\sum_{\alpha} \sum_{\nu \in \{x,p\}} g_\alpha\nu_{\alpha, 1}(t) R_\alpha^\nu(t),
\]
where $x_{\alpha, 1}(t) = (b_{\alpha, 1}(t) + b_{\alpha, 1}^\dagger(t))/\sqrt{2}$, $p_{\alpha, 1} =(b_{\alpha, 1}(t) - b_{\alpha, 1}^\dagger(t))/\sqrt{2}i $ and $b_{\alpha, 1}(t) = e^{iH_{\alpha, E}t} b_{\alpha, 1}e^{-iH_{\alpha,E}t}$ with
\[
H_{\alpha, E}=\sum_{j = 1}^{N_m} \omega_{\alpha, j} b_{\alpha, j}^\dagger b_{\alpha, j} + \sum_{j = 1}^{N_m - 1} \big(t_{\alpha, j}b_{\alpha, j}^\dagger b_{\alpha, j + 1} + \text{h.c.}\big).
\]
The constants $g_\alpha, \omega_{\alpha, j}, t_{\alpha, j}$ can be chosen to best approximate the memory kernels of $H_\text{SE}(t)$ with those of $H^{N_m}_\text{SE}(t)$. We review the standard procedure to compute these coefficients from the kernels $\text{K}_\alpha^{\nu, \nu'}(t)$ (Eq.~\ref{eq:main:kernel_def}) in the supplement.

Employing the Lieb-Robinson bounds from Proposition 1, we now estimate the number of modes $N_m$ required to approximate local system observables using star-to-chain transformation. The following proposition states our key result, and establishes that $N_m$ can be chosen to be uniform in the system size and depend only on the evolution time and the precision demanded in the local observable.
\begin{proposition}
    There exists an approximation of the non-Markovian environment with $N_m$ discrete bosonic modes per bath such that every local observable $O$ with support that has an $O(1)$ diameter and $\norm{O}\leq 1$ can be approximated to precision $\varepsilon$ at time $t$ with
    \begin{align*}
    N_m &= \Theta\big(\big(\varepsilon^{-1}{t^{2d + 3}}\big)^{1 + o(1)}\big) + \Theta\big( \big({\varepsilon}^{-1}\log \varepsilon^{-1}\big)^{1 + o(1)}\big) + + \nonumber\\
    &\qquad \Theta\big(\big(t\kappa_0\big(\varepsilon^{-1}\Theta(t^{d + 1}) \Theta(t \log^d \varepsilon^{-1})\big)^{1 + o(1)} \big),
    \end{align*}
    where $\kappa_0 : (0, \infty) \to (0, \infty)$ is a function that satisfies
    \[
    \frac{1}{x} = \sum_{\tau \in \{0, t\}} \int_{\tau - \kappa_0(x)}^{\tau + \kappa_0(x)} \textnormal{U}_c(s) ds.
    \]
\end{proposition}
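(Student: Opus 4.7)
The plan is to use the Lieb-Robinson bound of Proposition \ref{prop:lr_bound_main} to reduce the global environment-approximation problem to a local one, and then to invoke the known convergence theory of the star-to-chain transformation on each of the $O(l^d)$ baths that remain inside the light cone of $O$. Concretely, for a local observable $O$ with support of diameter $O(1)$ and $\norm{O}\leq 1$, Proposition \ref{prop:lr_bound_main} bounds the error introduced by replacing the full evolution $U(t, 0)$ with the light-cone-truncated evolution $U_{X[l]}(t, 0)$ in $\textnormal{Tr}(O(t, 0)\rho(0))$ by at most $f(l)\exp(-l/a_0)(\exp(v_\textnormal{LR}t/a_0) - 1)$. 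Choosing $l = \Theta(v_\textnormal{LR}t + a_0\log\varepsilon^{-1})$ brings this contribution below $\varepsilon/3$, after which only $N_\text{bath} = \Theta(l^d) = \Theta((t + \log\varepsilon^{-1})^d)$ baths participate in the truncated dynamics, already independent of the total system size $n$.

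The next step is to apply the star-to-chain transformation separately to each of the $N_\text{bath}$ baths inside $X[l]$, approximating each bath by a chain of $N_m$ discrete bosonic modes whose parameters $g_\alpha, \omega_{\alpha,j}, t_{\alpha,j}$ are computed from $\K^{\nu,\nu'}_\alpha$ in the standard orthogonal-polynomial fashion. Using existing convergence results for this transformation~\cite{trivedi2022description, woods2015simulating}, the per-bath error in an expectation value of a bounded observable over time $t$ is bounded by $t$ times an $L^1$-type quantity $\delta_\K(N_m, t)$ measuring the discrepancy between the continuous memory kernel and the kernel induced by the $N_m$-mode chain. A Dyson-series / telescoping argument, in which the baths are replaced one at a time while the remaining dynamics is kept uniformly bounded, yields a total environment-approximation error of at most $N_\text{bath}\cdot t\cdot \delta_\K(N_m, t)$. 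Requiring this to be at most $\varepsilon/3$ fixes a per-bath kernel-precision target $\eta \sim \varepsilon/(N_\text{bath} t)$.

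The quantitative scaling of $N_m$ then follows from the structure of $\K^{\nu,\nu'}_\alpha$ in Eq.~(\ref{eq:kernel_form}). The continuous component $\K^{\nu,\nu'}_{\alpha,c}$ is approximated by the chain with a polynomial-in-$N_m$ convergence rate that, when propagated through the bound on $\eta$, produces the first two $\Theta$-terms of the statement --- one from the standard bulk chain error and the other from $\log\varepsilon^{-1}$ factors associated with truncating the chain's high-frequency tail. The $\delta$-function contributions $k^{\nu,\nu'}_{\alpha,j}\delta(\tau - T_j)$ demand that the chain resolve spectral features in time windows around $\tau = 0$ and $\tau = t$; the function $\kappa_0(x)$ is precisely the half-width of such a window needed to collect a fraction $1/x$ of the continuous kernel mass. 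Matching the chain's resolution to these windows at arguments $x$ scaling as $\eta^{-1}$ times polynomial prefactors in $t$ and $\log\varepsilon^{-1}$ produces the third $\Theta$-term, with arguments of $\kappa_0$ matching those in the statement.

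The main obstacle is to bound the star-to-chain error uniformly in the system size in the presence of the $\delta$-function components of the kernel: these singular contributions are not well approximated by any fixed-spectrum chain, and require the bath approximation to resolve fine temporal features around $\tau = 0$ and $\tau = t$. The Lieb-Robinson truncation is crucial for eliminating the $n$-dependence from the error budget, but once inside the light cone, the per-bath target $\eta$ still depends inverse-polynomially on $l^d$ and hence on $t$. Turning $\delta_\K(N_m, t) \leq \eta$ into a quantitative bound on $N_m$ via $\kappa_0$, and passing this estimate through the Dyson series for the open-system evolution without reintroducing any dependence on $n$, is the analytic core of the argument.
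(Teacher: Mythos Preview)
Your high-level strategy --- localise with Proposition~\ref{prop:lr_bound_main}, then control the star-to-chain error on the $O(l^d)$ baths inside $X[l]$ --- matches the paper's. However, there is a genuine gap in how you propose to control the per-bath error. The existing star-to-chain bounds you cite require a frequency cutoff $\omega_c$, and to introduce $\omega_c$ with controllable error one must first \emph{mollify} the kernel with a parameter $\delta$ (so that its Fourier transform decays). The paper's proof is a three-stage approximation (mollify $\to$ cut off frequency $\to$ star-to-chain), and the term involving $\kappa_0$ comes from the \emph{mollification} error, specifically from the quantity $\lambda_0(\delta)=\sum_{\tau\in\{0,t\}}\textnormal{TV}(\U_c;[\tau-\delta,\tau+\delta])$, which measures how much mass the \emph{continuous} part of the kernel carries near the endpoints of the interaction window. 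Your attribution of $\kappa_0$ to the $\delta$-function contributions is backwards: those contribute to the other mollification error $\lambda_1(\delta)\propto\delta$, which is what drives the first $\Theta$-term. Without the mollification step you cannot apply the frequency cutoff (and hence the star-to-chain bound) at all when the kernel contains $\delta$-functions.

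A second, smaller gap: the proposition asserts a single global approximation that works for \emph{every} local observable. Your decomposition (truncate the original model, then approximate the truncated model) produces the truncated approximate dynamics; to relate this back to the \emph{full} approximate model $H^{\delta,\omega_c,N_m}$ on all baths, you must also apply the Lieb-Robinson bound to the approximated model. The paper does this explicitly by estimating $\textnormal{TV}(\textnormal{V}^{\delta,\omega_c,N_m};[0,t])$ and checking that the approximated model's Lieb-Robinson velocity is close to the original one for the chosen $\delta,\omega_c,N_m$; this is what fixes $l=\Theta(t)+\Theta(\log\varepsilon^{-1})$ uniformly and closes the triangle inequality $\text{Err}_{O_X}\leq\Delta_{O_X}(t,0;l)+\Delta_{O_X}^{\delta,\omega_c,N_m}(t,0;l)+\text{Err}_{O_X}^l$.
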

\noindent The proof of this proposition follows almost directly from Proposition 1 --- the evolution of the local observable upto time $t$ depends on a neighbourhood of diameter $O(t)$ around it. Consequently, it is only affected by the error in the discrete-mode approximation of the baths interacting with this neighbourhood --- we estimate this error by following previous analyses of the star-to-chain transformation \cite{woods2015simulating, trivedi2022description}. The details of our analyses can be found in the supplementary material. While we have chosen to analyze the star-to-chain transformation, we expect a similar conclusion to hold for other methods for approximating the bath, such as the pseudo-mode approximation, which could outperform the star-to-chain transformation for specific systems \cite{tamascelli2018nonperturbative, pleasance2020generalized}. 

\emph{Conclusion}. Our work provides a Lieb-Robinson bound for non-Markovian many-body systems, which fundamentally provides a velocity of propagation of information in these systems. Similar to the Markovian case, we expect this bound to be important in a theoretical study of both non-equilibrium as well equilibrium properties of non-Markovian many-body systems.

Our work opens up several interesting questions pertaining to non-Markovian many-body physics, as well as provides techincal tools that could possibly be developed further to answering these questions. Since we are considering non-Markovian models, in comparison to the Markovian case, there is a possibility of creating interactions between the qubits, via the environment, that are correlated in time and not just in space. The first open question is to more sharply understand propagation of information in non-Markovian models where the kernels decay slowly and algebraically and thus have an infinite total variation (e.g., if the kernels $\sim t^{-\alpha}$ for $\alpha \in (0, 1)$). These models are time-local in the sense that the memory kernels go to 0 at large time differences, albeit slowly. Our analysis does not conclusively provide an answer to the existence of a linear light cone for these models. In the same spirit, and motivated by results in many-body Hamiltonians and Lindbladians, building upon the tools developed in this paper we could understand the light-cone structure of models with both algebraically decaying spatial interactions which are also non-Markovian. We expect these analyses to allow us to understand the many-body physics that can be induced by the rich spatio-temporal correlation that exist in non-Markovian many-body models and how they differ from the more well-understood Markovian case.

\bibliographystyle{ieeetr}
\bibliography{references.bib}
\onecolumngrid
\newpage

\onehalfspacing
{\begin{center}
\large \textbf{Supplementary Information}
\end{center}
}
\tableofcontents
\section{Notation}
Given a Hilbert space $\mathcal{H}$, we will denote by $\text{L}(\mathcal{H})$ the set of bounded linear operators from $\mathcal{H}\to \mathcal{H}$, define $\text{M}(\mathcal{H})$ to be the set of the bounded Hermitian operators from $\mathcal{H}\to \mathcal{H}$ and define $\text{D}_1(\mathcal{H})$ as the set of valid density matrices on $\mathcal{H}$. We will typically use the $\dagger$ superscript to indicate the adjoint, or Hermitian conjugate, of an operator or superoperator. However in some cases, more compact expression can obtained by using the following notation --- for some operator or superoperator $X$, we define $X^{(-)}:=X$ and $X^{(+)}:=X^\dagger$. Furthermore we will use $\bar{+}=-$ and $\bar{-}=+$. For example, $X^{(\bar{u})}:=X^\dagger$ for $u=-$. While dealing with mixed states and their dynamics, it will often be convenient to adopt the vectorized notation, where we map operators on a (finite-dimensional) Hilbert space to state via $A = \sum_{i_l, i_r}A_{i_l, i_r}\ket{i_l}\!\bra{i_r} \to \vecket{A} = \sum_{i_l, i_r}A_{i_l, i_r} \ket{i_l, i_r}$. Superoperators, such as Lindladians or channels, will map to ordinary operators in this picture. Given an operator $X \in \text{M}(\mathcal{H})$, we will define $X_l, X_r \in \text{M}(\mathcal{H}\otimes \mathcal{H})$ by $X_l\vecket{\rho} = (X\otimes I)\vecket{\rho} = \vecket{X\rho}$ and $X_r\vecket{\rho} = (I\otimes X^\text{T})\vecket{\rho} =  \vecket{\rho X}$. $X_l (X_r)$ can also be interpreted as a superoperator which left (right) multiplies its argument with $X$ i.e. $X_l(Y) = XY$ and $X_r(Y) = YX$. Furthermore, given an operator $X \in \text{L}(\mathcal{H})$, we will denote by $\mathcal{C}_X = X_l- X_r$ as the commutator superoperator corresponding to $X$ i.e. $\mathcal{C}_X \vecket{\phi} = \vecket{[X, \phi]}$.

Note also that inner products between vectorized operators are equivalent to Hilbert-schmidth inner products between the unvectorized operators: $\vecbra{A}B\rrangle = \text{Tr}(A^\dagger B)$. Also note that $\text{Tr}(A) = \text{Tr}(I^\dagger X) =\vecbra{I}X\rrangle$. Similarly, if we consider the tensor product of two Hilbert spaces $\mathcal{H}_S \otimes \mathcal{H}_E$ and suppose $O \in \text{L}(\mathcal{H}_S \otimes \mathcal{H}_E)$, then we obtain the following notation for partial trace $\text{Tr}_E(O)= \vecbra{I_E} O\rrangle$. Finally, while we will use vectorized notation throughout this paper, all the norms, even when applied in the vectorized notation, will correspond to unvectorized norms. For e.g.~$\norm{\vecket{\sigma}}_1$ will be correspond to $\norm{\sigma}_1$ i.e.~the $1-$norm of $\sigma$ as an operator and not to the $1-$norm of $\sigma$ as a vector.

For a set $\mathcal{S}$, we will denote by $\Theta_{\mathcal{S}}$ the indicator function on $\mathcal{S}$ i.e.~$\Theta_\mathcal{S}(s) = 1$ if $s \in \mathcal{S}$ and $0$ otherwise. We will denote by $L^p(\mathbb{R})$ the space of functions $f:\mathbb{R}\to \mathbb{C}$ such that $\abs{f}^p$ is integrable. For a function $f:\mathbb{R} \to \mathbb{C}$, $\norm{f}_p$ (when defined) will denote the $L^p$ norm --- in particular,
\[
\norm{f}_p = \bigg(\int_{-\infty}^\infty \abs{f(x)}^p dx\bigg)^{1/p},
\]
and $\norm{f}_\infty$ is the smallest non-negative number such that $f(x) \leq \norm{f}_\infty$ almost everywhere.
We will also define $\mathcal{K}_1(\mathbb{R})$ to be the space of tempered distributions $\K$ which are of the form
\begin{align}\label{eq:rep_kernel}
\K(t) = \K_c(t) + \sum_{j = 1}^M k_j \delta(t - \tau_j),
\end{align}
where $\K_c \in L^1(\mathbb{R})$ and $\delta(\cdot)$ is the Dirac-delta function. The function $\K_c$ will often be called the continuous part of $\K$ and $\K - \K_c = \sum_{j = 1}^M k_j \delta(t - \tau_j)$ will be called the atomic part of $\K$. For $\K\in \mathcal{K}_1(\mathbb{R})$, and a closed interval $I \subset \mathbb{R}$, we will define the total variation of $\K$ in the interval $I$, $\text{TV}(\K; I)$, via
\[
\text{TV}(\K; I) = \int_I \abs{\K_c(t)}dt + \sum_{j = 1}^M \abs{k_j} \Theta_I^s(\tau_j) \text{ where }\Theta_I^s(\tau) = \begin{cases}
   1 &\text{ if } \tau \in \text{int}(I), \\
   \frac{1}{2} &\text{ if }\tau \in \text{bd}(I), \\
    0 & \text{ otherwise,}
\end{cases}
\]
where $\text{int}(I)$ is the interior of the interval $I$ and $\text{bd}(I) = I \setminus \text{int}(I)$ is the boundary of the interval. Given a kernel $\K$ with representation as given by Eq.~\ref{eq:rep_kernel}, we will define the kernel $\abs{\K} \in \mathcal{K}_1(\mathbb{R})$ via
\[
\abs{\K}(\tau) = \abs{\K_c(\tau)} + \sum_{j = 1}^M \abs{k_j} \delta(\tau - T_j).
\]
We will define the total variation of $\text{TV}(\K)$ as
\[
\text{TV}(\K) = \text{TV}(\K; \mathbb{R}) = \norm{\K_c}_{1} + \sum_{j = 1}^M \abs{k_j}. 
\]
Given two functions (or distributions), $f$ and $g$, we will denote by $f\star g$ the convolution of $f$ and $g$ formally defined via
\[
(f\star g)(\tau) = \int_{-\infty}^\infty f(\tau') g(\tau - \tau') d\tau'.
\]

\emph{Summation convention}. We will often omit writing summations in more tedious calculations --- in any equation or inequality, any index (superscript or subscript) on the right-hand side which does not appear on the left hand side will be assumed to be summed over. Any index (superscript or subscript) that appears on both left-hand and right-hand side of an equation will \emph{not} be summed over. For e.g.
\begin{align*}
&F = x_i y_i \text{ is shorthand for }F = \sum_{i} x_i y_i, \\
&F = x_i \text{ is shorthand for }F = \sum_i x_i, \\
&F_k = x_i^k y^j \text{ is shortand for }F_k = \bigg(\sum_i x_i^k\bigg) \bigg(\sum_j y_j\bigg), \text{ and }\\
&F_k = x_i^k y_j z^j_k \text{ is shortand for }F_k= \bigg(\sum_i x_i^k\bigg) \bigg(\sum_{j} y_j z^j_k\bigg).
\end{align*}
Note that the summation convention that we use is close to the Einstein's summation convention with the difference being that unrepeated indices are also summed over as long as they do not appear on the left-hand side of an equation or inequality.
\section{Preliminaries}
This section reviews the setup that we analyze (section \ref{sec:supp_prelim_setup}), as well as some preliminaries that will be important in our calculations. Section \ref{sec:supp_prelim_wick} reviews the Wick's theorem and defines some notation to streamline its application --- the notation introduced in this subsection will be used throughout the papers. As introduced in the main text, in this paper, we also consider system-environment models where the memory kernels can have $\delta$-function divergences --- in section \ref{sec:regularization}, we show that we can instead approximate the $\delta$ function as a limit of smooth functions (i.e.~mollifiers) while computing physically relevant quantities (such as the dynamics of system observables). This can be seen as effectively defining the dynamics associated with models whose kernels have $\delta$-function divergence, and is an important simplification for the following section. If the reader is willing to accept that this is possible, then they can skip section \ref{sec:regularization} and directly go to next section for the calculation of the Lieb-Robinson bound.
\subsection{Setup}\label{sec:supp_prelim_setup}
As introduced in the main text, we will assume that the system qudits are arranged on a $d-$dimensional lattice $\Lambda$. The qudits interact with each other either directly or through a bath, and we will assume that the individual baths are independent of each other. We will denote the system Hilbert space by $\mathcal{H}_S$ and the environment Hilbert space by $\mathcal{H}_E$. The system-environment Hamiltonian introduced in the main text, expressed directly in the interaction picture with respect to the environment, is be given by
\begin{align}\label{eq:supp:sys_env_hamil}
    H(t) = \sum_{\alpha} h_\alpha(t) + \sum_\alpha \sum_{\nu \in \{x, p\}} B_{\alpha, t}^\nu R_{\alpha}^\nu(t),
\end{align}
where $h_\alpha(t), R_{\alpha}^x(t), R_\alpha^p(t)$ are geometrically local operators supported on $S_\alpha$, and $B_{\alpha, t}^x, B_{\alpha, t}^p$ (with $x$ and $p$ being the two different quadratures) are operators on the $\alpha^\text{th}$ bath that are linear in its annihilation and creation operators. For some calculations, it is also convenient to express this Hamiltonian in terms of the annihilation operator $A_{\alpha, t} = (B_{\alpha, t}^x + i B_{\alpha, t}^p)/\sqrt{2}$ acting on the $\alpha^\text{th}$ bath:
\begin{align}\label{eq:sys_env_model}
    H(t) = \sum_\alpha h_\alpha(t) + \sum_\alpha \big(L_\alpha^\dagger(t) A_{\alpha, t} + \text{h.c.}\big),
\end{align}
where $L_\alpha(t) = (R_\alpha^x(t) + i R_\alpha^p(t))/\sqrt{2}$. Note that, since $\norm{R_\alpha^\nu(t)} \leq 1$, $\norm{L_\alpha(t)} \leq \sqrt{2}$. As described in the main text, we also introduce the quantities, $\mathcal{Z}$ and $a_0$, that are determined by the supports $S_\alpha$ via
\begin{align}\label{eq:constants_lattice}
\mathcal{Z} = \sup_{\alpha} \abs{\{\alpha' : S_{\alpha'} \cap S_{\alpha} \neq \phi \}} \text{ and }a_0 = \sup_{\alpha} \text{diam}(S_\alpha).
\end{align}
For geometrically local models both $a_0, \mathcal{Z}$ will be independent of the system size. The constant $a_0$ is simply an upper bound on the diameters of the supports $S_\alpha$. The constant $\mathcal{Z}_0$ can be considered to be the coordination number of the interaction graph i.e.~it is an upper bound on the number of sets $S_{\alpha'}$ that intersect with any one set $S_\alpha$. We will also assume that $\norm{h_\alpha(t)}, \norm{R^\nu_\alpha(t)} \leq 1$ and that $h_\alpha(t), R_\alpha^\nu(t)$ are differentiable with bounded derivatives i.e. $\smallnorm{h_\alpha'(t)}, \smallnorm{R_\alpha^{\nu'}(t)} < \infty$.

We will often need to restrict this Hamiltonian to sub-regions of the lattice $\Lambda$. We will denote the full set of the index $\alpha$ in Eq.~\ref{eq:supp:sys_env_hamil} or \ref{eq:sys_env_model} by $\mathcal{A}$. Given $\mathcal{B} \subseteq \mathcal{A}$, we define
\begin{align}\label{eq:restriction_hamiltonian}
H_{\mathcal{B}}(t) = \sum_{\alpha \in \mathcal{B}}\bigg(h_\alpha(t) +  \sum_{\nu \in \{x, p\}} B^\nu_{\alpha, t} R^\nu_\alpha(t)\bigg) = \sum_{\alpha \in \mathcal{B}} \bigg( h_\alpha(t) +  \big(L_\alpha^\dagger(t) A_{\alpha, t} + \text{h.c.}\big)\bigg).
\end{align}
Note that $H_\mathcal{A}(t) = H(t)$. We will also define the unitary $U_{\mathcal{B}}(t, s)$ by 
\begin{align}\label{eq:restriction_unitary}
U_{\mathcal{B}}(t, s) = \mathcal{T}\exp\bigg(-i \int_s^t H_{\mathcal{B}}(s') ds'\bigg).
\end{align}
We will also define $\mathcal{U}_{\mathcal{B}}(t, s)$ to be the vectorized unitary corresponding to $U_{\mathcal{B}}(t, s)$ and $\mathcal{H}_{\mathcal{B}}(t)$ as the commutator corresponding to $H_{\mathcal{B}}(t)$:
\[
\mathcal{U}_{\mathcal{B}}(t, s) = U_{\mathcal{B} l}(t, s) U^\dagger_{\mathcal{B}, r}(t, s) \text{ and }\mathcal{H}_{\mathcal{B}}(t) = H_{\mathcal{B}, l}(t) - H_{\mathcal{B}, r}(t).
\]
Note also that 
\[
\mathcal{U}_{\mathcal{B}}(t, s) = \mathcal{T}\exp\bigg(-i \int_s^t \mathcal{H}_{\mathcal{B}}(s') ds'\bigg).
\]
Finally, for simplicity, we will use the notation $U(t, s) = U_\mathcal{A}(t, s), \mathcal{U}(t, s) = \mathcal{U}_\mathcal{A}(t, s)$ and $\mathcal{H}(t) = \mathcal{H}_\mathcal{A}(t)$. Given $X \subseteq \Lambda$, we will denote by $\mathcal{A}_X$ by
\[
\mathcal{A}_X = \{ \alpha \in \mathcal{A}: S_\alpha \cap X \neq \emptyset\}.
\]
Note that $H_{\mathcal{A}_X}$ will then be the sum of local terms whose support does not intersect with $X$. Given a subregion $X \subseteq \Lambda$, we will also use $H_X(t) = H_{\mathcal{A}_X}(t), \mathcal{H}_{X}(t) = \mathcal{H}_{\mathcal{A}_X}(t), U_X(t, s) = U_{\mathcal{A}_X}(t, s)$ and $\mathcal{U}_{X}(t, s)= \mathcal{U}_{\mathcal{A}_X}(t, s)$.

We will consider the initial environment state, $\rho_E$, to be Gaussian. It can thus be described by its first and second moments of the annihilation and creation operators. In particular, we assume that
\begin{align}\label{eq:gaussian_state_moments}
\text{Tr}(B_{\alpha, t}^\nu B_{\alpha', t'}^{\nu'} \rho_E) = \K_\alpha^{\nu, \nu'}(t - t')\delta_{\alpha, \alpha'} \text{ where } \K_\alpha^{\nu, \nu'} \in \mathcal{K}_1(\mathbb{R}).
\end{align}
While in general a Gaussian state can also have a displacement, without loss of generality, we will assume that $\text{Tr}(B_{\alpha, t}^\nu \rho_E ) = 0$ since a displacement can be equivalently incorporated as a Hamiltonian term acting only on the system using the Mollow transform \cite{mollow1975pure, gardiner1985input}. It will be convenient to define the kernel $\text{K}_{\alpha, \sigma}^{\nu, \nu'}(t - t')$, where $\sigma \in \{l, r\}$ via
\begin{align}\label{eq:kernel_def}
    \vecbra{I_E} B_{\alpha, t, \sigma}^{\nu} B_{\alpha', t', \sigma'}^{\nu'}\vecket{\rho_E} = \delta_{\alpha, \alpha'} \text{K}_{\alpha, \sigma'}^{\nu, \nu'} (t - t'),
\end{align}
Importantly, we emphasize that due to the cyclic property of the trace, $\vecbra{I_E} B_{\alpha, t, \sigma}^{\nu} B_{\alpha', t', \sigma'}^{\nu'}\vecket{\rho_E}$ is independent of $\sigma \in \{l, r\}$. The kernels $\K_{\alpha, \sigma'}^{\nu, \nu'}$ are related to the kernels $\K_{\alpha}^{\nu, \nu'}$ defined above via
\[
\K_{\alpha, \sigma}^{\nu, \nu'}(\tau) = \begin{cases}
    \K_{\alpha}^{\nu, \nu'}(\tau) & \text{ if }\sigma = l, \\
    \K_\alpha^{\nu', \nu}(-\tau) & \text{ if }\sigma = r.
\end{cases}, 
\]
Furthermore, from this relation, it also follows that $\text{K}_{\alpha, \sigma}^{\nu', \nu} \in \mathcal{K}_1(\mathbb{R})$ for all $\sigma \in \{r, l\}, \nu, \nu' \in \{x, p\}, \alpha$. We will also find it convenient to define a kernel upper-bound $\U \in \mathcal{K}_1(\mathbb{R})$ on the kernels $\K_{\alpha, \sigma'}^{\nu, \nu'}$ --- suppose $\K_{\alpha, \sigma'}^{\nu, \nu'}$ has the representation
\begin{align}\label{eq:representation_bare_kernels}
\K_{\alpha, \sigma'}^{\nu, \nu'}(\tau) = \K_{\alpha, \sigma', c}^{\nu, \nu'}(\tau) + \sum_{j = 1}^{M} k_{\alpha, \sigma', j}^{\nu, \nu'} \delta(\tau - T_j),
\end{align}
then we can define $\U$ via
\begin{align}\label{eq:upper_bounding_kernel}
\U(\tau) = \U_c(\tau) + \sum_{j = 1}^M u_j \delta(\tau - T_j) \text{ with } \U_c(\tau) = \sup_{\alpha, \nu, \nu', \sigma'} \smallabs{\K_{\alpha, \sigma', c}^{\nu, \nu'}(\tau)} \text{ and }u_j = \sup_{\alpha, \nu, \nu', \sigma'} \smallabs{k_{\alpha, \sigma', j}^{\nu, \nu'}}.
\end{align}
The kernel $\U$ can be considered to be an upper bound on $\K_{\alpha, \sigma'}^{\nu, \nu'}$ in the sense that for any continuous function $f:\mathbb{R}\to [0, \infty)$, 
\[
\int_{-\infty}^\infty \smallabs{\K_{\alpha, \sigma'}^{\nu, \nu'}}(\tau)f(\tau) d\tau \leq \int_{-\infty}^\infty \U(\tau) f(\tau) d\tau.
\]

\subsection{Wick's theorem}\label{sec:supp_prelim_wick}
In all the analysis in this section, we will repeatedly use the Wick's theorem. For completeness and notational convenience, here we briefly review the Wick's theorem. We will use the Wick's theorem in the vectorized picture. In particular, suppose $B_1, B_2 \dots B_{2N}$ are superoperators that are linear combination of annihilation and creation operators in the environment applied either on the left or right, then the Wick's theorem states that
\begin{align}\label{eq:wick_theorem_vanilla}
    \vecbra{I_E} B_1 B_2 \dots B_{2N} \vecket{\rho_E} = \sum_{p \in \mathcal{P}_{[1:2N]}}\prod_{(i, j) \in p} \vecbra{I_E} B_{i} B_j\vecket{\rho_E},
\end{align}
where $\mathcal{P}_{[1:2N]}$ is the set of all the pairings of $[1:2N]$, and a pairing $p \in \mathcal{P}_{[1:2N]}$ is expressed as a list of $N$ tuples $\{(i_1, j_1), (i_2, j_2) \dots (i_N, j_N)\}$ where we choose the convention to always set $i_k < j_k$. However, we will often need to analyze expressions of the form $\vecbra{I_E} \mathcal{V}_1 B_1 B_2 \dots B_{n} \mathcal{V}_2 \vecket{\rho_E}$,
where $\mathcal{V}_1, \mathcal{V}_2$ are superoperators expressible as exponentials or, when applicable and meaningful, time-ordered exponentials of linear combinations of annihilation and creation superoperators. To formulate a version of Wick's theorem for such expressions, we first define two operator spaces that we will frequently encounter in our calculations.
\begin{definition}[Operator space $\mathcal{S}(\rho_E)$]\label{def:S_rho}
    An operator $\phi \in \mathcal{S}(\rho_E) \subseteq \textnormal{L}(\mathcal{H}_S \otimes \mathcal{H}_E)$ if $\exists \{\mathcal{A}_i \subseteq \mathcal{A}\}_{i \in [1:n]}, \{(s_i, t_i) : s_i, t_i \in \mathbb{R}\}_{[i:n]}$, system superoperators $\{\Omega_i : \norm{\Omega_i}_{\diamond} \leq 1\}_{i\in [1:n]}$ and system operator $\sigma_S$ with $\norm{\sigma_S}_1 \leq 1$ such that
    \[
    \vecket{\phi} = \prod_{i = n}^1 \Omega_i \mathcal{U}_{\mathcal{A}_i}(t_i, s_i) \vecket{\sigma_S, \rho_E}.
    \]
    Furthermore, $(\{\mathcal{A}_i\}_{i\in[1:n]}, \{(s_i, t_i)\}_{i\in[1:n]}, \{\Omega_i\}_{i \in [1:n]}, \sigma_S)$ will be called a \textbf{representation} of the operator $\phi$ and $\{(s_i, t_i)\}_{i \in [1:n]}$ will be called the \textbf{time edges} of the representation of $\phi$.
\end{definition}
\noindent We also define an operator space that is ``dual" to the operator space $\mathcal{S}(\rho_E)$.
\begin{definition}[Operator space $\mathcal{Q}(\rho_E)$]\label{def:Q_rho}
    An operator $\theta \in \mathcal{Q}_\zeta({\rho_E})$ if $\exists \{\mathcal{A}_i \subseteq \mathcal{A}\}_{i \in [1:n]}, \{(s_i, t_i) : s_i, t_i \in \mathbb{R} \}_{i \in [1:n]}$, system superoperators $\{\Omega_i : \norm{\Omega_i}_\diamond \leq 1\}_{i \in [1:n]}$ and system operator $O_S$ with $\norm{O_S} \leq 1$ such that
\[
\vecbra{\theta} = \vecbra{O_S, I_E}\prod_{i = n}^1  \mathcal{U}_{\mathcal{A}_i}(t_i, s_i) \Omega_i.
\]
Furthermore, $(\{\mathcal{A}_i \}_{i \in [1:n]}, \{(s_i, t_i)\}_{i \in [1:n]}, \{\Omega_i\}_{i \in [1:n]}, O_S)$ will be called a \textbf{representation} of the state $\theta$ and $\{(s_i, t_i)\}_{i \in [1:n]}$ will be called the \textbf{time edges} of the {representation} of $\theta$.
\end{definition}
To make the statement of and computations involved in Wick's contraction notationally convenient, we will define the contraction: Suppose $\vecket{\phi} \in \mathcal{S}({\rho_E})$ with representation $(\{\mathcal{A}_i\}_{i \in [1:n]}, \{(s_i, t_i)\}_{[1:n]}, \{\Omega_i\}_{[1:n]}, \sigma_S)$ and $B$ is an superoperator that is a linear combination of annihilation and creation operators, then we define a ``right Wick's contraction" via 
\begin{subequations}\label{eq:right_W_1}
\begin{align}
    \overrightarrow{\textnormal{W}}(B; \vecket{\phi}) = -i (-1)^{\sigma} \Theta_{\mathcal{A}_j}(\alpha) \int_{s_{j }}^{t_j} \vecbra{I_E} B B^{\nu}_{\alpha, \tau, \sigma}\vecket{\rho_E} \vecket{\phi_{j, \alpha, \sigma}^{\nu}(\tau)} d\tau,
    \end{align}
    where $\vecket{\phi_{j, \alpha, \sigma}^\nu(\tau)} \in \mathcal{S}(\rho_E)$ is given by
    \begin{align}
    \vecket{\phi_{j, \alpha, \sigma}^\nu(\tau)} =  \bigg(\prod_{i = n}^{j + 1} \Omega_i \mathcal{U}_{\mathcal{A}_i}(t_i, s_{i })\bigg) \Omega_{j}\mathcal{U}_{\mathcal{A}_j}(t_j, \tau) R_{\alpha, \sigma}^\nu(\tau) \mathcal{U}_{\mathcal{A}_j}(\tau, s_{j })\bigg(\prod_{i = j - 1}^{1} \Omega_i  \mathcal{U}_{\mathcal{A}_i}(t_i, s_i)\bigg)\vecket{\sigma_S, \rho_E}.
    \end{align}
    \end{subequations}
    Then, we define $\overrightarrow{\textnormal{W}}(\{B_j\}_{j \in [1:m]}; \vecket{\phi}) $ recursively via
    \begin{align}\label{eq:right_W}
    &\overrightarrow{\textnormal{W}}(\{B_k\}_{k \in [1:m]}; \vecket{\phi}) = -i (-1)^{\sigma} \Theta_{\mathcal{A}_j}(\alpha) \int_{s_{j}}^{t_j} \vecbra{I_E} B_m B^{\nu}_{\alpha, \tau, \sigma} \vecket{\rho_E} {\overrightarrow{\text{W}}\big(\{B_k\}_{k \in [1:m - 1]} ; \vecket{\phi_{j, \alpha, \sigma}^{\nu}(\tau)}\big)} d\tau.
    \end{align}
    Similarly, just as we defined the right Wick contraction in Eqs.~\ref{eq:right_W_1} and \ref{eq:right_W}, we can also define a left Wick contraction. Suppose $\theta \in \mathcal{Q}({\rho_E})$ with representation $(\{\mathcal{A}_i\}_{i \in [1:n]}, \{(s_i, t_i)\}_{[1:n]}, \{\Omega_i\}_{[1:n]}, O_S)$ and $B$ is a superoperator that is a linear combination of annihilation and creation operators, then we define 
    \begin{align}\label{eq:left_W_1}
    \wickleft(\vecbra{\theta}; B) = -i(-1)^{\sigma}  \Theta_{\mathcal{A}_j}(\alpha) \int_{s_{j}}^{t_j} \vecbra{I_E}B^{\nu}_{\alpha, \tau, \sigma}B\vecket{\rho_E} \vecbra{\theta_{j, \alpha, \sigma}^{\nu}(\tau)} d\tau,
    \end{align}
    where $\vecbra{\theta_{j, \alpha, \sigma}^{\nu}(\tau)} \in \mathcal{Q}(\rho_E)$ are given by
    \[
    \vecbra{\theta_{j, \alpha, \sigma}^\nu(\tau)} =  \vecbra{O_S, I_E}\bigg(\prod_{i = n}^{j + 1} \Omega_i \mathcal{U}_{\mathcal{A}_i}(t_i, s_{i})\bigg) \Omega_{j}\mathcal{U}_{\mathcal{A}_j}(t_j, \tau) R_{j, \sigma}^\nu(\tau) \mathcal{U}_{\mathcal{A}_j}(\tau, s_{j})\bigg(\prod_{i = j - 1}^{1} \Omega_i  \mathcal{U}_{\mathcal{A}_i}(t_i, s_{i})\bigg).
    \]
    Then, we define $\wickleft(\vecbra{\theta}; \{B_j\}_{j \in [1:m]}) $ recursively via
    \begin{align}\label{eq:right_W}
    &\wickleft\big(\vecbra{\theta}; \{B_k\}_{k \in [1:m]}\big) = -i  \Theta_{\mathcal{A}_j}(\alpha) \int_{s_{j}}^{t_j}  \vecbra{I_E}B^{\nu}_{\alpha, \tau, \sigma} B_m \vecket{\rho_E} \wickleft\big(\vecbra{\theta_{j,\alpha, \sigma}^{\nu}(\tau)}; \{B_k\}_{k \in [1:m - 1]}\big) d\tau.
    \end{align}
With the right and left Wick contraction operations at hand, we can then state the Wick's theorem as follows:
\begin{lemma}[Wick's theorem]\label{lemma:wick_thm_gaussian}
    Suppose $\vecket{\phi} \in \mathcal{S}(\rho_E)$, $\vecbra{\theta} \in \mathcal{Q}(\rho_E)$ and $B_1, B_2 \dots B_n$ are superoperators that are linear combination of annihilation and creation operators, then
    \begin{align}
        \vecbra{\theta}B_1 B_2 \dots B_n \vecket{\phi} = \sum_{\substack{\mathcal{I}_l, \mathcal{I}_r, \mathcal{I} \\ \mathcal{I}_l \cup \mathcal{I}_r\cup \mathcal{I} = [1:n] \\ 
        \abs{\mathcal{I}} \text{ is even}}} \sum_{p \in \mathcal{P}_{\mathcal{I}}} \bigg(\prod_{(i, j) \in p} \vecbra{I_E}B_i B_j \vecket{\rho_E}\bigg) \wickleft_{\rho_E}\big(\vecbra{\theta}; \{B_j\}_{j\in \mathcal{I}_l}\big)\wickright_{\rho_E}\big(\{B_j\}_{j\in \mathcal{I}_r}; \vecket{\phi}\big),
    \end{align}
    where $\mathcal{P}_{\mathcal{I}}$ is the set of all pairings of the elements of $\mathcal{I}$, and each pairing $p \in \mathcal{P}_I$ is expressed as a list of tuples $\{(i_1, j_1), (i_2, j_2) \dots \}$ where we assume $i_n < j_n$.
\end{lemma}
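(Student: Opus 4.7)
The plan is to reduce this generalised Wick identity to the ordinary Gaussian Wick's theorem (Eq.~\ref{eq:wick_theorem_vanilla}) by fully expanding every time-ordered exponential appearing in the representations of $\vecket{\phi} \in \mathcal{S}(\rho_E)$ and $\vecbra{\theta} \in \mathcal{Q}(\rho_E)$ as a Dyson series. Since $\mathcal{H}_{\mathcal{A}_i}(\tau)$ is a linear combination of terms of the form $B^\nu_{\alpha, \tau, \sigma} R^\nu_{\alpha, \sigma}(\tau)$, each unitary $\mathcal{U}_{\mathcal{A}_i}(t_i, s_i)$ becomes a formal power series whose $k$-th order term is a time-ordered integral carrying $k$ bath operators interleaved with system operators. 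After this expansion the left hand side becomes a (formal) sum of bare Gaussian moments $\vecbra{I_E}\tilde{B}_1 \cdots \tilde{B}_N\vecket{\rho_E}$ weighted by system-operator coefficients and integrated against the Dyson time variables, and each such moment can be evaluated by the bare Wick's theorem.

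The main step is to reorganise the resulting double sum --- over Dyson orders on the one hand and over bath pairings on the other --- according to how the external operators $B_1, \ldots, B_n$ are paired. In any given pairing, each $B_k$ is matched either with another external $B_j$, with an internal bath operator from the Dyson expansion of some unitary in $\vecket{\phi}$, or with an internal operator from the expansion of $\vecbra{\theta}$. These three cases correspond respectively to $k \in \mathcal{I}$, $\mathcal{I}_r$, and $\mathcal{I}_l$ in the lemma. When $B_k$ is contracted with an internal operator occupying the $l$-th time-ordered slot inside the $k_j$-th order Dyson term of $\mathcal{U}_{\mathcal{A}_j}(t_j, s_j)$ on the $\vecket{\phi}$ side, the contraction pulls the Wick factor $\vecbra{I_E} B_k B^\nu_{\alpha, \tau, \sigma}\vecket{\rho_E}$ outside, leaves the system operator $R^\nu_{\alpha, \sigma}(\tau)$ in place at the contracted time, and frees $\tau \in [s_j, t_j]$. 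Summing over the slot position $l$ and the Dyson order $k_j$ of the residual time-ordered operators on either side of $\tau$ regenerates two independent time-ordered exponentials on $[s_j, \tau]$ and $[\tau, t_j]$, reproducing precisely the structure $\mathcal{U}_{\mathcal{A}_j}(t_j, \tau) R^\nu_{\alpha, \sigma}(\tau) \mathcal{U}_{\mathcal{A}_j}(\tau, s_j)$ that appears in the definition of $\wickright$ (Eqs.~\ref{eq:right_W_1}--\ref{eq:right_W}). The analogous manipulation on the $\vecbra{\theta}$ side produces $\wickleft$, and iterating over all external indices in $\mathcal{I}_r$ and $\mathcal{I}_l$ yields the nested forms $\wickright(\{B_j\}_{j\in\mathcal{I}_r}; \vecket{\phi})$ and $\wickleft(\vecbra{\theta}; \{B_j\}_{j\in\mathcal{I}_l})$.

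The remaining contractions --- those pairing two internal operators, whether same-side or cross-side between $\theta$ and $\phi$ --- are not rearranged: they reappear automatically when the bare Wick's theorem is applied to the bath expectation value implicit in the product $\wickleft(\vecbra{\theta}; \{B_j\}_{j\in\mathcal{I}_l}) \cdot \wickright(\{B_j\}_{j\in\mathcal{I}_r}; \vecket{\phi})$ on the right hand side. The main obstacle is verifying that the classification of pairings into the four types (external-external, external-internal on $\phi$, external-internal on $\theta$, internal-internal) is a bijection of pairings of the full set of bath operators, so that no term is double counted or missed, and that the combinatorial factors, signs $(-1)^\sigma$, and time-ordering constraints from each Dyson slot combine correctly. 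A cleaner formalisation proceeds by induction on $n$: for $n = 0$ the identity is tautological, and in the inductive step one applies the above three-way split to the pairing of $B_n$ and uses the recursive definitions of $\wickleft$ and $\wickright$ together with the inductive hypothesis applied to the resulting expressions with $n - 1$ external bath operators.
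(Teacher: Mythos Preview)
The paper does not supply a proof of this lemma: it is stated as a repackaging of the standard Gaussian Wick theorem (Eq.~\ref{eq:wick_theorem_vanilla}) in the paper's contraction notation, and is followed only by worked examples. So there is no paper proof to compare against, and the relevant question is whether your sketch is sound on its own.

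It is, at the level of a formal derivation. One small inaccuracy: $\mathcal{H}_{\mathcal{A}_i}(\tau)$ is not purely a sum of terms $B^\nu_{\alpha,\tau,\sigma} R^\nu_{\alpha,\sigma}(\tau)$ --- it also contains the system-only commutators $\mathcal{C}_{h_\alpha(\tau)}$. These carry no bath content and are transparent to the Wick pairings, but they must still be present when you resum the Dyson slots on either side of a contracted time $\tau$ to rebuild $\mathcal{U}_{\mathcal{A}_j}(t_j,\tau)\,R^\nu_{\alpha,\sigma}(\tau)\,\mathcal{U}_{\mathcal{A}_j}(\tau,s_j)$; your argument goes through unchanged provided the Dyson expansion is taken in the full $\mathcal{H}_{\mathcal{A}_i}$. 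The more genuine caveat is analytic: the bath superoperators are unbounded, so invoking the Dyson series as anything beyond a formal device needs justification. Your closing remark that induction on $n$ gives a cleaner formalisation is exactly the right fix --- it reduces everything to the single-operator splitting $\vecbra{\theta}B\vecket{\phi} = \wickleft(\vecbra{\theta};B)\vecket{\phi} + \vecbra{\theta}\wickright(B;\vecket{\phi})$, which in turn follows from differentiating one factor $\mathcal{U}_{\mathcal{A}_j}(t_j,s_j)$ with respect to an endpoint and never requires the full series to converge.
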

\noindent The following examples illustrate the application of lemma \ref{lemma:wick_thm_gaussian} and helps unpack the notation in the lemma:
\begin{align*}
    &\vecbra{\theta}B \vecket{\phi} = \wickleft(\vecbra{\theta}; B) \vecket{\phi} + \vecbra{\theta} \wickright(B; \vecket{\phi}), \\
    &\vecbra{\theta}B_1 B_2 \vecket{\phi} = \wickleft\big(\vecbra{\theta}; \{B_1, B_2\}\big)\vecket{\phi} + \vecbra{\theta} \wickright\big(\{B_1, B_2\}; \vecket{\phi}\big) + \vecbra{I_E}B_1 B_2 \vecket{\rho_E} \vecbra{\theta}\phi\rrangle + \nonumber\\
    &\qquad \qquad \qquad \qquad \qquad \qquad  \wickleft(\vecbra{\theta}; B_1) \wickright(B_2; \vecket{\phi}) + \wickleft(\vecbra{\theta}; B_2) \wickright(B_1; \vecket{\phi}).
\end{align*}
\subsection{Analyzing the mollification}\label{sec:regularization}
We will require the mollifier which is a function $\eta:\mathbb{R} \to [0, \infty)$ that is smooth, compactly supported in $[-1, 1]$ and is normalized such that $\norm{\eta}_{1} = 1$. An explicit example of such a function would be the standard mollifier
\begin{align}\label{eq:standard_mollifiers}
\eta(x) = \begin{cases}
    A_0 e^{-1/(1 - \abs{x}^2)} & \text{ if } |x| < 1, \\
    0 & \text{ otherwise},
\end{cases}
\end{align}
where $A_0$ is a constant chosen to satisfy $\norm{\eta}_{1} = 1$. Give a regularizing parameter $\delta > 0$, we can then define
\[
\eta_\delta(x) = \frac{1}{\delta}\eta\bigg(\frac{x}{\delta}\bigg),
\]
which, like $\eta(x)$, is again a smooth, non-negative function compactly supported in $[-\delta, \delta]$ and with $\norm{\eta_\delta}_{1} = 1$. Given $\delta > 0$, we first define a ``mollified" Hamiltonian corresponding to Eq.~\ref{eq:sys_env_hamiltonian}, ${H}^\delta(t)$, via
\[
{H}^\delta(t) = \sum_{\alpha} h_\alpha(t) + \sum_{\alpha} \sum_{\nu \in \{x, p\}}{B}^{\nu, \delta}_{\alpha, t} R^{\nu}_{\alpha}(t),
\]
where
\[
{B}^{\nu, \delta}_{\alpha, t} = \int_{-\infty}^\infty \eta_\delta(t - s) B^{\nu}_{\alpha, s} ds.
\]
Furthermore, this definition of ${B}^{\nu, \delta}_{\alpha, t}$ defines a set of kernels $\K^{\nu, \nu', \delta, \delta'}_{\alpha, \sigma'}$, paralleling Eq.~\ref{eq:kernel_def}. In particular, we obtain that
\begin{subequations}
\begin{align}
    \vecbra{I_E} {B}^{\nu, \delta}_{\alpha, t, \sigma} {B}^{\nu', \delta'}_{\alpha', t', \sigma'}\vecket{\rho_E} &= \delta_{\alpha, \alpha'}\K^{\nu, \nu'; \delta, \delta'}_{\alpha, \sigma'}(t - t'),
\end{align}
where
\begin{align}
\K^{\nu, \nu'; \delta, \delta'}_{\alpha, \sigma'}(t - t') = \int_{-\infty}^{\infty}\int_{-\infty}^{\infty}  \eta_{\delta}(t - s) \K_{\alpha, \sigma'}^{\nu, \nu'}(s - s') \eta_{\delta'}(t' - s') ds' ds =  \delta_{\alpha, \alpha'}\big(\eta_{\delta, \delta'}\star \K_{\alpha, \sigma'}^{\nu, \nu'}\big)(t - t'),
\end{align}
and where we define
\begin{align}\label{eq:convolved_mollifier}
    \eta_{\delta, \delta'}(\tau) = (\eta_\delta \star \eta_{\delta'})(\tau) = \int_{-\infty}^\infty \eta_\delta(\tau + s')\eta_{\delta'}(s') ds'.
\end{align}
\end{subequations}
It can be noted that $\eta_{\delta, \delta'}$ is a smooth, non-negative function that is compactly supported in the interval $[-(\delta + \delta'), \delta + \delta']$ and $\norm{\eta_{\delta, \delta'}}_1 = 1$. The kernel $\K^{\nu, \nu'; \delta, \delta'}_{\alpha, \sigma'}$ is therefore the mollification of the kernel $\K^{\nu, \nu'}_{\alpha, \sigma'}$ with mollifier $\eta_{\delta, \delta'}$. The following lemma provides a useful property of the mollified kernel $\K^{\nu, \nu'; \delta, \delta'}_{\alpha, \sigma'}$.
\begin{lemma}\label{lemma:mollified_kernel}
    For $\delta, \delta' > 0$, 
    \begin{enumerate}
        \item[(a)] $\K^{\nu, \nu'; \delta, \delta'}_{\alpha, \sigma'}$ is a smooth and bounded function with
        \[
        \smallnorm{\K^{\nu, \nu'; \delta, \delta'}_{\alpha, \sigma'}}_\infty \leq \min\bigg(\frac{1}{\delta}, \frac{1}{\delta'}\bigg)\textnormal{TV}(\K^{\nu, \nu'}_{\alpha, \sigma'}).
        \]
        \item[(b)] Suppose $\U^{\delta, \delta'} = \U \star \eta_{\delta, \delta'}$, where $\U$ is the upper bound on the kernels $\K_{\alpha, \sigma'}^{\nu, \nu'}$ defined in Eq.~\ref{eq:upper_bounding_kernel}, then $\forall\ \tau \in \mathbb{R}$: $\smallabs{\K_{\alpha, \sigma'}^{\nu, \nu'; \delta, \delta'}(\tau)} \leq \U^{\delta, \delta'}(\tau)$
    \end{enumerate}
\end{lemma}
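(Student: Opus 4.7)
The plan is to handle the two parts of the lemma separately, exploiting in both cases the facts that $\eta_{\delta,\delta'} = \eta_\delta \star \eta_{\delta'}$ is a smooth, non-negative, compactly supported bump with $\norm{\eta_{\delta,\delta'}}_1 = 1$, and that by construction $\K_{\alpha,\sigma'}^{\nu,\nu';\delta,\delta'} = \eta_{\delta,\delta'} \star \K_{\alpha,\sigma'}^{\nu,\nu'}$. For part (a), I would first establish smoothness: since $\eta_{\delta,\delta'}$ is smooth and compactly supported, standard differentiation-under-the-integral (justified by the $\mathcal{K}_1(\mathbb{R})$ representation of $\K_{\alpha,\sigma'}^{\nu,\nu'}$ as an $L^1$ function plus finitely many Dirac atoms) moves all derivatives onto $\eta_{\delta,\delta'}$. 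For the sup bound, using the explicit representation
\begin{equation*}
\K_{\alpha,\sigma'}^{\nu,\nu';\delta,\delta'}(\tau) = \int \eta_{\delta,\delta'}(\tau - s)\, \K_{\alpha,\sigma',c}^{\nu,\nu'}(s)\, ds + \sum_j k_{\alpha,\sigma',j}^{\nu,\nu'}\, \eta_{\delta,\delta'}(\tau - T_j),
\end{equation*}
I would pull the absolute value inside, majorize $\eta_{\delta,\delta'}$ by $\norm{\eta_{\delta,\delta'}}_\infty$, and recognize the remaining factor as $\textnormal{TV}(\K_{\alpha,\sigma'}^{\nu,\nu'})$. It then suffices to show $\norm{\eta_{\delta,\delta'}}_\infty \leq \min(1/\delta, 1/\delta')$, which follows from Young's inequality $\norm{\eta_\delta \star \eta_{\delta'}}_\infty \leq \min(\norm{\eta_\delta}_\infty \norm{\eta_{\delta'}}_1,\; \norm{\eta_{\delta'}}_\infty \norm{\eta_\delta}_1)$, once the $O(1)$ factor $\norm{\eta}_\infty$ is absorbed into the choice of mollifier (matching the convention used throughout the paper).

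For part (b), the key observation is the non-negativity of $\eta_{\delta,\delta'}$. Starting from the same splitting, I would pull the absolute value inside the integral and the sum, obtaining
\begin{equation*}
\smallabs{\K_{\alpha,\sigma'}^{\nu,\nu';\delta,\delta'}(\tau)} \leq \bigl(\eta_{\delta,\delta'} \star \smallabs{\K_{\alpha,\sigma'}^{\nu,\nu'}}\bigr)(\tau),
\end{equation*}
where $\smallabs{\K_{\alpha,\sigma'}^{\nu,\nu'}}$ is the kernel defined in the Notation section. By the definitions of $\U_c$ and $u_j$ in Eq.~(\ref{eq:upper_bounding_kernel}), we have $\smallabs{\K_{\alpha,\sigma'}^{\nu,\nu'}} \leq \U$ as a sum of a non-negative $L^1$ function plus non-negative atoms, and since $\eta_{\delta,\delta'} \geq 0$, convolution preserves this pointwise inequality and produces $(\eta_{\delta,\delta'} \star \U)(\tau) = \U^{\delta,\delta'}(\tau)$, which is exactly the claim.

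The main subtlety is that the kernels are tempered distributions with Dirac atoms rather than honest $L^1$ functions, so $\star$ must be interpreted as convolution of a smooth compactly-supported function against a finite signed (respectively non-negative) Borel measure; the ``$L^\infty$'' and ``pointwise'' statements are then meaningful because the convolution is in fact a continuous function. Once this distributional setup is laid out in a single preliminary sentence, the proof is a short calculation: part (a) reduces to Young's inequality plus the definition of total variation, and part (b) reduces to the non-negativity of $\eta_{\delta,\delta'}$ combined with the pointwise bound $\smallabs{\K_{\alpha,\sigma'}^{\nu,\nu'}} \leq \U$. I expect no substantive obstacle beyond this bookkeeping.
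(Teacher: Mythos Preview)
Your proposal is correct and follows essentially the same route as the paper's proof: both parts use the explicit splitting of $\K_{\alpha,\sigma'}^{\nu,\nu'}$ into its continuous and atomic pieces, bound by $\norm{\eta_{\delta,\delta'}}_\infty \cdot \textnormal{TV}(\K_{\alpha,\sigma'}^{\nu,\nu'})$ for (a) via Young's inequality, and use non-negativity of $\eta_{\delta,\delta'}$ together with the pointwise bound $\smallabs{\K_{\alpha,\sigma'}^{\nu,\nu'}} \leq \U$ for (b). Your observation about the stray $\norm{\eta}_\infty$ factor is apt---the paper's own proof arrives at $\norm{\eta}_\infty/\delta$ and then simply asserts the lemma statement, so you are handling this exactly as the paper does.
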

\begin{proof}
    (a) If $\K^{\nu, \nu'}_{\alpha, \sigma'}$ has a representation given by Eq.~\ref{eq:representation_bare_kernels}, then 
    \[
    \K^{\nu, \nu'; \delta,\delta'}_{\alpha, \sigma'}(\tau) = \int_{-\infty}^\infty \K^{\nu, \nu'}_{\alpha, \sigma'}(\tau') \eta_{\delta, \delta'}(\tau - \tau') d\tau' + \sum_{j}k_{\alpha, \sigma', j}^{\nu, \nu'}\eta_{\delta, \delta'}(\tau - T_j).
    \]
    It is immediately clear from this expression that $\K^{\nu, \nu'; \delta, \delta'}_{\alpha, \sigma'}$ is smooth. Next, two upper bound $\smallabs{\K^{\nu, \nu'; \delta,\delta'}_{\alpha, \sigma'}(\tau) }$, note that
    \begin{align}\label{eq:upper_bound_inf_norm_manip}
        \smallabs{\K^{\nu, \nu'; \delta,\delta'}_{\alpha, \sigma'}(\tau)} \leq \int_{-\infty}^\infty \smallabs{\K^{\nu, \nu'}_{\alpha, \sigma'}(\tau)}{\eta_{\delta, \delta'}(\tau - \tau')}d\tau' + \sum_{j} \smallabs{k^{\nu, \nu'}_{\alpha, \sigma', j}}{\eta_{\delta, \delta'}(\tau - T_j)} \leq \norm{\eta_{\delta, \delta'}}_\infty \text{TV}(\K_{\alpha, \sigma'}^{\nu, \nu'}),
    \end{align}
    where, the reader can note that $\eta_{\delta, \delta'}$ is a positive valued function.
    Next, we use that since $\eta_{\delta, \delta'} = \eta_{\delta}\star \eta_{\delta'}$, $\norm{\eta_{\delta,\delta'}}_\infty \leq \norm{\eta_\delta}_1 \norm{\eta_{\delta'}}_\infty \leq \norm{\eta}_\infty / \delta'$. A similar bound holds on switching $\delta$ and $\delta'$ and consequently, we obtain $\norm{\eta_{\delta,\delta'}}_\infty \leq \norm{\eta}_\infty / \delta$. From this bound together with Eq.~\ref{eq:upper_bound_inf_norm_manip}, we obtain part (a) of the lemma statement.

    (b) Starting from the expression for $\K^{\nu, \nu'; \delta, \delta'}_{\alpha, \sigma'}$, we obtain that
    \begin{align*}
        \smallabs{\K^{\nu, \nu'; \delta, \delta'}_{\alpha, \sigma'}(\tau)} &\leq \int_{-\infty}^\infty \smallabs{\K^{\nu, \nu'; \delta, \delta'}_{\alpha, c}(\tau')} \eta_{\delta, \delta'}(\tau - \tau') d\tau' + \sum_j \smallabs{k^{\nu, \nu'}_{\alpha, \sigma', j}} \eta_{\delta, \delta'}(\tau - T_j), \nonumber \\
        &\leq \int_{-\infty}^\infty {\U_c(\tau')} \eta_{\delta, \delta'}(\tau - \tau') d\tau' + \sum_j {u_j} \eta_{\delta, \delta'}(\tau - T_j) = (\U\star \eta^{\delta, \delta'})(\tau),
    \end{align*}
    where $\U_c$ and $u_j$ are defined in Eq.~\ref{eq:upper_bounding_kernel}.
\end{proof}
Note that as a consequence of lemma \ref{lemma:mollified_kernel}b, we also automatically obtain that
\begin{align}\label{eq:total_variation_mollified}
    \text{TV}\big(\K^{\nu, \nu'; \delta, \delta'}_{\alpha, \sigma'}\big) \leq \text{TV}(\U \star \eta^{\delta, \delta'}) = \text{TV}(\U).
\end{align}
Furthermore, we expect the mollified kernel $\K^{\nu, \nu'; \delta, \delta'}_{\alpha, \sigma'}$ to approximate the unmollified kernel $\K^{\nu, \nu'}_{\alpha, \sigma'}$ as a distribution. To make this expectation precise and to define a quantitative distance between two distributions, we would need to introduce a space of `test' functions. For our purposes, as will become clear in our analysis below, the appropriate function space is the space of piecewise continous and differentiable functions, which we define below.

\begin{definition}[Function space $\text{PC}^{1}_{t^*}(\mathbb{R})$]
    Given $t^* = \{t_1^*, t_2^* \dots t_n^*\}$, where $t_1^* < t_2^* \dots < t_n^*$, any $f:\mathbb{R} \to \mathbb{C} \in \text{PC}_{t^*}^{1}(\mathbb{R})$ has a representation
\[
f(t) = \begin{cases}
    f_1(t) & \text{for }t \in (-\infty, t_1), \\
    f_2(t) & \text{for }t \in (t_1, t_2), \\
    \vdots \\
    f_{n}(t) & \text{for }t \in (t_{n  - 1}, t_n), \\
    f_{n + 1}(t) & \text{for }t \in (t_n, \infty),
\end{cases}
\]
for some $f_1, f_2 \dots f_{n + 1} \in \textnormal{C}^1(\mathbb{R})$ and at the points of discontinuity, the function $f$ can assume any arbitrary value.
\end{definition}
Next, we specify how a kernel $\K \in \mathcal{K}_1(\mathbb{R})$ acts on a test function $f \in \text{PC}_{t^*}(\mathbb{R})$. Suppose $\K$ has the decomposition $\K = \K_c + \sum_{j}^M k_j \delta(t - T_j)$, then we define the action of $\K$ on $f$, which we will denote by $\int_{-\infty}^\infty \K(\tau) f(\tau) d\tau$, as
\begin{align*}
 \int_{-\infty}^\infty \K(\tau) f(\tau) d\tau = \int_{-\infty}^\infty \K_c(\tau) f(\tau) d\tau + \sum_{j = 1}^M \frac{k_j}{2}\big(f(T_j^-) + f(T_j^+)\big)
\end{align*}
Importantly, we will assume that, at a point of discontinuity of a test function, the action of a $\delta$-function is equivalent to averaging its left limit and right limit. While this choice is arbitrary, it is consistent with our goal of considering $\delta-$functions in a kernel as the limit of \emph{symmetric} mollifiers and is also the convention that is commonly used to deal with such kernels in practice in quantum optics and open quantum systems literature.

\begin{definition}\label{def:kernel_distance}
Given $t^* = \{t_1^*, t_2^* \dots t_M^*\}$, a kernel $ v \in \mathcal{K}_1(\mathbb{R})$ is $(\lambda_0, \lambda_1)-$small on the test function space $\textnormal{PC}_{t^*}^{1}(\mathbb{R})$ if $\forall f \in \textnormal{PC}_{t^*}^{1}(\mathbb{R})$ with $\norm{f}_\infty, \norm{f'}_\infty < \infty$,
\[
\abs{\int_{-\infty}^{\infty}  v(t) f(t) dt} \leq \lambda_0 \norm{f}_\infty + \lambda_1 \norm{f'}_\infty.
\]
Two kernels $v, v' \in \mathcal{K}_1(\mathbb{R})$ are $(\lambda_0, \lambda_1)$-close on $\textnormal{PC}_{t^*}^{1}(\mathbb{R})$ if the kernel $v - v'$ is $(\lambda_0, \lambda_1)-$small on $\textnormal{PC}_{t^*}^{1}(\mathbb{R})$.
\end{definition}
\noindent Next, we establish that the kernels $\K_{\alpha, \sigma'}^{\nu, \nu'}$ and $\K_{\alpha, \sigma'}^{\nu, \nu'; \delta, \delta'}$ are close in the sense of definition \ref{def:kernel_distance}.
\begin{lemma}\label{lemma:kernel_appx}
    Suppose $\K \in \mathcal{K}_1(\mathbb{R})$ is a kernel with continuous part $\K_c$ and $\delta$-functions at $\{T_i\}_{i\in[1:n]}$ and suppose $r_\delta:\mathbb{R} \to \mathbb{R}$ is a smooth, non-negative even function that is compactly supported on $[-\delta, \delta]$ and satisfies $\norm{r_\delta}_1 = 1$. Consider the test function space $\textnormal{PC}_{t^*}^1(\mathbb{R})$ such that $\delta < \Delta \tau_\textnormal{min} / 2$, where $\Delta \tau_\textnormal{min}$ is the minimum non-zero spacing between the times $\{T_i\}_{i \in [1:n]}\cup t^*$. Then $\K$ and $\K \star r_\delta$ are $(\lambda_0(\delta), \lambda_1(\delta))-$close, where
    \[
    \lambda_0(\delta) =  2\sum_{t \in t^*}\textnormal{TV}\big(\K_c; [t - \delta, t+\delta]\big) \text{ and }\lambda_1(\delta) = \delta\ \textnormal{TV}(\K).
    \]
\end{lemma}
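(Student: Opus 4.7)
The plan is to split $\K = \K_c + \sum_{j = 1}^n k_j \delta(\cdot - T_j)$ and separately bound the contributions of the continuous and atomic parts to $\int_{-\infty}^\infty (\K - \K \star r_\delta)(t) f(t) dt$ for an arbitrary $f \in \textnormal{PC}_{t^*}^1(\mathbb{R})$ with $\norm{f}_\infty, \norm{f'}_\infty < \infty$. Since $\K \star r_\delta$ is a smooth function, its pairing with $f$ is an ordinary Lebesgue integral, whereas the pairing of $\K$ with $f$ uses the averaging convention at the Dirac masses, and it is exactly this asymmetry (combined with the possible discontinuities of $f$) that has to be controlled.

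For the atomic part I would use that $r_\delta$ is even and supported in $[-\delta, \delta]$, so that $\int_{-\delta}^0 r_\delta(s) ds = \int_0^\delta r_\delta(s) ds = 1/2$. The spacing hypothesis $\delta < \Delta\tau_{\textnormal{min}}/2$ guarantees that on $[T_j - \delta, T_j + \delta]$ the only possible discontinuity of $f$ is at $T_j$ itself. Writing
\[
\frac{f(T_j^-) + f(T_j^+)}{2} - \int r_\delta(s) f(T_j + s) ds = \int_{-\delta}^{0} r_\delta(s)\bigl(f(T_j^-) - f(T_j + s)\bigr) ds + \int_{0}^{\delta} r_\delta(s)\bigl(f(T_j^+) - f(T_j + s)\bigr) ds,
\]
and applying first-order Taylor expansions of $f$ separately on each one-sided interval against the corresponding one-sided limit, each integrand is bounded by $\delta \norm{f'}_\infty$. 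Summing over $j$ yields an atomic contribution of at most $\delta \norm{f'}_\infty \sum_j \abs{k_j}$.

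For the continuous part I would apply Fubini (justified since $\K_c \in L^1$, $r_\delta$ is compactly supported and $f$ is bounded) to rewrite
\[
\int \bigl(\K_c - \K_c \star r_\delta\bigr)(t) f(t) dt = \int_{-\delta}^{\delta} r_\delta(s) \int \K_c(u)\bigl(f(u) - f(u+s)\bigr) du\, ds.
\]
For each fixed $s \in [-\delta,\delta]$ I would split the $u$-integration based on whether the segment with endpoints $u$ and $u+s$ contains a point of $t^*$: on the complementary ``good'' set, $\abs{f(u) - f(u+s)} \leq \abs{s}\norm{f'}_\infty \leq \delta\norm{f'}_\infty$, yielding a contribution at most $\delta\norm{f'}_\infty \norm{\K_c}_1$; the ``bad'' set is contained in $\bigcup_{t \in t^*}[t - \delta, t + \delta]$ (using $\abs{s} \leq \delta$), and the crude bound $\abs{f(u) - f(u+s)} \leq 2\norm{f}_\infty$ there gives a contribution at most $2\norm{f}_\infty \sum_{t \in t^*} \textnormal{TV}(\K_c; [t - \delta, t + \delta])$. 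Integrating against $r_\delta(s) ds$ (which has mass one) preserves these bounds.

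Adding the atomic and continuous estimates yields precisely the target inequality with $\lambda_0(\delta) = 2\sum_{t \in t^*} \textnormal{TV}(\K_c; [t - \delta, t + \delta])$ and $\lambda_1(\delta) = \delta\bigl(\norm{\K_c}_1 + \sum_j \abs{k_j}\bigr) = \delta\,\textnormal{TV}(\K)$. The only nontrivial subtlety is the case in which some $T_j$ coincides with a jump point of $f$ (the hypothesis permits $\{T_i\} \cup t^*$ to contain zero-spacing pairs); here the symmetry of $r_\delta$ together with the averaging convention for Dirac masses makes the one-sided Taylor estimates of the second paragraph work without modification, which I expect to be the main point that needs careful handling.
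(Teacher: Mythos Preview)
Your proposal is correct and follows essentially the same approach as the paper: both split into atomic and continuous parts, use one-sided Taylor estimates against the symmetric mollifier for the Dirac masses, and for $\K_c$ both apply the mean-value theorem away from the $\delta$-neighborhoods of $t^*$ while using the crude $2\norm{f}_\infty$ bound inside them. The only cosmetic difference is that the paper first moves the mollifier onto $f$ (bounding $\int |\K_c(\tau)|\,|f(\tau)-(f\star r_\delta)(\tau)|\,d\tau$ and splitting in $\tau$), whereas you fix $s$ first and split in $u$; the resulting estimates are identical.
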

\begin{proof}
    Suppose $\K$ has a representation
    \[
    \K(\tau) = \K_c(\tau) + \sum_{j = 1}^M k_j\delta(\tau - T_j),
    \]
    where $\K_c$ is a continuous function. We will also use the notation $\K^\delta = \K \star r_\delta$, which can then be expressed as
    \[
    \K^\delta(\tau) = \K_c^\delta(\tau)+ \sum_{j = 1}^M k_jr_\delta(\tau - T_j).
    \]
    For $f \in \textnormal{PC}_{t^*}^1(\mathbb{R})$, we obtain that
    \begin{align}\label{eq:error_decomp}
        &\abs{\int_{a}^b \K(\tau)f(\tau) d\tau - \int_{a}^b \K^\delta(\tau)f(\tau) d\tau}, \nonumber\\
        &\qquad \qquad \leq \underbrace{\abs{\int_{a}^b \K_c(\tau)f(\tau) d\tau - \int_{a}^b \K^\delta_c(\tau)f(\tau) d\tau}}_{\Delta_c[f]: \text{Error due to the continuous part}} + \underbrace{\sum_{j = 1}^n \abs{k_j} \abs{\frac{1}{2}(f(T_j^+) + f(T_j^-)) - (f\star r_\delta)(T_j)}}_{\Delta_d[f]: \text{Error due to the atomic part}}, \nonumber \\
    \end{align}
    where $\K_{c}^{\delta} = \K_{c} \star r_\delta$. We can separately analyze the errors due to the continuous part contribution, $\Delta_c[f]$, and the $\delta-$function contribution, $\Delta_d[f]$.

    \underline{Error from the continuous part.} We begin by rewriting the expression for $\Delta_c[f]$ as
    \begin{align}\label{eq:cont_error_rewrite}
        \Delta_c[f] = \abs{\int_{-\infty}^\infty \K_c(\tau){f}(\tau) d\tau - \int_{-\infty}^\infty \K_c^\delta(\tau)f(\tau) d\tau} \leq\int_{-\infty}^\infty \abs{\K_c(\tau)}\abs{{f}(\tau) - (f\star r_\delta)(\tau)}d\tau.
    \end{align}
    This expression for $\Delta_c[f]$ indicates that we need to analyze the point-wise error between $f$ and its mollification, $f\star r_\delta$. Consider first analyzing this error at $\tau$ which $\delta$-away from all points of discontinuity of $f$ i.e.~suppose $\tau \notin \mathcal{T}_\delta(t^*)$ where $ \mathcal{T}_\delta(t^*) = [t_1 - \delta, t_1 + \delta] \cup [t_2 - \delta, t_2 + \delta] \dots \cup [t_n - \delta, t_n + \delta]$, then
    \begin{align}\label{eq:mollification_pwise_error}
    \abs{f(\tau) - (f \star r_\delta)(\tau)} \numeq{1} \abs{\int_{-\delta}^{\delta}\big(f(\tau) - f(\tau + \tau')\big)r_\delta(\tau') d\tau'} \numleq{2} \norm{f'}_\infty \int_{-\delta}^\delta \abs{\tau'}r_\delta(\tau') d\tau' \leq \norm{f'}_\infty \delta,
    \end{align}
    where, in (1), we have used the fact that $\int_{-\delta}^\delta r_\delta(\tau') d\tau' = 1$ to rewrite $f(\tau) = \int_{-\delta}^\delta f(\tau) r_\delta(\tau') d\tau'$. In (2), we have used the fact that for $\tau \in \mathcal{T}_\delta(t^*)$ and $\tau' \in [-\delta, \delta]$, $f(\tau + \tau')$ has no points of non-differentiability or discontinuity and thus $\abs{f(\tau) - f(\tau + \tau')} \leq \norm{f'}_\infty \tau'$. Then, starting from Eq.~\ref{eq:cont_error_rewrite}, we obtain that
    \begin{align}\label{eq:final_error_cont}
    \Delta_c[f] &\leq \delta \norm{f'}_\infty  \int_{\mathbb{R}\setminus\mathcal{T}_\delta(t^*)} \abs{\K_c(\tau)} d\tau + \int_{\mathcal{T}_\delta(t^*)} \abs{\K_c(\tau)} \abs{f(\tau) - (f\star r_\delta)(\tau)}d\tau, \nonumber \\
    &\leq \delta \norm{f'}_\infty \text{TV}(\K_c) + \int_{\mathcal{T}_\delta(t^*)} \abs{\K_c(\tau)} \big(\abs{f(\tau)} + \abs{(f\star r_\delta)(\tau)}\big)d\tau, \nonumber \\
    &\leq \delta \norm{f'}_\infty \text{TV}(\K_c) + 2\norm{f}_\infty \text{TV}(\K_c; \mathcal{T}_\delta(t^*)).
    \end{align}

    \underline{Error from the atomic part.} Next, we consider the error $\Delta_d[f] = \sum_{j = 1}^n \abs{k_j}\Delta_{d, j}[f]$ where
    \[
    \Delta_{d, j}[f] = \abs{\frac{1}{2}\big(f(T_j^+) + f(T_j^-)\big) - (f\star r_\delta)(T_j)}.
    \]
    There are two possibilities --- either $T_j$ (the location of the considered $\delta-$function) is in $t^*$, or is at least $2\delta$ away from any element of $t^*$. Consider first the case where $T_j \notin t^*$ --- in this case, $\Delta_{d, j}[f] = \abs{f(T_j) - (f\star r_\delta)(T_j)}$ and by Eq.~\ref{eq:mollification_pwise_error}, $\Delta_{d, j}[f] \leq \norm{f'}_\infty \delta$. Next, if $T_j \in t^*$, then 
    \begin{align*}
        \Delta_{d, j}[f] &\leq \sum_{s \in \{-, +\}}\abs{\frac{1}{2}f(T_j^s) - \int_{0}^\delta r_\delta(\tau) f(T_j + s\tau)d\tau}, \nonumber \\
        &\leq \sum_{s \in \{-, +\}} \abs{\int_0^\delta \big(f(T_j^s) - f(T_j + s\tau)\big)r_\delta(\tau)d\tau}, \nonumber \\
        &\numleq{1} \sum_{s \in \{-, +\}}\norm{f'}_\infty \delta \int_0^\delta \abs{r_\delta(\tau)}d\tau \leq \norm{f'}_\infty \delta,
    \end{align*}
    where, in (1), we have used the fact that $f$ is separately differentiable in both the intervals $(T_j, T_j + \delta)$ and $(T_j - \delta, T_j)$. Thus, in both the cases, $T_j \in t^*$ or $T_j \notin t^*$, we obtain that $\Delta_{d, j}[f] \leq \norm{f'}_\infty \delta$. Therefore, we obtain that the error from the atomic part, $\Delta_d[f]$, can be upper bounded by
    \begin{align}\label{eq:final_error_atomic}
    \Delta_d[f] \leq \sum_{j = 1}^n\abs{k_j} \Delta_{d, j}[f]  \leq \norm{f'}_\infty \delta \sum_{j = 1}^n \abs{k_j}.
    \end{align}
Finally, combining Eqs.~\ref{eq:error_decomp}, \ref{eq:final_error_cont} and \ref{eq:final_error_atomic}, we obtain that
    \[
    \abs{\int_{a}^b \K(\tau) f(\tau) d\tau - \int_{a}^b \K^\delta(\tau)f(\tau) d\tau } \leq \delta \norm{f'}_\infty \text{TV}(\K) + 2\norm{f}_\infty \text{TV}(\K_c, \mathcal{T}_\delta(t^*)),
    \]
    which proves the lemma statement.
\end{proof}
\noindent We now present a useful technical lemma that we would need for the remainder of this section.
\begin{lemma}\label{lemma:smallness_double_integral_kernel}
    Suppose $f:\mathbb{R}^2 \to \mathbb{C}$ is continuous and differentiable with respect to either of its arguments and $\exists \phi_0, \phi_1 > 0$ such that $\forall s_1, s_2 \in \mathbb{R}:\abs{f(s_1, s_2)} \leq \phi_0, \abs{\partial_{s_1}f(s_1, s_2)}, \abs{\partial_{s_2}f(s_1, s_2)} < \phi_1$. Also suppose that $\K \in \mathcal{K}_1(\mathbb{R})$ is $\lambda_0, \lambda_1-$small over a test function space $\textnormal{PC}_{t^*}^1(\mathbb{R})$. Then,
    \begin{enumerate}
        \item[(a)] For a closed interval, $[a, b]$ with $0, b - a \in t^*$,
        \[
        \abs{\int_a^b \int_a^{s_1} \K(s_1 - s_2) f(s_1,s_2) ds_1 ds_2} \leq \lambda_0 (b - a) \phi_0 + \lambda_1 \big(\phi_0 + (b - a) \phi_1\big)
        \]
        \item[(b)] For any two closed intervals, $[a_1, b_1]$ and $[a_2, b_2]$ with $a_1 - a_2, a_1 - b_2, b_1 - a_2, b_1 - b_2 \in t^*$, 
        \[
        \abs{\int_{a_1}^{b_1} \int_{a_2}^{b_2}\K(s_1 - s_2) f(s_1, s_2) ds_1 ds_2} \leq \lambda_0 \phi_0 (b_2 - a_2) + \lambda_1 \big(\phi_1 (b_2 - a_2) + 2\phi_0\big).
        \]
    \end{enumerate}
\end{lemma}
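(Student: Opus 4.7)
The plan is to reduce both bounds to a single application of the $(\lambda_0, \lambda_1)$-smallness of $\K$ by substituting $\tau = s_1 - s_2$ and reorganizing the double integral as $\int_{-\infty}^\infty \K(\tau) g(\tau)\, d\tau$ for an appropriate one-variable test function $g$. The work then lies in (i) checking that $g \in \textnormal{PC}^1_{t^*}(\mathbb{R})$, i.e.\ that every point at which $g$ fails to be smooth belongs to $t^*$, and (ii) bounding $\norm{g}_\infty$ and $\norm{g'}_\infty$ by the expressions appearing on the right-hand sides.

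For part (a), after substituting $s_2 = s_1 - \tau$ in the inner integral and exchanging the order of integration, I would obtain
$$\int_a^b \int_a^{s_1} \K(s_1 - s_2) f(s_1, s_2)\, ds_2\, ds_1 = \int_{-\infty}^\infty \K(\tau) g(\tau)\, d\tau,$$
where $g(\tau) = \int_{a + \tau}^{b} f(s_1, s_1 - \tau)\, ds_1$ is extended by zero outside $[0, b-a]$. Its only possible discontinuities are at $\tau = 0$ and $\tau = b - a$, both of which lie in $t^*$ by hypothesis. The bound $\norm{g}_\infty \le (b-a)\phi_0$ is immediate from $|f| \le \phi_0$, and Leibniz differentiation gives $\norm{g'}_\infty \le \phi_0 + (b-a)\phi_1$ (the $\phi_0$ from the moving lower limit, the second term from differentiating $f$ inside). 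Applying the smallness of $\K$ to $g$ yields the claim.

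For part (b), I would use the same substitution, keeping $s_2$ as the integration variable, to get
$$\int_{a_1}^{b_1}\!\!\int_{a_2}^{b_2}\K(s_1-s_2)f(s_1,s_2)\,ds_2\,ds_1 = \int_{-\infty}^\infty \K(\tau) g(\tau)\, d\tau,$$
with $g(\tau) = \int_{\max(a_2,\, a_1 - \tau)}^{\min(b_2,\, b_1-\tau)} f(\tau + s_2, s_2)\, ds_2$, supported on $\tau \in [a_1 - b_2,\, b_1 - a_2]$. The $\max$ switches regimes at $\tau = a_1 - a_2$ and the $\min$ at $\tau = b_1 - b_2$, so the four candidate non-smooth points of $g$ are exactly $a_1 - b_2,\, a_1 - a_2,\, b_1 - b_2,\, b_1 - a_2$, all of which are assumed to lie in $t^*$. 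Since the inner interval has length at most $b_2 - a_2$, we get $\norm{g}_\infty \le (b_2 - a_2)\phi_0$; Leibniz then produces at most two boundary contributions of size $\phi_0$ together with a bulk term of size $(b_2 - a_2)\phi_1$, giving $\norm{g'}_\infty \le 2\phi_0 + (b_2 - a_2)\phi_1$. The smallness inequality delivers the stated bound.

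The only real obstacle is the bookkeeping in part (b): one must carefully match the kinks and endpoints of $g$ to the four times assumed in the hypothesis, since the smallness definition is sensitive to the locations of any atomic part of $\K$ through the convention that discontinuities of the test function may only coincide with points of $t^*$. Once this matching is verified, the rest of the argument is a routine change-of-variables computation together with the Leibniz rule.
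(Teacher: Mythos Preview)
Your proposal is correct and follows essentially the same route as the paper. In both parts the paper performs the same substitution $s = s_1 - s_2$, obtains the same one-variable test function (written as $F(s) = \Theta_{[0,b-a]}(s)\int_a^{b-s} f(s+s',s')\,ds'$ in (a) and $F(s) = \int_{a_2}^{b_2} f(s+s',s')\,\Theta_{[a_1,b_1]}(s'+s)\,ds'$ in (b), which are exactly your $g(\tau)$ after a trivial change of inner variable), verifies the same kink locations, and derives the identical bounds $\norm{F}_\infty \le (b-a)\phi_0$, $\norm{F'}_\infty \le \phi_0 + (b-a)\phi_1$ in (a) and $\norm{F}_\infty \le (b_2-a_2)\phi_0$, $\norm{F'}_\infty \le 2\phi_0 + (b_2-a_2)\phi_1$ in (b). The only cosmetic difference is that in (b) the paper differentiates through the indicator using $\delta$-functions rather than tracking the piecewise $\max/\min$ limits explicitly as you do.
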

\begin{proof}
    (a) First, we rewrite the integral under consideration, by a change of variables $s_1, s_2 \to s = s_1 - s_2, s' = s_2$, as
\begin{align*}
    \int_a^b \int_a^{s_1} \K(s_1 - s_2) f(s_1,s_2) ds_1 ds_2 &= \int_{0}^{b - a} \K(s) \bigg[\int_{a}^{b - s} f(s + s', s') ds'\bigg] ds, \nonumber\\
    &= \int_{-\infty}^\infty \K(s) \underbrace{\bigg[\Theta_{[0, b - a]}(s)\int_a^{b - s}f(s+  s', s') ds'\bigg]}_{F(s)} ds,
\end{align*}
Clearly, we note that $F$ only has a discontinuity/non-differentiability at $s \in \{0, b - a\} \subseteq t^*$, and consequently $F \in \text{PC}_{t^*}^1(\mathbb{R})$. Furthermore, we also have that
\begin{align}\label{eq:part_a_first_integral_1}
\forall s \in \mathbb{R}:  \abs{F(s)} \leq \abs{\int_a^{b - s}f(s + s', s') ds'} \leq (b - a)\phi_0 \implies \norm{F}_\infty \leq (b - a) \phi_0,
\end{align}
and for $s \notin \{0, b - a\}$
\begin{align*}
    F'(s) = \begin{dcases}
        -f(b, b - s) + \int_{a}^{b - s}\partial_{1}f(s + s', s') ds' & \text{ if }s \in (0, b - a), \\
        0 & \text{ if }s\notin [0, b - a].
    \end{dcases}
\end{align*}
Therefore, we also have that
\begin{align}\label{eq:part_a_first_integral_2}
\forall s \notin \{0, b - a\} : \abs{F'(s)} \leq \phi_0 + (b - a) \phi_1 \implies \norm{F'}_\infty \leq \phi_0 + (b - a) \phi_1.
\end{align}
From Eqs.~\ref{eq:part_a_first_integral_1} and \ref{eq:part_a_first_integral_2} together with the fact that $\K$ is $\lambda_0, \lambda_1$-small on $\text{PC}_{t^*}^1(\mathbb{R})$, it follows that
\begin{align*}
    \abs{\int_a^b \int_a^{s_1} \K(s_1 - s_2) f(s_1, s_2) ds_1 ds_2 } = \abs{\int_{-\infty}^\infty \K(s) F(s) ds} \leq \lambda_0 (b - a) \phi_0 + \lambda_1 \big(\phi_0 + (b - a) \phi_1\big),
\end{align*}
which proves the lemma statement. A similar analysis can be performed for the second expression in the lemma statement to obtain the same upper bound.

(b) We begin by rewriting the integral under consideration as
    \begin{align}
    \int_{a_1}^{b_1} \int_{a_2}^{b_2}\K(s_1 - s_2) f(s_1, s_2) ds_1 ds_2 &= \int_{-\infty}^\infty \int_{-\infty}^\infty \K(s_1 - s_2) f(s_1,s_2) \Theta_{[a_1, b_1]}(s_1) \Theta_{[a_2, b_2]}(s_2) ds_1 ds_2, \nonumber \\
    &=\int_{-\infty}^\infty \K(s) \underbrace{\bigg[\int_{a_2}^{b_2}f(s + s', s') \Theta_{[a_1, b_1]}(s' + s)  ds'\bigg]}_{F(s)}ds.
    \end{align}
    We will now establish that $F \in \text{PC}_{t^*}^{1}(\mathbb{R})$. To do so, we need to show that all the points of discontinuity or non-differentiability of $F$ are in $t^*$. To do so, let us attempt to differentiate $F(s)$ with respect to $s$ --- we obtain that
    \[
    F'(s) = \int_{a_2}^{b_2} \partial_{s_1}f(s + s', s') \Theta_{[a_1, b_1]}(s' + s) ds' + \int_{a_2}^{b_2} f(s + s', s') \big(\delta(s + s' - a_1) - \delta(s + s' - b_1)\big) ds',
    \]
    with the derivative being only defined when $s \notin \{a_1 - a_2, a_1 - b_2, b_1 - a_2, b_1 - b_2\}$. From this expression, it is clear that $F'(s)$ can only have a discontinuity when $s \in \{a_1 - a_2, a_1 - b_2, b_1 - a_2, b_1 - b_2\} \subseteq t^*$, thus establishing that $F \in \text{PC}_{t^*}^1(\mathbb{R})$. We can also see from this expression that for $s \notin \{a_1 - a_2, a_1 - b_2, b_1 - a_2, b_1 - b_2\}$,
    \begin{align}\label{eq:part_b_bound_deriv}
    \abs{F'(s)} \leq \phi_1 ( b_2 - a_2) + 2 \phi_0 \implies \norm{F'}_\infty \leq \phi_1 ( b_2 - a_2) + 2 \phi_0.
    \end{align}
    Furthermore, since the derivative of $F$ has only a finite number of discontinuities, we conclude that $F$ is a continuous function. Furthermore, it is also clear that $\forall  s$,
    \begin{align}\label{eq:part_b_bound_fun}
    \abs{F(s)} \leq \phi_0 (b_2 - a_2).
    \end{align}
    From Eqs.~\ref{eq:part_b_bound_deriv}, \ref{eq:part_b_bound_fun} and the fact that $\K$ is $\lambda_0, \lambda_1-$small on $\text{PC}_{t^*}^1(\mathbb{R})$, we obtain that 
    \[
\abs{\int_{a_1}^{b_1} \int_{a_2}^{b_2}\K(s_1 - s_2) f(s_1, s_2) ds_1 ds_2} = \abs{\int_{-\infty}^\infty \K(s) F(s) ds} \leq \lambda_0 \phi_0 (b_2 - a_2) + \lambda_1 \big(\phi_1 (b_2 - a_2) + 2\phi_0\big),
    \]
    which proves the lemma.
    
\end{proof}

As with the un-mollified model, given $\mathcal{B} \subseteq \mathcal{A}$, we will define
\[
{H}^\delta_\mathcal{B}(t) = \sum_{\alpha \in \mathcal{B}}h_\alpha(t) + \sum_{\alpha \in \mathcal{B}} \sum_{\nu \in \{x, p\}} {B}^{\nu, \delta}_{\alpha, t} R^{\nu}_{\alpha}(t) \text{ and }{U}^\delta_{\mathcal{B}}(t, s) = \mathcal{T}\exp\bigg(-i\int_{-\infty}^\infty {H}^\delta_\mathcal{B}(\tau) d\tau\bigg).
\]
The main result that we will establish is given by the following lemma.
\begin{lemma}\label{lemma:mollification}
    For any system operators $O_S, \sigma_S$, system superoperators $\{\Omega_i : \norm{\Omega_i}_\diamond \leq 1\}_{i \in [1:n]}$, lattice subregions $\{\mathcal{A}_i\}_{i \in [1:n]}$ as well as times $\{(s_i, t_i)\}_{i \in [1:n]}$, then
    \begin{align}\label{eq:system_GF}
    \lim_{\delta \to 0}\ \vecbra{I_E, O_S} \bigg(\prod_{i = 1}^n {\mathcal{U}}^\delta_{\mathcal{A}_i}(t_i, s_i) \bigg) \vecket{\rho_E, \sigma_S},
    \end{align}
    exists.
\end{lemma}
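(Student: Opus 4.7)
The plan is to show that the expression in Eq.~(\ref{eq:system_GF}) is Cauchy as $\delta \to 0$, so that the limit exists. The starting point is a Dyson expansion of each mollified propagator $\mathcal{U}^\delta_{\mathcal{A}_i}(t_i,s_i)$ with respect to the system-environment coupling, together with Wick's theorem (Lemma \ref{lemma:wick_thm_gaussian}) to carry out the environment expectation value $\vecbra{I_E}\cdot\vecket{\rho_E}$ in closed form. After taking this expectation, the entire object in Eq.~(\ref{eq:system_GF}) becomes a sum over all possible Wick pairings, each contributing a multiple integral whose integrand is a product of mollified kernels $\K^{\nu,\nu';\delta,\delta}_{\alpha,\sigma'}$ times a purely system-level time-ordered correlation function built from $h_\alpha(t)$, $R^\nu_\alpha(t)$, the $\Omega_i$, $\sigma_S$ and $O_S$.

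First, I would show that at each fixed order of the Dyson expansion, the system-only correlator that sits between a pair of Wick contractions is a bounded, piecewise continuously differentiable function of the two contraction times, with bounds independent of $\delta$; here I use that $\smallnorm{R^\nu_\alpha(t)}, \smallnorm{h_\alpha(t)} \leq 1$, that $\smallnorm{{R^\nu_\alpha}'(t)}, \smallnorm{h'_\alpha(t)}$ are finite, and that $\smallnorm{\Omega_i}_\diamond \leq 1$. The only discontinuities occur at the fixed endpoint times $\{s_i,t_i\}$ and at the locations $\{T_j\}$ of the $\delta$-function contributions of $\K_{\alpha,\sigma'}^{\nu,\nu'}$, so the integrand belongs to $\text{PC}^1_{t^*}(\mathbb{R})$ for a $t^*$ depending only on these (finitely many) times and not on $\delta$.

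Second, for two regularizations $\delta,\delta'$ small enough that $\delta,\delta' < \Delta\tau_{\min}/2$, I apply Lemma \ref{lemma:kernel_appx} to conclude that $\K^{\nu,\nu';\delta,\delta}_{\alpha,\sigma'}$ and $\K^{\nu,\nu';\delta',\delta'}_{\alpha,\sigma'}$ are $(\lambda_0(\delta,\delta'),\lambda_1(\delta,\delta'))$-close on the appropriate $\text{PC}^1_{t^*}(\mathbb{R})$, with $\lambda_0,\lambda_1 \to 0$ as $\delta,\delta' \to 0$. Combining this with Lemma \ref{lemma:smallness_double_integral_kernel}, the contribution of any given Wick pairing at any fixed Dyson order converges as $\delta \to 0$.

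The main obstacle is upgrading this termwise convergence to convergence of the full sum, since the number of Wick pairings at coupling order $2k$ is $(2k)!/(2^k k!)$, which needs to be controlled against the $1/k!$ from the Dyson series. Here I would use Eq.~(\ref{eq:total_variation_mollified}), which gives $\text{TV}(\K^{\nu,\nu';\delta,\delta}_{\alpha,\sigma'}) \leq \text{TV}(\U)$ uniformly in $\delta$, to dominate each contracted time integral by $\text{TV}(\U)$ times a factor depending only on the interval lengths $|t_i-s_i|$, the support coordination number $\mathcal{Z}$, and $\smallnorm{O_S}, \smallnorm{\sigma_S}_1$. A careful combinatorial count, spreading the pairings over the different $\mathcal{U}^\delta_{\mathcal{A}_i}$ factors and absorbing the system-only $h_\alpha$ insertions separately, yields a $\delta$-independent geometric bound per order whose sum over $k$ converges. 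This uniform majorization allows exchanging the $\delta \to 0$ limit with the Wick/Dyson sum, so the termwise limits assemble into a well-defined limit for Eq.~(\ref{eq:system_GF}), proving the lemma.
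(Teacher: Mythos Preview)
Your approach differs from the paper's in a structural way. You propose a full Dyson expansion of every $\mathcal{U}^\delta_{\mathcal{A}_i}$ in the system--environment coupling, followed by Wick contraction of all resulting bath operators, producing an infinite sum over orders and pairings whose uniform-in-$\delta$ summability you then have to establish. The paper never expands the propagators. It writes the difference at two scales $\delta_1,\delta_2$ as a telescope over which single factor $\mathcal{U}^{\delta_1}_{\mathcal{A}_j}\to\mathcal{U}^{\delta_2}_{\mathcal{A}_j}$ is being swapped, uses Duhamel on that one factor to extract a single explicit $(B^{\nu,\delta_1}_{\alpha,s,\sigma}-B^{\nu,\delta_2}_{\alpha,s,\sigma})$, and applies Wick's theorem once to contract that lone $B$ into a still-resummed propagator via the $\wickleft,\wickright$ operations of Lemma~\ref{lemma:wick_thm_gaussian}. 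The outcome is a \emph{finite} sum, over $j,j'\in[1:n]$ and indices $\alpha,\nu,\sigma,\nu',\sigma'$, of double integrals $\int\!ds\!\int\!ds'\,\Delta(s-s')\,f(s,s')$, where $\Delta$ is a kernel difference and $f$ is a matrix element built from unit-norm propagators and bounded system operators. Boundedness and differentiability of $f$ with $\delta$-independent constants then follow from one further $\wickleft/\wickright$ contraction combined with Eq.~\eqref{eq:total_variation_mollified}, and Lemmas~\ref{lemma:kernel_appx} and~\ref{lemma:smallness_double_integral_kernel} finish the argument directly. No infinite series, no dominated-convergence step.

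Your route has a real gap at the termwise convergence step once the coupling order is $2k$ with $k\geq 2$. Each Wick term there carries a \emph{product} of $k$ mollified kernels, but Lemma~\ref{lemma:smallness_double_integral_kernel} treats only a single kernel in two variables. If you telescope on one kernel factor $\K^{\delta_1}-\K^{\delta_2}$, the effective ``$f$'' multiplying it still contains the other $k{-}1$ factors of $\K^{\delta_j}$, whose sup-norms scale like $1/\delta_j$ by Lemma~\ref{lemma:mollified_kernel}(a); the $\phi_0,\phi_1$ required by Lemma~\ref{lemma:smallness_double_integral_kernel} are therefore not $\delta$-independent unless you first integrate those other pairs out. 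That is possible, but you must then verify that the resulting two-variable function lies in $\text{PC}^1_{t^*}(\mathbb{R})$ for a \emph{fixed} $t^*$ with uniformly bounded derivative, which is not automatic: the time-ordered system correlator has derivative jumps whenever two insertion times coincide (and, contrary to what you wrote, none at the $\{T_j\}$---those are kernel data and never enter the purely system correlator). The paper's resummed approach sidesteps all of this, since only one kernel ever appears and the accompanying $f$, being built from full propagators rather than from time-ordered strings of $R$'s, is manifestly differentiable in its two time arguments.
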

\begin{proof}
    Consider $\delta_1, \delta_2 > 0$ --- we obtain that
    \begin{align*}
        &\abs{\vecbra{I_E, O_S}\bigg(\prod_{i = 1}^n {\mathcal{U}}^{\delta_1}_{\mathcal{A}_i}(t_i, s_i) \bigg) \vecket{\rho_E, \sigma_S} - \vecbra{I_E, O_S}\bigg(\prod_{i = 1}^n {\mathcal{U}}^{\delta_2}_{\mathcal{A}_i}(t_i, s_i) \bigg) \vecket{\rho_E, \sigma_S}} \nonumber\\
        &  \leq\sum_{j = 1}^n\abs{\vecbra{I_E, O_S} \bigg(\prod_{i = 1}^{j - 1}{\mathcal{U}}_{\mathcal{A}_i}^{\delta_1}(t_i, s_i) \Omega_i\bigg) \bigg({\mathcal{U}}^{\delta_1}_{\mathcal{A}_j}(t_j, s_j) - {\mathcal{U}}^{\delta_2}_{\mathcal{A}_j}(t_j, s_j)\bigg)\Omega_j \bigg(\prod_{i =  j + 1}^{n}{\mathcal{U}}_{\mathcal{A}_i}^{\delta_2}(t_i, s_i) \Omega_i\bigg)\vecket{\rho_E, \sigma_S}}, \nonumber\\
        & \leq \sum_{j = 1}^n \sum_{\substack{\alpha \in {\mathcal{A}_j} \\ \nu, \sigma}} \abs{\int_{s_j}^{t_j} \vecbra{I_E, O_S} \mathcal{V}^{\delta_1}_{1, j - 1} {\mathcal{U}}^{\delta_1}_{\mathcal{A}_j}(t_j, s) ({B}^{\nu, \delta_1}_{\alpha, s, \sigma} - {B}^{\nu, \delta_2}_{\alpha, s, \sigma}) R^{\sigma}_{\alpha, \nu}(s) {\mathcal{U}}_{\mathcal{A}_j}^{\delta_2}(s, s_j)\Omega_j \mathcal{V}^{\delta_2}_{j + 1, n}\bigg)\vecket{\rho_E, \sigma_S} ds}
    \end{align*}
where, for notational convenience, we have defined
\[
\mathcal{V}^{\delta}_{j, j'} = \prod_{i = j}^{j'}{\mathcal{U}}^\delta_{\mathcal{A}_i}(t_i, s_i)\Omega_i.
\]
Now, using the Wick's theorem, we obtain that
\begin{align}\label{eq:telescoping_sum}
    &\abs{\vecbra{I_E, O_S}\bigg(\prod_{i = 1}^n {\mathcal{U}}^{\delta_1}_{\mathcal{A}_i}(t_i, s_i) \bigg) \vecket{\rho_E, \sigma_S} - \vecbra{I_E, O_S}\bigg(\prod_{i = 1}^n {\mathcal{U}}^{\delta_2}_{\mathcal{A}_i}(t_i, s_i) \bigg) \vecket{\rho_E, \sigma_S}} \leq \nonumber\\
    &\qquad \qquad \qquad \qquad \qquad \sum_{\substack{j, j' \in [1:n] \\ j \neq j'}} \sum_{\substack{\alpha \in \mathcal{A}_{j} \\ \nu, \sigma; \nu', \sigma'}} \abs{\Gamma_{j, j', \alpha}^{\nu, \sigma; \nu', \sigma'}} + \sum_{j \in [1:n]} \sum_{\substack{\alpha \in \mathcal{A}_{j} \\ \nu, \sigma; \nu', \sigma'}} \bigg( \abs{\Gamma^{\nu, \sigma; \nu', \sigma'}_{+, j, \alpha}} + \abs{\Gamma^{\nu, \sigma; \nu', \sigma'}_{-, j, \alpha}}\bigg) ,
\end{align}
where for $j \neq j'$:
\begin{align*}
    \Gamma_{j, j', \alpha}^{\nu, \sigma; \nu', \sigma'} = \begin{dcases}\int_{s = s_j}^{t_j} \int_{s' = s_{j'}}^{t_{j'}} \Delta_{+; \alpha}^{\nu, \sigma; \nu', \sigma'}(s' - s)f_{j, j', \alpha}^{\nu, \sigma; \nu', \sigma'}(s', s) ds' ds & \text{ if }j' < j, \\
    \int_{s = s_j}^{t_j}\int_{s' = s_{j'}}^{t_{j'}}\Delta^{\nu', \sigma'; \nu, \sigma}_{-, \alpha}(s' - s) f_{j, j', \alpha}^{\nu, \sigma; \nu', \sigma'}(s', s) ds' ds & \text{ if }j' > j
    \end{dcases}
\end{align*}
with \begin{align*}
    &\qquad \qquad \qquad \Delta^{\nu, \sigma; \nu', \sigma'}_{+; \alpha}(s) = \K^{\nu', \nu; \delta_1, \delta_1}_{\alpha, \sigma}(s) - \K^{\nu', \nu; \delta_1, \delta_2}_{\alpha, \sigma}(s),\  \Delta^{\nu, \sigma; \nu', \sigma'}_{-; \alpha}(s) = \K^{\nu, \nu'; \delta_1, \delta_2}_{\alpha, \sigma'}(-s) - \K^{\nu, \nu'; \delta_2, \delta_2}_{\alpha, \sigma'}(-s), \nonumber\\
    &\text{For }j' < j:\nonumber \\
    &\qquad \qquad \qquad f_{j, j', \alpha}^{\nu, \sigma; \nu', \sigma'}(s', s) = \vecbra{I_E, O_S}\mathcal{V}_{1, j' - 1}^{\delta_1}\mathcal{U}_{\mathcal{A}_{j'}}^{\delta_1}(t_{j'}, s) R^{\nu'}_{\alpha, \sigma'}(s') \mathcal{U}_{\mathcal{A}_{j'}}^{\delta_1}(s', s_{j'}) \times \nonumber\\
    &\qquad \qquad \qquad \qquad \qquad \qquad \qquad \qquad\Omega_{j'}\mathcal{V}^{\delta_1}_{j' + 1, j - 1}\mathcal{U}^{\delta_2}_{\mathcal{A}_j}(t_j, s) R^{\nu}_{\alpha, \sigma}(s) \mathcal{U}^{\delta_2}_{\mathcal{A}_j}(s, s_j)\Omega_j \mathcal{V}^{\delta_2}_{j + 1, n}\vecket{\rho_E, \sigma_0}, \nonumber\\
    &\text{For }j < j': \nonumber \\
    &\qquad \qquad \qquad  f_{j, j', \alpha}^{\nu, \sigma; \nu', \sigma'}(s', s) = \vecbra{I_E, O_S} \mathcal{V}^{\delta_1}_{1, j - 1}\mathcal{U}^{\delta_1}_{\mathcal{A}_j}(t_j, s) R^{\nu}_{\alpha, \sigma}(s)\mathcal{U}^{\delta_1}_{\mathcal{A}_j}(s, s_j)\times \nonumber\\
    &\qquad \qquad \qquad \qquad \qquad \qquad \qquad \qquad \Omega_j \mathcal{V}^{\delta_2}_{j + 1, j' - 1}\mathcal{U}^{\delta_2}_{\mathcal{A}_{j'}}(t_{j'}, s')R^{\nu'}_{\alpha, \sigma'}(s')\mathcal{U}^{\delta_2}_{\mathcal{A}_{j'}}(s', s_{j'})\Omega_{j'}\mathcal{V}^{\delta_2}_{j' + 1, n}\vecket{\rho_E, \sigma_0}.
\end{align*}
and, 
\begin{align*}
    &\Gamma^{\nu, \sigma; \nu', \sigma'}_{+, j, \alpha} = \int_{s = s_j}^{t_j} \int_{s' =  s}^{t_j}\Delta_{+, \alpha}^{\nu, \sigma; \nu', \sigma'}(s' - s) f_{+, j, \alpha}^{\nu, \sigma; \nu', \sigma'}(s', s)ds' ds, \\
    &\Gamma^{\nu, \sigma; \nu', \sigma'}_{-, j, \alpha} = \int_{s = s_j}^{t_j} \int_{s' =  s_j}^{s}\Delta_{-, \alpha}^{\nu, \sigma; \nu', \sigma'}(s' - s) f_{-, j, \alpha}^{\nu, \sigma; \nu', \sigma'}(s', s)ds' ds,
\end{align*}
where
\begin{align*}
    &f_{+, j, \alpha}^{\nu, \sigma; \nu', \sigma'}(s', s) = \vecbra{I_E, O_S} \mathcal{V}^{\delta_1}_{1, j - 1} {\mathcal{U}}^{\delta_1}_{\mathcal{A}_j}(t_j, s') R^{\nu'}_{\alpha, \sigma'}(s'){\mathcal{U}}_{\mathcal{A}_j}^{\delta_1}(s', s)R^{\nu}_{\alpha, \sigma}(s){\mathcal{U}}^{\delta_2}_{\mathcal{A}_j}(s, s_j) \Omega_j \mathcal{V}_{j + 1, n}^{\delta_2}\vecket{\rho_E, \sigma_S}, \nonumber\\
    &f_{-, j, \alpha}^{\nu, \sigma; \nu', \sigma'}(s', s) =  \vecbra{I_E, O_S} \mathcal{V}^{\delta_1}_{1, j - 1} {\mathcal{U}}^{\delta_1}_{\mathcal{A}_j}(t_j, s) R^{\nu}_{\alpha, \sigma}(s){\mathcal{U}}_{\mathcal{A}_j}^{\delta_2}(s, s')R^{\nu'}_{\alpha, \sigma'}(s'){\mathcal{U}}^{\delta_2}_{\mathcal{A}_j}(s', s_j) \Omega_j \mathcal{V}_{j + 1, n}^{\delta_2}\vecket{\rho_E, \sigma_S}.
\end{align*}
Next, we separately upper bound $\smallabs{\Gamma_{j, j', \alpha}^{\nu, \sigma; \nu', \sigma'}}$ and $\smallabs{\Gamma_{\pm, j, \alpha}^{\nu, \sigma; \nu', \sigma'}}$. For this, we will use lemmas \ref{lemma:kernel_appx} and \ref{lemma:smallness_double_integral_kernel}.

\underline{Smallness of $\Delta^{\nu, \sigma; \nu', \sigma'}_{\pm; \alpha}$}. First, we use lemma \ref{lemma:kernel_appx} to establish that $\Delta^{\nu, \sigma; \nu', \sigma'}_{\pm; \alpha}$ is ``small" --- the test function space that we will employ will be $\text{PC}_{t^*}^1(\mathbb{R})$ where $t^* = \{t_j - t_{j'}, s_j - s_{j'}, t_j - s_{j'}, s_j - t_{j'} : j, j' \in [1:n]\}$. Now, consider $\Delta^{\nu, \sigma; \nu', \sigma'}_{+; \alpha}$ and $f \in \text{PC}_{t^*}^1(\mathbb{R})$:
\begin{align*}
    &\abs{\int_{-\infty}^\infty \Delta^{\nu, \sigma; \nu', \sigma'}_{+; \alpha}(s) f(s)} = \abs{\int_{-\infty}^\infty (\K^{\nu', \nu; \delta_1, \delta_1}_{\alpha, \sigma}(s) - \K^{\nu', \nu; \delta_1, \delta_2}_{\alpha, \sigma}(s))f(s)ds}, \nonumber\\
    & \qquad\numleq{1} \abs{\int_{-\infty}^\infty \big(\K^{\nu', \nu; \delta_1, \delta_1}_{\alpha, \sigma}(s) - \K^{\nu', \nu}_{\alpha, \sigma}(s)\big) f(s) ds} + \abs{\int_{-\infty}^\infty \big(\K^{\nu', \nu; \delta_1, \delta_2}_{\alpha, \sigma}(s) - \K^{\nu', \nu}_{\alpha, \sigma}(s)\big) f(s) ds}, \nonumber\\
    & \qquad \numleq{2} \bigg(2\sum_{t \in t^*} \text{TV}(\K^{\nu', \nu}_{\alpha, \sigma}, [t - 2\delta_1, t + 2\delta_1]) + \text{TV}(\K^{\nu', \nu}_{\alpha, \sigma}, [t - \delta_1 - \delta_2, t + \delta_1 +\delta_2])\bigg)\norm{f}_\infty + (3\delta_1 + \delta_2) \text{TV}(\K^{\nu, \nu}_{\alpha, \sigma})\norm{f'}_\infty, \nonumber\\
    & \qquad \leq 4 \bigg(\sum_{t\in t^*} \text{TV}(\U_c, [t - 2\delta, t + 2\delta])\bigg)\norm{f}_\infty + 4\delta \text{TV}(\U) \norm{f'}_\infty,
    \end{align*}
    where in (1) we have used the triangle inequality, in (2) we have used lemma \ref{lemma:kernel_appx} and in (3) we have re-expressed our results in terms of the upper bounding kernel $\U$ [Eq.~\ref{eq:upper_bounding_kernel}] and $\delta = \max(\delta_1, \delta_2)$. A similar analysis would yield an identical upper bound on the kernel $\Delta^{\nu, \sigma; \nu', \sigma'}_{-; \alpha}$. Thus, we have that 
    \begin{align}\label{eq:smallness_delta}
     \Delta^{\nu, \sigma; \nu', \sigma'}_{\pm; \alpha} \text{ is } \bigg(4\sum_{t\in t^*} \text{TV}(\U_c, [t -2\delta, t + 2\delta]), 4\delta\ \text{TV}(\U)\bigg)\text{-close on the test-function space }\text{PC}_{t^*}^1(\mathbb{R}).
    \end{align}

\underline{$\smallabs{\Gamma^{\nu, \sigma; \nu', \sigma'}_{j, j', \alpha}}, \smallabs{\Gamma^{\nu, \sigma; \nu', \sigma'}_{\pm, j, \alpha}}$ as $\delta_1, \delta_2 \to 0$}. We will now use the smallness of $\Delta_{\pm}^{\nu, \sigma; \nu', \sigma'}$ together with lemma \ref{lemma:smallness_double_integral_kernel} to show that $\smallabs{\Gamma^{\nu, \sigma; \nu', \sigma'}_{j, j'; \alpha}}, \smallabs{\Gamma^{\nu, \sigma; \nu', \sigma'}_{\pm, j, \alpha}} \to 0$ as $\delta_1, \delta_2 \to 0$. In order to apply lemma \ref{lemma:smallness_double_integral_kernel}, we need to establish that $f^{\nu, \sigma; \nu', \sigma'}_{j, j', \alpha}, f^{\nu, \sigma; \nu', \sigma'}_{\pm, j, \alpha}$ are bounded functions and have bounded derivatives with respect to either of its coordinates. We will show this in detail for $f_{j, j', \alpha}^{\nu, \sigma; \nu', \sigma'}$ (with $j' < j$) --- a similar calculation will hold for $f_{j, j', \alpha}^{\nu, \sigma; \nu', \sigma'}$ (with $j' > j$) and $f_{\pm, j, \alpha}^{\nu, \sigma; \nu', \sigma'}$.

We begin by noting that, for any $s, s'$, 
\[
\smallabs{f^{\nu, \sigma; \nu', \sigma'}_{j, j', \alpha}(s', s)} \leq \smallnorm{R^{\nu'}_{\alpha, \sigma'}(s')} \smallnorm{R^{\nu}_{\alpha, \sigma}(s)} \leq 1,
\]
i.e. $f_{j, j', \alpha}^{\nu, \sigma; \nu', \sigma'}(s', s)$ is bounded above by a constant uniform in $\delta_1, \delta_2$. The same holds for the derivative of $f^{\nu, \sigma; \nu', \sigma'}_{j, j', \alpha}$ with respect to either of its arguments. For instance, consider the derivative with respect to the first argument:
\begin{align}\label{eq:od_der_first_coord}
&\abs{\partial_1 f^{\nu, \sigma; \nu', \sigma'}_{j, j', \alpha}(s', s)} \nonumber\\
&\qquad\leq |\vecbra{I_E, O_S} \mathcal{V}^{\delta_1}_{1, j' - 1}\mathcal{U}^{\delta_1}_{\mathcal{A}_{j'}}(t_{j'}, s')\big(R^{\nu'}_{\alpha, \sigma'})'(s')\mathcal{U}^{\delta_1}_{\mathcal{A}_{j'}}(s', s_{j'})\Omega_{j'}\mathcal{V}^{\delta_1}_{j' + 1, j - 1} \times\nonumber\\
&\qquad\qquad \qquad \qquad \qquad \qquad \qquad \qquad \qquad \qquad \mathcal{U}^{\delta_2}_{\mathcal{A}_j}(t_j, s)R^{\nu}_{\alpha, \sigma}(s) \mathcal{U}^{\delta_2}_{\mathcal{A}_j}(s, s_j) \mathcal{V}^{\delta_2}_{j + 1, n}\vecket{\rho_E, \sigma_0}| +\nonumber \\
&\qquad \quad \Theta_{\mathcal{A}_{S_\alpha}}(\alpha'')|\vecbra{I_E, O_S}\mathcal{V}^{\delta_1}_{1, j' - 1}\mathcal{U}_{\mathcal{A}_{j'}}^{\delta_1}(t_{j'}, s')[\mathcal{C}_{h_{\alpha''}(s')}, R_{\alpha, \sigma'}^{\nu'}(s')] {\mathcal{U}}_{\mathcal{A}_{j'}}^{\delta_1}(s', s_{j'})\Omega_{j'} \mathcal{V}^{\delta_1}_{j' + 1, j - 1}\times\nonumber\\
&\qquad\qquad \qquad \qquad \qquad \qquad \qquad \qquad \qquad \qquad\mathcal{U}^{\delta_2}_{\mathcal{A}_j}(t_j, s){R}^{\nu}_{\alpha, \sigma}(s)\mathcal{U}^{\delta_2}_{\mathcal{A}_j}(s, s_j) \mathcal{V}^{\delta_2}_{j  +1, n}\vecket{\rho_E, \sigma_s}| + \nonumber \\
&\qquad \quad \Theta_{\mathcal{A}_{S_\alpha}}(\alpha'') |\vecbra{I_E, O_S} \mathcal{V}^{\delta_1}_{1, j' - 1} {\mathcal{U}}_{\mathcal{A}_j}^{\delta_1}(t_{j'}, s'){B}^{\nu'', \delta_1}_{\alpha'', s', \sigma''}[ R^{\nu''}_{\alpha'', \sigma''}(s'), R^{\nu'}_{\alpha, \sigma'}(s')] {\mathcal{U}}_{\mathcal{A}_{j'}}^{\delta_1}(s', s_{j'}) \Omega_{j'}\mathcal{V}^{\delta_1}_{j' + 1, j - 1}\times \nonumber\\
&\qquad\qquad \qquad \qquad \qquad \qquad \qquad \qquad \qquad \qquad\mathcal{U}^{\delta_2}_{\mathcal{A}_j}(t_j, s){R}^{\nu}_{\alpha, \sigma}(s)\mathcal{U}^{\delta_1}_{\mathcal{A}_j}(s, s_j) \mathcal{V}^{\delta_2}_{j + 1, n}\vecket{\rho_E, \sigma_S}|, \nonumber \\
&\qquad\numleq{1}  |\vecbra{I_E, O_S} \mathcal{V}^{\delta_1}_{1, j' - 1}\mathcal{U}^{\delta_1}_{\mathcal{A}_{j'}}(t_{j'}, s')\big(R^{\nu'}_{\alpha, \sigma'})'(s')\mathcal{U}^{\delta_1}_{\mathcal{A}_{j'}}(s', s_{j'})\Omega_{j'}\mathcal{V}^{\delta_1}_{j' + 1, j - 1} \times\nonumber\\
&\qquad\qquad \qquad \qquad \qquad \qquad \qquad \qquad \qquad \qquad \mathcal{U}^{\delta_2}_{\mathcal{A}_j}(t_j, s)R^{\nu}_{\alpha, \sigma}(s) \mathcal{U}^{\delta_2}_{\mathcal{A}_j}(s, s_j) \mathcal{V}^{\delta_2}_{j + 1, n}\vecket{\rho_E, \sigma_0}| +\nonumber \\
&\qquad\quad \Theta_{\mathcal{A}_{S_\alpha}}(\alpha'')|\vecbra{I_E, O_S}\mathcal{V}^{\delta_1}_{1, j' - 1}{\mathcal{U}}_{\mathcal{A}_{j'}}^{\delta_1}(t_{j'}, s')[\mathcal{C}_{h_{\alpha''}(s')}, R_{\alpha, \sigma'}^{\nu'}(s')] {\mathcal{U}}_{\mathcal{A}_{j'}}^{\delta_1}(s', s_{j'})\Omega_{j'} \mathcal{V}^{\delta_1}_{j' + 1, j - 1}\times\nonumber\\
&\qquad\qquad \qquad \qquad \qquad \qquad \qquad \qquad \qquad \qquad\mathcal{U}^{\delta_2}_{\mathcal{A}_j}(t_j, s){R}^{\nu}_{\alpha, \sigma}(s)\mathcal{U}^{\delta_2}_{\mathcal{A}_j}(s, s_j) \mathcal{V}^{\delta_2}_{j  +1, n}\vecket{\rho_E, \sigma_s}| + \nonumber \\
&\qquad \quad \Theta_{\mathcal{A}_{S_\alpha}}(\alpha'') |\wickleft\big(\vecbra{I_E, O_S} \mathcal{V}^{\delta_1}_{1, j' - 1} {\mathcal{U}}_{\mathcal{A}_j}^{\delta_1}(t_{j'}, s'); {B}^{\nu'', \delta_1}_{\alpha'', s', \sigma''}\big)[ R^{\nu''}_{\alpha'', \sigma''}(s'), R^{\nu'}_{\alpha, \sigma'}(s')] {\mathcal{U}}_{\mathcal{A}_{j'}}^{\delta_1}(s', s_{j'}) \Omega_{j'}\mathcal{V}^{\delta_1}_{j' + 1, j - 1}\times \nonumber\\
&\qquad\qquad \qquad \qquad \qquad \qquad \qquad \qquad \qquad \qquad\mathcal{U}^{\delta_2}_{\mathcal{A}_j}(t_j, s){R}^{\nu}_{\alpha, \sigma}(s)\mathcal{U}^{\delta_1}_{\mathcal{A}_j}(s, s_j) \mathcal{V}^{\delta_2}_{j + 1, n}\vecket{\rho_E, \sigma_S}| + \nonumber\\
&\qquad \quad \Theta_{\mathcal{A}_{S_\alpha}}(\alpha'') |\vecbra{I_E, O_S} \mathcal{V}^{\delta_1}_{1, j' - 1} {\mathcal{U}}_{\mathcal{A}_j}^{\delta_1}(t_{j'}, s') [ R^{\nu''}_{\alpha'', \sigma''}(s'), R^{\nu'}_{\alpha, \sigma'}(s')] \times \nonumber\\
&\qquad\qquad \qquad \qquad \qquad   \qquad\wickright\big({B}^{\nu'', \delta_1}_{\alpha'', s', \sigma''}; {\mathcal{U}}_{\mathcal{A}_{j'}}^{\delta_1}(s', s_{j'}) \Omega_{j'}\mathcal{V}^{\delta_1}_{j' + 1, j - 1}\mathcal{U}^{\delta_2}_{\mathcal{A}_j}(t_j, s){R}^{\nu}_{\alpha, \sigma}(s)\mathcal{U}^{\delta_1}_{\mathcal{A}_j}(s, s_j) \mathcal{V}^{\delta_2}_{j + 1, n}\vecket{\rho_E, \sigma_S}\big)|, \nonumber \\
&\qquad \leq \smallnorm{(R^{\nu'}_{\alpha, \sigma'}(s'))'} + 4 \abs{\mathcal{A}_{S_\alpha}} + 4\Theta_{\mathcal{A}_{S_\alpha}}(\alpha'')\smallnorm{\wickleft\big(\vecbra{I_E, O_S} \mathcal{V}^{\delta_1}_{1, j' - 1} {\mathcal{U}}_{\mathcal{A}_j}^{\delta_1}(t_{j'}, s'); {B}^{\nu'', \delta_1}_{\alpha'', s', \sigma''}\big)} + \nonumber\\
&\qquad \qquad \qquad  4 \Theta_{\mathcal{A}_{S_\alpha}}(\alpha'')\norm{\wickright\big({B}^{\nu'', \delta_1}_{\alpha'', s', \sigma''}; {\mathcal{U}}_{\mathcal{A}_{j'}}^{\delta_1}(s', s_{j'}) \Omega_{j'}\mathcal{V}^{\delta_1}_{j' + 1, j - 1}\mathcal{U}^{\delta_2}_{\mathcal{A}_j}(t_j, s){R}^{\nu}_{\alpha, \sigma}(s)\mathcal{U}^{\delta_1}_{\mathcal{A}_j}(s, s_j) \mathcal{V}^{\delta_2}_{j + 1, n}\vecket{\rho_E, \sigma_S}\big)}_1,
\end{align}
where we obtained (1) by applying the Wick's theorem. Now, note from Eq.~\ref{eq:right_W_1} that
\begin{align*}
&\norm{\wickright_{\rho_E}\big({B}^{\nu'', \delta_1}_{\alpha'', s', \sigma''}; {\mathcal{U}}_{\mathcal{A}_{j'}}^{\delta_1}(s', s_{j'}) \Omega_{j'}\mathcal{V}^{\delta_1}_{j' + 1, j - 1}\mathcal{R}^{\nu, \delta_2}_{j, \alpha, \sigma}(s) \mathcal{V}^{\delta_2}_{j + 1, n}\vecket{\rho_E, \sigma_S}\big)}_1  \nonumber\\
&\qquad\leq \sum_{\nu''', \sigma'''} \bigg(\abs{\int_{s_{j'}}^{s'}\abs{\vecbra{I_E}{B}^{\nu'', \delta_1}_{\alpha'', s', \sigma''}{B}^{\nu''', \delta_1}_{\alpha'', s''', \sigma'''}\vecket{\rho_E}}ds'''}  + \sum_{l = j - 1}^{j' + 1}\abs{\int_{s_l}^{t_l} \abs{\vecbra{I_E}{B}^{\nu'', \delta_1}_{\alpha'', s', \sigma''}{B}^{\nu''', \delta_1}_{\alpha'', s''', \sigma'''} \vecket{\rho_E}}ds'''} + \nonumber\\
&\qquad\qquad \qquad  \abs{\int_{s}^{t_j} \abs{\vecbra{I_E}{B}^{\nu'', \delta_1}_{\alpha'', s', \sigma''}{B}^{\nu''', \delta_2}_{\alpha'', s''', \sigma'''} \vecket{\rho_E}}ds'''}+\abs{\int_{s_j}^{s} \abs{\vecbra{I_E}{B}^{\nu'', \delta_1}_{\alpha'', s', \sigma''}{B}^{\nu''', \delta_2}_{\alpha'', s''', \sigma'''} \vecket{\rho_E}}ds'''} + \nonumber\\
&\qquad\qquad \qquad   \sum_{l = j + 1}^{n}\abs{\int_{s_l}^{t_l} \abs{\vecbra{I_E}{B}^{\nu'', \delta_1}_{\alpha'', s', \sigma''}{B}^{\nu''', \delta_2}_{\alpha'', s''', \sigma'''} \vecket{\rho_E}}ds'''}\bigg),\nonumber \\
&\qquad\numleq{1} 4\big(n - j' + 6\big)\text{TV}(\U),
\end{align*}
where, in (1), we have used Eq.~\ref{eq:total_variation_mollified} to conclude that for any $\delta, \delta' \in \{\delta_1, \delta_2\}$ and interval $[a, b]$,
\[
\abs{\int_a^b \vecbra{I_E}{B}^{\nu, \delta}_{\alpha, s, \sigma}{B}^{\nu', \delta'}_{\alpha, s', \sigma'} \vecket{\rho_E} ds'} \leq \text{TV}(\K^{\nu, \nu'; \delta, \delta'}_{\alpha, \sigma'}) \leq \text{TV}(\U).
\]
Similarly, we obtain that
\begin{align*}
    &\norm{\wickleft_{\rho_E}(\vecbra{I_E, O_S}\mathcal{V}_{1, j' - 1}^{\delta_1}{\mathcal{U}}^{\delta_1}_{\mathcal{A}_j}(t_{j'}, s'); {B}^{\nu'', \delta_1}_{\alpha'', s', \sigma''}\big)} \nonumber\\
    &\qquad \leq \sum_{\nu''', \sigma'''} \bigg(\abs{\int_{s'}^{t_{j'}}\abs{\vecbra{I_E} {B}^{\nu''', \delta_1}_{\alpha'', s''', \sigma'''}{B}^{\nu'', \delta_1}_{\alpha'', s''', \sigma''}\vecket{\rho_E}}ds''' } + \sum_{l = 1}^{j' - 1}\abs{\int_{s_l}^{t_l}\abs{\vecbra{I_E}{B}^{\nu''', \delta_1}_{\alpha'', s''', \sigma'''}{B}^{\nu'', \delta_1}_{\alpha'', s', \sigma''}\vecket{\rho_E}}ds'''}\bigg), \nonumber\\
    &\qquad \leq 4j'\text{TV}(\U).
\end{align*}
Combining these bounds with Eq.~\ref{eq:od_der_first_coord}, we obtain that
\[
\abs{\partial_1 f_{j, j', \alpha}^{\nu, \sigma; \nu', \sigma'}(s', s)} \leq 4\smallabs{\mathcal{A}_{S_\alpha}}+ 16(n - 6) \smallabs{\mathcal{A}_{S_\alpha}} \text{TV}(\U) + \sup_{s' \in\bigcup_{i = 1}^n[s_i, t_i]}\smallnorm{(R^{\nu'}_{\alpha, \sigma'}(s'))'},
\]
which is bounded above uniformly in $\delta_1, \delta_2$ and $s, s'$. A similar calculation can be performed to also check that $\smallabs{\partial_2 f_{j, j', \alpha}^{\sigma, \nu; \sigma', \nu'}(s', s)}$ is also bounded above uniformly in $\delta_1, \delta_2$ and $s, s'$. Since all $\smallabs{f_{j, j', \alpha}^{\sigma, \nu; \sigma', \nu'}(s', s)}, \smallabs{\partial_1 f_{j, j', \alpha}^{\sigma, \nu; \sigma', \nu'}(s', s)}$ and $\smallabs{\partial_2 f_{j, j', \alpha}^{\sigma, \nu; \sigma', \nu'}(s', s)}$ are uniformly bounded, it then follows from the smallness of $\Delta_{+, \alpha}^{\nu, \sigma; \nu', \sigma'}$ and lemma \ref{lemma:smallness_double_integral_kernel}b that $\Gamma^{\nu, \sigma; \nu', \sigma'}_{j,j', \alpha} \to 0$ as $\delta_1, \delta_2 \to 0$ for $j > j'$. A similar argument can be repeated for $\Gamma_{j, j', \alpha}^{\nu, \sigma; \nu', \sigma'}$ (when $j < j'$) as well as for $\Gamma^{\nu, \sigma; \nu', \sigma'}_{\pm, j, \alpha}$ (where we would use lemma \ref{lemma:smallness_double_integral_kernel}a instead of lemma \ref{lemma:smallness_double_integral_kernel}b). Together with Eq.~\ref{eq:telescoping_sum}, this implies that
\[
\abs{\vecbra{I_E, O_S}\bigg(\prod_{i = 1}^n {\mathcal{U}}^{\delta_1}_{\mathcal{A}_i}(t_i, s_i) \bigg) \vecket{\rho_E, \sigma_S} - \vecbra{I_E, O_S}\bigg(\prod_{i = 1}^n {\mathcal{U}}^{\delta_2}_{\mathcal{A}_i}(t_i, s_i) \bigg) \vecket{\rho_E, \sigma_S}}\to 0 \text{ as }\delta_1, \delta_2 \to 0,
\]
which, by the completeness of complex numbers, implies the lemma statement.
\end{proof}

Formally, this lemma allows us to ``define" expressions of the form of Eq.~\ref{eq:system_GF} but for the original unregularized model i.e.~as a matter of definition, we will set
\begin{align}\label{eq:GF}
\vecbra{I_E, O_S} \bigg(\prod_{i = 1}^n {\mathcal{U}}_{\mathcal{A}_i}(t_i, s_i) \bigg) \vecket{\rho_E, \sigma_S} := \lim_{\delta \to 0}\ \vecbra{I_E, O_S} \bigg(\prod_{i = 1}^n {\mathcal{U}}^\delta_{\mathcal{A}_i}(t_i, s_i) \bigg) \vecket{\rho_E, \sigma_S}.
\end{align}
We point out that, from a more mathematical standpoint, we might be interested in defining the unitary group $U(t, s) = \mathcal{T}\exp(-i\int_s^t H(\tau) d\tau)$ instead of defining only expressions in Eq.~\ref{eq:GF}. This is a subtle issue due to the $\delta-$function contribution in the kernels, especially when the system itself is infinite-dimensional, but such a unitary group can in-fact be defined using the mathematical machinery of quantum stochastic calculus and its generalizations \cite{trivedi2022description, parthasarathy2012introduction}. In this paper, since our focus is to derive a Lieb-Robinson bound, we will not concern ourselves with this mathematical issue but use the definition in Eq.~\ref{eq:GF} instead of attempting to define the unitary group.

\section{The Lieb Robinson Bound}
In this section, we provide a derivation of the Lieb-Robinson bound presented in proposition 1 of the main text. In the next subsection, we first briefly recap the derivation of the Lieb-Robinson bound for bounded lattice Hamiltonian following Ref.~\cite{hastings2006spectral} --- we will explicitly point out where this derivation fails for the models considered in this paper. This will also serve to guide the remainder of the proof of proposition 1.
\subsection{Reviewing the Lieb-Robinson bound for bounded lattice Hamiltonians}
Here, we consider a $d$-dimensional lattice $\Lambda$ of qudits that locally interact with each other only through a bounded Hamiltonian --- this is equivalent to setting $R_\alpha^\nu(s) = 0$ in the model in Eq.~\ref{eq:supp:sys_env_hamil} to obtain
\[
H(s) = \sum_{\alpha} h_\alpha(s),
\]
where, similar to Eq.~\ref{eq:supp:sys_env_hamil}, $h_\alpha(s)$ is supported on a geometrically local set $S_\alpha$.  Furthermore, we will assume that $\norm{h_\alpha(s)} \leq 1$ and $h_\alpha(s)$ are differentiable with $\norm{h_\alpha'(s)}\leq \infty$. We will denote by $\mathcal{M}_H$ the set of all such Hamiltonians --- $\mathcal{M}_H$ will depend only on the supports $S_\alpha$ and as will be clear below, the Lieb-Robinson bounds that we derive will also depend only on the supports $S_\alpha$ and thus work for any Hamiltonian from $\mathcal{M}_H$. We consider $\Delta_{O_X}(t, t'; l)$ defined in the main-text: Given an operator $O_X$ supported on the region $X$, 
\[
\Delta_{O_X}(t, t'; l) = \abs{\text{Tr}(U^\dagger (t, t') O_X U(t, t') - U_{X[l]}^\dagger (t, t') O_X U_{X[l]}(t, t') ) \rho(t'))},
\]
where $X[l] = \{x \in \Lambda : d(x, X) \leq l\}$, $U_{X[l]}(t, t')$ is as defined in Eq.~\ref{eq:restriction_unitary} and  $\rho(t') = U(t', 0) \sigma_S U^\dagger(t', 0)$ is the state on evolving an initial state of the qudits $\sigma_S$ for time $t'$. We will break up the analysis of $\Delta_{O_X}(t, t; l)$ into several steps.

\underline{Step 1: Defining a commutator bound.} For the purpose of obtaining a bound on $\Delta_{O_X}(t, t'; l)$, it is convenient to first obtain a related commutator bound --- given $X, Y \subseteq\Lambda$ and $\mathcal{B}\subseteq \mathcal{A}$, we define $\gamma_\mathcal{B}^{X, Y}(t, t')$ by
\[
\gamma^{X, Y}_\mathcal{B}(t, t') = \sup_{O_X, \mathcal{L}_Y} \frac{\norm{\mathcal{L}_Y^\dagger (U_\mathcal{B}^\dagger(t, t')O_X U_\mathcal{B}(t, t'))}}{\norm{O_X} \norm{\mathcal{L}_Y}_\diamond}.
\]
Here, the supremum is taken over all operators $O_X$ supported on $X$ and superoperators $\mathcal{L}_Y$ supported on $Y$ which also satisfy $\mathcal{L}_Y^\dagger(I) = 0$. Now, $\Delta_{O_X}(t, t'; l)$ can be upper-bounded in by $\gamma^{X, Y}_\mathcal{B}(t, t')$ --- to see this, first note that
\begin{align*}
    \Delta_{O_X}(t, t'; l) &\leq \smallnorm{U^\dagger(t, t') O_X U(t, t') - U_{{X[l]}}^\dagger(t, t') O_X U_{{X[l]}}(t, t')}, \nonumber \\
    &= \smallnorm{O_X  - U(t, t')U_{{X[l]}}(t', t) O_X U_{{X[l]}}(t, t')U(t', t)}, \nonumber \\
    &=\norm{\int_{t'}^t U(t, s) [H(s) - H_{{X[l]}}(s), U_{{X[l]}}(s, t)O_X U_{{X[l]}}(t, s)] U(s, t) ds}.
\end{align*}
Suppose $X_c[l]$ to be the set of sites which are contained in sub-regions $S_\alpha$ overlapping with $X[l]$: $X_c[l] = \bigcup_{\alpha:S_\alpha \cap X[l]\neq \emptyset} S_\alpha$. Then, since $U_{{X[l]}}(t', s) O_X U_{{X[l]}}(s, t')$ is supported only on $X^c[l]$, $[h_\alpha(s), U_{{X[l]}}(t', s) O_X U_{{X[l]}}(s, t')] = 0$ for any $\alpha : S_\alpha \cap X_c[l] = \emptyset$. Therefore, we obtain that
\[
[H(s) - H_{{X[l]}}(s), U_{{X[l]}}(s, t) O_X U_{{X[l]}}(t, s)] = \sum_{\substack{\alpha \notin \mathcal{A}_{X[l]} \\ S_\alpha \cap X_c[l]\neq \emptyset}} [h_\alpha(s), U_{{X[l]}}(s, t) O_X U_{{X[l]}}(t, s)].
\]
Finally, we note that
\begin{align}\label{eq:delta_bound_comm}
\Delta_{O_X}(t, t'; l) \leq  \sum_{\substack{\alpha \notin \mathcal{A}_{X[l]} \\ S_\alpha \cap X_c[l]\neq \emptyset}} \int_{t'}^t \norm{ [h_\alpha(s), U_{{X[l]}}(s, t) O_X U_{{X[l]}}(t, s)]}ds \leq 2 \sum_{\substack{\alpha \notin \mathcal{A}_{X[l]} \\ S_\alpha \cap X_c[l]\neq \emptyset}} \int_{t'}^t \gamma^{X, S_\alpha}_{\mathcal{A}_{X[l]}}(t, s) ds.
\end{align}
Thus, obtaining a bound on $\gamma^{X, Y}_{\mathcal{B}}(t, t')$ also furnishes a bound on $\Delta_{O_X}(t, t'; l)$. We remark that this bound (Eq.~\ref{eq:delta_bound_comm}) implicitly assumes that $\gamma^{X, S_\alpha}_{\mathcal{A}_{X[l]}}(t, t')$ is integrable with respect to $t'$ --- while this might seem to be a reasonable assumption, we will provide a rigorous proof in Step 3. 

To understand the complications that arise when considering the full unbounded system-environment model, it will be more insightful to re-express $\gamma^{X, Y}_\mathcal{B}(t, t')$ as 
\begin{align}\label{eq:fd:gamma_def}
\gamma^{X, Y}_\mathcal{B}(t, t') = \sup_{\phi, O_X, \mathcal{L}_Y} \frac{\abs{\text{Tr}(\mathcal{L}_Y^\dagger (U_\mathcal{B}^\dagger(t, t')O_X U_\mathcal{B}(t, t'))\phi)}}{\norm{O_X}\norm{\mathcal{L}_Y}_\diamond \norm{\phi}_1} = \sup_{\phi, O_X, \mathcal{L}_Y } \frac{\smallabs{\vecbra{O_X}\mathcal{U}_\mathcal{B}(t, t')\mathcal{L}_Y\vecket{\phi}}}{\norm{O_X}\norm{\mathcal{L}_Y}_\diamond \norm{\phi}_1},
\end{align}
where we have used the fact that, for any operator $A$, $\norm{A} = \sup_{\phi : \norm{\phi}_1 \leq 1} \abs{\text{Tr}(A\phi)}$. In this expression for $\gamma^{X, Y}_\mathcal{B}(t, t')$, we can interpret $\phi$ to be the \emph{state} (even though it is not constrained to be a positive operator) over which the expected value of the operator $\mathcal{L}_Y^\dagger(U_\mathcal{B}^\dagger(t, t') O_X U_\mathcal{B}(t, t'))$ is being maximized over. For bounded Hamiltonians, we can maximize this expected value over all $\phi$ with the only constraint being its normalization (i.e.~$\norm{\phi}_1 \leq 1$) --- while dealing with the unbounded system-environment model, a key challenge would be to identify ``good" spaces of operators $\phi$ to define a quantity similar to $\gamma_\mathcal{B}^{X, Y}(t, t')$.

\underline{Step 2: Deriving a first-order expansion for the commutator bound.} The general strategy to derive an upper bound on $\gamma^{X, Y}_{\mathcal{B}}(t, t')$ is to consider it for a slightly larger time interval $[t' - \varepsilon, t]$, $\gamma_{\mathcal{B}}^{X, Y}(t, t' - \varepsilon)$, and relate it to $\gamma_{\mathcal{B}}^{X, Y}(t, t')$ by performing a first order expansion of the propagator $\mathcal{U}_{\mathcal{B}}(t, t')$. Throughout the following analysis, without loss of generality, will assume the normalization $\norm{\mathcal{L}_Y}_\diamond = 1, \norm{O_X} = 1, \norm{\phi}_1 = 1$. We begin by noting that 
\begin{align}
    \abs{\vecbra{O_X}\mathcal{U}_\mathcal{B}(t, t' - \varepsilon) \mathcal{L}_Y \vecket{\phi}} &= \abs{\vecbra{O_X}\mathcal{U}_\mathcal{B}(t, t') \mathcal{U}_\mathcal{B}(t, t' - \varepsilon)\mathcal{L}_Y \vecket{\phi}}, \nonumber \\
    &=\abs{\vecbra{O_X}\mathcal{U}_\mathcal{B}(t, t')\mathcal{U}_\mathcal{B}(t', t'-\varepsilon) \mathcal{L}_Y \mathcal{U}_\mathcal{B}(t' - \varepsilon, t') \mathcal{U}_\mathcal{B}(t', t'-\varepsilon)\vecket{\phi}}.
\end{align}
Furthermore, via a $1^\text{st}$ order Dyson series expansion of $\mathcal{U}_Z(t', t'-\varepsilon)$, we obtain that
\begin{align}\label{eq:fd:first_order_exp}
&\mathcal{U}_{\mathcal{B}}(t', t'-\varepsilon) \mathcal{L}_Y \mathcal{U}_{\mathcal{B}}(t' - \varepsilon, t') \nonumber\\
&\qquad = \mathcal{L}_Y + i\int_{t' - \varepsilon}^{t'} [\mathcal{H}_{\mathcal{B}}(s'), \mathcal{L}_Y] ds' + i^2 \int_{t' - \varepsilon}^{t'} \int_{s'}^{t'}\mathcal{U}_\mathcal{B}(t', s'')[\mathcal{H}_\mathcal{B}(s''), [\mathcal{H}_{\mathcal{B}}(s'), \mathcal{L}_Y]]\mathcal{U}_\mathcal{B}(s'', t') ds'' ds', \nonumber\\
&\qquad \numeq{1} \mathcal{L}_Y + i \int_{t' - \varepsilon}^{t'} [\mathcal{H}_{\mathcal{B} \cap \mathcal{A}_Y}(s'), \mathcal{L}_Y] ds' + i^2 \int_{t' - \varepsilon}^{t'} \int_{s'}^{t'}\mathcal{U}_\mathcal{B}(t', s'')[\mathcal{H}_\mathcal{B}(s''), [\mathcal{H}_\mathcal{B}(s'), \mathcal{L}_Y]]\mathcal{U}_\mathcal{B}(s'', t') ds'' ds', \nonumber \\
&\qquad \numeq{2} \mathcal{U}_{\mathcal{B}\cap \mathcal{A}_Y}(t', t'-\varepsilon) \mathcal{L}_Y \mathcal{U}_{\mathcal{B}\cap \mathcal{A}_Y}(t' - \varepsilon, t') + i^2 \int_{t' - \varepsilon}^{t'} \int_{s'}^{t'}\mathcal{U}_\mathcal{B}(t', s'')[\mathcal{H}_\mathcal{B}(s''), [\mathcal{H}_\mathcal{B}(s'), \mathcal{L}_Y]]\mathcal{U}_\mathcal{B}(s'', t') ds'' ds' - \nonumber\\
&\qquad \qquad \qquad \qquad \qquad i^2 \int_{t' -\varepsilon}^{t'} \int_{s'}^{t'}\mathcal{U}_{\mathcal{B}\cap \mathcal{A}_Y}(t', s'') [\mathcal{H}_{\mathcal{B} \cap \mathcal{A}_Y}(s''), [\mathcal{H}_{\mathcal{B} \cap  \mathcal{A}_Y}(s'), \mathcal{L}_Y]] \mathcal{U}_{\mathcal{B}\cap \mathcal{A}_Y}(s'', t')ds'' ds'.
\end{align}
In (1), we have used that if $\alpha \in \mathcal{B}\setminus \mathcal{A}_Y$ then $S_\alpha \cap Y = \emptyset$ and consequently $[\mathcal{C}_{h_\alpha(s')}, \mathcal{L}_Y]= 0$. Since $\mathcal{H}_\mathcal{B}(s') =  \mathcal{H}_{\mathcal{B} \cap \mathcal{A}_Y}(s') + \mathcal{H}_{\mathcal{B}\setminus \mathcal{A}_Y}(s')$ and consequently $[\mathcal{L}_Y, \mathcal{H}_\mathcal{B}(s')] = [\mathcal{L}_Y, \mathcal{H}_{\mathcal{B}\cap \mathcal{A}_Y}(s')]$. In (2), we have used a first order Dyson series expansion of $\mathcal{U}_{\mathcal{B}\cap \mathcal{A}_Y}(t', t'-\varepsilon)\mathcal{L}_Y\mathcal{U}_{\mathcal{B}\cap \mathcal{A}_Y}(t' - \varepsilon, t')$. Therefore, from Eq.~\ref{eq:fd:first_order_exp}, we obtain that 
\begin{align}\label{eq:fd:taylor_expansion_1}
\abs{\vecbra{O_X}\mathcal{U}_\mathcal{B}(t, t'-\varepsilon) \mathcal{L}_Y \vecket{\phi} - \vecbra{O_X}\mathcal{U}_\mathcal{B}(t, t')\mathcal{U}_{\mathcal{B}\cap \mathcal{A}_Y}(t', t'-\varepsilon) \mathcal{L}_Y \vecket{\tilde{\phi}} } \leq 8 (\abs{\mathcal{B}}^2 + \abs{\mathcal{B}\cap \mathcal{A}_Y}^2)\varepsilon^2,
\end{align}
where $\vecket{\tilde{\phi}} = \mathcal{U}_{\mathcal{B}\cap \mathcal{A}_Y}(t' - \varepsilon, t') \mathcal{U}_\mathcal{B}(t', t'-\varepsilon)\vecket{\phi}$ (note that $\smallnorm{\tilde{\phi}}_1 = \norm{\phi}_1 = 1$). Next, we again perform a $1^\text{st}$ order Dyson expansion of $\mathcal{U}_{\mathcal{B}\cap \mathcal{A}_Y}(t', t' -\varepsilon)$ in $\vecbra{O_X} \mathcal{U}_{\mathcal{B}}(t, t') \mathcal{U}_{\mathcal{B}\cap \mathcal{A}_Y}(t', t'-\varepsilon) \mathcal{L}_Y \vecket{\tilde{\phi}}$ to obtain
\begin{align}
&\vecbra{O_X} \mathcal{U}_{\mathcal{B}}(t, t') \mathcal{U}_{\mathcal{B}\cap \mathcal{A}_Y}(t', t'-\varepsilon) \mathcal{L}_Y \vecket{\tilde{\phi}} = \vecbra{O_X}\mathcal{U}_{\mathcal{B}}(t, t') \mathcal{L}_Y \vecket{\tilde{\phi}} - i\int_{t' -\varepsilon}^{t'}\vecbra{O_X}\mathcal{U}_{\mathcal{B}}(t, t') \mathcal{H}_{\mathcal{B}\cap \mathcal{A}_Y}(s') \mathcal{L}_Y \vecket{\tilde{\phi}} ds'
  + \nonumber\\
&\qquad \qquad \qquad(-i)^2\int_{t' - \varepsilon}^{t'}\int_{s'}^{t'}\vecbra{O_X}\mathcal{U}_\mathcal{B}(t, t') \mathcal{U}_{\mathcal{B}\cap \mathcal{A}_Y}(t', s'')\mathcal{H}_{\mathcal{B}\cap \mathcal{A}_Y}(s'') \mathcal{H}_{\mathcal{B}\cap \mathcal{A}_Y}(s')\mathcal{L}_Y \vecket{\tilde{\phi}} ds'' ds', 
\end{align}
from which we obtain that
\begin{align}\label{eq:fd:taylor_expansion_2}
    &\abs{\vecbra{O_X} \mathcal{U}_{\mathcal{B}}(t, t') \mathcal{U}_{\mathcal{B}\cap \mathcal{A}_Y}(t', t'-\varepsilon) \mathcal{L}_Y \vecket{\tilde{\phi}} -\bigg( \vecbra{O_X}\mathcal{U}_{\mathcal{B}}(t, t') \mathcal{L}_Y \vecket{\tilde{\phi}} - i\int_{t' -\varepsilon}^{t'}\vecbra{O_X}\mathcal{U}_{\mathcal{B}}(t, t') \mathcal{H}_{\mathcal{B}\cap \mathcal{A}_Y}(s') \mathcal{L}_Y \vecket{\tilde{\phi}} ds' \bigg) } \leq \nonumber\\
    &\qquad \qquad \qquad \qquad  \qquad \qquad \qquad \qquad  \qquad \qquad \qquad \qquad  \qquad \qquad \qquad \qquad  \qquad \qquad \qquad \qquad  2  \abs{\mathcal{B}\cap \mathcal{A}_Y}^2 \varepsilon^2.
\end{align}
From Eqs.~\ref{eq:fd:taylor_expansion_1} and \ref{eq:fd:taylor_expansion_2} and the triangle inequality, we obtain that (here, $C_0 = 8 \abs{\mathcal{B}}^2 + 10 \abs{\mathcal{B}\cap \mathcal{A}_Y}^2 $):
\begin{align}
\abs{\vecbra{O_X}\mathcal{U}_\mathcal{B}(t, t' - \varepsilon) \mathcal{L}_Y \vecket{\phi}} &\leq \abs{\vecbra{O_X}\mathcal{U}_\mathcal{B}(t, t') \mathcal{L}_Y \vecket{\phi}} + \int_{t ' -\varepsilon}^{t'} \abs{\vecbra{O_X} \mathcal{U}_{\mathcal{B}}(t, t') \mathcal{H}_{\mathcal{B}\cap \mathcal{A}_Y}(s')\mathcal{L}_Y \vecket{\tilde{\phi}} } ds' + C_0 \varepsilon^2,  \nonumber\\
& \leq \abs{\vecbra{O_X} \mathcal{U}_\mathcal{B}(t, t')\mathcal{L}_Y \vecket{\phi}} + \sum_{\alpha \in \mathcal{B} \cap \mathcal{A}_Y} \int_{t' - \varepsilon}^{t'} \abs{\vecbra{O_X} \mathcal{U}_\mathcal{B}(t, t') \mathcal{C}_{h_{\alpha}(s')} \mathcal{L}_Y \vecket{\tilde{\phi}}}ds' +C_0 \varepsilon^2, \nonumber \\
& \numleq{1}  \gamma^{X, Y}_\mathcal{B}(t, t') + 2  \sum_{\alpha \in \mathcal{A}_Y}\gamma_{\mathcal{B}}^{X, S_\alpha}(t, t') \varepsilon  + C_0 \varepsilon^2 ,
\end{align}
where in (1) we have used the definition of $\gamma^{X, Y}_{\mathcal{B}}(t, t')$ and used the upper bounds:
\begin{align*}
&\abs{\vecbra{O_X} \mathcal{U}_\mathcal{B}(t, t')\mathcal{L}_Y \vecket{\phi}} \leq \gamma_\mathcal{B}^{X, Y}(t, t')\norm{\mathcal{L}_Y}_\diamond \norm{O_X} \norm{\phi}_1 \leq \gamma_{\mathcal{B}}^{X, Y}(t, t') \text{ and } \\
&\abs{\vecbra{O_X} \mathcal{U}_\mathcal{B}(t, t') \mathcal{C}_{h_{\alpha}(s')} \mathcal{L}_Y \vecket{\tilde{\phi}}} \leq \gamma_{\mathcal{B}}^{X, S_\alpha}(t, t') \smallnorm{\mathcal{C}_{h_\alpha(s')}}_\diamond \norm{O_X}\smallnorm{\mathcal{L}_Y (\tilde{\phi})}_1 \leq 2 \gamma_{\mathcal{B}}^{X, S_\alpha}(t, t').
\end{align*}
In (1), we have also simultaneously upper bounded the sum over $\alpha \in \mathcal{B}\cap \mathcal{A}_Y$ with a (larger) sum over $\alpha \in \mathcal{A}_Y$. Since this bound holds for any $O_X, \mathcal{L}_Y$ and $\vecket{\phi}$, again from the definition of $\gamma^{X, Y}_\mathcal{B}(t, t')$, we then obtain that
\begin{align}\label{eq:fd:trotter_expansion}
\gamma_\mathcal{B}^{X, Y}(t, t' - \varepsilon) \leq \gamma_\mathcal{B}^{X, Y}(t, t') + 2 \varepsilon \sum_{\alpha \in \mathcal{A}_Y}\gamma_\mathcal{B}^{X, S_\alpha}(t, t') + C_0 \varepsilon^2.
\end{align}

\underline{Step 3: Lieb-Robinson integral inequation.} For a given $t > 0$ and $\mathcal{B}$, it is convenient to reformulate Eq.~\ref{eq:fd:trotter_expansion} in terms of $\Gamma_{X, Y}(\tau) = \gamma^{X, Y}_{\mathcal{B}}(t, t - \tau)$:
\begin{align}\label{eq:fd:trotter_expansion_new}
    \Gamma_{X, Y}(\tau + \varepsilon) \leq \Gamma_{X, Y}(\tau) + 2\varepsilon \sum_{\alpha \in \mathcal{A}_Y} \Gamma_{X, S_\alpha}(\tau) + C_0 \varepsilon^2.
\end{align}
We will next convert it into an integral equation --- for this, we choose $\varepsilon = \tau /  N$ and define $\tau_n = n \varepsilon$ for $n \in [0:N]$.  Then, from Eq.~\ref{eq:fd:trotter_expansion_new}, we obtain that
\begin{align}
    \Gamma_{X, Y}(\tau_n) \leq \Gamma_{X, Y}(\tau_{n - 1}) + 2\varepsilon\sum_{\alpha \in \mathcal{A}_Y}\Gamma_{X, S_\alpha}(\tau_{n - 1}) + C_0 \bigg(\frac{\tau}{N}\bigg)^2,
\end{align}
and consequently, we have that
\begin{align}
    \Gamma_{X, Y}(\tau) \leq \Gamma_{X, Y}(0) +2\varepsilon \sum_{\alpha \in \mathcal{A}_Y} \sum_{n = 0}^{N - 1}\Gamma_{X, S_\alpha}(\tau_{n - 1}) + C_0\frac{\tau^2}{N},
\end{align}
or, in the limit of $N \to \infty$,
\begin{align}\label{eq:almost_integral_equation_fd}
\Gamma_{X, Y}(\tau) \leq \Gamma_{X, Y}(0) +2\sum_{\alpha \in \mathcal{A}_Y}\lim_{N \to \infty}\bigg(\varepsilon \sum_{n = 0}^{N - 1}\Gamma_{X, S_\alpha}(\tau_{n - 1})\bigg).
\end{align}
We would like to identify $\lim_{N \to \infty} \varepsilon \sum_{n =0}^{N - 1}\Gamma_{X, S_\alpha}(\tau_{n - 1})$ with $\int_0^\tau \Gamma_{X, S_\alpha}(\tau') d\tau'$. However, for this to hold, we must ensure that $\Gamma_{X, S_\alpha}(\tau)$ is integrable with respect to $\tau$. For the finite-dimensional case that we are currently considering, this is relatively straightforward to show since we can establish that $\Gamma_{X, S_\alpha}(\tau)$ is in fact a continuous function of $\tau$. To see this, note that assuming $\norm{O_X} = \norm{\phi}_1 = \norm{\mathcal{L}_Y}_\diamond = 1$,
\begin{align*}
&\abs{\vecbra{O_X}\mathcal{U}_\mathcal{B}(t, t') \vecket{\phi} - \vecbra{O_X}\mathcal{U}_\mathcal{B}(t, t' - \varepsilon) \vecket{\phi}} = \abs{\vecbra{O_X} \mathcal{U}_\mathcal{B}(t, t') (I - \mathcal{U}_B(t', t'-\varepsilon)) \vecket{\phi}}, \nonumber \\
&\qquad \qquad = \abs{\vecbra{O_X}\mathcal{U}_\mathcal{B}(t, t') \bigg(\int_{t' - \varepsilon}^{t'}\mathcal{H}_\mathcal{B}(s)  \mathcal{U}_\mathcal{B}(s, t' - \varepsilon)ds\bigg) \vecket{\phi} }, \nonumber\\
&\qquad \qquad \leq \int_{t' - \varepsilon}^{t'}\norm{\mathcal{H}_\mathcal{B}(s)}_\diamond ds \leq 2\varepsilon \abs{\mathcal{B}}.
\end{align*}
Therefore, we obtain that
\begin{align*}
&\gamma_{\mathcal{B}}^{X, Y}(t, t' - \varepsilon) = \sup_{\phi, O_X, \mathcal{L}_Y } \frac{\smallabs{\vecbra{O_X}\mathcal{U}_\mathcal{B}(t, t' - \varepsilon)\mathcal{L}_Y\vecket{\phi}}}{\norm{O_X}\norm{\mathcal{L}_Y}_\diamond \norm{\phi}_1} \leq2\varepsilon \abs{\mathcal{B}} + \sup_{\phi, O_X, \mathcal{L}_Y } \frac{\smallabs{\vecbra{O_X}\mathcal{U}_\mathcal{B}(t, t')\vecket{\phi}}}{\norm{O_X} \norm{\phi}_1 \norm{\mathcal{L}_Y}_\diamond} \leq \gamma^{X, Y}_{\mathcal{B}}(t, t') + 2\varepsilon\abs{\mathcal{B}} \text{ and } \nonumber\\
&\gamma_{\mathcal{B}}^{X, Y}(t, t') = \sup_{\phi, O_X, \mathcal{L}_Y} \frac{\abs{\vecbra{O_X}\mathcal{U}_\mathcal{B}(t, t')\mathcal{L}_Y \vecket{\phi}}}{\norm{O_X}\norm{\mathcal{L}_Y}_\diamond \norm{\phi}_1} \leq 2\varepsilon\abs{\mathcal{B}} + \sup_{\phi, O_X, \mathcal{L}_Y} \frac{\abs{\vecbra{O_X}\mathcal{U}_\mathcal{B}(t, t' - \varepsilon)}}{\norm{O_X} \norm{\mathcal{L}_Y}_\diamond \norm{\phi}_1} \leq \gamma^{X, Y}_\mathcal{B}(t, t' - \varepsilon) + 2\varepsilon \abs{\mathcal{B}},
\end{align*}
which, taken together, imply that
\[
\abs{\gamma^{X, Y}_\mathcal{B}(t, t') - \gamma^{X, Y}_\mathcal{B}(t, t' - \varepsilon)} \leq 2\varepsilon \abs{\mathcal{B}}.
\]
Since $\gamma^{X, Y}_\mathcal{B}(t, t')$ is a continuous function of $t'$, $\Gamma_{X, Y}(\tau)$ is also a continuous function of $\tau$. Thus, $\Gamma_{X, Y}(\tau)$ is an integrable function of $\tau$ (and so is $\gamma^{X, Y}_{\mathcal{B}}(t, t')$ integrable with respect to $t'$ --- note that this also justifies Eq.~\ref{eq:delta_bound_comm}). From Eq.~\ref{eq:almost_integral_equation_fd}, we then obtain
\begin{align}\label{eq:fd:int_eq_final}
\Gamma_{X, Y}(\tau) \leq \Gamma_{X, Y}(0) + 2\sum_{\alpha \in \mathcal{A}_Y} \int_0^\tau \Gamma_{X, S_\alpha}(\tau') d\tau'.
\end{align}
Furthermore, we can note that if $X \cap Y = \emptyset$, $\Gamma_{X, Y}(0) = 0$ and for any $X, Y$, $\Gamma_{X, Y}(t) \leq 1$.

\underline{Step 4: Solving the Lieb-Robinson integral inequation.} Finally, we can solve Eq.~\ref{eq:fd:int_eq_final} to obtain the Lieb-Robinson bound. We explicitly provide the solution of this integral inequality as a lemma, since we will reuse it in the non-Markovian case.
\begin{lemma}[Lieb Robinson recursion, Ref.~\cite{hastings2006spectral}]\label{lemma:lr_recursion}
    Suppose, for any $X, Y \subseteq \Lambda$, $\nu_{X, Y}(t) \in [0, 1]$ satisfies the integral inequality
    \[
    \nu_{X, Y}(t) \leq \nu_{X, Y}(0) + \omega_0 \sum_{\alpha : S_\alpha \cap Y \neq \emptyset} \int_0^t \nu_{X, S_\alpha}(s) ds,
    \]
    together with the initial condition $\nu_{X, Y}(0) = 0$ for $X \cap Y = \emptyset$, then
    \[
     \nu_{X, Y}(t) \leq \abs{Y}(\exp(e\omega_0 \mathcal{Z}t) - 1)\exp(-d_{X, Y}/a_0),
     \]
     where $\mathcal{Z}, a_0$ are defined in Eq.~\ref{eq:constants_lattice}.
\end{lemma}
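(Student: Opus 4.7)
The plan is to iterate the integral inequality, invoke the initial condition to restrict to chains of interaction supports that bridge $Y$ and $X$, and then bound the resulting power series. Substituting the inequality into itself $k$ times yields
\[
\nu_{X,Y}(t) \leq \nu_{X,Y}(0) + \sum_{j=1}^{k}\omega_0^j \sum_{(\alpha_1,\ldots,\alpha_j)} \int_0^t \int_0^{t_1} \cdots \int_0^{t_{j-1}} \nu_{X,S_{\alpha_j}}(0)\, dt_j\cdots dt_1 + R_{k+1}(t),
\]
where the inner sum runs over \emph{chains} $(\alpha_1,\ldots,\alpha_j)$ obeying $S_{\alpha_1}\cap Y\neq\emptyset$ and $S_{\alpha_{i+1}}\cap S_{\alpha_i}\neq\emptyset$ for each $i<j$, and $R_{k+1}(t)$ collects the remainder with the full $\nu_{X,S_{\alpha_{k+1}}}(t_{k+1})$ kept at the deepest level.

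The key observation is that only chains whose terminal support meets $X$ contribute, because $\nu_{X,S_{\alpha_j}}(0)=0$ whenever $X\cap S_{\alpha_j}=\emptyset$. Since each $S_{\alpha_i}$ has diameter at most $a_0$, connecting $Y$ to $X$ through $j$ consecutively overlapping supports forces $j a_0 \geq d_{X,Y}$ by an elementary triangle-inequality argument along the chain, so the surviving $j$ must satisfy $j \geq K := \lceil d_{X,Y}/a_0\rceil$. The number of chains of length $j$ is at most $|Y|\mathcal{Z}^j$: for each $y\in Y$, the coordination bound forces at most $\mathcal{Z}$ supports to contain $y$ (giving $|Y|\mathcal{Z}$ choices for $\alpha_1$), and each subsequent index admits at most $\mathcal{Z}$ extensions by the definition of $\mathcal{Z}$. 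The nested time integral evaluates to $t^j/j!$, and $\nu_{X,S_{\alpha_j}}(0)\leq 1$. The remainder obeys $R_{k+1}(t)\leq |Y|(\omega_0\mathcal{Z} t)^{k+1}/(k+1)!\to 0$, so after sending $k\to\infty$ and using $\nu_{X,Y}(0)=0$ when $X\cap Y=\emptyset$, we obtain
\[
\nu_{X,Y}(t) \leq |Y|\sum_{j\geq K}\frac{(\omega_0\mathcal{Z} t)^j}{j!}.
\]

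To close the argument, for every $j\geq K$ the trivial estimate $1\leq e^{j-K}$ gives the term-by-term bound
\[
\sum_{j\geq K}\frac{(\omega_0\mathcal{Z} t)^j}{j!} \leq e^{-K}\sum_{j\geq K}\frac{(e\omega_0\mathcal{Z} t)^j}{j!} \leq e^{-K}\sum_{j\geq 1}\frac{(e\omega_0\mathcal{Z} t)^j}{j!} = e^{-K}\bigl(e^{e\omega_0\mathcal{Z} t} - 1\bigr),
\]
and $K\geq d_{X,Y}/a_0$ completes the proof. I expect the main obstacle to be the combinatorial bookkeeping: carefully counting chains using the coordination number $\mathcal{Z}$ and justifying the minimal chain length $j\geq K$ from the step-size bound $a_0$. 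The iterative set-up, the vanishing of the remainder, and the final tail estimate are routine once the chain combinatorics are in place.
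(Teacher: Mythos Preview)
Your proof is correct and follows essentially the same approach as the paper's: iterate the integral inequality along chains of overlapping supports, use the initial condition to discard chains too short to connect $Y$ to $X$ (via the step-size bound $d_{X,Y}\leq j a_0$), count chains by $|Y|\mathcal{Z}^j$, and convert the resulting series tail to the stated bound with the $e^{-K}$ trick. The only cosmetic difference is that the paper stops the iteration at the minimal chain length $k+1$ and uses $\nu\leq 1$ there to obtain the single term $|Y|(\omega_0\mathcal{Z}t)^{k+1}/(k+1)!$ before bounding it by the full series, whereas you iterate to infinity and bound the tail directly---both routes land on the same final inequality.
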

\begin{proof}
   This lemma underlies the proof of standard Lieb Robinson bounds, and is well known in the literature on Lieb Robinson bounds. We only provide a proof of this statement for completeness and for the convenience of the reader.

   We begin by considering $X, Y \subseteq \Lambda$ with $X \cap Y = \phi$, then $\Gamma_{X, Y}(0) = 0$ and we obtain that
   \begin{align}\label{eq:gamma_XY_as_gamma_alpha}
   \nu_{X, Y}(t) \leq \omega_0 \sum_{\alpha : S_\alpha \cap Y \neq \phi}\int_0^t \nu_{X, S_\alpha}(s)ds \leq \omega_0 \mathcal{Z}_0\abs{Y} \sup_{\alpha_0 : S_{\alpha_0} \cap Y \neq \phi} \int_0^t \nu_{X, S_{\alpha_0}}(s) ds.
   \end{align}
   Now, suppose $k$ is the smallest integer such that $\exists S_{\alpha_0}, S_{\alpha_1} \dots S_{\alpha_{k}}$ such that $Y \cap S_{\alpha_0} \neq \phi, S_{\alpha_1}\cap S_{\alpha_2} \neq \phi \dots S_{\alpha_{k - 1}}\cap S_{\alpha_{k}} \neq \phi $ and $S_{\alpha_{k}}\cap X \neq \phi$. Clearly, since $k$ is the smallest such integer, for any $S_{\beta_0}, S_{\beta_1} \dots S_{\beta_{k-1}}$ such that $Y \cap S_{\beta_0} \neq \phi, S_{\beta_0}\cap S_{\beta_1} \neq \phi, S_{\beta_1}\cap S_{\beta_2} \neq \phi \dots S_{\beta_{k - 2}} \cap S_{\beta_{k - 1}} \neq \phi$ then  $S_{\beta_i}\cap X = \phi$ for all $i \in [0:(k - 1)]$. Consequently, for all such $S_{\beta_0}, S_{\beta_1} \dots S_{\beta_{k - 1}}$, $\Gamma_{X, S_{\beta_i}}(0) = 0$.  Therefore, we obtain that for $j \in [0:(k - 1)]$
   \[
   \nu_{X, S_{\beta_{j}}}(t) \leq  \omega_0 \sum_{\beta_{j + 1}: S_{\beta_j}\cap S_{\beta_{j + 1}}\neq \phi} \int_0^t \nu_{X, S_{\beta_{j + 1}}}(s) ds,
   \]
   from which we obtain that
   \begin{align*}
   \nu_{X, S_{\beta_0}}(s) &\leq \omega_0^k \sum_{\beta_{1}: S_{\beta_1}\cap S_{\beta_0} \neq \phi} \bigg(\sum_{\beta_{2}: S_{\beta_2}\cap S_{\beta_1} \neq \phi} \dots \bigg(\sum_{\beta_{k}: S_{\beta_{k}}\cap S_{\beta_{k - 1}} \neq \phi} \int_0^s \int_0^{s_1} \dots \int_0^{s_{k - 1}} \nu_{X, S_{\beta_k}}(s_k) ds_k ds_{k - 1}\dots ds_1\bigg)\bigg)\bigg), \nonumber\\
   &\leq \omega_0^k \mathcal{Z}^k \int_0^s \int_0^{s_1} \dots \int_0^{s_{k - 1}}  ds_k ds_{k - 1}\dots ds_1 = \frac{(\omega_0 \mathcal{Z} s)^{k }}{k!}.
   \end{align*}
   Then, using Eq.~\ref{eq:gamma_XY_as_gamma_alpha}, we obtain that
   \begin{align}\label{eq:upper_bound_gamma_XY}
   \nu_{X, Y}(t) \leq \abs{Y} \frac{(\omega_0 \mathcal{Z}t)^{k + 1}}{(k + 1)!} \leq  \abs{Y} e^{-(k + 1)} \frac{(e\omega_0 \mathcal{Z}t)^{k + 1}}{(k + 1)!} \leq \abs{Y}e^{-(k + 1)} \sum_{k' \geq 1}^{\infty}\frac{(e^{-1}\omega_0 \mathcal{Z}t)^{k'}}{k'!} \leq \abs{Y}e^{-(k + 1)}(\exp(e\omega_0 \mathcal{Z}t) - 1).
   \end{align}
   Furthermore, we also note that the integer $k$ is lower bounded by a function of the distance $d_{X, Y}$. In particular, suppose $z_0 \in Y \cap S_{\beta_0}, z_1 \in S_{\beta_0}\cap S_{\beta_1} \dots z_{k + 1} \in S_{\beta_k}\cap X$. Then, since $z_j, z_{j +1 } \in S_{\beta_j}$ for $j \in [0:k]$, $d(z_j, z_{j + 1}) \leq \text{diam}(S_{\beta_j}) \leq a_0$.
   \[
   d_{X, Y} \leq d(z_0, z_{k + 1}) \leq d(z_0, z_1) + d(z_1, z_2) \dots d(z_k, z_{k + 1})  \leq (k + 1) a_0.
   \]
   Finally, observing that the upper bound on the right hand side of Eq.~\ref{eq:upper_bound_gamma_XY} is a decreasing function of $k$, we obtain that
   \[
   \nu_{X, Y}(t) \leq \abs{Y}(\exp(e\omega_0 \mathcal{Z}t) - 1)\exp(-d_{X, Y}/a_0),
   \]
   which establishes the lemma statement.
\end{proof}
\noindent Applying this lemma, we then obtain that
\[
\gamma^{X, Y}_{\mathcal{B}}(t, t') = \Gamma_{X, Y}(t - t') \leq \abs{Y}\big(\exp(2e\mathcal{Z}\abs{t - t'}) - 1\big)\exp(-d_{X, Y}/a_0),
\]
and which, together with Eq.~\ref{eq:delta_bound_comm} yields an upper bound on $\Delta_{O_X}(t, t'; l)$. 

In the remaining subsections, we will repeat this proof for the system-environment model defined in Eq.~\ref{eq:sys_env_model}. There will be several modifications in different parts of the proof, which we briefly sketch below ---
\begin{enumerate}
    \item[\textbf{Step 1}:] The counterpart of $\gamma^{X, Y}_\mathcal{B}(t, t')$ for the non-Markovian model has to be defined differently from Eq.~\ref{eq:fd:gamma_def} --- in particular, since the environment is a quantum field, it cannot be defined as a supremum over all $\phi$ (as in Eq.~\ref{eq:fd:gamma_def}) and still be expected to follow a Lieb-Robinson bound. We will devote subsection \ref{subsec:id_op_space} to identifying certain spaces of operators over which this supremum can be taken and still admit a Lieb Robinson bound.
    \item[\textbf{Step 2}:] Since the full system-environment Hamiltonian is unbounded, analyzing the error in first order Dyson series expansion of $\gamma^{X, Y}_{\mathcal{B}}(t, t')$ with respect to $t'$ requires more careful analysis.
    \item[\textbf{Step 3}:] The key point in step 3, although simple, is that $\gamma^{X, Y}_{\mathcal{B}}(t, t')$ can be shown to be  continuous with respect to $t'$. For the full non-Markovian model, we will find that this isn't necessarily true --- however, using a more sophisticated argument, we will establish that the counter-part of $\gamma^{X, Y}_{\mathcal{B}}(t, t')$ will be integrable with respect to $t'$, even if it is not continuous, which will be sufficient for our proof of Lieb-Robinson bounds to go through.
    \item[\textbf{Step 4}:] This step will go through without any major modifications.
\end{enumerate}
Below, we classify the definitions and lemmas in the following subsections into these different steps for the benefit of the reader.
\begin{figure*}[htpb]
    \centering
    \includegraphics[width=0.95\linewidth]{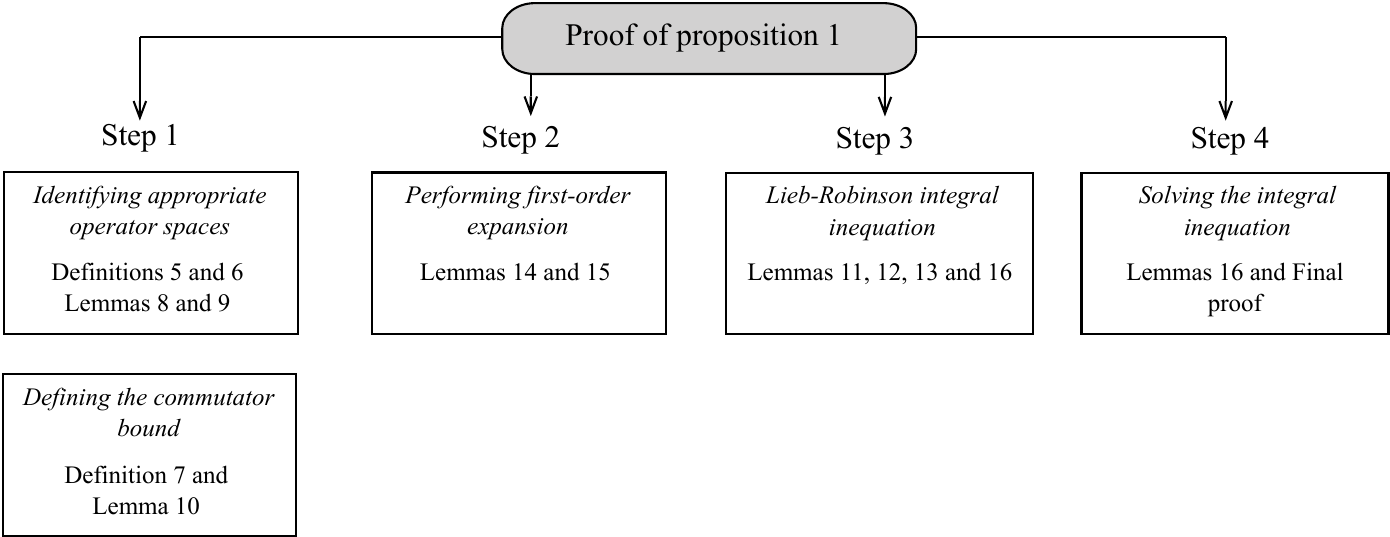}
\end{figure*}

\subsection{Identifying operator spaces}\label{subsec:id_op_space}
Throughout this section, we will work with the mollified model introduced in the previous section. To simplify notation, we will drop the explicit dependence on $\delta$. As will be evident in the remainder of this section, all our bounds will be uniform in $\delta$ and therefore will continue to hold in the limit $\delta \to 0$. An important fact, given in lemma \ref{lemma:mollified_kernel}, that we will use is that after mollification, the kernels $\text{K}_{\alpha, \sigma}^{\nu, \nu'}$ are smooth, bounded and integrable functions. Again, for notational convenience, we will also denote by $\U$ the upper bound on the mollified kernels which was constructed in lemma \ref{lemma:mollified_kernel}b as $\U^{\delta, \delta}$. We remind the reader that, as per lemma \ref{lemma:mollified_kernel}, $\U$ pointwise upper bounds the mollified kernels $\K^{\nu, \nu'}_{\alpha, \sigma}$ i.e. $\forall \tau : \smallabs{\K^{\nu, \nu'}_{\alpha, \sigma}(\tau)} \leq \U(\tau)$. The quantity $\text{TV}(\U) = \norm{\U}_{1} < \infty$ will play an important role in our estimate of Lieb-Robinson velocity. It will also define the map $\mu:\mathbb{R} \to \mathbb{R}$ via
\[
\mu_{[a, b]}(s) = \int_{a}^{b} \U(s - s') ds'.
\]
Being mollified, $\U$ is a smooth function and thus $\mu_{[a, b]}(s)$ is also a smooth function of $s$. Furthermore, for any $a, b$ and $s$, $\mu_{[a, b]}(s) \leq \textnormal{TV}(\U)$.

We briefly recap the definition of $\Delta_{O_X}(t, t'; l)$ for the non-Markovian model from the main text ---  we will be given an operator $O_X$ supported on $X \subseteq \Lambda$. We will then consider, for given $t' > 0$, a state $\rho(t') = U(t', 0) (\sigma_S \otimes \rho_E) U(0, t')$ generated by evolving $\sigma_S \otimes \rho_E$ for time $t'$, where $\sigma_S$ is an arbitrary system state and $\rho_E$ is the initial Gaussian environment state. From time $t'$ to $t$, we will consider evolution under two Hamiltonians --- the full Hamiltonian $H(t)$ [Eq.~\ref{eq:supp:sys_env_hamil}], and the Hamiltonian $H_{\mathcal{A}_{X[l]}}(t)$ obtained by restricting $H(t)$, as per Eq.~\ref{eq:restriction_hamiltonian}, to 
\[
X[l] = \{x \in \Lambda : d(x, X) \leq l\}
\]
i.e.~to a distance $l$ around the support $X$ of the observable $O_X$. The quantity that we will upper bound will be the error between the local observable $O_X$ under the states obtained from these two evolutions i.e.
\begin{align}\label{eq:delta_Ox_redef}
\Delta_{O_X}(t, t'; l) = \abs{\text{Tr}\big( (U(t', t) O_X U(t, t') - U_{{X[l]}}(t', t) O_X U_{{X[l]}}(t, t') ) \rho(t') \big) },
\end{align}
where $U_{X[l]}(t, t')$ will be the propagator of the Hamiltonian $H_{X[l]}(t)$. We begin with the following straightforward lemma which will motivate our definition of a commutator bound.
\begin{lemma}\label{lemma:delta_o_comm_bd}
The error $\Delta_{O_X}(t, t'; l)$ can be expressed as
\begin{align}
    &\Delta_{O_X}(t, t'; l) \leq \sum_{\substack{\alpha \notin \mathcal{A}_{X[l]} \\ S_\alpha \cap {X}_c[l] \neq \emptyset}} \bigg(\int_{t'}^t \abs{\vecbra{O_X, I_E} \mathcal{U}_{X[l]}(t, s) \mathcal{C}_{h_\alpha(s)}\vecket{\rho(s)}} ds +\nonumber\\
    &\qquad \qquad \qquad \qquad \qquad \qquad\sum_{\nu, \nu', \sigma'} \int_{s = t'}^t \int_{s' = 0}^s \U(s - s') \smallabs{\vecbra{O_X, I_E} \mathcal{U}_{X[l]}(t, s) \mathcal{C}_{R^\nu_\alpha(s')}\vecket{\rho_{\alpha, \sigma'}^{\nu'}(s, s')}} ds' ds\bigg),
\end{align}
where $X_c[l] = \bigcup_{\alpha: S_\alpha \cap X[l] \neq \emptyset} S_\alpha$, $\vecket{\rho(s)}  = \mathcal{U}(s, 0) \vecket{\rho(0)}$ and $\vecket{\rho_{\alpha, \sigma'}^{\nu'}(s, s')} = R^{\nu}_{\alpha, \sigma}(s)\mathcal{U}(s, s') R^{\nu'}_{\alpha, \sigma'}(s')\mathcal{U}(s', 0) \vecket{\rho(0)}$.
\end{lemma}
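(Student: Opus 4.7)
The plan is to combine a fundamental-theorem-of-calculus identity with a single application of Wick's theorem. Mimicking Step~1 of the bounded-Hamiltonian review, I would introduce the interpolating operator $G(s)=U(t',s)\,U_{X[l]}^\dagger(t,s)\,O_X\,U_{X[l]}(t,s)\,U(s,t')$ for $s\in[t',t]$, which equals $U_{X[l]}^\dagger(t,t')O_X U_{X[l]}(t,t')$ at $s=t'$ and $U^\dagger(t,t')O_X U(t,t')$ at $s=t$. Differentiating in $s$ using $\partial_s U(s,t')=-iH(s)U(s,t')$ and the analogous identities for $U_{X[l]}$ gives
\[
G'(s)=i\,U(t',s)\,[H(s)-H_{X[l]}(s),\,U_{X[l]}^\dagger(t,s)O_X U_{X[l]}(t,s)]\,U(s,t').
\]
Integrating from $t'$ to $t$, taking the trace against $\rho(t')$, and using cyclicity together with $U(s,t')\rho(t')U(t',s)=\rho(s)$ converts $\Delta_{O_X}(t,t';l)$, after the triangle inequality, into an integral of $\smallabs{\vecbra{O_X,I_E}\mathcal{U}_{X[l]}(t,s)\mathcal{C}_{H(s)-H_{X[l]}(s)}\vecket{\rho(s)}}$ over $s\in[t',t]$ in the paper's vectorised notation.

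I would then expand $H(s)-H_{X[l]}(s)=\sum_{\alpha\notin\mathcal{A}_{X[l]}}\bigl(h_\alpha(s)+\sum_\nu B^\nu_{\alpha,s}R^\nu_\alpha(s)\bigr)$ and observe that the Heisenberg-evolved observable $U_{X[l]}^\dagger(t,s)O_X U_{X[l]}(t,s)$ is supported inside $X_c[l]=\bigcup_{\alpha\in\mathcal{A}_{X[l]}}S_\alpha$, so only those $\alpha$ with $S_\alpha\cap X_c[l]\neq\emptyset$ contribute to the commutator. This immediately yields the first term of the stated bound from the coherent pieces $\mathcal{C}_{h_\alpha(s)}$.

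The main step is the treatment of the unbounded system--bath commutator $\mathcal{C}_{B^\nu_{\alpha,s}R^\nu_\alpha(s)}=\sum_\sigma(-1)^\sigma B^\nu_{\alpha,s,\sigma}R^\nu_{\alpha,\sigma}(s)$. Recognising that $R^\nu_{\alpha,\sigma}(s)\vecket{\rho(s)}\in\mathcal{S}(\rho_E)$ (with the single-factor representation $\Omega_1=R^\nu_{\alpha,\sigma}(s)$, $(s_1,t_1)=(0,s)$, $\mathcal{A}_1=\mathcal{A}$) and that $\vecbra{O_X,I_E}\mathcal{U}_{X[l]}(t,s)\in\mathcal{Q}(\rho_E)$, I would apply Lemma~\ref{lemma:wick_thm_gaussian} to the single bath operator $B^\nu_{\alpha,s,\sigma}$ sandwiched between them. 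Because $\alpha\notin\mathcal{A}_{X[l]}$ the bra carries no bath-$\alpha$ operators, so $\wickleft(\vecbra{O_X,I_E}\mathcal{U}_{X[l]}(t,s);\,B^\nu_{\alpha,s,\sigma})=0$ identically; only the right contraction $\wickright(B^\nu_{\alpha,s,\sigma};\,R^\nu_{\alpha,\sigma}(s)\vecket{\rho(s)})$ survives. By the definition of $\wickright$ and the two-point function $\vecbra{I_E}B^\nu_{\alpha,s,\sigma}B^{\nu'}_{\alpha,s',\sigma'}\vecket{\rho_E}=\K^{\nu,\nu'}_{\alpha,\sigma'}(s-s')$, this produces a single time integral over $s'\in[0,s]$, weighted by the kernel, of a state in which $R^{\nu'}_{\alpha,\sigma'}(s')$ has been inserted at time $s'$ into the evolution of $\sigma_S\otimes\rho_E$, i.e.\ exactly the object $\vecket{\rho^{\nu'}_{\alpha,\sigma'}(s,s')}$.

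Finally I would apply the triangle inequality over the $\sigma,\sigma',\nu,\nu'$ sums and over the $s'$ integral, and invoke the pointwise bound $\smallabs{\K^{\nu,\nu'}_{\alpha,\sigma'}(\tau)}\leq\U(\tau)$ from Lemma~\ref{lemma:mollified_kernel}(b); the residual $\sigma$-sum carrying the $(-1)^\sigma$ factor then reassembles the commutator $\mathcal{C}_{R^\nu_\alpha}$ that appears in the claim, and summing over $\alpha$ restricted to $S_\alpha\cap X_c[l]\neq\emptyset$ yields the second term. The main obstacle is the bookkeeping in this last step: the two independent label pairs $(\nu,\sigma)$ coming from the Hamiltonian and $(\nu',\sigma')$ generated by the Wick contraction, together with the $\sigma'$-dependence of the kernel $\K^{\nu,\nu'}_{\alpha,\sigma'}$, must be tracked carefully so that the absolute-value estimate factorises cleanly into $\U(s-s')$ times a commutator acting on a state in $\mathcal{S}(\rho_E)$, without leaving any residual unbounded bath operator.
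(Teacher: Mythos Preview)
Your proposal is correct and follows essentially the same route as the paper: a Duhamel/fundamental-theorem-of-calculus identity to write $\Delta_{O_X}$ as an integral of $\vecbra{O_X,I_E}\mathcal{U}_{X[l]}(t,s)(\mathcal{H}(s)-\mathcal{H}_{X[l]}(s))\vecket{\rho(s)}$, a support argument to restrict to boundary $\alpha$, and then a single Wick contraction on $B^\nu_{\alpha,s,\sigma}$ whose left piece vanishes because $\alpha\notin\mathcal{A}_{X[l]}$, leaving only the right contraction that produces the $s'$-integral weighted by $\K^{\nu,\nu'}_{\alpha,\sigma'}(s-s')\le\U(s-s')$. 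Your observation that the kernel depends on $\sigma'$ but not $\sigma$, so the $(-1)^\sigma$ sum cleanly reassembles $\mathcal{C}_{R^\nu_\alpha(s)}$, is exactly the mechanism the paper uses (and note the lemma statement's definition of $\vecket{\rho^{\nu'}_{\alpha,\sigma'}(s,s')}$ appears to contain a stray $R^\nu_{\alpha,\sigma}(s)$ that is already absorbed into that commutator).
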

\begin{proof}
In vectorized notation, the error $\Delta_{O_X}(l)$ can be expressed as
\begin{align}
\Delta_{O_X}(t, t'; l) &= \abs{\vecbra{O_X, I_E} \big(\mathcal{U}(t, t') - \mathcal{U}_{X[l]}(t, t')\big)\vecket{\rho(t')}},\nonumber \\
&= \abs{\int_{t'}^{t}\vecbra{O_X, I_E} \mathcal{U}_{X[l]}(t, t')\bigg( \frac{d}{ds} \big(\mathcal{U}_{X[l]}(t', s) \mathcal{U}(s, t')\big)  \bigg)\vecket{\rho(t')} ds}, \nonumber\\
&=\abs{\int_{t'}^t\vecbra{O_X, I_E}  \mathcal{U}_{X[l]}(t, s) \big(\mathcal{H}_{X[l]}(s) - \mathcal{H}(s) \big) \mathcal{U}(s, 0) \vecket{\rho(0)} ds}.
\end{align}
Next, we decompose $\mathcal{H}(s) - \mathcal{H}_{X[l]}(s)$ as 
\begin{align*}
\mathcal{H}(s) - \mathcal{H}_{X[l]}(s) &= \sum_{\alpha \notin \mathcal{A}_{X[l]}} \bigg(\mathcal{C}_{h_\alpha(s)} + \sum_{\sigma, \nu} B^{\nu}_{\alpha,  s, \sigma} R_{\alpha, \sigma(s)}^\nu\bigg),\nonumber\\
&= \underbrace{\sum_{\substack{\alpha \notin \mathcal{A}_{X[l]} \\ S_\alpha \cap {X}_c[l] \neq \emptyset}} \bigg(\mathcal{C}_{h_\alpha(s)} + \sum_{\sigma, \nu}(-1)^\sigma B^{\nu}_{\alpha, s, \sigma} R_{\alpha, \sigma}^\nu(s)\bigg)}_{\mathcal{H}_{\partial X[l]}(s)} + \underbrace{\sum_{\substack{\alpha \notin \mathcal{A}_{X[l]} \\ S_\alpha \cap X_c[l] = \emptyset}} \bigg(\mathcal{C}_{h_\alpha(s)} + \sum_{\sigma, \nu}(-1)^\sigma B^{\nu}_{\alpha, s, \sigma} R_{\alpha, \sigma}^\nu(s)\bigg)}_{\mathcal{H}^c_{X[l]}(s)},
\end{align*}
where we have used $X_c[l] = \bigcup_{\alpha : S_\alpha \cap X[l]\neq \phi} S_\alpha$ (or, in words, $X_c[l]$ is the union of all sub-regions $S_\alpha$ which overlap with $X[l]$). Now, note that since $\mathcal{H}_{X[l]}(s)$, and consequently $\mathcal{U}_{X[l]}(t, s)$, is supported only on $X_c[l]$, $\mathcal{H}^c_{X[l]}$ commutes with $\mathcal{U}_{X[l]}(t, s)$. We thus obtain that
\[
\vecbra{O_X, I_E} \mathcal{U}_{X[l]}(t, s) \mathcal{H}^c_{X[l]}(s) = \vecbra{O_X, I_E}  \mathcal{H}^c_{X[l]}(s)\mathcal{U}_{X[l]}(t, s) = 0,
\]
where we have used the fact that the support of $O_X$ and $\mathcal{H}^c_{X[l]}(s)$ is non-overlapping and thus $\vecbra{O_X, I_E} \mathcal{H}_{X[l]}^c(s) = 0$. Thus, we obtain the following expression for $\Delta_{O_X}(t, t'; l)$:
\begin{align*}
    &\Delta_{O_X}(t, t';l) =\abs{\int_{t'}^t \vecbra{O_X, I_E} \mathcal{U}_{X[l]}(t, s) \mathcal{H}_{\partial X[l]} (s)\mathcal{U}(s, t') \vecket{\rho(t')}ds}, \nonumber\\
    &\leq \sum_{\substack{\alpha \notin \mathcal{A}_{X[l]} \\ S_\alpha \cap {X}_c[l] \neq \emptyset}} \bigg(\int_{t'}^t \abs{\vecbra{O_X, I_E} \mathcal{U}_{X[l]}(t, s) \mathcal{C}_{h_\alpha(s)}\mathcal{U}(s, 0) \vecket{\rho(0)}}ds +\nonumber\\
    &\qquad  \sum_{\nu}\abs{\sum_{\sigma}(-1)^\sigma \int_{t'}^t \vecbra{O_X, I_E}\mathcal{U}_{X[l]}(t, s) B^\nu_{\alpha, s, \sigma}R^\nu_{\alpha, \sigma}(s) \mathcal{U}(s, 0) \vecket{\rho(0)} ds}\bigg), \nonumber \\
    &\numleq{1} \sum_{\substack{\alpha \notin \mathcal{A}_{X[l]} \\ S_\alpha \cap {X}_c[l] \neq \emptyset}} \bigg(\int_{t'}^t \abs{\vecbra{O_X, I_E} \mathcal{U}_{X[l]}(t, s) \mathcal{C}_{h_\alpha(s)}\mathcal{U}(s, 0) \vecket{\rho(0)}}ds + \nonumber\\
    &\qquad \sum_{\nu}\abs{\sum_{\sigma, \sigma', \nu'} (-1)^{\sigma}(-1)^{\sigma'} \int_{s = t'}^t \int_{s' = 0}^s \K_{\alpha, \sigma'}^{\nu, \nu'}(s - s')\vecbra{O_X, I_E} \mathcal{U}_{X[l]}(t, s) R^{\nu}_{\alpha, \sigma}(s) \mathcal{U}(s, s') R^{\nu'}_{\alpha, \sigma'}(s')\mathcal{U}(s', 0)\vecket{\rho(0)}ds' ds}, \nonumber\\
    &\leq \sum_{\substack{\alpha \notin \mathcal{A}_{X[l]} \\ S_\alpha \cap {X}_c[l] \neq \emptyset}} \bigg(\int_{t'}^t \abs{\vecbra{O_X, I_E} \mathcal{U}_{X[l]}(t, s) \mathcal{C}_{h_\alpha}\vecket{\rho(s)}} ds +\nonumber\\
    &\qquad \qquad\sum_{\nu, \nu', \sigma'} \int_{s = t'}^t \int_{s' = 0}^s \U(s - s') \abs{\vecbra{O_X, I_E} \mathcal{U}_{X[l]}(t, s) \mathcal{C}_{R^\nu_\alpha(s)}\vecket{\rho_{\alpha, \sigma'}^{\nu'}(s, s')}} ds' ds\bigg),
\end{align*}
where we have defined $\vecket{\rho^{\nu'}_{\alpha, \sigma'}(s, s')}$ in the lemma statement. 
\end{proof}

\emph{Closed operator spaces}. Lemma \ref{lemma:delta_o_comm_bd} express the error $\Delta_{O_X}(l)$ in terms of commutators i.e.~terms of the form
\begin{align*}
&\abs{\vecbra{O_X, I_E}\mathcal{U}_{X[l]}(t, s) \mathcal{C}_{h_\alpha(s)} \vecket{\rho(s)}} = \abs{\text{Tr}([U_{X[l]}(s, t)O_X U_{X[l]}(t, s), h_\alpha(s)] \rho(s))} \text{ and }\nonumber\\
&\abs{\vecbra{O_X, I_E} \mathcal{U}_{X[l]}(t, s) \mathcal{C}_{R^\nu_\alpha(s)}\vecket{\rho^{\nu'}_{\alpha, \sigma'}(s, s')}} = \abs{\text{Tr}([U_{X[l]}(s, t) O_X U_{X[l]}(t, s), R^\nu_\alpha(s)] \rho^{\nu'}_{\alpha, \sigma'}(s, s'))}.
\end{align*}
If the system under consideration was finite-dimensional, we would be able to upper-bound the norm of the commutators i.e.~$\norm{[U_{X[l]}(s, t) O_X U_{X[l]}(t, s), h_\alpha(s)]}$ and $\norm{[U_{X[l]}(s, t) O_X U_{X[l]}(t, s), R_\alpha^\nu(s)]}$ by a quantity that exponentially decreases with the distance between $X$ and $S_\alpha$. However, for the systems considered in this paper, we do not expect such a bound to exist due to the infinite-dimensionalty of the environment and consequently, we need to specifically exploit the properties of the states (i.e.~$\rho(s)$ and $\rho^{\nu'}_{\alpha, \sigma'}(s, s')$) on which the commutator expected values are being taken. This motivates us to define a space of operators that are generated from the initial environment gaussian (definition \ref{def:S_rho_zeta}) which also has mathematically useful closure properties under evolution (lemma \ref{lemma:closure_right_wick}) that enable a Lieb-Robinson bound.

\begin{definition}\label{def:S_rho_zeta}
Given $\zeta: \mathbb{R} \to [0, \infty)$ with $\norm{\zeta}_{\infty} < \infty$, an operator $\vecket{\phi} \in \mathcal{S}_\zeta({\rho_E})$ if $\vecket{\phi} \in \mathcal{S}(\rho_E)$ [Def.~\ref{def:S_rho}] and has a representation with time edges $\{(s_i, t_i)\}_{i \in [1:n]}$ such that
\[
\sum_{j = 1}^n \abs{\int_{s_j}^{t_j} \U(\tau - \tau') d\tau'} \leq \zeta(\tau) \forall \ \tau \in \mathbb{R}
\]
\end{definition}

\begin{lemma}[Closure of $\mathcal{S}_{\zeta}(\rho_E)$]\label{lemma:closure_right_wick} The set $\mathcal{S}_{\zeta}(\rho_E)$ satisifies the following properties:
    \begin{enumerate}
    \item[(a)] For any $\mathcal{B} \subseteq \mathcal{A}$ and $0 < s < t$, 
    \[
    {U}_\mathcal{B} (t, s) \mathcal{S}_{ \zeta}(\rho_E) U_X(s, t)\text{ and }  {U}_\mathcal{B} (s, t) \mathcal{S}_{ \zeta}(\rho_E) {U}_X (t, s) \subseteq \mathcal{S}_{\zeta + \mu_{[s, t]}}(\rho_E).
    \]
    \item[(b)] Suppose $\Omega$ is a system super-operator, then
    \[
    \Omega \mathcal{S}_\zeta(\rho_E) \subseteq \norm{\Omega}_\diamond \mathcal{S}_{\zeta}(\rho_E).
    \]
    \item[(c)] Suppose $f$ is either of the two functions: $f(\phi) = \norm{\phi}_1$ or $f(\phi) = \abs{\vecbra{\theta}\phi \rrangle}$ for some $\theta \in \textnormal{B}(\mathcal{H}_S\otimes \mathcal{H}_E)$ , then 
    \[
     f\big( \overrightarrow{\textnormal{W}}(\{B^{\nu_j}_{\alpha_j, \tau_j, \sigma_j}\}_{j \in \{1:m\}}; \phi)\big) \leq 4^m\norm{\zeta}^m_\infty \sup_{\phi \in \mathcal{S}_\zeta(\rho_E)} f(\phi),
    \]
    $\forall \ \{\alpha_i \in \mathcal{A}\}_{i \in [1:m]}, \{\nu_i \in \{x, p\}\}_{i \in [1:m]}, \{\sigma_i \in \{l, r\}\}_{i \in [1:m]}, \{\tau_i \in \mathbb{R}\}_{i\in [1:m]}$ and $\theta \in \mathcal{Q}(\rho_E)$.
    \end{enumerate}
    \end{lemma}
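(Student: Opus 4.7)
The plan is to verify each part by constructing or extending a representation of the form required by Definition~\ref{def:S_rho_zeta}, relying on simple bookkeeping of time edges and on the pointwise bound $\smallabs{\K^{\nu,\nu'}_{\alpha,\sigma}(\tau)}\leq \U(\tau)$ supplied by Lemma~\ref{lemma:mollified_kernel}(b).

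For part (a) I will prepend the propagator $\mathcal{U}_\mathcal{B}(t,s)$ (with identity superoperator) to any representation of $\vecket{\phi}\in \mathcal{S}_\zeta(\rho_E)$. This adds a single time edge $(s,t)$ to the edge list, whose contribution to the defining sum of Definition~\ref{def:S_rho_zeta} is exactly $\mu_{[s,t]}(\tau)$, so the extended representation witnesses membership in $\mathcal{S}_{\zeta+\mu_{[s,t]}}(\rho_E)$. The reversed edge $\mathcal{U}_\mathcal{B}(s,t)$ is identical, and the operator-form statement translates directly into the vectorized one. For part (b) I will prepend $\Omega/\norm{\Omega}_\diamond$ to a representation of $\vecket{\phi}$; this is a superoperator of diamond norm at most one and introduces no new time edge, so the new object lies in $\mathcal{S}_\zeta(\rho_E)$, and rescaling gives $\Omega\vecket{\phi}\in\norm{\Omega}_\diamond\,\mathcal{S}_\zeta(\rho_E)$.

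Part (c) will proceed by induction on $m$. The crucial preparatory observation is that each intermediate state $\vecket{\phi^\nu_{j,\alpha,\sigma}(\tau)}$ appearing in the single-$B$ contraction~(\ref{eq:right_W_1}) again lies in $\mathcal{S}_\zeta(\rho_E)$: splitting the $j$-th time edge $(s_j,t_j)$ at $\tau$ preserves the edge sum, since $\mu_{[s_j,\tau]}+\mu_{[\tau,t_j]}=\mu_{[s_j,t_j]}$, and the inserted system superoperator $R^\nu_{\alpha,\sigma}(\tau)$ has diamond norm at most one (from $\norm{R^\nu_\alpha}\leq 1$), so by part (b) it preserves $\mathcal{S}_\zeta(\rho_E)$. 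Pulling $f$ inside the integral in~(\ref{eq:right_W_1}) via the triangle inequality for $\norm{\cdot}_1$ or $\smallabs{\vecbra{\theta}\cdot\rrangle}$, using $\smallabs{\K^{\nu_m,\nu}_{\alpha_m,\sigma}}\leq \U$, and invoking the defining inequality of $\mathcal{S}_\zeta(\rho_E)$ then yields the base-case estimate
\begin{align*}
f\big(\wickright(B^{\nu_1}_{\alpha_1,\tau_1,\sigma_1};\vecket{\phi})\big)
&\leq \sum_{j,\nu,\sigma}\int_{s_j}^{t_j}\U(\tau_1-\tau)\,f\big(\vecket{\phi^\nu_{j,\alpha_1,\sigma}(\tau)}\big)\,d\tau \\
&\leq 4\,\zeta(\tau_1)\sup_{\psi\in\mathcal{S}_\zeta(\rho_E)}f(\psi)\leq 4\norm{\zeta}_\infty\sup_{\psi\in\mathcal{S}_\zeta(\rho_E)}f(\psi),
\end{align*}
where the factor $4$ counts the $(\nu,\sigma)$ choices. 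Feeding this into the recursive definition of $\wickright$, one contraction at a time (with $\vecket{\phi^\nu_{j,\alpha,\sigma}(\tau)}\in\mathcal{S}_\zeta(\rho_E)$ playing the role of $\vecket{\phi}$ at the next stage) produces a factor of $4\norm{\zeta}_\infty$ per step, giving the claimed prefactor $4^m\norm{\zeta}_\infty^m$ after $m$ iterations.

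The only delicate point is the one identified above in part (c): one must confirm that each intermediate state $\vecket{\phi^\nu_{j,\alpha,\sigma}(\tau)}$ remains in $\mathcal{S}_\zeta(\rho_E)$ rather than in a larger space $\mathcal{S}_{\zeta'}(\rho_E)$ with $\zeta'>\zeta$, so that the per-step factor $\norm{\zeta}_\infty$ multiplies rather than adds. The edge-splitting identity for $\mu$ and the unit diamond-norm bound on $R^\nu_\alpha$ are precisely what guarantees this; once these are established, the remainder of the argument is a direct iteration of the triangle inequality and the defining inequality of $\mathcal{S}_\zeta(\rho_E)$.
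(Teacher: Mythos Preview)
Your proposal is correct and follows essentially the same approach as the paper: parts (a) and (b) are handled by prepending a propagator or normalized system superoperator to an existing representation, and part (c) by the edge-splitting identity $\mu_{[s_j,\tau]}+\mu_{[\tau,t_j]}=\mu_{[s_j,t_j]}$ (valid since $\U\geq 0$) together with $\norm{R^\nu_{\alpha,\sigma}}_\diamond\leq 1$ to keep each intermediate $\vecket{\phi^\nu_{j,\alpha,\sigma}(\tau)}$ in $\mathcal{S}_\zeta(\rho_E)$, followed by the triangle inequality and induction on $m$. The paper organizes the induction via $\Lambda_m=\sup f(\wickright(\ldots;\vecket{\phi}))$ and shows $\Lambda_m\leq 4\norm{\zeta}_\infty\Lambda_{m-1}$, which is exactly your per-step factor.
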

    \begin{proof} (a) follows from definition. Assume that $s < t$ and note that if $\phi \in \mathcal{S}_\zeta(\rho_E)$, then there is a representation of $\phi$ with time-edges $\{(s_i, t_i)\}_{i \in [1:n]}$, then $\mathcal{U}_X(t, s) \vecket{\phi}$ has a representation with time-edges $\{(s_i', t_i')\}_{i \in [1:(n + 1)]} =  \{(s_i, t_i)\}_{i \in [1:n]} \cup \{(s, t)\}$. We then obtain that
    \begin{align*}
    \sum_{i = 1}^{n + 1} \abs{\int_{s'_i}^{t'_i} \U(\tau - \tau') d\tau'} = \sum_{i = 1}^{n} \abs{\int_{s_i}^{t_i} \U(\tau - \tau') d\tau'} + \abs{\int_s^t \U(\tau - \tau') d\tau'} \leq \zeta(\tau) + \mu_{[s, t]}(\tau),
    \end{align*} 
    which implies that $\mathcal{U}_X(t, s) \vecket{\phi} \in \mathcal{S}_{\zeta}(\rho_E)$. A similar argument holds for $U_X(s, t) \mathcal{S}_\zeta(\rho_E) U_X(t, s)$.
    
    (b) follows directly from definition and sub-multiplicity of the diamond norm.
        
    (c) For notational convenience, let us define
    \[
    \Lambda_m = \sup_{\substack{\phi \in \mathcal{S}_\zeta(\rho_E) \\ \{\alpha_i, \nu_i, \tau_i, \sigma_i\}_{i \in [1:m]}}} f\big( \overrightarrow{\textnormal{W}}\{B^{\nu_j}_{\alpha_j, \tau_j, \sigma_j}\}_{j \in [1:m]}; \vecket{\phi})\big).
    \]
    with $\Lambda_0 = \sup_{\phi \in \mathcal{S}_\zeta(\rho_E)} f(\phi)$. Consider first $m = 1$ --- note that in $\vecket{\phi^{\nu}_{j, \sigma}(\tau)}$ in Eq.~\ref{eq:right_W_1}b is in the set $\mathcal{S}_{\zeta}(\rho_E)$. To see this, observe that if $\{(s_i, t_i)\}_{i\in[1:n]}$ are the time-edges of a representation of $\vecket{\phi}$, then $\{(s_i', t_i')\}_{i \in [1:(n+1)]} = \{(s_i, t_i)\}_{i \in [1:n]\setminus\{j\}} \cup \{(s_j, \tau), (\tau, t_j)\}$ are the time-edges of a representation of $\vecket{\phi^\nu_{j, \sigma}(\tau)}$. Now, note that
    \begin{align*}
        \sum_{i' = 1}^{n + 1}\abs{\int_{s_{i'}'}^{t_{i'}'} \U(\tau' - \tau')d\tau'} &= \sum_{i' \in [1:n]\setminus \{j\}} \abs{\int_{s_{i'}}^{t_{i'}}\U(\tau' - \tau'')d\tau''} + \abs{\int_{s_j}^\tau \U(\tau' - \tau'')d\tau''} + \abs{\int_{\tau}^{t_j} \U(\tau' - \tau'')d\tau''}, \nonumber\\
        &\numeq{1} \sum_{i' \in [1:n]\setminus \{j\}} \abs{\int_{s_{i'}}^{t_{i'}}\U(\tau' - \tau'')d\tau''}  + \abs{\int_{s_j}^{t_j} \U(\tau' - \tau'')d\tau''} , \nonumber\\
        &=\sum_{i' \in [1:n]}\abs{\int_{s_{i'}}^{t_{i'}}\U(\tau' - \tau'') d\tau''} \leq \zeta(\tau''),
    \end{align*}
    where, in (1), we have used the fact that $\tau \in [s_j, t_j]$. This establishes that $\vecket{\phi^\nu_{j, \sigma}(\tau)} \in \mathcal{S}_{\zeta}(\rho_E)$. Therefore, we have from the triangle inequality (which is satisfied for $f$)
    \begin{align*}
    f\big (\wickright(B^\nu_{\alpha, \tau, \sigma}; \vecket{\phi}) \big) &\leq \Theta_{\mathcal{A}_j}(\alpha)\abs{\int_{s_j}^{t_j} \abs{\K^{\nu, \nu'}_{\alpha, \sigma'}(\tau - \tau')}f(\vecket{\phi_{j, \sigma'}^{\nu'}(\tau')}) d\tau'}, \nonumber\\
    &\leq 4\bigg(\sup_{\phi \in \mathcal{S}_\zeta(\rho_E)} f(\phi)\bigg)\sum_{j = 1}^n \abs{\int_{s_j}^{t_j}\U(\tau - \tau')d\tau'}  \leq 4 \norm{\zeta}_{\infty} \Lambda_0.
    \end{align*}
    Similarly, from Eq.~\ref{eq:right_W} and the fact that $f$ satisfies the triangle inequality,
    \begin{align*}
    f\big( \overrightarrow{\textnormal{W}}\{B^{\nu_j}_{\alpha_j, \tau_j, \sigma_j}\}_{j \in [1:m]}; \phi)\big) \leq 4 \sum_{j = 1}^n \abs{\int_{s_j}^{t_j} \U(\tau_m - \tau') d\tau' } \Lambda_{m - 1} \leq 4 \norm{\zeta}_\infty \Lambda_{m - 1}.
    \end{align*}
    Consequently, we obtain that $\Lambda_m \leq 4 \norm{\zeta}_\infty \Lambda_{m - 1}$. From this bound, it follows that $\Lambda^m \leq (4\norm{\zeta}_{\infty})^m \Lambda_0$, which proves the lemma statement.
    \end{proof}

It is also convenient to define an operator space $\mathcal{Q}_{\zeta}(\rho_E)$ that acts like the dual of $\mathcal{S}_{\zeta}(\rho_E)$. 
\begin{definition}
Given $\zeta: \mathbb{R} \to [0, \infty)$ with $\norm{\zeta}_{\infty} < \infty$, an operator $\vecbra{\theta} \in \mathcal{Q}_\zeta({\rho_E})$ if $\vecbra{\theta} \in \mathcal{Q}(\rho_E)$ [Def.~\ref{def:Q_rho}] and has a representation with time edges $\{(s_i, t_i)\}_{i \in [1:n]}$ such that
\[
\sum_{j = 1}^n \abs{\int_{s_j}^{t_j} \U(\tau - \tau') d\tau'} \leq \zeta(\tau) \forall \ \tau \in \mathbb{R}
\]
\end{definition}

%An important property that we will use is a notion of smoothness of the sets $\mathcal{S}_{\rho_E}(t_i, t_f), \mathcal{Q}_{\rho_E}(t_i, t_f)$, which is a consequence of Wick's theorem, as formalized in the following lemma.
% \begin{lemma}[Smoothness of $\mathcal{S}_{\rho_E}(t_i, t_f), \mathcal{Q}_{\rho_E}(t_i, t_f)$]
% Suppose $\phi \in \mathcal{S}_{\rho_E}(t_i, t_f)$ with representation as in Eq.~\ref{eq:form_S_rho_0}, then for any $\phi' \in \mathcal{Q}_{\rho_E}(t_i, t_f)$, $\vecbra{\phi'}\phi \rrangle$ is continuous individually with respect to $s_1, s_2, \dots s_{n - 1}$. Similarly, if $\phi' \in \mathcal{Q}_{\rho_E}(t_i, t_f)$ has a representation as in Eq.~\ref{eq:form_Q_rho_0}, then $\vecbra{\phi'}\phi\rrangle$ is also continuous with respect to the arguments $s_1, s_2 \dots s_{n - 1}$.
% \end{lemma}
% \begin{proof}
%     Let us consider $\vecket{\phi_\varepsilon}$ as the two states obtained from $\vecket{\phi}$ by perturbing $s_k$ for $k \in \{1, 2 \dots n - 1\}$ by $\varepsilon$ i.e.
%     \[
%     \vecket{\phi_\varepsilon} = \bigg(\prod_{i = n}^{k + 2}\mathcal{S}_i \mathcal{U}_{\mathcal{A}_i}(s_i, s_{i - 1})\bigg)\mathcal{S}_{k + 1}\mathcal{U}_{X_{k + 1}}(s_{k + 1}, s_k + \varepsilon) \mathcal{S}_k \mathcal{U}_{X_{k}}(s_{k} + \varepsilon, s_{k - 1}) \bigg(\prod_{i = k - 1}^1 \mathcal{S}_i \mathcal{U}_{\mathcal{A}_i}(s_i, s_{i - 1})\bigg)\vecket{\sigma_S, I_E}.
%     \]
% \end{proof}
We can also obtain a lemma similar to lemma \ref{lemma:closure_right_wick}c  that will be useful in the calculations in the following subsections.
    \begin{lemma}\label{lemma:closure_left_wick}
        Suppose $f$ is either of the two functions: $f(\theta) = \norm{\theta}$ or $f(\theta) = \abs{\vecbra{\theta}\phi \rrangle}$ for some $\phi \in \mathcal{S}(\rho_E)$ , then 
    \[
     f\big( \overleftarrow{\textnormal{W}}(\vecbra{\theta}; \{B^{\nu_j}_{\alpha_j, \tau_j, \sigma_j}\}_{j \in \{1:m\}})\big) \leq 4^m\norm{\zeta}^m_\infty \sup_{\phi \in \mathcal{Q}_\zeta(\rho_E)} f(\phi),
    \]
    for all $\theta \in \mathcal{Q}_{\zeta}(\rho_E)$, $ \{\alpha_i \in \mathcal{A}\}_{i \in [1:m]}, \{\nu_i \in \{x, p\}\}_{i \in [1:m]}, \{\sigma_i \in \{l, r\}\}_{i \in [1:m]}, \{\tau_i \in \mathbb{R}\}_{i\in [1:m]}$.
    \end{lemma}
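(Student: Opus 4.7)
The plan is to mirror, line by line, the proof of Lemma \ref{lemma:closure_right_wick}(c), adapting it to the left Wick contraction and the dual operator space $\mathcal{Q}_\zeta(\rho_E)$. The crucial observation is that both $f(\theta) = \norm{\theta}$ and $f(\theta) = \smallabs{\vecbra{\theta}\phi\rrangle}$ satisfy the triangle inequality, which is what powered the earlier argument. I would define
\[
\Lambda_m := \sup_{\substack{\theta \in \mathcal{Q}_\zeta(\rho_E) \\ \{\alpha_i, \nu_i, \tau_i, \sigma_i\}_{i \in [1:m]}}} f\big(\wickleft(\vecbra{\theta}; \{B^{\nu_j}_{\alpha_j, \tau_j, \sigma_j}\}_{j \in [1:m]})\big),
\]
with $\Lambda_0 = \sup_{\theta \in \mathcal{Q}_\zeta(\rho_E)} f(\theta)$, and then aim to prove the recurrence $\Lambda_m \leq 4\norm{\zeta}_\infty \Lambda_{m-1}$, which iterates to the desired $\Lambda_m \leq (4\norm{\zeta}_\infty)^m \Lambda_0$.

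The key preliminary step is to check that $\mathcal{Q}_\zeta(\rho_E)$ is closed under the elementary left contraction map $\vecbra{\theta} \mapsto \vecbra{\theta_{j,\alpha,\sigma}^\nu(\tau)}$ defined in Eq.~\ref{eq:left_W_1}. Given a representation of $\theta$ with time edges $\{(s_i, t_i)\}_{i \in [1:n]}$, the natural representation of $\theta_{j,\alpha,\sigma}^\nu(\tau)$ splits the $j$-th edge into $(s_j, \tau)$ and $(\tau, t_j)$, producing edges $\{(s_i, t_i)\}_{i \neq j} \cup \{(s_j, \tau), (\tau, t_j)\}$. Since $\tau \in [s_j, t_j]$ and $\U \geq 0$, we have
\[
\abs{\int_{s_j}^{\tau} \U(\tau' - \tau'') d\tau''} + \abs{\int_{\tau}^{t_j} \U(\tau' - \tau'') d\tau''} = \abs{\int_{s_j}^{t_j} \U(\tau' - \tau'') d\tau''},
\]
so the defining bound $\sum_i \abs{\int_{s_i'}^{t_i'} \U(\tau' - \tau'') d\tau''} \leq \zeta(\tau')$ is preserved and $\theta_{j,\alpha,\sigma}^\nu(\tau) \in \mathcal{Q}_\zeta(\rho_E)$.

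With closure in hand, the recursive bound follows exactly as in the right contraction case. For $m=1$, applying the triangle inequality to Eq.~\ref{eq:left_W_1} together with the pointwise bound $\smallabs{\K_{\alpha,\sigma'}^{\nu,\nu'}(\tau-\tau')} \leq \U(\tau-\tau')$ gives
\[
f\big(\wickleft(\vecbra{\theta}; B^\nu_{\alpha,\tau,\sigma})\big) \leq 4 \Lambda_0 \sum_{j=1}^n \abs{\int_{s_j}^{t_j}\U(\tau-\tau')d\tau'} \leq 4\norm{\zeta}_\infty \Lambda_0,
\]
where the factor $4$ absorbs the summation over the four choices of $(\nu', \sigma')$. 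The recursive definition in Eq.~\ref{eq:right_W} (for the left contraction) then yields $\Lambda_m \leq 4\norm{\zeta}_\infty \Lambda_{m-1}$ by the same argument applied to $\wickleft(\vecbra{\theta_{j,\alpha,\sigma}^\nu(\tau)}; \{B_k\}_{k \in [1:m-1]})$, whose $f$-value is by induction at most $(4\norm{\zeta}_\infty)^{m-1} \Lambda_0$.

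There is no real obstacle here since the argument is fully parallel to Lemma \ref{lemma:closure_right_wick}(c); the only mild subtlety is bookkeeping, namely verifying that the auxiliary state appearing inside the left contraction remains in $\mathcal{Q}_\zeta(\rho_E)$ (not $\mathcal{S}_\zeta(\rho_E)$), which is immediate from the symmetric definitions of $\mathcal{S}_\zeta(\rho_E)$ and $\mathcal{Q}_\zeta(\rho_E)$ and the fact that inserting system operators $R^\nu_{j,\sigma}(\tau)$ in the middle of the product does not affect the time-edge structure.
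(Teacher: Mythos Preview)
Your proposal is correct and takes essentially the same approach as the paper, which simply states that the proof is identical to that of Lemma~\ref{lemma:closure_right_wick}(c) and refers the reader there. You have faithfully unpacked that cross-reference: the closure of $\mathcal{Q}_\zeta(\rho_E)$ under the edge-splitting map $\vecbra{\theta}\mapsto\vecbra{\theta_{j,\alpha,\sigma}^\nu(\tau)}$, the triangle inequality for $f$, and the recurrence $\Lambda_m\leq 4\norm{\zeta}_\infty\Lambda_{m-1}$ are precisely the ingredients of the right-contraction proof, transposed to the dual space.
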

    \begin{proof}
        The proof of this lemma is the same as the proof of lemma \ref{lemma:closure_right_wick}c, which we refer the reader to.
    \end{proof}

\subsection{Defining the commutator bound}

Having defined an appropriate operator space and motivated by lemma \ref{lemma:delta_o_comm_bd}, we will obtain a recursion for $\gamma_{\zeta, \mathcal{B}}^{X, Y}(t, t')$ defined below.
\begin{definition}
    Given $\mathcal{B}\subseteq \mathcal{A}$ and $X, Y \subseteq \Lambda$ and $\zeta: \mathbb{R} \to [0, \infty)$ with $\norm{\zeta}_\infty <\infty$, define
\begin{align}\label{eq:commutator_def}
\gamma_{\zeta, \mathcal{B}}^{X, Y}(t, t') =\sup_{\substack{\phi \in \mathcal{S}_{\zeta}(\rho_E) \\ \norm{O_X}, \norm{\mathcal{L}_Y}_\diamond\leq 1}} \abs{\textnormal{Tr}(\mathcal{L}_Y^\dagger\big(U_\mathcal{B}(t', t) O_X U_\mathcal{B}(t, t')\big) \phi)} =  \sup_{\substack{\phi \in \mathcal{S}_{\zeta}(\rho_E) \\ \norm{O_X}, \norm{\mathcal{L}_Y}_\diamond\leq 1}}  \abs{\vecbra{O_X, I_E}\mathcal{U}_\mathcal{B}(t, t') \mathcal{L}_{Y} \vecket{\phi}},
\end{align}
where the supremum is taken over system operators $O_X$ supported on $X \subseteq \Lambda$ and system superoperators $\mathcal{L}_Y$, supported on $Y$, which additionally satisfies $\mathcal{L}_Y^\dagger(I) = 0$. 
\end{definition}
Note that a bound on $\gamma_{\zeta, \mathcal{B}}^{X, Y}(t, t')$, furnishes a bound on $\Delta_{O_X}(t, t'; l)$ defined in lemma \ref{lemma:delta_o_comm_bd} --- we make this precise in the next lemma. It is thus sufficient to upper bound the commutator as introduced in Eq.~\ref{eq:commutator_def}.

\begin{lemma}\label{lemma:Delta_bd_gamma}
For any $t > t'$ and $l > 0$,
\begin{align}\label{eq:error_in_terms_commutator}
\Delta_{O_X}(t, t'; l) \leq 2\norm{O_X}\big(1 +4\textnormal{TV}(\U) \big)\sum_{\substack{\alpha \notin \mathcal{A}_{X[l]} \\ S_\alpha \cap X_cc[l] \neq \emptyset}} \int_{s = t'}^t  \gamma^{X, S_\alpha}_{\mu_{[0, s]}, \mathcal{A}_{X[l]}}(t, s)  ds,
\end{align}
\end{lemma}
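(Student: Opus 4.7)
The plan is simply to read off the matrix elements appearing in Lemma \ref{lemma:delta_o_comm_bd}, recognize each one as an instance of the expression $\vecbra{O_X, I_E}\mathcal{U}_{\mathcal{A}_{X[l]}}(t,s)\mathcal{L}_{S_\alpha}\vecket{\phi}$ that defines $\gamma^{X,S_\alpha}_{\zeta,\mathcal{A}_{X[l]}}(t,s)$ in Eq.~\eqref{eq:commutator_def}, and then invoke that definition. The core of the argument is to identify the correct $\zeta$ for each of the two states $\rho(s)$ and $\rho^{\nu'}_{\alpha,\sigma'}(s,s')$ and to check that the commutator superoperators satisfy the admissibility hypotheses on $\mathcal{L}_Y$.

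\emph{Step 1: placing the relevant states in $\mathcal{S}_{\mu_{[0,s]}}(\rho_E)$.} The state $\rho(s)=\mathcal{U}(s,0)\vecket{\sigma_S,\rho_E}$ has a representation in the sense of Definition \ref{def:S_rho} with a single time edge $(0,s)$ and trivial intermediate superoperator, so the edge-sum in Definition \ref{def:S_rho_zeta} is exactly $\mu_{[0,s]}(\tau)$; hence $\rho(s)\in\mathcal{S}_{\mu_{[0,s]}}(\rho_E)$. Likewise, $\rho^{\nu'}_{\alpha,\sigma'}(s,s')=\mathcal{U}(s,s')R^{\nu'}_{\alpha,\sigma'}(s')\mathcal{U}(s',0)\vecket{\sigma_S,\rho_E}$ admits a representation with time edges $(0,s')$ and $(s',s)$ and intermediate superoperator $R^{\nu'}_{\alpha,\sigma'}(s')$ of diamond norm $\leq 1$. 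Since $\mathrm{U}\geq 0$ and the two intervals concatenate, the sum of the two edge contributions is $\mu_{[0,s']}(\tau)+\mu_{[s',s]}(\tau)=\mu_{[0,s]}(\tau)$, so this state lies in $\mathcal{S}_{\mu_{[0,s]}}(\rho_E)$ as well.

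\emph{Step 2: admissibility of the commutator superoperators.} Both $h_\alpha(s)$ and $R^\nu_\alpha(s)$ are Hermitian and supported on $S_\alpha$, so $\mathcal{C}_{h_\alpha(s)}$ and $\mathcal{C}_{R^\nu_\alpha(s)}$ are supported on $S_\alpha$, their diamond norms are bounded by $2\norm{h_\alpha(s)}\leq 2$ and $2\norm{R^\nu_\alpha(s)}\leq 2$, and the required condition $\mathcal{C}^\dagger(I)=0$ holds automatically. Taking $\mathcal{L}_{S_\alpha}=\tfrac12\mathcal{C}_{h_\alpha(s)}$ (resp. $\tfrac12\mathcal{C}_{R^\nu_\alpha(s)}$) in Eq.~\eqref{eq:commutator_def} gives the pointwise bounds
\[
\abs{\vecbra{O_X,I_E}\mathcal{U}_{X[l]}(t,s)\mathcal{C}_{h_\alpha(s)}\vecket{\rho(s)}}\leq 2\norm{O_X}\,\gamma^{X,S_\alpha}_{\mu_{[0,s]},\mathcal{A}_{X[l]}}(t,s),
\]
and the analogous estimate for the $\mathcal{C}_{R^\nu_\alpha(s)}$ matrix element applied to $\rho^{\nu'}_{\alpha,\sigma'}(s,s')$.

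\emph{Step 3: doing the inner integral and collecting factors.} Plugging these bounds back into Lemma \ref{lemma:delta_o_comm_bd}, the kernel-weighted inner integral is absorbed by the obvious estimate
\[
\int_0^s \mathrm{U}(s-s')\,ds'\leq \mathrm{TV}(\mathrm{U}),
\]
uniform in $s$, so the finite sum over the indices $(\nu,\nu',\sigma')$ contributes at most a constant multiple of $\mathrm{TV}(\mathrm{U})$. Combining with the $\mathcal{C}_{h_\alpha}$ contribution produces a prefactor of the form $2\norm{O_X}(1+c\,\mathrm{TV}(\mathrm{U}))$ multiplying $\sum_{\alpha}\int_{t'}^{t}\gamma^{X,S_\alpha}_{\mu_{[0,s]},\mathcal{A}_{X[l]}}(t,s)\,ds$ over the shell $\alpha\notin\mathcal{A}_{X[l]}$, $S_\alpha\cap X_c[l]\neq\emptyset$, which is exactly the claimed form. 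There is no real conceptual obstacle; the main care needed is the bookkeeping that lets us absorb the $\sigma$-index of $R^\nu_{\alpha,\sigma}$ into the commutator $\mathcal{C}_{R^\nu_\alpha}$ (so that the tracelessness condition is genuinely satisfied) and then track the $\nu,\nu',\sigma'$ multiplicities carefully to recover the stated constant $4\,\mathrm{TV}(\mathrm{U})$.
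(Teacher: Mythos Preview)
Your proposal is correct and follows essentially the same approach as the paper's own proof, which is very terse: it simply notes that $\rho(s)$ and $\rho^{\nu'}_{\alpha,\sigma'}(s,s')$ lie in $\mathcal{S}_{\mu_{[0,s]}}(\rho_E)$, invokes the definition of $\gamma$ to bound each matrix element by $2\norm{O_X}\gamma^{X,S_\alpha}_{\mu_{[0,s]},\mathcal{A}_{X[l]}}(t,s)$, and refers back to Lemma~\ref{lemma:delta_o_comm_bd}. Your Steps~1--3 make explicit exactly the verifications the paper leaves implicit (the time-edge calculation, the $\mathcal{L}^\dagger(I)=0$ check, and the $\int_0^s\U\leq\mathrm{TV}(\U)$ estimate), so you are not doing anything different---just filling in details.
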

\begin{proof}
Since $\vecket{\rho(s)}, \vecket{\rho_{\alpha, \sigma'}^{\nu'}(s,s')} \in \mathcal{S}_{\mu_{[0, s]}}(\rho_E)$ defined in lemma \ref{lemma:delta_o_comm_bd},
\[
\smallabs{\vecbra{O_X, I_E} \mathcal{U}_{X[l]}(t, s) \mathcal{C}_{h_\alpha}\vecket{\rho(s)}}, \smallabs{\vecbra{O_X, I_E} \mathcal{U}_{X[l]}(t, s) \mathcal{C}_{h_\alpha}\vecket{\rho_{\alpha, \sigma'}^{\nu'}(s, s')}} \leq 2 \norm{O_X}\gamma^{X, S_\alpha}_{\mu_{[0, s]}, \mathcal{A}_{X[l]}}(t, s),
\] 
and hence the lemma follows from lemma \ref{lemma:delta_o_comm_bd}.
\end{proof}
In the remainder of this subsection, we address two mathematical requirements that are needed to ensure that all the calculations in the following subsection are legitimate. If the reader is willing to accept that these manipulations are legitimate and does not want to be encumbered by mathematical details of their proof, then they can skip ahead to the next subsection. These considerations arise uniquely for the problem studied in this paper, and do not arise while deriving Lieb Robinson bounds for finite-dimensional lattice models.

\emph{Consideration 1}. In the calculations in the following subsection, we will often use the fact that for a smooth and bounded function $g:\mathbb{R} \to \mathbb{C}$ integrals of the form
\[
\int_{s_0'}^{s_1'}g(s)\gamma^{X, Y}_{\zeta_0 + \alpha_0 \mu_{[s_0, s]}, \mathcal{B}}(s_1, s) ds,
\]
where $s_0 < s_1$, $\zeta_0:\mathbb{R}\to [0, \infty)$ is a bounded function and $\alpha_0 > 0$, is well defined. Note that this is already implicit is the bound provided in Eq.~\ref{eq:error_in_terms_commutator}. Furthermore, we will also use the fact that for a smooth and integrable function $k:\mathbb{R} \to \mathbb{C}$, $h:\mathbb{R} \to \mathbb{C}$ defined by
\[
h(s) = \int_{s_0'}^{s_1'}k(s - s') \gamma^{X, Y}_{\zeta + \alpha_0 \mu_{[s_0, s']}, \mathcal{B}}(s_1, s') ds',
\]
is well defined and integrable.

\emph{Consideration 2}. in the derivation of the Lieb Robinson bound, a key tool that we will use is that on perturbing $s \to s - \varepsilon$ in $\gamma^{X, Y}_{\zeta + \alpha_0 \mu_{[s_0, s]}, \mathcal{B}}(s_1, s)$, it does not increase significantly beyond its unperturbed value --- more specifically, 
\begin{align}\label{eq:target_property_cont}
\gamma^{X, Y}_{\zeta + \alpha_0 \mu_{[s_0, s - \varepsilon]}, \mathcal{B}}(s_1, s - \varepsilon) \leq \gamma^{X, Y}_{\zeta + \alpha_0 \mu_{[s_0, s]}, \mathcal{B}}(s_1, s) + O(\varepsilon).
\end{align}
Had $\gamma^{X, Y}_{\zeta + \alpha_0 \mu_{[s_0, \cdot]}, \mathcal{B}}(s_1, \cdot)$ been a smooth function, this would simply have followed from its Taylor expansion. However, it is not necessary for $\gamma^{X, Y}_{\zeta_0 + \alpha_0 \mu_{[s_0, \cdot]}, \mathcal{B}}(s_1, \cdot)$ to even be a continuous function (even though, as described above and to be proved below, it is integrable). However, as we show in lemma \ref{lemma:upper_bound_lemma_stat}, despite this non-smoothness, it still satisfies Eq.~\ref{eq:target_property_cont} which will turn out to be enough for deriving a Lieb Robinson bound. 

To analyze the integrability of $\Gamma$, we will need the following lemma.
\begin{lemma}\label{lemma:measurability_lemma}
    Suppose $A_x$ is a family of possibly uncountable sets parameterized by $x \in [a, b]$ and $g:A \times [a, b] \to [0, 1]$, where $A = \cup_{x\in[a, b]} A_x$, is a function which satisfy the following properties: 
    \begin{enumerate}
        \item[1.] If $x \leq x'$, $A_x \subseteq A_{x'}$.
        \item[2.] $\forall \phi \in A$, $g(\phi, \cdot)$ is an $\ell-$Lipschitz continuous function, with $\ell$ is uniform in $\phi$ i.e.~$\forall\ \phi \in A, \forall \ x, x' \in[a, b]$, $\abs{g(\phi, x) - g(\phi, x')} \leq \ell \abs{x - x'}$
    \end{enumerate}
    Then the function $G: [a, b] \to \mathbb{R}$ defined by
    \[
    G(x) = \sup_{\phi \in A_x} g(\phi, x),
    \]
    is Lesbesgue integrable.
\end{lemma}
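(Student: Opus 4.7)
The plan is to reduce integrability to measurability (since $G$ takes values in $[0,1]$ on the bounded interval $[a,b]$) and then to prove measurability by exhibiting a simple hidden monotonicity. The direct approach of trying to show that $G$ is continuous will fail in general, because when $x$ increases the supremum set $A_x$ enlarges and $G$ can jump upward across thresholds at which new elements of $A$ first become available; so I would not try to rule out discontinuities, only to control them.

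The key observation I would prove first is that $H(x) := G(x) + \ell x$ is non-decreasing on $[a,b]$. Fix $x \leq x'$. By property 1, $A_x \subseteq A_{x'}$, so the supremum defining $G(x')$ ranges over at least as large a set as any supremum over $A_x$. By property 2 (uniform Lipschitz in the second argument), for every $\phi \in A_x \subseteq A_{x'}$ one has $g(\phi, x') \geq g(\phi, x) - \ell(x'-x)$. Combining,
\[
G(x') \geq \sup_{\phi \in A_x} g(\phi, x') \geq \sup_{\phi \in A_x} g(\phi, x) - \ell(x' - x) = G(x) - \ell(x' - x),
\]
which is exactly $H(x') \geq H(x)$. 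Note that this step only uses the two hypotheses of the lemma and makes no reference to the internal structure of $A$ or $g$, so no further regularity assumptions are needed.

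Once monotonicity is in hand, the rest is routine: a monotone function on a bounded interval is Borel measurable (its only discontinuities are of jump type, forming an at most countable set), hence Lebesgue measurable. Therefore $G(x) = H(x) - \ell x$ is Lebesgue measurable, and since $G$ is bounded by $1$ on the bounded interval $[a,b]$, it is Lebesgue integrable.

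I expect the only subtle point to be recognizing the correct Lipschitz offset $\ell x$ that converts $G$ into a monotone function; without this offset the statement that $G$ itself is monotone is false in general, since $g(\phi,\cdot)$ may be strictly decreasing on some $\phi$'s. In the broader context of the Lieb--Robinson argument, this lemma is exactly the tool that justifies treating $\gamma^{X,Y}_{\zeta+\alpha_0 \mu_{[s_0,\cdot]},\mathcal{B}}(s_1,\cdot)$ as an integrable function of its time argument in spite of possible discontinuities coming from expanding families of admissible states $\mathcal{S}_\zeta(\rho_E)$ --- the family $A_x$ here corresponding to such an expanding family and the Lipschitz estimate in $g$ coming from perturbation of the second time argument.
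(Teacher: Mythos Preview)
Your proof is correct and considerably simpler than the paper's. The paper introduces the auxiliary two-variable function $\tilde{G}(x,y) = \sup_{\phi \in A_x} g(\phi,y)$, shows it is non-decreasing in $x$ and continuous in $y$, then approximates it in the $y$-variable by step functions $\tilde{G}_n$, checks each $\tilde{G}_n$ is measurable (preimages are countable unions of products of intervals), passes to the pointwise limit, and finally restricts to the diagonal. Your argument bypasses all of this by noticing that the nesting of $A_x$ and the uniform Lipschitz bound combine into the single inequality $G(x') \ge G(x) - \ell(x'-x)$, i.e.\ $H(x)=G(x)+\ell x$ is non-decreasing; measurability then follows immediately since monotone functions have interval preimages. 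The paper's route would in principle survive under a weaker uniform-equicontinuity hypothesis on $g(\phi,\cdot)$, whereas your Lipschitz offset trick uses the linear modulus of continuity explicitly; but under the stated hypotheses your approach is both shorter and more transparent, and it isolates precisely the mechanism (expanding $A_x$ can only push $G$ up, while moving the second argument can push it down by at most $\ell\,\Delta x$) that makes the lemma true.
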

\begin{proof}
    We note that $G$, by construction, is a bounded function over a compact interval $[a, b]$. Consequently, as long as $G$ is measurable, it is also integrable. We will use two facts from measure theory to prove the measurability of $G$ ---
    \begin{enumerate}[leftmargin=1.5cm]
        \item[Fact 1:] Suppose $h:[a, b]^2 \to \mathbb{R}$ is a measurable function, then the function $H: [a, b] \to \mathbb{R}$ defined by $H(x) = h(x, x)$ is also a measurable function.
        \item[Fact 2:] Suppose $\{f_n:\mathcal{S} \to \mathbb{R}\}_{n\in \mathbb{N}}$, where $\mathcal{S}$ together with an appropriate $\sigma-$algebra is a measurable space, is a sequence of measurable functions which converges pointwise to $f:\mathcal{S}\to \mathbb{R}$ i.e. $f(x) = \lim_{n\to \infty} f_n(x)$, then $f$ is also a measurable function.
    \end{enumerate}
    To prove this lemma, we begin by defining $\tilde{G}:[a, b]^2 \to [0, 1]$ via
    \[
    \tilde{G}(x, y) = \sup_{\phi \in A_x} g(\phi, y),
    \]
    then, clearly, $G(x) = \tilde{G}(x, x)$. Furthermore, $\tilde{G}(x, y)$ satisfies two important properties --- \emph{first}, it is a non-decreasing function of $x$ for any fixed $y \in [a, b]$. To see this, simply note that for $x \leq x'$, since $A_x \subseteq A_{x'}$
    \[
    \tilde{G}(x, y) = \sup_{\phi \in A_x} g(\phi, y) \leq \sup_{\phi \in A_{x'}} g(\phi, y) = \tilde{G}(x', y).
    \]
    \emph{Second}, $\tilde{G}(x, y)$ is continuous in $y$ for any fixed $x \in [a, b]$. To see this, note that for any $y, y'$
    \[
    \tilde{G}(x, y) = \sup_{\phi \in A_x} g(\phi, y) \leq \sup_{\phi \in A_x} \bigg[g(\phi, y') + \ell\abs{y - y'}\bigg] \leq \tilde{G}(x, y') + \ell \abs{y - y'}.
    \]
    Since this holds for any $y, y'$, it also implies that $\tilde{G}(x, y') \leq \tilde{G}(x, y) + \ell \abs{y - y'}$ and thus $\abs{\tilde{G}(x, y') - \tilde{G}(x, y)}\leq \ell \abs{y - y'}$, which implies continuity of $\tilde{G}(x, \cdot)$. 

    Next, note that since $\tilde{G}(x, y)$ is continuous in $y$, we can approximate by a sequence of functions $\tilde{G}_n(x, y)$ where
    \[
    \tilde{G}_n(x, y) = \sum_{j = -\infty}^\infty \Theta_{[(j - 1)/n, j/n)}(y)\tilde{G}\bigg(x, \frac{j}{n}\bigg),
    \]
    i.e. for every $x$, we perform a step-like approximation of $\tilde{G}(x, \cdot)$. It follows from the continuity of $\tilde{G}(x, \cdot)$ that
    \[
    \lim_{n \to \infty} \tilde{G}_n(x, y) = G(x, y).
    \]
    Now, we establish that $\tilde{G}_n(x, y)$ is measurable. Recall that this is equivalent to showing that $\tilde{G}_n^{-1}((-\infty, a])$ is measurable --- $\tilde{G}_n^{-1}((-\infty, a])$ can be explicitly calculated to be
    \[
    \tilde{G}_n^{-1}((-\infty, a]) = \bigcup_{j = -\infty}^\infty \mathcal{X}_{n, j}\times \bigg[\frac{j - 1}{n}, \frac{j}{n}\bigg) \text{ where }\mathcal{X}_{n, j} = \bigg\{x : G\bigg(x, \frac{j}{n}\bigg) \leq a\bigg\}.
    \]
    We note that $\mathcal{X}_{n, j} \subseteq [a, b]$ is measurable --- to see this, note that since $G(\cdot, j/n)$ is a non-decreasing function, if $x \in \mathcal{X}_{n, j}$ and $x' \leq x$, then $x' \in \mathcal{X}_{n, j}$ and hence $\mathcal{X}_{n, j}$ can only be $(-\infty, b]$ or $(-\infty, b)$ for some b or $(-\infty, \infty)$. From the measurability of $\mathcal{X}_{n, j}$, it immediately follows that $\tilde{G}_n^{-1}((-\infty, a])$ is measurable implying that $\tilde{G}_n$ is a measurable function.

    Finally, we then use Fact 2 to conclude that $\tilde{G}(x, y)$, which is a point-wise limit of $\tilde{G}_n(x, y)$, is also measurable and use Fact 1 to finally conclude that $G(x) = \tilde{G}(x, x)$ is measurable.
\end{proof}

\begin{lemma}\label{lemma:int_gamma}
    Given $s_0 < s_1$ and $\zeta:\mathbb{R} \to [0, \infty)$, a bounded continuous function, 
    \begin{enumerate}
        \item[(a)] The function $\gamma^{X, Y}_{\zeta + \alpha_0 \mu_{[s_0, \cdot]}, \mathcal{B}}(s_1, \cdot): [s_0, s_1] \to \mathbb{R}$ is Lesbesgue integrable.
        \item[(b)] Given a smooth and integrable function $k:\mathbb{R} \to \mathbb{C}$, the function $h:\mathbb{R} \to \mathbb{C}$ defined via
        \[
        h(s) = \int_{s_0}^{s_1} k(s - s')\gamma^{X, Y}_{\zeta + \alpha_0 \mu_{[s_0, s']}, \mathcal{B}}(s_1, s')ds',
        \]
        is integrable.
    \end{enumerate}
\end{lemma}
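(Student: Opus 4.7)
The plan is to reduce both parts to Lemma~\ref{lemma:measurability_lemma}, which is designed precisely for showing that a quantity defined as a supremum over an uncountable family of test objects is a measurable (hence integrable, since bounded) function of a parameter. For part (a), I would apply the lemma with $[a,b] = [s_0,s_1]$, with
\[
A_x = \{(\phi,O_X,\mathcal{L}_Y) : \phi \in \mathcal{S}_{\zeta + \alpha_0 \mu_{[s_0,x]}}(\rho_E),\ \norm{O_X}\le 1,\ \norm{\mathcal{L}_Y}_\diamond\le 1,\ \mathcal{L}_Y^\dagger(I)=0\},
\]
and with test function $g((\phi,O_X,\mathcal{L}_Y),s) = \smallabs{\vecbra{O_X,I_E}\mathcal{U}_\mathcal{B}(s_1,s)\mathcal{L}_Y\vecket{\phi}}$. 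That $g\in[0,1]$ follows from the Hölder-type estimate $\smallabs{\vecbra{O_X,I_E}M\vecket{\phi}} \le \norm{O_X}\norm{M}_\diamond \norm{\phi}_1$ together with $\norm{\phi}_1\le 1$ for $\phi\in\mathcal{S}_\zeta(\rho_E)$. The monotonicity hypothesis $A_x\subseteq A_{x'}$ for $x\le x'$ is immediate from Definition~\ref{def:S_rho_zeta}, since $\U\ge 0$ forces $\mu_{[s_0,x]}(\tau)\le \mu_{[s_0,x']}(\tau)$ pointwise.

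The substantive step is the uniform Lipschitz condition: I need $\smallabs{g(\cdot,s) - g(\cdot,s')}\le \ell\smallabs{s-s'}$ with $\ell$ independent of $(\phi,O_X,\mathcal{L}_Y)$. For $s\le s'$ I would use the cocycle identity $\mathcal{U}_\mathcal{B}(s_1,s) = \mathcal{U}_\mathcal{B}(s_1,s')\mathcal{U}_\mathcal{B}(s',s)$ together with the first-order Dyson expansion $\mathcal{U}_\mathcal{B}(s',s) - I = -i\int_s^{s'}\mathcal{H}_\mathcal{B}(\tau)\mathcal{U}_\mathcal{B}(\tau,s)d\tau$ to reduce the problem to bounding the integrand $\smallabs{\vecbra{O_X,I_E}\mathcal{U}_\mathcal{B}(s_1,s')\mathcal{H}_\mathcal{B}(\tau)\mathcal{U}_\mathcal{B}(\tau,s)\mathcal{L}_Y\vecket{\phi}}$ uniformly in the triple. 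The bounded pieces $\mathcal{C}_{h_\alpha(\tau)}$ of $\mathcal{H}_\mathcal{B}(\tau)$ give a trivial $O(\smallabs{\mathcal{B}})$ bound. The unbounded environment pieces $B^\nu_{\alpha,\tau,\sigma}R^\nu_{\alpha,\sigma}(\tau)$ I would handle by Wick's theorem (Lemma~\ref{lemma:wick_thm_gaussian}): Lemma~\ref{lemma:closure_right_wick}(a,b) places $\mathcal{U}_\mathcal{B}(\tau,s)\mathcal{L}_Y\vecket{\phi}$ in a scaled copy of $\mathcal{S}_{\zeta+\alpha_0\mu_{[s_0,s]}+\mu_{[s,\tau]}}(\rho_E)$, and $\vecbra{O_X,I_E}\mathcal{U}_\mathcal{B}(s_1,s')$ in a scaled copy of $\mathcal{Q}_{\mu_{[s',s_1]}}(\rho_E)$, so Lemmas~\ref{lemma:closure_right_wick}(c) and~\ref{lemma:closure_left_wick} then bound each Wick contraction by a constant proportional to $\norm{\zeta}_\infty + (\alpha_0+1)\textnormal{TV}(\U)$, which is finite and independent of the triple. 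Integrating over $\tau\in[s,s']$ yields $\ell\smallabs{s-s'}$, and Lemma~\ref{lemma:measurability_lemma} then gives integrability of $\gamma^{X,Y}_{\zeta+\alpha_0\mu_{[s_0,\cdot]},\mathcal{B}}(s_1,\cdot)$ on $[s_0,s_1]$.

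For part (b), once (a) is in hand, the map $s'\mapsto \gamma^{X,Y}_{\zeta+\alpha_0\mu_{[s_0,s']},\mathcal{B}}(s_1,s')$ is measurable, bounded by $1$, and compactly supported on $[s_0,s_1]$, so $h$ is simply its convolution with the smooth integrable kernel $k$. Tonelli's theorem supplies measurability along with
\[
\int_{-\infty}^\infty \smallabs{h(s)}\,ds \;\le\; \int_{s_0}^{s_1}\gamma^{X,Y}_{\zeta+\alpha_0\mu_{[s_0,s']},\mathcal{B}}(s_1,s')\,\norm{k}_1\,ds' \;\le\; \norm{k}_1(s_1-s_0),
\]
establishing that $h$ is integrable.

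The principal obstacle is the uniform Lipschitz estimate in part (a): a naive attempt to bound $\norm{\mathcal{H}_\mathcal{B}(\tau)}_\diamond$ fails because the environment quadratures $B^\nu_{\alpha,\tau,\sigma}$ are unbounded. The entire purpose of introducing the classes $\mathcal{S}_\zeta(\rho_E)$, $\mathcal{Q}_\zeta(\rho_E)$ and their closure lemmas is to replace this operator-norm catastrophe by matrix-element bounds controlled by $\textnormal{TV}(\U)$ via Wick contraction; once that machinery is invoked, the Lipschitz verification becomes a bookkeeping exercise on the estimates in Lemmas~\ref{lemma:closure_right_wick} and~\ref{lemma:closure_left_wick}, and the monotonicity hypothesis of Lemma~\ref{lemma:measurability_lemma} matches the definition of $\mathcal{S}_\zeta(\rho_E)$ naturally.
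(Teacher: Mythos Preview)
Your proposal is correct and follows essentially the same approach as the paper: reduce (a) to Lemma~\ref{lemma:measurability_lemma} by checking monotonicity of the state spaces $\mathcal{S}_{\zeta+\alpha_0\mu_{[s_0,\cdot]}}(\rho_E)$ and establishing a uniform Lipschitz bound on the matrix element via a Dyson expansion together with Wick contraction controlled by Lemmas~\ref{lemma:closure_right_wick} and~\ref{lemma:closure_left_wick}; then deduce (b) from (a) by Fubini/Tonelli. The only cosmetic difference is that the paper puts the supremum over $O_X,\mathcal{L}_Y$ inside the definition of $g(\phi,s)$ (so that $A_s$ consists only of states $\phi$), whereas you parameterize $A_x$ by triples $(\phi,O_X,\mathcal{L}_Y)$; both choices satisfy the hypotheses of Lemma~\ref{lemma:measurability_lemma} equally well.
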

\begin{proof}
    (a) For notational convenience, we begin by defining $\Gamma = \gamma^{X, Y}_{\zeta + \alpha_0\mu_{[s_0, \cdot]}, \mathcal{B}}(s_1, \cdot)$ and rewriting it as
    \[
    \Gamma(s) = \sup_{\phi \in A_s} g(\phi, s) \text{ where }A_s = \mathcal{S}_{\zeta + \alpha_0 \mu_{[s_0, s]}}(\rho_E) \text{ and } g(\phi, s) = \sup_{\norm{O_X}, \norm{\mathcal{L}_Y}_\diamond \leq 1} \abs{\vecbra{O_X, I_E}\mathcal{U}_\mathcal{B}(s_1, s) \mathcal{L}_Y \vecket{\phi}}.
    \]
    We know show that the sets $A_s$ and the function $g$ satisfy the properties required for an application of lemma \ref{lemma:measurability_lemma}.
    
    \underline{Property 1 of lemma \ref{lemma:measurability_lemma}}: We first note that since for $s < s'$, $\zeta(\tau) + \alpha_0 \mu_{[s_0, s]}(\tau) \leq \zeta(\tau) + \alpha_0 \mu_{[s_0, s']}(\tau) \ \forall \ \tau \in \mathbb{R}$ and therefore from the definition of $S_{\zeta}(\rho_E)$ [Definition~\ref{def:S_rho_zeta}], $A_{s} \subseteq A_{s'} $. 
    
     \underline{Property 2 of lemma \ref{lemma:measurability_lemma}}: Suppose $ \phi \in \cup_{s \in [s_0, s_1]} A_s$ --- we note that for $s, s' \in [s_0, s_1]$ with $s' < s$, we have that
    \begin{align}\label{eq:lipschitz_cont}
        &\abs{\vecbra{O_X, I_E}\mathcal{U}_\mathcal{B}(s_1, s) \mathcal{L}_Y \vecket{\phi} - \vecbra{O_X, I_E}\mathcal{U}_\mathcal{B}(s_1, s') \mathcal{L}_Y \vecket{\phi}} = \abs{\int_{s'}^s\vecbra{O_X, I_E} \mathcal{U}(s_1, s) \mathcal{H}(\tau) \mathcal{U}(\tau, s') \mathcal{L}_Y\vecket{\phi} d\tau}, \nonumber\\
        &\qquad \leq \Theta_{\mathcal{B}}(\alpha) \abs{\int_{s'}^s \vecbra{O_X, I_E} \mathcal{U}_\mathcal{B}(s_1, s) \mathcal{C}_{h_\alpha(\tau)} \mathcal{U}_\mathcal{B}(\tau, s') \mathcal{L}_Y\vecket{\phi} d\tau } + \Theta_{\mathcal{B}}(\alpha) \abs{\int_{s'}^s \vecbra{O_X, I_E}\mathcal{U}_\mathcal{B}(s_1, s)B^{\nu}_{\alpha, \tau, \sigma}R^\nu_{\alpha, \sigma}(\tau)\mathcal{U}_\mathcal{B}(\tau, s')\mathcal{L}_Y\vecket{\phi} d\tau}, \nonumber \\
        &\qquad \numleq{1} 2\abs{\mathcal{B}} \norm{O_X}\norm{\mathcal{L}_Y}_\diamond  \abs{s - s'} +\Theta_\mathcal{B}(\alpha)\bigg[ \int_{s'}^s \norm{\wickleft\big(\vecbra{O_X, I_E}\mathcal{U}_\mathcal{B}(s_1, s'); B^\nu_{\alpha, \tau, \sigma}\big)}\norm{R^{\nu}_{\alpha, \sigma}(\tau)}_\diamond \norm{\mathcal{U}_\mathcal{B}(\tau, s')\mathcal{L}_Y \vecket{\phi}}_1 d\tau + \nonumber\\
        &\qquad \qquad \qquad \qquad \qquad \qquad \int_{s'}^s \norm{\vecbra{O_X, I_E}\mathcal{U}_\mathcal{B}(s_1, s')} \norm{R^\nu_{\alpha, \sigma}(\tau)}_\diamond \norm{\wickright\big(B^\nu_{\alpha, \tau, \sigma}; \mathcal{U}_\mathcal{B}(\tau, s')\mathcal{L}_Y \vecket{\phi}\big)}_1d\tau\bigg],\nonumber \\
        &\qquad \numleq{2} \norm{O_X}\norm{\mathcal{L}_Y}_\diamond\underbrace{\big(2\abs{\mathcal{B}}  + 2^4 \abs{\mathcal{B}}  \text{TV}(\U) + 2^4 \abs{\mathcal{B}} (\norm{\zeta}_\infty + (1 + \alpha_0)\text{TV}(\U))\big)}_{\ell} \abs{s - s'},
    \end{align}
    where we have used the Wick's theorem in (1). In (2), we have used the fact that $\vecbra{O_X, I_E}\mathcal{U}_\mathcal{B}(s_1, s')  \in \norm{O_X}\mathcal{Q}_{\mu_{[s_1, s']}}(\rho_E) $ and then used lemma \ref{lemma:closure_left_wick} to upper bound $\smallnorm{\wickleft\big(\vecbra{O_X, I_E}\mathcal{U}_\mathcal{B}(s_1, s'); B^\nu_{\alpha, \tau, \sigma}\big)} \leq 4 \norm{O_X} \smallnorm{\mu_{[s_1, s']}}_\infty \leq 4\norm{O_X} \text{TV}(\U)$. Also in (2), we have used the fact that, if $\phi \in A_{\tilde{s}}$, $\mathcal{U}_\mathcal{B}(\tau, s') \mathcal{L}_Y \vecket{\phi} \in \norm{\mathcal{L}_Y}_\diamond \mathcal{S}_{\zeta + \alpha_0 \mu_{[s_0, \tilde{s}]} + \mu_{[s', \tau]}}$ together with lemma \ref{lemma:closure_right_wick}c to upper bound $\smallnorm{\wickright\big(B^\nu_{\alpha, \tau, \sigma}; \mathcal{U}_\mathcal{B}(\tau, s')\mathcal{L}_Y \vecket{\phi}\big)}_1 \leq 4\norm{\mathcal{L}_Y}_\diamond \smallnorm{\zeta + \mu_{[s_0, \tilde{s}]} + \mu_{[s', \tau]}}_\infty \leq 4\norm{\mathcal{L}_Y}_\diamond(\norm{\zeta}_\infty + (1 + \alpha_0)\text{TV}(\U))$. Consider now
    \begin{align*}
    g(\phi, s) &= \sup_{\norm{O_X}\leq 1, \norm{\mathcal{L}_Y}_\diamond \leq 1} \abs{\vecbra{O_X, I_E}\mathcal{U}_\mathcal{B}(s_1, s) \mathcal{L}_Y \vecket{\phi}}, \nonumber\\
    &\numleq{1} \sup_{\norm{O_X}\leq 1, \norm{\mathcal{L}_Y}_\diamond \leq 1}\bigg(\abs{\vecbra{O_X, I_E}\mathcal{U}_\mathcal{B}(s_1, s') \mathcal{L}_Y \vecket{\phi}} + \ell \norm{O_X}\norm{\mathcal{L}_Y}_\diamond \abs{s - s'}\bigg), \nonumber\\
    &\leq g(\phi, s') + \ell \abs{s - s'},
\end{align*}
where in (1) we have used Eq.~\ref{eq:lipschitz_cont}. A similar analysis would also yield that $g(\phi, s') \leq g(\phi, s) + \ell \abs{s - s'}$ and consequently we obtain that $\abs{g(\phi, s') - g(\phi,s )} \leq \ell \abs{s - s'}$. This implies that $g(\phi, \cdot)$ is a continuous function.

Finally, since the sets $A_s$ and the function $g$ satisfy the properties of the lemma \ref{lemma:measurability_lemma}, it then follows that $\Gamma$ is also an integrable function.

(b) The proof of part a justifies that $\gamma_{\zeta + \alpha_0 \mu_{[s_0, ']}, \mathcal{B}}(s_1, \cdot) : [s_0, s_1] \to [0, 1]$ is an integrable function and thus allows us to sensibly define and use integrals of the form
\[
\int_{s_0'}^{s_1'} g(s)\gamma^{X, Y}_{\zeta + \alpha_0 \mu_{[s_0, s]}, \mathcal{B}}(s_1, s) ds,
\]
where $s_0 \leq s_0' \leq s_1' \leq s_1$ and $g:[s_0, s_1] \to \mathbb{R}$ is an integrable and function. Consider now the function
\[
h(s) = \int_{s_0'}^{s_1'}k(s - s')\gamma^{X, Y}_{\zeta + \alpha_0\mu_{[s_0, s']}, \mathcal{B}}(s_1, s') ds',
\]
where $k$ is a smooth and integrable function. We note that $h(s)$ is well defined since $k(s - (\cdot))\gamma^{X, Y}_{\zeta + \alpha_0\mu_{[s_0, \cdot]}, \mathcal{B}}(s_1, \cdot)$ is integrable in the interval $[s_0', s_1']$. Furthermore, as a function of $s, s'$, $k(s - s')\gamma^{X, Y}_{\zeta + \alpha_0\mu_{[s_0, s']}, \mathcal{B}}(s_1, s')$ is absolutely integrable and hence by Fubini's theorem, $h$ is itself an integrable function.
\end{proof}
\noindent Next, we address consideration 2 outlined above.
\begin{lemma}\label{lemma:upper_bound_lemma_stat} For any $X, Y \subseteq \Lambda$, $s_0 < s < t$, $\alpha_0 > 1$ and for $\varepsilon < \abs{s - s_0}$
\begin{align}\label{eq:upper_bound_lemma_stat}
\gamma^{X, Y}_{\zeta + \alpha_0\mu_{[s_0, s - \varepsilon]}, \mathcal{B}}(t, s - \varepsilon) \leq \gamma^{X, Y}_{\zeta + \alpha_0\mu_{[s_0, s]}, \mathcal{B}}(t, s ) + 16 (1 + \alpha_0)  \abs{\mathcal{A}} \varepsilon
\end{align}
\end{lemma}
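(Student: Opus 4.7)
The plan is to bound the left-hand side by introducing an auxiliary state that absorbs the $\varepsilon$ time shift into the weight function, at the price of a commutator remainder of first order in $\varepsilon$. Fix $\phi \in \mathcal{S}_{\zeta + \alpha_0 \mu_{[s_0, s-\varepsilon]}}(\rho_E)$ together with a system operator $O_X$ on $X$ and a system superoperator $\mathcal{L}_Y$ on $Y$ with $\norm{O_X}\leq 1$, $\norm{\mathcal{L}_Y}_\diamond \leq 1$ and $\mathcal{L}_Y^\dagger(I)=0$, nearly attaining the supremum that defines the left-hand side. Set $\vecket{\phi'} = \mathcal{U}_\mathcal{B}(s, s-\varepsilon)\vecket{\phi}$. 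By lemma \ref{lemma:closure_right_wick}(a) the short evolution adjoins a single time edge $(s-\varepsilon, s)$ to a chosen representation of $\phi$, so $\phi'$ admits a representation of weight at most $\zeta + \alpha_0 \mu_{[s_0, s-\varepsilon]} + \mu_{[s-\varepsilon, s]}$; using $\alpha_0 > 1$ together with the additivity $\mu_{[s_0, s-\varepsilon]} + \mu_{[s-\varepsilon, s]} = \mu_{[s_0, s]}$, this weight is pointwise dominated by $\zeta + \alpha_0\mu_{[s_0, s]}$, so $\phi' \in \mathcal{S}_{\zeta + \alpha_0\mu_{[s_0, s]}}(\rho_E)$ with $\smallnorm{\phi'}_1 \leq 1$.

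Using the algebraic identity
\[
\mathcal{U}_\mathcal{B}(t, s-\varepsilon)\mathcal{L}_Y\vecket{\phi} = \mathcal{U}_\mathcal{B}(t, s)\mathcal{L}_Y\vecket{\phi'} + \mathcal{U}_\mathcal{B}(t, s)[\mathcal{U}_\mathcal{B}(s, s-\varepsilon), \mathcal{L}_Y]\vecket{\phi},
\]
the pairing of the first term with $\vecbra{O_X, I_E}$ is immediately controlled by $\gamma^{X, Y}_{\zeta + \alpha_0\mu_{[s_0, s]}, \mathcal{B}}(t, s)$, which is exactly the first summand of the target inequality. The task reduces to bounding the commutator remainder $\mathcal{R}(\phi) := \vecbra{O_X, I_E}\mathcal{U}_\mathcal{B}(t, s)[\mathcal{U}_\mathcal{B}(s, s-\varepsilon), \mathcal{L}_Y]\vecket{\phi}$ by $16(1+\alpha_0)\abs{\mathcal{A}}\varepsilon$. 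For this I invoke the standard integral representation
\[
[\mathcal{U}_\mathcal{B}(s, s-\varepsilon), \mathcal{L}_Y] = -i\int_{s-\varepsilon}^s \mathcal{U}_\mathcal{B}(s, \tau)[\mathcal{H}_\mathcal{B}(\tau), \mathcal{L}_Y]\mathcal{U}_\mathcal{B}(\tau, s-\varepsilon)\, d\tau,
\]
and decompose $\mathcal{H}_\mathcal{B}(\tau) = \sum_{\alpha \in \mathcal{B}} \mathcal{C}_{h_\alpha(\tau)} + \sum_{\alpha \in \mathcal{B}, \nu, \sigma}(-1)^\sigma B^\nu_{\alpha, \tau, \sigma}R^\nu_{\alpha, \sigma}(\tau)$. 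Since $[\cdot, \mathcal{L}_Y]$ kills any local term whose support misses $Y$, only $\alpha \in \mathcal{A}_Y \cap \mathcal{B}$ contribute.

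For the bounded system part, $\smallnorm{[\mathcal{C}_{h_\alpha(\tau)}, \mathcal{L}_Y]}_\diamond \leq 4$ together with unitarity of the flanking propagators on trace norm gives a contribution of at most $4\abs{\mathcal{A}_Y \cap \mathcal{B}}\varepsilon$. For the unbounded system-environment part, Wick's theorem (lemma \ref{lemma:wick_thm_gaussian}) expresses each matrix element $\vecbra{O_X, I_E}\mathcal{U}_\mathcal{B}(t, \tau) B^\nu_{\alpha, \tau, \sigma}[R^\nu_{\alpha, \sigma}(\tau), \mathcal{L}_Y]\mathcal{U}_\mathcal{B}(\tau, s-\varepsilon)\vecket{\phi}$ as a sum of a left contraction of $B^\nu_{\alpha, \tau, \sigma}$ into $\vecbra{O_X, I_E}\mathcal{U}_\mathcal{B}(t, \tau) \in \norm{O_X}\mathcal{Q}_{\mu_{[\tau, t]}}(\rho_E)$ and a right contraction into $[R^\nu_{\alpha, \sigma}(\tau), \mathcal{L}_Y]\mathcal{U}_\mathcal{B}(\tau, s-\varepsilon)\vecket{\phi} \in 2\mathcal{S}_{\zeta + \alpha_0\mu_{[s_0, s-\varepsilon]} + \mu_{[s-\varepsilon, \tau]}}(\rho_E) \subseteq 2\mathcal{S}_{\zeta + \alpha_0\mu_{[s_0, s]}}(\rho_E)$. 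Lemma \ref{lemma:closure_left_wick} and lemma \ref{lemma:closure_right_wick}(c) then bound these contractions in terms of $\norm{\mu_{[\tau, t]}}_\infty$ and $\norm{\zeta + \alpha_0\mu_{[s_0, s]}}_\infty$, so that the integrand is uniformly bounded in $\tau$ and integration over $[s-\varepsilon, s]$ produces a contribution linear in $\varepsilon$ with a prefactor proportional to $(1+\alpha_0)\abs{\mathcal{A}_Y \cap \mathcal{B}}$ after summing over the finite index set $(\nu, \sigma) \in \{x, p\} \times \{l, r\}$.

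The main obstacle is that a naive first-order Dyson bound fails because $\mathcal{H}_\mathcal{B}(\tau)$ is unbounded: the environment-mediated commutator $[B^\nu_{\alpha, \tau, \sigma} R^\nu_{\alpha, \sigma}(\tau), \mathcal{L}_Y]$ has infinite operator norm, and only the Wick machinery, which converts every $B$ insertion into a finite kernel integral against the operator spaces $\mathcal{S}_\zeta(\rho_E)$ and $\mathcal{Q}_\zeta(\rho_E)$, keeps the estimate finite. Collecting the bounded and unbounded contributions and enlarging $\abs{\mathcal{A}_Y \cap \mathcal{B}} \leq \abs{\mathcal{A}}$ yields $\smallabs{\mathcal{R}(\phi)} \leq 16(1+\alpha_0)\abs{\mathcal{A}}\varepsilon$ uniformly in $(\phi, O_X, \mathcal{L}_Y)$. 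Taking the supremum over these three objects on the left-hand side then establishes the lemma.
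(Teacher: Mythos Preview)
Your proposal is correct and follows essentially the same strategy as the paper: a first-order Dyson expansion over the short interval $[s-\varepsilon,s]$, with the unbounded environment contribution handled by the left/right Wick contractions of Lemmas~\ref{lemma:closure_right_wick}(c) and~\ref{lemma:closure_left_wick}. The one organizational difference is that you pass the short evolution through $\mathcal{L}_Y$ as a commutator, writing $\mathcal{U}_\mathcal{B}(s,s-\varepsilon)\mathcal{L}_Y=\mathcal{L}_Y\mathcal{U}_\mathcal{B}(s,s-\varepsilon)+[\mathcal{U}_\mathcal{B}(s,s-\varepsilon),\mathcal{L}_Y]$ and using the evolved state $\phi'=\mathcal{U}_\mathcal{B}(s,s-\varepsilon)\phi$ for the main term, whereas the paper keeps $\phi$ itself (using the trivial inclusion $\mathcal{S}_{\zeta+\alpha_0\mu_{[s_0,s-\varepsilon]}}\subseteq\mathcal{S}_{\zeta+\alpha_0\mu_{[s_0,s]}}$) and puts the entire short-time generator $\mathcal{H}_\mathcal{B}(s')$, not just its commutator with $\mathcal{L}_Y$, into the remainder. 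Your route has the minor advantage that the remainder is automatically supported on $\alpha\in\mathcal{A}_Y\cap\mathcal{B}$ rather than all of $\mathcal{B}$, but since both proofs then enlarge to $\abs{\mathcal{A}}$ this buys nothing for the stated bound. Note that in both your argument and the paper's, the Wick estimates actually produce a prefactor involving $\text{TV}(\U)$ and $\norm{\zeta}_\infty$ in addition to $(1+\alpha_0)$; the exact constant $16(1+\alpha_0)\abs{\mathcal{A}}$ in the lemma statement appears to suppress these dependencies, but this is immaterial downstream since the lemma is only used to feed an $O(\varepsilon^2)$ term in Lemma~\ref{lemma:small_gamma_recursion}.
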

\begin{proof} The proof of this lemma follows by repeated application of Wick's theorem. We begin by noting that for $\phi \in \mathcal{S}_{\zeta + \alpha_0 \mu_{[s_0, s - \varepsilon]}}(\rho_E)$
\begin{align*}
&\abs{\vecbra{O_X, I_E} \mathcal{U}_\mathcal{B}(t, s - \varepsilon) \mathcal{L}_{Y}\vecket{\phi}} \nonumber\\
&\qquad = \abs{\vecbra{O_X, I_E} \mathcal{U}_\mathcal{B}(t, s) \mathcal{U}_\mathcal{B}(s, s-\varepsilon)\mathcal{L}_{Y} \vecket{{\phi}}}, \\
&\qquad \numeq{1}\abs{\vecbra{O_X, I_E} \mathcal{U}_\mathcal{B}(t, s) \mathcal{L}_{Y}\vecket{\phi}-i\int_{s -\varepsilon}^{s} \vecbra{O_X, I_E} \mathcal{U}_\mathcal{B}(t, s')\mathcal{H}(s') \mathcal{U}(s', s - \varepsilon) \mathcal{L}_{Y} \vecket{{\phi}} ds'}, \\
&\qquad \leq \abs{\vecbra{O_X, I_E} \mathcal{U}_\mathcal{B}(t, t') \mathcal{L}_{Y} \vecket{\phi}} +  \Theta_{\mathcal{B}}(\alpha)\int_{s - \varepsilon}^{s}\abs{\vecbra{O_X, I_E} \mathcal{U}_\mathcal{B}(t, s')B_{\alpha, s', \sigma}^{\nu} R_{\alpha, \sigma}^\nu(s') \vecket{\tilde{\phi}(s')}}ds',
\end{align*}
where, for $s' \in [s-\varepsilon, s]$, $\vecket{\tilde{\phi}(s')} = \mathcal{U}_\mathcal{B}(s', s-\varepsilon) \mathcal{L}_Y\vecket{\phi}$. First, note that since $\phi \in \mathcal{S}_{\zeta + \alpha_0 \mu_{[s_0, s-\varepsilon]}}(\rho_E) \subseteq \mathcal{S}_{\zeta + \alpha_0 \mu_{[s_0, s]}}(\rho_E)$, it follows that $\abs{\vecbra{O_X, I_E} \mathcal{U}_\mathcal{B}(t, s) \mathcal{L}_Y\vecket{\phi}} \leq \norm{O_X}\norm{\mathcal{L}_Y}_\diamond \gamma_{\zeta + \alpha_0 \kappa_{[s_0, s]}}^{X, Y}(t, s)$ and therefore
\begin{align}\label{eq:bound_1_lemma_cont}
    \abs{\vecbra{O_X, I_E} \mathcal{U}_\mathcal{B}(t, s - \varepsilon) \mathcal{L}_{Y}\vecket{\phi}} \leq \norm{O_X}\norm{\mathcal{L}_Y}_\diamond \gamma_{\zeta + \alpha_0 \mu_{[s_0, s]}}^{X, Y}(t, s) + \Theta_{\mathcal{B}}(\alpha)\int_{s - \varepsilon}^{s}\abs{\vecbra{O_X, I_E} \mathcal{U}_\mathcal{B}(t, s')B_{\alpha, s', \sigma}^{\nu} R_{\alpha, \sigma}^\nu(s') \vecket{\tilde{\phi}(s')}}ds'.
\end{align}
Next, note that since $\phi \in \mathcal{S}_{\zeta + \alpha_0 \mu_{[s_0, s - \varepsilon]}}(\rho_E)$,
\[
\tilde{\phi}(s') \in \mathcal{S}_{\zeta + \alpha_0 \mu_{[s_0, s - \varepsilon]} + \mu_{[s - \varepsilon, s']}}(\rho_E) \subseteq \mathcal{S}_{\zeta + \alpha_0 \mu_{[s_0, s ']}}(\rho_E)  \subseteq \mathcal{S}_{\zeta + \alpha_0 \mu_{[s_0, s]}}(\rho_E),
\]
and consequently,
\begin{align}\label{eq:bound_2_lemma_cont}
&\abs{\vecbra{O_X, I_E} \mathcal{U}_\mathcal{B}(t, s')B_{\alpha, s', \sigma}^{\nu} R_{\alpha, \sigma}^\nu(s') \vecket{\tilde{\phi}(s')}}\nonumber \\
&\qquad \numleq{1} \abs{\overleftarrow{\text{W}}(\vecbra{O_X, I_E} \mathcal{U}_\mathcal{B}(t, s'), B^{\nu}_{\alpha, s', \sigma}) R_{\alpha, \sigma}^\nu(s')\mathcal{L}_Y\vecket{\tilde{\phi}({s'})}} + \abs{\vecbra{O_X, I_E} \mathcal{U}_\mathcal{B}(t, s') X_{\alpha, \sigma}^\nu \mathcal{L}_{Y} \overrightarrow{\text{W}}\big(B_{\alpha, s', \sigma}^\nu, \tilde{\phi}(s')\big)}, \nonumber\\
&\qquad \leq \norm{\overleftarrow{\text{W}}(\vecbra{O_X, I_E} \mathcal{U}_\mathcal{B}(t', s'), B^{\nu}_{\alpha, s', \sigma})} \norm{\mathcal{L}_Y}_\diamond +  \norm{\overrightarrow{\text{W}}(B_{\alpha, s', \sigma}^\nu, \vecket{\tilde{\phi}({s'})})}_1 \norm{\mathcal{L}_Y}_\diamond \norm{O_X}, \nonumber\\
&\qquad \numleq{2} 4\big(\norm{\mu_{[s', t']}}_\infty + \norm{\zeta + \alpha_0\mu_{[s_0, s]}}_\infty \big) \norm{O_X} \norm{\mathcal{L}_Y}_\diamond \numleq{3} 4\big((1 + \alpha_0) \text{TV}(\U) + \norm{\zeta}_\infty \big) \norm{O_X} \norm{\mathcal{L}_Y}_\diamond,
\end{align}
where in (1) we have used the Wick's theorem, in (2) we have used lemma \ref{lemma:closure_left_wick} together with $\vecbra{O_X, I_E}\mathcal{U}(t, s') \in \mathcal{Q}_{\kappa_{[s', t]}}(\rho_E)$ to upper bound $\norm{\overleftarrow{\text{W}}(\vecbra{O_X, I_E} \mathcal{U}_\mathcal{B}(t', s'), B^{\nu}_{\alpha, s', \sigma})}$ and lemma \ref{lemma:closure_right_wick}c to upper bound $\norm{\overrightarrow{\text{W}}(B_{\alpha, s', \sigma}^\nu, \vecket{\tilde{\phi}({s'})})}_1$. In (3), we have used the triangle inequality for $\norm{\cdot}_\infty$ and the fact that the function $\mu_{[a, b]}$, for any $a < b$, is upper bounded by $\text{TV}(\U)$.
From Eqs.~\ref{eq:bound_1_lemma_cont} and \ref{eq:bound_2_lemma_cont}, we then obtain that
\[
\gamma^{X, Y}_{\zeta + \alpha_0\mu_{[s_0, s - \varepsilon]}, _\mathcal{B}}(t, s - \varepsilon) \leq \gamma^{X, Y}_{\zeta + \alpha_0\mu_{[s_0, s]}, _\mathcal{B}}(t, s ) + 16 (1 + \alpha_0)  \abs{\mathcal{A}} \varepsilon,
\]
which proves the lemma statement.
\end{proof}

\subsection{Deriving the Lieb-Robinson bound (Proof of proposition 1)}
Next, we want to obtain a recursive inequality for $\gamma^{X, Y}_{\zeta, \mathcal{B}}(t, t')$. To obtain such an inequality, motivated by the original proof of Lieb-Robinson bounds for bounded local Hamiltonians, we will obtain an upper bound on $\gamma^{X, Y}_{\zeta, \mathcal{B}}(t, t' - \varepsilon)$ in as a sum of $\gamma^{X, Y}_{\zeta', \mathcal{B}}(t, t')$, for some $\zeta'$ that is $O(\varepsilon)$ close to $\zeta$, and an additional $O(\varepsilon)$ correction. There will be two main steps involved in obtain this upper bound:

We begin with the following two lemmas.
\begin{lemma}\label{lemma:taylor_expansion_bound}
Suppose $\vecket{\phi} \in \mathcal{S}_{\zeta}(\rho_E)$, $O_X$ is an observable supported in region $X$ and $\mathcal{L}_Y$ is a system superoperator supported on the region $Y$, then
    \[
    \abs{\vecbra{O_X, I_E} \mathcal{U}_\mathcal{B}(t, t' - \varepsilon)\mathcal{L}_Y \vecket{\phi} - \vecbra{O_X, I_E} \mathcal{U}_\mathcal{B}(t, t')\bigg(I - i\int_{t' - \varepsilon}^{t'}\mathcal{H}_Y(s)ds\bigg)\mathcal{L}_Y \vecket{\tilde{\phi}}} \leq C_0\norm{\mathcal{L}_Y}_\diamond \norm{O_X} \varepsilon^2,
    \]
    for some $\vecket{\tilde{\phi}} \in \mathcal{S}_{\zeta + 2\mu_{[t' - \varepsilon, t']}}(\rho_E)$ and for $C_0$ that depends only on the support of the region $Y$, $\norm{\zeta}_\infty$, $\textnormal{TV}(\U)$ and $\norm{\U}_\infty$.
\end{lemma}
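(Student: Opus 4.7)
The plan is to perform a second-order Dyson expansion of $\mathcal{U}_\mathcal{B}(t, t' - \varepsilon)$ around the reference time $t'$, exploiting the commutation relation $[\mathcal{H}_{\mathcal{B}\setminus \mathcal{A}_Y}(s), \mathcal{L}_Y] = 0$. This relation holds for two reasons: the bath operators $B^{\nu}_{\alpha, s, \sigma}$ trivially commute with any system superoperator $\mathcal{L}_Y$, and for $\alpha \in \mathcal{B}\setminus \mathcal{A}_Y$ the system operators $h_\alpha(s)$, $R^\nu_\alpha(s)$ are supported on $S_\alpha$ which is disjoint from $Y$. This is the same cancellation that powers the finite-dimensional Lieb–Robinson proof (Step 2 in the review), and it lets the first-order Dyson correction be effectively restricted to $\mathcal{H}_{\mathcal{A}_Y \cap \mathcal{B}}$.

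Concretely, I would write $\mathcal{U}_\mathcal{B}(t, t' - \varepsilon) = \mathcal{U}_\mathcal{B}(t, t')\,\mathcal{U}_\mathcal{B}(t', t' - \varepsilon)$ and insert the identity $\mathcal{U}_{\mathcal{A}_Y \cap \mathcal{B}}(t' - \varepsilon, t')\,\mathcal{U}_{\mathcal{A}_Y \cap \mathcal{B}}(t', t' - \varepsilon) = I$ just to the right of $\mathcal{L}_Y$. Setting
\[
\vecket{\tilde{\phi}} := \mathcal{U}_{\mathcal{A}_Y\cap \mathcal{B}}(t' - \varepsilon, t')\,\mathcal{U}_\mathcal{B}(t', t' - \varepsilon)\vecket{\phi},
\]
two applications of lemma \ref{lemma:closure_right_wick}(a) place $\tilde{\phi} \in \mathcal{S}_{\zeta + 2\mu_{[t' - \varepsilon, t']}}(\rho_E)$. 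A first-order Dyson expansion gives $\mathcal{U}_{\mathcal{A}_Y\cap \mathcal{B}}(t' - \varepsilon, t')\mathcal{U}_\mathcal{B}(t', t' - \varepsilon) = I - i\int_{t'-\varepsilon}^{t'}\mathcal{H}_{\mathcal{B}\setminus \mathcal{A}_Y}(s)\,ds + R_1$; since the leading correction commutes with $\mathcal{L}_Y$, I can move $\mathcal{L}_Y$ past it modulo $R_1$. Then a second Dyson expansion of the remaining outer factor $\mathcal{U}_{\mathcal{A}_Y\cap \mathcal{B}}(t', t' - \varepsilon) = I - i\int_{t'-\varepsilon}^{t'}\mathcal{H}_Y(s)\,ds + R_2$ yields the target expression with total remainder $R_1 + R_2$ to control.

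The central technical hurdle is bounding the second-order remainders $R_1, R_2$: each is a double integral over $[t' - \varepsilon, t']$ of an operator involving a product of two \emph{unbounded} bath quadratures $B^{\nu_1}_{\alpha_1, s_1, \sigma_1} B^{\nu_2}_{\alpha_2, s_2, \sigma_2}$, so naive operator-norm bounds fail. Instead, I insert these remainders inside $\vecbra{O_X, I_E}\mathcal{U}_\mathcal{B}(t, t')\,[\,\cdot\,]\,\vecket{\tilde{\phi}}$ and apply Wick's theorem (lemma \ref{lemma:wick_thm_gaussian}). The two bath insertions fall into three contraction patterns: (i) they contract with each other, giving a kernel $\K^{\nu_1, \nu_2}_{\alpha_1, \sigma_2}(s_1 - s_2)$ bounded pointwise by $\U(s_1 - s_2) \le \norm{\U}_\infty$; (ii) one or both contract with bath operators inside $\vecbra{O_X, I_E}\mathcal{U}_\mathcal{B}(t, t') \in \norm{O_X}\mathcal{Q}_{\mu_{[t', t]}}(\rho_E)$, handled by $\wickleft$ and lemma \ref{lemma:closure_left_wick}; or (iii) they contract with bath operators embedded in $\vecket{\tilde{\phi}}$, handled by $\wickright$ and lemma \ref{lemma:closure_right_wick}(c). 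Each Wick contraction picks up a factor $O(\norm{\zeta}_\infty + \textnormal{TV}(\U))$, the remaining $s_1, s_2$ integrals span intervals of length $\varepsilon$, and the uncontracted kernel in case (i) is bounded uniformly by $\norm{\U}_\infty$; so every remainder term is $O(\varepsilon^2)$ with a prefactor depending only on $\abs{\mathcal{A}_Y}$, $\norm{\zeta}_\infty$, $\textnormal{TV}(\U)$, and $\norm{\U}_\infty$, as claimed.

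The hardest sub-case is the self-contraction (i), because the two bath operators may be separated by $|s_1 - s_2| \to 0$, where the mollified kernel can spike: here the pointwise bound $\abs{\K(s_1 - s_2)} \le \norm{\U}_\infty$ is essential, and the $\varepsilon^2$ prefactor (from $\int_{t'-\varepsilon}^{t'}\int_{t'-\varepsilon}^{t'} ds_1\,ds_2 = \varepsilon^2$) is what ultimately saves us. The explicit dependence of $C_0$ on $\norm{\U}_\infty$ in the lemma statement exactly reflects this contribution; the constants remain $\delta$-dependent here, but this is acceptable because the $\varepsilon^2$ smallness is still strong enough to drive the subsequent Lieb–Robinson recursion once $\varepsilon$ is sent to zero before $\delta$.
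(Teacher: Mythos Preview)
Your proposal is correct and follows essentially the same route as the paper: it too defines $\vecket{\tilde{\phi}} = \mathcal{U}_Y(t'-\varepsilon, t')\,\mathcal{U}_\mathcal{B}(t', t'-\varepsilon)\vecket{\phi}$, splits the error into a piece $\Delta^{(1)}$ comparing the conjugations $\mathcal{U}_\mathcal{B}\mathcal{L}_Y\mathcal{U}_\mathcal{B}^{-1}$ and $\mathcal{U}_Y\mathcal{L}_Y\mathcal{U}_Y^{-1}$ (algebraically identical to your $R_1$/commutation step) plus a piece $\Delta^{(2)}$ from the second-order Dyson remainder of $\mathcal{U}_Y$ (your $R_2$), and bounds both second-order remainders via Wick's theorem through exactly the three contraction patterns (self-contraction controlled by $\norm{\U}_\infty$, left-Wick into $\vecbra{O_X,I_E}\mathcal{U}_\mathcal{B}(t,t')$, right-Wick into $\vecket{\tilde\phi}$) that you identify. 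One minor wording slip: the resolution of identity must sit to the \emph{left} of $\mathcal{U}_\mathcal{B}(t', t'-\varepsilon)$ (so that $V = \mathcal{U}_{\mathcal{A}_Y\cap\mathcal{B}}(t'-\varepsilon,t')\mathcal{U}_\mathcal{B}(t',t'-\varepsilon)$ ends up adjacent to $\mathcal{L}_Y$), not ``to the right of $\mathcal{L}_Y$'', but your subsequent Dyson expansion and definition of $\vecket{\tilde\phi}$ are already consistent with that correct placement.
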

\begin{proof}
Define $\vecket{\tilde{\phi}} = \mathcal{U}_Y(t' - \varepsilon, t')\mathcal{U}_\mathcal{B}(t', t' - \varepsilon) \vecket{\phi}$ and note that $\vecket{\tilde{\phi}} \in \mathcal{S}_{\zeta + 2\mu_{[t' - \varepsilon, t']}}(\rho_E)$. We will assume that $\norm{\mathcal{L}_Y}_\diamond, \norm{O_X} = 1$ without loss of generality since the quantity that we want to bound is linear in $\mathcal{L}_Y, O_X$. Using the triangle inequality, we then obtain that the error of interest has two contributions:
\begin{subequations}\label{eq:lemma_5_equation}
\begin{align}
    &\abs{\vecbra{O_X, I_E} \mathcal{U}_\mathcal{B}(t, t' - \varepsilon)\mathcal{L}_Y \vecket{\phi} - \vecbra{O_X, I_E} \mathcal{U}_\mathcal{B}(t, t')\bigg(I - i\int_{t' - \varepsilon}^{t'}\mathcal{H}_Y(s)ds\bigg)\mathcal{L}_Y \vecket{\tilde{\phi}}} \leq \Delta^{(1)}(t, t') + \Delta^{(2)}(t, t')
\end{align}
where
\begin{align}\label{eq:delta_12_defs}
    &\Delta^{(1)}(t, t') = \abs{\vecbra{O_X, I_E} \mathcal{U}_\mathcal{B}(t, t')\big(\mathcal{U}_\mathcal{B}(t', t'-\varepsilon) \mathcal{L}_Y \mathcal{U}_\mathcal{B}(t' - \varepsilon, t')-\mathcal{U}_Y(t', t'-\varepsilon) \mathcal{L}_Y \mathcal{U}_Y(t' - \varepsilon, t') \big)\mathcal{U}_\mathcal{B}(t', t'-\varepsilon) \vecket{\phi}},  \\
    &\Delta^{(2)}(t, t') = \abs{\vecbra{O_X, I_E} \mathcal{U}_\mathcal{B}(t, t')\bigg(\mathcal{U}_Y(t', t'-\varepsilon) \mathcal{L}_Y - \bigg(I -  i\int_{t' - \varepsilon}^{t'}\mathcal{H}_Y(s) ds\bigg)\mathcal{L}_Y  \bigg) \vecket{\tilde{\phi}}}.
\end{align}
\end{subequations}
We will analyze both of these errors separately and provide $O(\varepsilon^2)$ upper bounds on them.

\underline{Analyzing $\Delta^{(1)}(t, t')$}. This error can be analyzed by performing a first-order Dyson series expansion of $\mathcal{U}_\mathcal{B}(t', t'-\varepsilon)\mathcal{L}_Y \mathcal{U}_\mathcal{B}(t' -\varepsilon, t')$ and $\mathcal{U}_Y(t', t'-\varepsilon)\mathcal{L}_Y \mathcal{U}_Y(t'-\varepsilon, t')$:
    \begin{align}\label{eq:dyson_expansion}
        &\mathcal{U}_\mathcal{B}(t', t'-\varepsilon)\mathcal{L}_Y \mathcal{U}_\mathcal{B}(t'-\varepsilon, t') = \mathcal{L}_Y -i\int_{t' - \varepsilon}^{t'}[\mathcal{H}(s'), \mathcal{L}_Y] ds'  + \mathcal{R}^{(2)}(t', t'-\varepsilon),\nonumber \\
        &\mathcal{U}_Y(t', t'-\varepsilon)\mathcal{L}_Y \mathcal{U}_Y(t'-\varepsilon, t') = \mathcal{L}_Y -i\int_{t' - \varepsilon}^{t'}[\mathcal{H}_Y(s'), \mathcal{L}_Y] ds' + \mathcal{R}_Y^{(2)}(t', t'-\varepsilon),
    \end{align}   
where $\mathcal{R}^{(2)}(t', t'-\varepsilon), \mathcal{R}^{(2)}_Y(t', t'-\varepsilon)$ are the respective remainders that are given by
    \begin{align*}
        &\mathcal{R}^{(2)}(t', t'-\varepsilon) = (-i)^2\int_{t' - \varepsilon}^{t'}\int_{s_1'}^{t'}\mathcal{U}_\mathcal{B}(t', s_2')[\mathcal{H}(s_2'), [\mathcal{H}(s_1'), \mathcal{L}_Y]] \mathcal{U}_\mathcal{B}(s_2', t')ds_2' ds_1', \nonumber \\
        &\mathcal{R}_Y^{(2)}(t', t'-\varepsilon) = (-i)^2\int_{t' - \varepsilon}^{t'}\int_{s_1'}^{t'}\mathcal{U}_Y(t', s_2')[\mathcal{H}_Y(s_2'), [\mathcal{H}_Y(s_1'), \mathcal{L}_Y]] \mathcal{U}_Y(s_2', t') ds_2' ds_1'. \nonumber 
    \end{align*}
We remind the reader that
\begin{align}\label{eq:redef_H_Y}
\mathcal{H}_Y(s) = \mathcal{H}_Y^S(s) + \mathcal{H}_Y^{SE}(s) \text{ where }\mathcal{H}_Y^S(s) = \Theta_{\mathcal{A}_Y}(\alpha) \mathcal{C}_{h_\alpha(s)},  \mathcal{H}_Y^{SE}(s)= (-1)^\sigma B_{\alpha, s, \sigma}^\nu R_{\alpha, \sigma}^\nu(s),
\end{align}
where $\mathcal{A}_Y = \{\alpha : S_\alpha \cap Y \neq \phi\}$. 
    Consequently, $[\mathcal{H}_Y(s), \mathcal{L}_Y] = [\mathcal{H}(s), \mathcal{L}_Y]$. Therefore, $\Delta^{(1)}(t, t')$ [Eq.~\ref{eq:lemma_5_equation}b] has contributions only from the second order remainder in the Dyson expansion in Eq.~\ref{eq:dyson_expansion} i.e.
    \begin{align}\label{eq:delta_1_to_rem}
    \Delta^{(1)}(t, t') \leq \abs{\vecbra{O_X, I_E} \mathcal{U}_\mathcal{B}(t, t') \mathcal{R}_Y^{(2)}(t, t') \mathcal{U}_\mathcal{B}(t', t' - \varepsilon) \vecket{\phi}} +  \abs{\vecbra{O_X, I_E} \mathcal{U}_\mathcal{B}(t, t') \mathcal{R}^{(2)}(t, t') \mathcal{U}_\mathcal{B}(t', t' - \varepsilon) \vecket{\phi}}.
    \end{align}
    Consider first bounding $\abs{\vecbra{O_X, I_E} \mathcal{U}_\mathcal{B}(t, t') \mathcal{R}_Y^{(2)}(t, t') \mathcal{U}_\mathcal{B}(t', t' - \varepsilon) \vecket{\phi}}$ --- note that
    \[
    \abs{\vecbra{O_X, I_E} \mathcal{U}_\mathcal{B}(t, t') \mathcal{R}_Y^{(2)}(t, t') \mathcal{U}_\mathcal{B}(t', t' - \varepsilon) \vecket{\phi}} = \abs{\int_{t' - \varepsilon}^{t'}\int_{s_1'}^{t'} \vecbra{\hat{\theta}(s_2')}[\mathcal{H}_Y(s_2'), [\mathcal{H}_Y(s_1'), \mathcal{L}_Y]] \vecket{\hat{\phi}(s_2')}ds_2' ds_1'},
    \]
    where, for $s_2' \in [t'-\varepsilon, t']$, $\vecbra{\hat{\theta}(s_2')} = \vecbra{O_X, I_E} \mathcal{U}_\mathcal{B}(t, t') \mathcal{U}_Y(t', s_2') \in \mathcal{Q}_{\mu_{[t' - \varepsilon, t]}}(\rho_E)$ and $\vecket{\hat{\phi}(s_2')} = \mathcal{U}_Y(s_2', t') \mathcal{U}_\mathcal{B}(t', t'-\varepsilon)\vecket{\phi} \in \mathcal{S}_{\zeta + 2\mu_{[t' - \varepsilon, t']}}(\rho_E)$. Now, note that
    \begin{align*}
        &\vecbra{\hat{\theta}(s_2')} [\mathcal{H}_Y(s_2'), [\mathcal{H}_Y(s_1'), \mathcal{L}_Y]\vecket{\hat{\phi}(s_2')} = \vecbra{\hat{\theta}(s_2')} [\mathcal{H}_{Y}^S(s_2'), [\mathcal{H}_{Y}^S(s_1'), \mathcal{L}_Y]\vecket{\hat{\phi(s_2')}} + \vecbra{\hat{\theta}(s_2')} [\mathcal{H}_{Y}^S(s_2'), [\mathcal{H}_{Y}^{SE}(s_1'), \mathcal{L}_Y]\vecket{\hat{\phi}(s_2')} + \nonumber \\
        &\qquad \qquad \qquad \vecbra{\hat{\theta}(s_2')} [\mathcal{H}_{Y}^{SE}(s_2'), [\mathcal{H}_{Y}^{S}(s_1'), \mathcal{L}_Y]\vecket{\hat{\phi}(s_2')} + \vecbra{\hat{\theta(s_2')}} [\mathcal{H}_{Y}^{SE}(s_2'), [\mathcal{H}_{Y}^{SE}(s_1'), \mathcal{L}_Y]\vecket{\hat{\phi}(s_2')}.
    \end{align*}
    We can bound these terms one by one. For calculating these bounds, it will be convenient to note that, from lemmas \ref{lemma:closure_right_wick}c and \ref{lemma:closure_left_wick}, it follows
    \begin{align}\label{eq:bounds_wick_contr}
        &\smallnorm{\wickleft\big(\vecbra{\hat{\theta}(s_2')}, \{B^{\nu_j}_{\alpha_j, s_j', \sigma_j}\}_{j \in \{1, 2\}}\big)} \leq 4^2 \smallnorm{\mu_{[t', t]}}_\infty^2 \leq 16 \big(\text{TV}(\U)\big)^2, \nonumber \\
        &\smallnorm{\wickright\big(\{B^{\nu_j}_{\alpha_j, s_j', \sigma_j}\}_{j \in \{1, 2\}}, \vecket{\hat{\phi}(s_2')} \big)}_1 \leq 4^2 \smallnorm{2\mu_{[t' - \varepsilon, t']} + \zeta}_\infty^2 \leq 16\big(2\text{TV}(\U) + \norm{\zeta}_\infty\big)^2, \nonumber\\
        &\smallnorm{\wickleft\big(\vecbra{\hat{\theta}(s_2')}, B^{\nu_j}_{\alpha_j, s_j', \sigma_j}\big)} \leq 4\smallnorm{\mu_{[t', t]}}_\infty \leq 4 \big(\text{TV}(\U)\big), \nonumber \\
        &\smallnorm{\wickleft\big( B^{\nu_j}_{\alpha_j, s_j', \sigma_j}, \vecket{\hat{\phi}(s_2')}\big)}_1 \leq 4 \smallnorm{2\mu_{[t' - \varepsilon, t']} + \zeta}_\infty \leq 4\big(2\text{TV}(\U) + \norm{\zeta}_\infty\big).
    \end{align}
    Note that
    \begin{align}\label{eq:delta_1_S_S_bound}
        \abs{\vecbra{\hat{\theta}(s_2')} [\mathcal{H}_Y^S(s_2'), [\mathcal{H}_Y^S(s_1'), \mathcal{L}_Y]]\vecket{\hat{\phi}(s_2')}} &\leq \Theta_{\mathcal{A}_Y}(\alpha_1) \Theta_{\mathcal{A}_Y}(\alpha_1) \abs{\vecbra{\theta(s_2')} [\mathcal{C}_{h_{\alpha_2}(s_2')}, [\mathcal{C}_{h_{\alpha_1}(s_1')},  \mathcal{L}_Y]]\vecket{\hat{\phi}(s_2')}}\leq 2^4\norm{\mathcal{A}_Y}^2.
    \end{align}
    Next, 
    \begin{align}\label{eq:delta_1_S_SE_bound}
        &\abs{\vecbra{\hat{\theta}(s_2')}[\mathcal{H}^S_Y(s_2'), [\mathcal{H}^{SE}_Y(s_1'), \mathcal{L}_Y]]\vecket{\hat{\phi}(s_2')}} \leq \Theta_{\mathcal{A}_Y}(\alpha_1)\Theta_{\mathcal{A}_Y}(\alpha_2)\abs{\vecbra{\hat{\theta}(s_2')}[\mathcal{C}_{h_{\alpha_2}(s_2')}, [B^{\nu}_{\alpha_1, s_1', \sigma}R^\nu_{\alpha_1, \sigma}(s_1'), \mathcal{L}_Y]]\vecket{\hat{\phi}(s_2')}}, \nonumber \\
        &\qquad \qquad \numleq{1} \Theta_{\mathcal{A}_Y}(\alpha_1)\Theta_{\mathcal{A}_Y}(\alpha_2) \abs{\wickleft(\vecbra{\hat{\theta}(s_2')}, B^\nu_{\alpha_1, s_1', \sigma})[\mathcal{C}_{h_{\alpha_2}}(s_2'), [R^\nu_{\alpha_1, \sigma}(s_1'), \mathcal{L}_Y]]\vecket{\hat{\phi}(s_2')}} + \nonumber \\
        &\qquad \qquad \qquad \qquad\Theta_{\mathcal{A}_Y}(\alpha_1)\Theta_{\mathcal{A}_Y}(\alpha_2)\abs{\vecbra{\hat{\theta}(s_2')}[\mathcal{C}_{h_{\alpha_2}(s_2')}, [R^\nu_{\alpha_1, \sigma}(s_1'), \mathcal{L}_Y]]\wickright(B^{\nu}_{\alpha_2, \sigma}, \vecket{\hat{\phi}(s_2')})}, \nonumber \\
        &\qquad \qquad \numleq{2} 2^5\abs{\mathcal{A}_Y}^2 \big(3 \text{TV}(\U) +  \norm{\zeta}_\infty\big),
    \end{align}
    where, in (1), we have used the Wick's theorem and in (2) we have used the bounds in Eq.~\ref{eq:bounds_wick_contr}. A similar analysis yields the same bound for $\vecbra{\hat{\theta}(s_2')} [\mathcal{H}_{Y}^{SE}(s_2'), [\mathcal{H}_{Y}^{S}(s_1'), \mathcal{L}_Y]\vecket{\hat{\phi}(s_2')}$ ---
    \begin{align}\label{eq:delta_1_SE_S_bound}
        \vecbra{\hat{\theta}(s_2')} [\mathcal{H}_{Y}^{SE}(s_2'), [\mathcal{H}_{Y}^{S}(s_1'), \mathcal{L}_Y]\vecket{\hat{\phi}(s_2')} \leq 2^5\abs{\mathcal{A}_Y}^2 \big(3 \text{TV}(\U) +  \norm{\zeta}_\infty\big).
    \end{align}
    Finally, consider  $\vecbra{\hat{\theta}(s_2')} [\mathcal{H}_{Y}^{SE}(s_2'), [\mathcal{H}_{Y}^{SE}(s_1'), \mathcal{L}_Y]\vecket{\hat{\phi}(s_2')}$. We note that
    \begin{align}\label{eq:explicit_expansion_second_order}
        &\vecbra{\hat{\theta}(s_2')}[\mathcal{H}_Y^{SE}(s_2'), [\mathcal{H}_Y^{SE}(s_1'), \mathcal{L}_Y]] \vecket{\hat{\phi}(s_2')}   \nonumber\\
        &= \Theta_{\mathcal{A}_Y}(\alpha_1)\Theta_{\mathcal{A}_Y}(\alpha_2) (-1)^{\sigma_1 + \sigma_2} \bigg(\vecbra{\hat{\theta}(s_2')}B_{\alpha_2, s_2', \sigma_2}^{\nu_2} B_{\alpha_1, s_1', \sigma_1}^{\nu_1} R_{\alpha_2, \sigma_2}^{\nu_2}(s_2')[R^{\nu_1}_{\alpha_1, \sigma_1}(s_1'), \mathcal{L}_Y] \vecket{\hat{\phi}(s_2')} -\nonumber\\
        &\qquad \qquad \qquad \qquad \qquad \qquad  \qquad \qquad \qquad \qquad \vecbra{\hat{\theta}(s_2')} B_{\alpha_1, s_1', \sigma_1}^{\nu_1} B_{\alpha_2, s_2', \sigma_2}^{\nu_2} [R^{\nu_1}_{\alpha_1, \sigma_1}(s_1'), \mathcal{L}_Y]R_{\alpha_2, \sigma_2}^{\nu_2}(s_2') \vecket{\hat{\phi}(s_2')}\bigg), \nonumber \\
        &\numeq{1}\Theta_{\mathcal{A}_Y}(\alpha_1)\Theta_{\mathcal{A}_Y}(\alpha_2) (-1)^{\sigma_1 + \sigma_2}\bigg(\wickleft\big(\vecbra{\hat{\theta}(s_2')}, \{B^{\nu_j}_{\alpha_j, s_j', \sigma_j}\}_{j \in \{1, 2\}}\big)[R^{\nu_2}_{\alpha_2, \sigma_2}(s_2'),[R^{\nu_1}_{\alpha_1, \sigma_1}(s_1'), \mathcal{L}_Y]] \vecket{\hat{\phi}(s_2')} + \nonumber\\
        &\qquad \qquad \qquad \qquad \qquad \qquad  \vecbra{\hat{\theta}(s_2')}[R^{\nu_2}_{\alpha_2, \sigma_2}(s_2'),[R^{\nu_1}_{\alpha_1, \sigma_1}(s_1'), \mathcal{L}_Y]]\wickright\big(\{B^{\nu_j}_{\alpha_j, s_j', \sigma_j}\}_{j \in \{1, 2\}}, \vecket{\hat{\phi}(s_2')}\big) +\nonumber\\
        &\qquad \qquad \qquad \qquad \qquad \qquad   \wickleft\big(\vecbra{\hat{\theta}(s_2')}, B^{\nu_2}_{\alpha_2, s_2', \sigma_2}\big)[R^{\nu_2}_{\alpha_2, \sigma_2}(s_2'), [R^{\nu_1}_{\alpha_1, \sigma_1}(s_1'), \mathcal{L}_Y]]\wickright\big(B^{\nu_1}_{\alpha_1, s_1', \sigma_1}, \vecket{\hat{\phi}(s_2')}\big) + \nonumber \\ 
        &\qquad \qquad \qquad \qquad \qquad \qquad \wickleft\big(\vecbra{\hat{\theta}(s_2')}, B^{\nu_1}_{\alpha_1, s_1', \sigma_1}\big)[R^{\nu_2}_{\alpha_2, \sigma_2}(s_2'), [R^{\nu_1}_{\alpha_1, \sigma_1}(s_1'), \mathcal{L}_Y]]\wickright\big(B^{\nu_2}_{\alpha_2, s_2', \sigma_2}, \vecket{\hat{\phi}(s_2')}\big)\bigg) +\nonumber \\
        &\quad \ \Theta_{\mathcal{A}_Y}(\alpha) \bigg((-1)^{\sigma_1}\K^{\nu_2, \nu_1}_{\alpha, \sigma_1}(s_2' - s_1') \vecbra{\hat{\theta}(s_2')} \mathcal{C}_{R^{\nu_2}_{\alpha}(s_2')}[R^{\nu_1}_{\alpha, \sigma_1}(s_1'), \mathcal{L}_Y] \vecket{\hat{\phi}(s_2')} + \nonumber\\
        &\qquad \qquad \qquad \qquad \qquad \qquad \qquad \qquad (-1)^{\sigma_2}\K^{\nu_1, \nu_2}_{\alpha, \sigma_2}(s_1' - s_2')\vecbra{\hat{\theta}(s_2')}[\mathcal{C}_{R^{\nu_1}_\alpha(s_1')}, \mathcal{L}_Y]R^{\nu_2}_{\alpha, \sigma_2}(s_2') \vecket{\hat{\phi}(s_2')}\bigg),
    \end{align}
    where (1) is obtained by using the Wick's theorem and writing out all possible wick contractions of the bath operators $\{B^{\nu_j}_{\alpha_j, s_j', \sigma_j}\}_{j \in \{1, 2\}}$, with the contractions being with either $\vecket{\hat{\phi}(s_2')}, \vecbra{\hat{\theta}(s_2')}$ or between themselves (i.e.~the last line of Eq.~\ref{eq:explicit_expansion_second_order}). 
    Next, using the bounds in Eq.~\ref{eq:bounds_wick_contr} together with the fact that $\abs{\K^{\nu_2, \nu_1}_{\alpha, \sigma_1}(s_2' - s_1')},\abs{\K^{\nu_1, \nu_2}_{\alpha, \sigma_2}(s_1' - s_2')} \leq \norm{\U}_\infty$ (lemma \ref{lemma:mollified_kernel}b), we obtain that
    \begin{align}\label{eq:delta_1_SE_SE_bound}
        \abs{\vecbra{\hat{\theta}(s_2')} [\mathcal{H}^{SE}_Y(s_2'), [\mathcal{H}^{SE}_Y(s_1'), \mathcal{L}_Y]]} \leq  \big(2^{10}\abs{\mathcal{A}_Y}^2 \big(3\text{TV}(\U) + \norm{\zeta}_\infty\big)^2 + 2^6 \abs{\mathcal{A}_Y}\norm{\U}_\infty\big).
    \end{align}
    Combining the bounds in Eqs.~\ref{eq:delta_1_S_S_bound}, \ref{eq:delta_1_S_SE_bound}, \ref{eq:delta_1_SE_S_bound} and \ref{eq:delta_1_SE_SE_bound}, we obtain that
    \begin{subequations}\label{eq:remainder_Y_bound}
    \begin{align}
        &\abs{\vecbra{O_X, I_E} \mathcal{U}_\mathcal{B}(t, t') \mathcal{R}_Y^{(2)}(t, t') \mathcal{U}_\mathcal{B}(t', t' - \varepsilon) \vecket{\phi}} \leq \nonumber\\
        &\qquad \qquad \qquad \varepsilon^2 \big(\abs{\mathcal{A}_Y}^2 \big(2^3 + 2^5(3\text{TV}(\U) + \norm{\zeta}_\infty) + 2^{9}\big(3\text{TV}(\U) + \norm{\zeta}_\infty\big)^2 \big)+ 2^5 \abs{\mathcal{A}_Y}\norm{\K}_\infty\big).
    \end{align}
    A similar analysis can be carried out to bound $\abs{\mathcal{R}^{(2)}(t', t'-\varepsilon)}$ and obtain
    \begin{align}
         &\abs{\vecbra{O_X, I_E} \mathcal{U}_\mathcal{B}(t, t') \mathcal{R}_Y^{(2)}(t, t') \mathcal{U}_\mathcal{B}(t', t' - \varepsilon) \vecket{\phi}} \leq \nonumber\\
        &\qquad \qquad \qquad\varepsilon^2 \big(\abs{\mathcal{A}_Y}\abs{\mathcal{A}'_Y} \big(2^3 + 2^5(3\text{TV}(\K) + \norm{\zeta}_\infty) + 2^{9}\big(3\text{TV}(\K) + \norm{\zeta}_\infty\big)^2 \big)+ 2^5 \abs{\mathcal{A}_Y}\norm{\U}_\infty\big),
    \end{align}
    \end{subequations}
    where $\mathcal{A}'_Y = \{\alpha : S_\alpha \cap S_{\alpha'} \neq \phi \text{ for some } \alpha \in \mathcal{A}_Y\}$. From Eq.~\ref{eq:delta_1_to_rem}, we then obtain that
    \[
    \Delta^{(1)}(t, t') \leq  \varepsilon^2  \big(\abs{\mathcal{A}_Y}(\abs{\mathcal{A}_Y} +\abs{\mathcal{A}'_Y}) \big(2^3 + 2^5(3\text{TV}(\U) + \norm{\zeta}_\infty) + 2^{9}\big(3\text{TV}(\U) + \norm{\zeta}_\infty\big)^2 \big)+ 2^6 \abs{\mathcal{A}_Y}\norm{\U}_\infty\big).
    \]

    \underline{Analyzing $\Delta^{(2)}(t, t')$}. Next, we bound $\Delta^{(2)}(t, t')$ --- starting from Eq.~\ref{eq:delta_12_defs} can be expressed as
    \[
    \Delta^{(2)}(t, t') = \abs{\int_{t' - \varepsilon}^{t'} \int_{s_1'}^{t'} \vecbra{\hat{\theta}(s_2')}\mathcal{H}_Y(s_2') \mathcal{H}_Y(s_1')\mathcal{L}_Y\vecket{\tilde{\phi}} ds_2' ds_1'},
    \]
    where as defined above, $\vecbra{\hat{\theta}(s_2')} = \vecbra{O_X, I_E} \mathcal{U}_\mathcal{B}(t, t') \mathcal{U}_Y(t', s_2') \in \mathcal{Q}_{\kappa_{[s_2', t]}}(\rho_E) \subseteq \mathcal{Q}_{\kappa_{[t' - \varepsilon, t]}}(\rho_E)$. Using the explicit expression for $\mathcal{H}_Y(s)$, we obtain that
    \begin{align*}
&\vecbra{\hat{\theta}(s_2')}\mathcal{H}_Y(s_2') \mathcal{H}_Y(s_1')\mathcal{L}_Y\vecket{\tilde{\phi}} = \vecbra{\hat{\theta}(s_2')}\mathcal{H}_Y^S(s_2') \mathcal{H}_Y^S(s_1')\mathcal{L}_Y\vecket{\tilde{\phi}} + \vecbra{\hat{\theta}(s_2')}\mathcal{H}_Y^S(s_2') \mathcal{H}_Y^{SE}(s_1')\mathcal{L}_Y\vecket{\tilde{\phi}} +\nonumber\\
&\qquad \qquad \qquad  \vecbra{\hat{\theta}(s_2')}\mathcal{H}_Y^{SE}(s_2') \mathcal{H}_Y^S(s_1')\mathcal{L}_Y\vecket{\tilde{\phi}} + \vecbra{\hat{\theta}(s_2')}\mathcal{H}_Y^{SE}(s_2') \mathcal{H}_Y^{SE}(s_1')\mathcal{L}_Y\vecket{\tilde{\phi}}.
    \end{align*}
In addition to the bounds in Eq.~\ref{eq:bounds_wick_contr}, it will be useful to also note that
    \begin{align}\label{eq:bounds_wick_contr_new}
        &\smallnorm{\wickright\big( B^{\nu_j}_{\alpha_j, s_j', \sigma_j}; \vecket{\tilde{\phi}}\big)}_1 \leq 4 \smallnorm{2\mu_{[t' - \varepsilon, t']} + \zeta}_\infty \leq 4\big(2\text{TV}(\U) + \norm{\zeta}_\infty\big), \nonumber\\
        &\smallnorm{\wickright\big( \{B^{\nu_j}_{\alpha_j, s_j', \sigma_j}\}_{j \in \{1, 2\}}; \vecket{\tilde{\phi}}\big)}_1 \leq 16 \smallnorm{2\mu_{[t' - \varepsilon, t']} + \zeta}_\infty^2 \leq 16\big(2\text{TV}(\U) + \norm{\zeta}_\infty\big)^2.
    \end{align}
Now, first note that we have the bound
    \begin{align}
        \abs{\vecbra{\hat{\theta}(s_2')}\mathcal{H}_Y^S(s_2') \mathcal{H}_Y^S(s_1')\mathcal{L}_Y\vecket{\tilde{\phi}}} \leq \Theta_{\mathcal{A}_Y}(\alpha_1)\Theta_{\mathcal{A}_Y}(\alpha_2) \abs{\vecbra{\hat{\theta}(s_2')} \mathcal{C}_{h_{\alpha_2}(s_2')}\mathcal{C}_{h_{\alpha_1}(s_1')}\mathcal{L}_Y\vecket{\tilde{\phi}}} \leq 4 \abs{\mathcal{A}_Y}^2.
    \end{align}
Next, consider
    \begin{align}
        &\abs{\vecbra{\hat{\theta}(s_2')}\mathcal{H}_Y^{SE}(s_2') \mathcal{H}_Y^S(s_1')\mathcal{L}_Y\vecket{\tilde{\phi}}} \leq \Theta_{\mathcal{A}_Y}(\alpha_1)\Theta_{\mathcal{A}_Y}(\alpha_2)\abs{\vecbra{\hat{\theta}(s_2')} R^{\nu_2}_{\alpha_2, \sigma_2}(s_2')B^{\nu_2}_{\alpha_2, s_2', \sigma_2} \mathcal{C}_{h_{\alpha_1}(s_1)} \mathcal{L}_Y\vecket{\tilde{\phi}}}, \nonumber\\
        & \qquad \qquad  \numleq{1} \Theta_{\mathcal{A}_Y}(\alpha_1)\Theta_{\mathcal{A}_Y}(\alpha_2) \bigg(\abs{\wickleft({\vecbra{\hat{\theta}(s_2')}, B^{\nu_2}_{\alpha_2, s_2', \sigma_2}})  R^{\nu_2}_{\alpha_2, \sigma_2}(s_2') \mathcal{C}_{h_{\alpha_1}(s_1')} \mathcal{L}_Y \vecket{\tilde{\phi}}} + \nonumber\\
        &\qquad \qquad \qquad \qquad \qquad \qquad \abs{\vecbra{\hat{\theta}(s_2')}  R^{\nu_2}_{\alpha_2, \sigma_2}(s_2') \mathcal{C}_{h_{\alpha_1}(s_1')}\mathcal{L}_Y \wickright(B^{\nu_2}_{\alpha_2, s_2', \sigma_2}, \vecket{\tilde{\phi}})}\bigg),\nonumber \\
        &\qquad \qquad \numleq{2} 2^5\abs{\mathcal{A}_Y}^2\big( 3\text{TV}(\U) + \norm{\zeta}_\infty\big).
    \end{align}
    Similarly,
    \begin{align}
        \abs{\vecbra{\hat{\theta}(s_2')}\mathcal{H}_Y^{S}(s_2')\mathcal{H}_Y^{SE}(s_1')\mathcal{L}_Y\vecket{\tilde{\phi}}} \leq 2^5 \abs{\mathcal{A}_Y}^2 \big(3\text{TV}(\U) + \norm{\zeta}_\infty\big).
    \end{align}
    Finally, consider
    \begin{align}
        &\vecbra{\hat{\theta}(s_2')} \mathcal{H}_Y^{SE}(s_2')\mathcal{H}_Y^{SE}(s_1')\mathcal{L}_Y \vecket{\tilde{\phi}} \nonumber\\
        &\qquad \qquad= (-1)^{\sigma_1}(-1)^{\sigma_2}\Theta_{\mathcal{A}_Y}(\alpha_1)\Theta_{\mathcal{A}_Y}(\alpha_2)\vecbra{\hat{\theta}(s_2')}B^{\nu_2}_{\alpha_2, s_2', \sigma_2}B^{\nu_1}_{\alpha_1, s_1', \sigma_1}R^{\nu_2}_{\alpha_2, \sigma_2}(s_2') R^{\nu_1}_{\alpha_1, \sigma_1}(s_1')\mathcal{L}_Y\vecket{\tilde{\phi}}, \nonumber \\
        &\qquad \qquad= (-1)^{\sigma_1}(-1)^{\sigma_2}\Theta_{\mathcal{A}_Y}(\alpha_1) \Theta_{\mathcal{A}_Y}(\alpha_2) \bigg(\wickleft\big(\vecbra{\hat{\theta}(s_2')}, \{B^{\nu_j}_{\alpha_j, s_j', \sigma_j}\}_{j \in \{1, 2\}}\big)R^{\nu_2}_{\alpha_2, \sigma_2}(s_2')R^{\nu_1}_{\alpha_1, \sigma_1}(s_1')\mathcal{L}_Y\vecket{\tilde{\phi}} + \nonumber \\
        &\qquad \qquad \qquad \qquad \qquad \qquad \quad \vecbra{\hat{\theta}(s_2')}R^{\nu_2}_{\alpha_2, \sigma_2}(s_2') R^{\nu_1}_{\alpha_1, \sigma_1}(s_1')\mathcal{L}_Y \wickright\big(\{B^{\nu_j}_{\alpha_j, s_j', \sigma_j}\}_{j \in \{1, 2\}}, \vecket{\tilde{\phi}}\big) + \nonumber \\
        &\qquad \qquad \qquad \qquad \qquad \qquad \quad \wickleft\big(\vecbra{\hat{\theta}(s_2')}, B^{\nu_2}_{\alpha_2, s_2', \sigma_2}\big) R^{\nu_2}_{\alpha_2, \sigma_2}(s_2') R^{\nu_1}_{\alpha_1, \sigma_1}(s_1')\mathcal{L}_Y\wickright\big(B^{\nu_1}_{\alpha_1, s_1', \sigma_1}, \vecket{\tilde{\phi}}\big) + \nonumber \\
         &\qquad \qquad \qquad \qquad \qquad \qquad \quad \wickleft\big(\vecbra{\hat{\theta}(s_2')}, B^{\nu_1}_{\alpha_1, s_1', \sigma_1}\big) R^{\nu_2}_{\alpha_2, \sigma_2}(s_2') R^{\nu_1}_{\alpha_1, \sigma_1}(s_1')\mathcal{L}_Y\wickright\big(B^{\nu_2}_{\alpha_2, s_2', \sigma_2}, \vecket{\tilde{\phi}}\big)\bigg) + \nonumber \\
         &\quad \qquad \qquad (-1)^{\sigma_1}(-1)^{\sigma_2}\Theta_{\mathcal{A}_Y}(\alpha)\K^{\nu_2, \nu_1}_{\alpha, \sigma_1}(s_2' - s_1') \vecbra{\hat{\theta}(s_2')}R^{\nu_2}_{\alpha, \sigma_2}(s_2') R^{\nu_1}_{\alpha, \sigma_1}(s_1') \mathcal{L}_Y\vecket{\tilde{\phi}}.
    \end{align}
    Therefore, using the bounds in Eqs.~\ref{eq:bounds_wick_contr} and \ref{eq:bounds_wick_contr_new} together with the fact that $\abs{\K^{\nu_2, \nu_1}_{\alpha, \sigma_1}(s_2' - s_1')} \leq \norm{\U}_\infty$, we obtain that
    \[
    \abs{\vecbra{\hat{\theta}(s_2')} \mathcal{H}_Y^{SE}(s_2')\mathcal{H}_Y^{SE}(s_1')\mathcal{L}_Y \vecket{\tilde{\phi}}} \leq 4\abs{\mathcal{A}_Y}^2 + 2^8\abs{\mathcal{A}_Y}^2 \big(3\text{TV}(\U) + \norm{\zeta}_\infty\big)^2 + 2^4\abs{\mathcal{A}_Y}\norm{\U}_\infty.
    \]
    Therefore, we obtain that
    \[
    \Delta^{(2)}(t, t') \leq \varepsilon^2\big(2\abs{\mathcal{A}_Y}^2 + 2^7 \abs{\mathcal{A}_Y}^2\big(3\text{TV}(\U) + \norm{\zeta}_\infty\big)^2 + 2^3 \abs{\mathcal{A}_Y}\norm{\U}_\infty\big).
    \]
    Combining the upper bounds for $\Delta^{(1)}(t, t')$ and $\Delta^{(2)}(t, t')$, we obtain the lemma statement.
\end{proof}
\begin{lemma}\label{lemma:small_gamma_recursion}
For sufficiently small $\varepsilon$, 
\begin{align}
    &\gamma_{\zeta, \mathcal{B}}^{X, Y}(t, t'-\varepsilon) - \gamma_{\zeta + 2\mu_{[t' - \varepsilon, t']}, \mathcal{B}}^{X, Y}(t, t') \leq \nonumber\\
    &\qquad \qquad \varepsilon \sum_{\alpha: S_\alpha \cap Y \neq \phi} \bigg( 2(1 + 16 \norm{\zeta}_\infty)\gamma^{X, S_\alpha}_{\zeta + 2\mu_{[t'-\varepsilon, t']}, \mathcal{B}}(t, t') + 16 \int_{ t'}^t \U(s' - t') \gamma^{X, S_\alpha}_{\zeta + 2\mu_{[t' - \varepsilon, s']}, \mathcal{B}}(t, s') ds'\bigg) + C_1 \varepsilon^2,
\end{align}
where $C_1$ depends only on the support of region $Y$, $\norm{\zeta}_\infty, \textnormal{TV}(\U)$ and $\norm{\U}_\infty$.
\end{lemma}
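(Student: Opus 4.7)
The plan is to use Lemma \ref{lemma:taylor_expansion_bound} as the starting point, which already reduces the problem to controlling a single first-order term plus an $O(\varepsilon^2)$ remainder. Fix $\vecket{\phi} \in \mathcal{S}_\zeta(\rho_E)$, $O_X$ and $\mathcal{L}_Y$ with $\norm{O_X}, \norm{\mathcal{L}_Y}_\diamond \leq 1$, and let $\vecket{\tilde\phi} \in \mathcal{S}_{\zeta + 2\mu_{[t'-\varepsilon, t']}}(\rho_E)$ be the state supplied by that lemma. Then applying the triangle inequality to
\[
\vecbra{O_X, I_E}\mathcal{U}_\mathcal{B}(t, t')\Bigl(I - i\int_{t'-\varepsilon}^{t'}\mathcal{H}_Y(s)\,ds\Bigr)\mathcal{L}_Y\vecket{\tilde\phi},
\]
we will bound the ``$I$'' piece directly by $\gamma^{X,Y}_{\zeta+2\mu_{[t'-\varepsilon,t']}, \mathcal{B}}(t,t')$, using only that $\vecket{\tilde\phi}$ lies in the larger space. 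What remains is to bound $|\int_{t'-\varepsilon}^{t'}\vecbra{O_X, I_E}\mathcal{U}_\mathcal{B}(t, t')\mathcal{H}_Y(s)\mathcal{L}_Y\vecket{\tilde\phi}\,ds|$ and absorb everything else into $C_1\varepsilon^2$.

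Next I would split $\mathcal{H}_Y(s) = \mathcal{H}_Y^S(s) + \mathcal{H}_Y^{SE}(s)$ as in Eq.~\eqref{eq:redef_H_Y}. The bounded system piece is immediate: writing $\mathcal{H}_Y^S(s) = \sum_{\alpha\in\mathcal{A}_Y}\mathcal{C}_{h_\alpha(s)}$ and using the definition of $\gamma^{X, S_\alpha}$ together with $\norm{\mathcal{C}_{h_\alpha(s)}}_\diamond \leq 2$ gives $2\varepsilon\sum_{\alpha\in\mathcal{A}_Y}\gamma^{X,S_\alpha}_{\zeta+2\mu_{[t'-\varepsilon,t']}, \mathcal{B}}(t,t')$, accounting for the ``$+1$'' in the $2(1+16\norm{\zeta}_\infty)$ factor. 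For the unbounded piece, apply Wick's theorem (Lemma \ref{lemma:wick_thm_gaussian}) to contract the lone environment operator $B^\nu_{\alpha, s, \sigma}$ appearing in $\mathcal{H}_Y^{SE}(s)$ either with $\vecbra{O_X, I_E}\mathcal{U}_\mathcal{B}(t, t')\in \mathcal{Q}_{\mu_{[t',t]}}(\rho_E)$ (left contraction) or with $\mathcal{L}_Y\vecket{\tilde\phi}$ (right contraction); there is no direct $\vecbra{I_E}B_1 B_2\vecket{\rho_E}$ contraction because the integrand is linear in $B$.

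The essential algebraic observation driving both contraction channels is that the kernel $\vecbra{I_E}B^\nu_{\alpha,t,\sigma}B^{\nu'}_{\alpha',t',\sigma'}\vecket{\rho_E}$ depends only on $\sigma'$ (the \emph{right} side), so that summing the factor $(-1)^\sigma R^\nu_{\alpha,\sigma}(s)$ over $\sigma \in \{l,r\}$ collapses precisely to the commutator superoperator $\mathcal{C}_{R^\nu_\alpha(s)}$ for the right contraction (whose kernel is $\sigma$-independent), and similarly the internal sum over $\sigma$ of the inserted $R^\nu_{\alpha,\sigma}(\tau)$ inside $\vecbra{\theta^\nu_{j,\alpha,\sigma}(\tau)}$ collapses to $\mathcal{C}_{R^\nu_\alpha(\tau)}$ for the left contraction. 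Because $\mathcal{C}^\dagger_R(I)=0$ and $\norm{\mathcal{C}_R}_\diamond\leq 2$, each resulting inner product can be bounded in terms of a $\gamma^{X,S_\alpha}$ by treating the remaining state (either $\mathcal{L}_Y$ times the Wick-output or a time-evolved version of it) via Lemma~\ref{lemma:closure_right_wick}. The right-contraction bound then produces $\sum_{j}\int_{s_j}^{t_j}|\K^{\nu,\nu'}_{\alpha,\sigma'}(s-\tau)|\,d\tau \leq (\zeta + 2\mu_{[t'-\varepsilon,t']})(s) \leq \norm{\zeta}_\infty + 2\varepsilon\norm{\U}_\infty$, yielding the $16\varepsilon\norm{\zeta}_\infty$ contribution (with the $2\varepsilon\norm{\U}_\infty$ absorbed into $C_1\varepsilon^2$); the left-contraction bound instead produces $\int_{t'-\varepsilon}^{t'}\U(\tau-s)\,ds \leq \varepsilon\U(\tau - t')$ plus $O(\varepsilon^2\norm{\U'}_\infty)$ and, since the resulting state then lies in $\mathcal{S}_{\zeta + 2\mu_{[t'-\varepsilon,\tau]}}$ (using $2\mu_{[t'-\varepsilon,t']} + \mu_{[t',\tau]} \leq 2\mu_{[t'-\varepsilon,\tau]}$), gives the displayed $16\int_{t'}^t\U(s'-t')\gamma^{X,S_\alpha}_{\zeta+2\mu_{[t'-\varepsilon,s']},\mathcal{B}}(t,s')\,ds'$ contribution.

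The principal obstacle is bookkeeping rather than conceptual: ensuring that every term produced by the Wick decomposition can be reorganized so that the surviving system superoperator is a pure commutator $\mathcal{C}_{R^\nu_\alpha(\cdot)}$ (and hence admits the $\gamma^{X,S_\alpha}$ bound), and confirming that the closure properties of $\mathcal{S}_\zeta(\rho_E)$ propagate the correct ``time-edge budget'' through the left contraction so that the argument of $\gamma^{X,S_\alpha}$ at time $s'$ is $\zeta + 2\mu_{[t'-\varepsilon, s']}$ rather than something larger. Once this is verified, the $O(\varepsilon^2)$ error from replacing $\int_{t'-\varepsilon}^{t'}\U(\tau-s)\,ds$ by $\varepsilon\U(\tau-t')$ and from the $2\varepsilon\norm{\U}_\infty$ tail in the right-contraction bound is controlled uniformly in the constants of Lemma~\ref{lemma:taylor_expansion_bound}, and taking the supremum over $\phi, O_X, \mathcal{L}_Y$ yields the claimed recursion with $C_1$ depending only on $\mathcal{A}_Y$, $\norm{\zeta}_\infty$, $\text{TV}(\U)$ and $\norm{\U}_\infty$.
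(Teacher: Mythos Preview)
Your approach is essentially the paper's: invoke Lemma~\ref{lemma:taylor_expansion_bound}, split off the $I$ piece as $\gamma^{X,Y}_{\zeta+2\mu_{[t'-\varepsilon,t']},\mathcal{B}}(t,t')$, decompose $\mathcal{H}_Y = \mathcal{H}_Y^S + \mathcal{H}_Y^{SE}$, handle the system commutator directly, and apply Wick's theorem to the single environment operator in $\mathcal{H}_Y^{SE}$, using the $\sigma$-independence of the kernel on one side to force commutator structure $\mathcal{C}_{R_\alpha^\nu}$ on both contraction channels. The bookkeeping you flag (that $\mathcal{U}_\mathcal{B}(s',t')R^\nu_{\alpha,\sigma}(s)\mathcal{L}_Y\vecket{\tilde\phi}\in\mathcal{S}_{\zeta+2\mu_{[t'-\varepsilon,s']}}(\rho_E)$ via $2\mu_{[t'-\varepsilon,t']}+\mu_{[t',s']}\leq 2\mu_{[t'-\varepsilon,s']}$) is exactly what the paper does.

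The one substantive difference is how you collapse the double integral $\int_{t'-\varepsilon}^{t'}\int_{t'}^t\U(s'-s)\gamma^{X,S_\alpha}_{\zeta+2\mu_{[t'-\varepsilon,s']},\mathcal{B}}(t,s')\,ds'\,ds$ to a single $s'$-integral. You Taylor-expand the mollified (hence smooth) kernel $\U$ in $s$, getting $\int_{t'-\varepsilon}^{t'}\U(s'-s)\,ds=\varepsilon\U(s'-t')+O(\varepsilon^2\|\U'\|_\infty)$. The paper instead changes variables to $(\tau=s'-s,\,s)$ and then invokes Lemma~\ref{lemma:upper_bound_lemma_stat} to shift the time argument of $\gamma$ from $s+\tau$ to $t'+\tau$. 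Your route is a bit shorter but makes $C_1$ pick up $\|\U'\|_\infty$ and $(t-t')$, which are not among the quantities listed in the lemma statement; the paper's route instead makes $C_1$ depend on $|\mathcal{B}|$ (also not listed). Either way the stated dependence of $C_1$ is slightly loose, and this is harmless: in the proof of Lemma~\ref{lemma:gamma_lr} the $C_1\varepsilon^2$ term contributes $C_1\tau^2/N\to 0$ in the $N\to\infty$ limit for any finite $C_1$, and that limit is taken at fixed mollification $\delta$ so $\|\U'\|_\infty<\infty$.
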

\begin{proof}
From lemma \ref{lemma:taylor_expansion_bound}, we obtain that
\begin{align*}
    \gamma^{X, Y}_{\zeta, \mathcal{B}}(t, t' - \varepsilon) &\leq \sup_{\substack{\phi \in \mathcal{S}_\zeta(\rho_E) \\ \norm{O_X} \leq 1, \norm{\mathcal{L}_Y}_\diamond \leq 1}}\abs{\vecbra{O_X, I_E}\mathcal{U}_\mathcal{B}(t, t') \mathcal{L}_Y \vecket{\tilde{\phi}}} + \sup_{\substack{\phi \in \mathcal{S}_\zeta(\rho_E) \\ \norm{O_X} \leq 1, \norm{\mathcal{L}_Y}_\diamond \leq 1}} \abs{\vecbra{O_X, I_E} \mathcal{U}_\mathcal{B}(t, t') \bigg(\int_{t' - \varepsilon}^{t'}\mathcal{H}_Y(s) ds\bigg) \vecket{\tilde{\phi}}} + C_0 \varepsilon^2, \nonumber \\
    &\numleq{1} \gamma^{X, Y}_{\zeta + 2\mu_{[t' - \varepsilon, t']}, \mathcal{B}}(t, t') + T_\varepsilon(t, t') + C_0 \varepsilon^2,
\end{align*}
where in (1) we have used the fact that $\vecket{\tilde{\phi}} \in \mathcal{S}_{\zeta + 2\kappa_{[t' - \varepsilon, t']}}(\rho_E)$ and defined 
\begin{align*}
    T_\varepsilon(t, t') = \sup_{\substack{\tilde{\phi} \in \mathcal{S}_{\zeta + 2\kappa_{[t'-\varepsilon, t']}}(\rho_E) \\ \norm{O_X}\leq 1, \norm{\mathcal{L}_Y}_\diamond \leq 1}} \abs{\vecbra{O_X, I_E} \mathcal{U}_\mathcal{B}(t, t') \bigg(\int_{t' - \varepsilon}^{t'}\mathcal{H}_Y(s) ds\bigg)\vecket{\tilde{\phi}}}.
\end{align*}
Next, we upperbound $T_\varepsilon(t, t')$. We start by noting that
\begin{align}\label{eq:T_1_1}
&\abs{\vecbra{O_X, I_E}\mathcal{U}_\mathcal{B}(t, t')\bigg(\int_{t'-\varepsilon}^{t'}\mathcal{H}_Y(s)ds\bigg)\vecket{\tilde{\phi}}} \leq \int_{t' - \varepsilon}^{t'}\Theta_{\mathcal{A}_Y}(\alpha) \abs{ \vecbra{O_X, I_E} \mathcal{U}_\mathcal{B}(t, t') \mathcal{C}_{h_\alpha(s)} \mathcal{L}_Y\vecket{\tilde{\phi}}} ds +\nonumber \\
&\qquad \qquad \qquad \qquad \qquad \qquad \qquad \qquad \Theta_{\mathcal{A}_Y}(\alpha)\abs{(-1)^\sigma \int_{t' - \varepsilon}^{t'} \vecbra{O_X, I_E} \mathcal{U}_\mathcal{B}(t, t')B_{\alpha, s, \sigma}^\nu R_{\alpha, \sigma}^\nu(s) \mathcal{L}_{Y} \vecket{\tilde{\phi}} ds}\bigg).
\end{align}
We now analyze the right hand side of Eq.~\ref{eq:T_1_1} term by term. Note first that,
\begin{align}\label{eq:T_1_2}
\int_{t' - \varepsilon}^{t'}\abs{\vecbra{O_X, I_E} \mathcal{U}_\mathcal{B}(t, t') \mathcal{C}_{h_\alpha(s)} \mathcal{L}_{Y}\vecket{\tilde{\phi}}} ds &\leq 2\varepsilon \norm{\mathcal{L}_Y}_\diamond \gamma^{X, S_\alpha}_{\zeta + 2\kappa_{[t' - \varepsilon,t']}, \mathcal{B}}(t, t'),
\end{align}
since by lemma \ref{lemma:closure_right_wick}b, $\mathcal{L}_{Y}\vecket{\tilde{\phi}} \in \norm{\mathcal{L}_Y}_\diamond \mathcal{S}_{\zeta + 2\kappa_{[t'-
\varepsilon, t']}}(\rho_E)$ and $\norm{\mathcal{C}_{h_\alpha}}_\diamond \leq 2\norm{h_\alpha}\leq 2$. Next, we use the Wick's theorem to obtain
\begin{align}\label{eq:manipulation_term_1}
&\abs{\int_{t' - \varepsilon}^{t'} \vecbra{O_X, I_E} \mathcal{U}_\mathcal{B}(t, t') B_{\alpha, s, \sigma}^{\nu} R_{\alpha, \sigma}^\nu(s)\mathcal{L}_{Y} \vecket{\tilde{\phi}}ds} \nonumber\\
&\qquad\numleq{1} \abs{\int_{s = t' -\varepsilon}^{t'}\int_{s' = t'}^t \text{K}^{\nu', \nu}_{\alpha, \sigma}(s' - s) \vecbra{O_X, I_E}\mathcal{U}_\mathcal{B}(t, s') \mathcal{C}_{R_{\alpha}^{\nu'}(s')} \mathcal{U}_\mathcal{B}(s', t')R_{\alpha, \sigma}^\nu(s) \mathcal{L}_{Y} \vecket{\tilde{\phi}} ds' ds} + \nonumber\\
& \qquad\qquad \qquad \qquad \qquad \qquad \qquad \qquad \qquad \qquad \qquad \qquad   \abs{ \int_{t' - \varepsilon}^{t'} \vecbra{O_X, I_E} \mathcal{U}_\mathcal{B}(t, t')\mathcal{C}_{R_{\alpha}^\nu(s)} \mathcal{L}_{Y}\overrightarrow{\text{W}}(B_{\alpha, s, l}^\nu, \vecket{\tilde{\phi}})ds}, \nonumber \\
& \qquad \numleq{2} 4\norm{\mathcal{L}_Y}_\diamond \norm{O_X}\bigg(\int_{t' - \varepsilon}^{t'} \int_{t'}^t \U(s' - s) \gamma^{X, S_\alpha}_{\zeta + 2\mu_{[t'-\varepsilon,s']}, \mathcal{B}}(t, s')ds' ds + 2\varepsilon \norm{\zeta + 2\mu_{[t' - \varepsilon, t' ]}}_\infty \gamma_{\zeta + 2\mu_{[t' - \varepsilon,t']}, \mathcal{B}}^{X, S_\alpha}(t, t')\bigg), \nonumber\noindent\\
& \qquad \numleq{3} 4\norm{\mathcal{L}_Y}_\diamond \norm{O_X}\bigg( \int_{t' - \varepsilon}^{t'} \int_{t'}^t 
\U(s' - s) \gamma^{X, S_\alpha}_{\zeta + 2\mu_{[t' - \varepsilon, s']}, \mathcal{B}}(t, s') ds' ds + 2 \varepsilon  \norm{\zeta}_\infty \gamma^{X, S_\alpha}_{\zeta + 2\mu_{[t'-\varepsilon, t']}, \mathcal{B}}(t, t') + 4\varepsilon^2 \norm{\U}_\infty\bigg)
\end{align}
where in (1) we have used the fact that $\wickright(B^{\nu}_{\alpha, s, \sigma}, \vecket{\tilde{\phi}})$ is independent of $\sigma$ to set $(-1)^\sigma R_{\alpha, \sigma}^{\nu}(s) \mathcal{L}_Y \wickright(B^{\nu}_{\alpha, s, \sigma}, \vecket{\tilde{\phi}}) = \mathcal{C}_{R_\alpha^\nu(s)} \mathcal{L}_Y \wickright(B^{\nu}_{\alpha, s, l}, \vecket{\tilde{\phi}})$. In (2) we have used the closure property in lemma \ref{lemma:closure_right_wick}c to upper bound 
\[
\abs{\vecbra{O_X, I_E} \mathcal{U}_\mathcal{B}(t, t')\mathcal{C}_{X_{\alpha}^\nu} \mathcal{L}_{Y}\overrightarrow{\text{W}}(B_{\alpha, s, l}^\nu, \vecket{\tilde{\phi}})} \leq 4\smallnorm{\zeta + 2\mu_{[t' - \varepsilon, t']}}_\infty \abs{\vecbra{O_X, I_E}\mathcal{U}_\mathcal{B}(t, s') \mathcal{C}_{X_{\alpha}^{\nu'}} \mathcal{U}_\mathcal{B}(s', t')X_{\alpha, \sigma}^\nu \mathcal{L}_{Y} \vecket{\tilde{\phi}}},
\]
and to upper bound $\small{\vecbra{O_X, I_E}\mathcal{U}_\mathcal{B}(t, s') \mathcal{C}_{R_{\alpha}^{\nu'}} \mathcal{U}_\mathcal{B}(s', t')X_{\alpha, \sigma}^\nu \mathcal{L}_{Y} \vecket{\tilde{\phi}}}$, we have used lemma \ref{lemma:closure_right_wick}a to identify that $\mathcal{U}_\mathcal{B}(s', t') X_{\alpha, \sigma}^\nu \mathcal{L}_Y \vecket{\tilde{\phi}} \in \mathcal{S}_{\zeta + 2\kappa_{[t' - \varepsilon, t']} +\kappa_{[t', s']} } (\rho_E)\subseteq \mathcal{S}_{\zeta + 2\kappa_{[t' - \varepsilon, s']}  }(\rho_E)$. In (3) we have used the fact that $\smallnorm{\kappa_{[t'-\varepsilon, t']}}_\infty \leq \norm{K}_\infty \varepsilon$. Note also that in (2), we have implicitly used lemma \ref{lemma:int_gamma}b to insure that the double integral is well defined.

Next, we further manipulate the first term in Eq.~\ref{eq:manipulation_term_1} by noting that
\begin{align*}
    &\int_{t' - \varepsilon}^{t'} \int_{t'}^t \K(s' - s) \gamma_{\zeta + 2\kappa_{[t' - \varepsilon, s']}, \mathcal{B}}^{X, S_\alpha}(t, s') ds' ds = \int_{\tau = 0}^{t - t'} \int_{s = t' - \varepsilon}^{t'}\K(\tau) \gamma_{\zeta + 2\kappa_{[t' - \varepsilon, s + \tau]}, \mathcal{B}}^{X, S_\alpha}(t, s + \tau)ds d\tau \nonumber\\
    &\qquad \qquad  \qquad - \int_{\tau = 0}^{\varepsilon} \int_{s = t'-\varepsilon}^{t' - \tau} \K(\tau) \gamma_{\zeta + 2\kappa_{[t' - \varepsilon, s + \tau]}, \mathcal{B}}^{X, S_\alpha}(t, s + \tau) ds d\tau + \int_{\tau = t - t'}^{t - t' + \varepsilon} \int_{s = t'-\varepsilon}^{t' - \tau}  \K(\tau) \gamma_{\zeta + 2\kappa_{[t' - \varepsilon, s + \tau]}, \mathcal{B}}^{X, S_\alpha}(t, s + \tau) ds d\tau, \nonumber \\
    &\qquad \leq \int_{\tau = 0}^{t - t'} \int_{s = t' - \varepsilon}^{t'}\K(\tau) \gamma_{\zeta + 2\kappa_{[t' - \varepsilon, s + \tau]}, \mathcal{B}}^{X, S_\alpha}(t, s + \tau)ds d\tau +  \int_{\tau = t - t'}^{t - t' + \varepsilon} \int_{s = t'-\varepsilon}^{t' - \tau}  \K(\tau) \gamma_{\zeta + 2\kappa_{[t' - \varepsilon, s + \tau]}, \mathcal{B}}^{X, S_\alpha}(t, s + \tau) ds d\tau.
\end{align*}
Now, using lemma \ref{lemma:upper_bound_lemma_stat} with $\alpha_0 = 2$, we obtain that
\begin{align*}
    \int_{\tau = 0}^{t - t'} \int_{s = t' - \varepsilon}^{t'}\K(\tau) \gamma_{\zeta + 2\kappa_{[t' - \varepsilon, s + \tau]}, \mathcal{B}}^{X, S_\alpha}(t, s + \tau)ds d\tau \leq \varepsilon\int_{\tau = 0}^{t - t'}\K(\tau) \gamma^{X, S_\alpha}_{\zeta + 2\kappa_{[t' - \varepsilon, s + t']}, \mathcal{B}}(t, t' + \tau)d\tau + 48 \varepsilon^2 \abs{\mathcal{B}}.
\end{align*}
Furthermore, since $\gamma^{X, Y}_{\zeta, \mathcal{B}}(t, t') \leq 1$, we also have that
\begin{align*}
    \int_{\tau = t - t'}^{t - t' + \varepsilon} \int_{s = t' -\varepsilon}^{t' - \tau} \K(\tau) \gamma^{X, S_\alpha}_{\zeta + 2\kappa_{[t' - \varepsilon, s + \tau]}, \mathcal{B}}(t, s + \tau) ds \ d\tau \leq  \norm{\K}_\infty \varepsilon^2.
\end{align*}
Consequently, we obtain
\begin{align}\label{eq:T_1_3}
    \int_{t' - \varepsilon}^{t'} \int_{t'}^t \K(s' - s) \gamma_{\zeta + 2\kappa_{[t' - \varepsilon, s']}, \mathcal{B}}^{X, S_\alpha}(t, s') ds' ds \leq \varepsilon \int_{\tau = 0}^{t - t'}\K(\tau) \gamma_{\zeta + 2\kappa_{[t' - \varepsilon, s + t']}, \mathcal{B}}^{X, S_\alpha}(t, t' + \tau) d\tau + (48\abs{\mathcal{B}} + \norm{\K}_\infty)\varepsilon^2.
\end{align}
From Eqs.~\ref{eq:T_1_1}, \ref{eq:T_1_2}, \ref{eq:manipulation_term_1} and \ref{eq:T_1_3}, we then obtain that
\[
T_\varepsilon^{(1)}(t, t') \leq 2\varepsilon \Theta_{\mathcal{A}_Y}(\alpha) \gamma_{\zeta + 2\kappa_{[t' - \varepsilon, t']}, \mathcal{B}}(t, t') + 32\varepsilon \norm{\zeta}_\infty \gamma^{X, S_\alpha}_{\zeta + 2\kappa_{[t' - \varepsilon, t']}, \mathcal{B}}(t, t') + 16 \varepsilon \int_{\tau = 0}^{t - t'}\K(\tau) \gamma^{X, S_\alpha}_{\zeta + 2\kappa_{[t' - \varepsilon, t']}, \mathcal{B}}(t, t' + \tau) d\tau + c_0\varepsilon^2,
\]
where $c_0 = (48\abs{\mathcal{B}}+ 65 \norm{\K}_\infty) \abs{\mathcal{A}_Y}$.
\end{proof}

\begin{lemma}\label{lemma:gamma_lr}
    Given $\zeta_0 : \mathbb{R} \to [0, \infty)$ and $t' < t$, 
    \[
    \gamma^{X, Y}_{\zeta, \mathcal{B}}(t, t') \leq \abs{Y}\bigg(\exp\bigg(\frac{v_\textnormal{LR}(\zeta)\abs{t - t'}}{a_0}\bigg) - 1\bigg) \exp\bigg(- \frac{d_{X, Y}}{a_0}\bigg),
    \]
    where
    \[
    v_\textnormal{LR}(\zeta) = ea_0 \mathcal{Z}(1 + 16\norm{\zeta}_\infty + 40\textnormal{TV}(\U)),
    \]
    with $\mathcal{Z}, a_0$ being defined in Eq.~\ref{eq:constants_lattice}.
\end{lemma}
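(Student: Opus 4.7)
The plan is to iterate Lemma~\ref{lemma:small_gamma_recursion} until the one-step recursion collapses into the classical Lieb--Robinson integral inequality handled by Lemma~\ref{lemma:lr_recursion}. The obstruction that must be overcome is that the one-step recursion does not preserve a fixed $\zeta$: stepping from $t'$ back to $t' - \varepsilon$ produces $\gamma^{X,S_\alpha}$ evaluated at the shifted parameter $\zeta + 2\mu_{[\cdot,\cdot]}$, and, through the convolution term, at a continuum of times $s' \in [t', t]$ with yet further-shifted $\zeta$. I therefore first pass to a quantity that is \emph{closed} under such shifts.

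The key observation is that $\|\mu_{[a,b]}\|_\infty \leq \textnormal{TV}(\U)$ for every interval $[a,b]$, and that $\gamma^{X,Y}_{\zeta',\mathcal{B}}$ is monotone in $\zeta'$, since $\zeta_1 \leq \zeta_2$ pointwise gives $\mathcal{S}_{\zeta_1}(\rho_E) \subseteq \mathcal{S}_{\zeta_2}(\rho_E)$ and thus enlarges the set over which the supremum defining $\gamma$ is taken. Choosing $M$ slightly above $\|\zeta\|_\infty$ (specifically $M = \|\zeta\|_\infty + 2\,\textnormal{TV}(\U)$, with a mild adjustment to cover the shifts appearing inside the convolution term as well), set
\[
\Gamma^{X,Y}(\tau) := \sup_{\|\zeta'\|_\infty \leq M}\; \gamma^{X,Y}_{\zeta',\mathcal{B}}(t, t-\tau).
\]
Every shifted parameter $\zeta' + 2\mu_{[\cdot,\cdot]}$ encountered during iteration then still satisfies $\|\cdot\|_\infty \leq M$, so the right-hand side of the recursion feeds back into $\Gamma^{X,S_\alpha}$ rather than escaping to an ever-growing family of $\zeta$'s.

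Partitioning $[t', t]$ into $N$ equal subintervals of width $\varepsilon$, applying Lemma~\ref{lemma:small_gamma_recursion} on each subinterval, taking the supremum over admissible $\zeta'$, telescoping the chain, and letting $N \to \infty$ (so that the $C_1 \varepsilon^2$ remainders accumulate to $C_1(t-t')\,\varepsilon \to 0$) reduces the problem to an integral inequality on $\Gamma^{X,Y}$; integrability of all intermediate suprema at every step is guaranteed by Lemma~\ref{lemma:int_gamma}. The resulting inequality is
\[
\Gamma^{X,Y}(\tau) \leq \Gamma^{X,Y}(0) + \sum_{\alpha \in \mathcal{A}_Y} \int_0^\tau \!\Bigl[\, 2(1 + 16M)\,\Gamma^{X,S_\alpha}(\tau_1) + 16 \int_0^{\tau_1}\! \U(\tau_1 - \tau_2)\,\Gamma^{X,S_\alpha}(\tau_2)\, d\tau_2\,\Bigr]\, d\tau_1.
\]
Interchanging the order of integration in the double integral and using $\int_{\tau_2}^\tau \U(\tau_1 - \tau_2)\,d\tau_1 \leq \textnormal{TV}(\U)$ collapses the convolution into a multiple of $\int_0^\tau \Gamma^{X,S_\alpha}(\tau_2)\, d\tau_2$, and yields the closed recursion
\[
\Gamma^{X,Y}(\tau) \leq \Gamma^{X,Y}(0) + \omega_0\,\sum_{\alpha:\,S_\alpha \cap Y \neq \emptyset}\; \int_0^\tau \Gamma^{X,S_\alpha}(\tau_1)\, d\tau_1,
\]
with $\omega_0 = 2(1 + 16M) + 16\,\textnormal{TV}(\U)$, an affine combination of $1$, $\|\zeta\|_\infty$, and $\textnormal{TV}(\U)$ whose coefficients match $v_\textnormal{LR}(\zeta)/(e a_0 \mathcal{Z})$ up to universal numerical bookkeeping.

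Finally, the initial condition $\Gamma^{X,Y}(0) = 0$ when $X \cap Y = \emptyset$ is immediate: for disjoint $X, Y$ the superoperator $\mathcal{L}_Y$ (satisfying $\mathcal{L}_Y^\dagger(I) = 0$) acts trivially against the $X$-supported observable, so $\vecbra{O_X, I_E} \mathcal{L}_Y \vecket{\phi} = 0$. Applying Lemma~\ref{lemma:lr_recursion} to $\Gamma^{X,Y}$ then gives $\Gamma^{X,Y}(\tau) \leq |Y|(\exp(e\omega_0 \mathcal{Z}\tau) - 1)\exp(-d_{X,Y}/a_0)$, and since $\|\zeta\|_\infty \leq M$, we have $\gamma^{X,Y}_{\zeta,\mathcal{B}}(t,t') \leq \Gamma^{X,Y}(t-t')$, completing the proof. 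The main technical obstacle is the closure step: without the supremization over bounded-$\|\zeta'\|_\infty$ in the definition of $\Gamma$, each iteration of the one-step recursion would produce a $\gamma$ at a fresh $\zeta$, yielding an unclosable hierarchy rather than a Gr\"onwall-type inequality; once this closure is in place, the remainder of the argument follows the template of the bounded-Hamiltonian proof recalled at the start of the section.
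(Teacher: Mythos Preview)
Your closure claim fails on the convolution term. When you apply Lemma~\ref{lemma:small_gamma_recursion} to a parameter $\zeta'$ with $\|\zeta'\|_\infty \leq M$, the integrand $\gamma^{X,S_\alpha}_{\zeta' + 2\mu_{[t'-\varepsilon,s']},\mathcal{B}}(t,s')$ carries a shift whose $\infty$-norm can be as large as $M + 2\,\textnormal{TV}(\U)$, since $s'$ ranges over all of $[t',t]$ and $\|\mu_{[a,b]}\|_\infty$ is only bounded by $\textnormal{TV}(\U)$, not by something small in $|b-a|$. No finite choice of $M$ absorbs this: the ``mild adjustment'' you allude to would force $M \geq M + 2\,\textnormal{TV}(\U)$. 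Iterating the step therefore does not feed back into your $\Gamma^{X,S_\alpha}$; it produces an unclosable tower of parameter classes with growing norm. (A secondary issue: Lemma~\ref{lemma:int_gamma} concerns $\gamma$ at a single one-parameter family $\zeta + \alpha_0\mu_{[s_0,\cdot]}$, not a supremum over all $\zeta'$ with bounded norm, so the integrability claim for your $\Gamma$ also needs a separate argument.)

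The paper avoids the supremum altogether. It sets $\Gamma_{X,Y}(\tau) = \gamma^{X,Y}_{\zeta + 2\mu_{[t',\, t-\tau]},\,\mathcal{B}}(t,\,t-\tau)$, i.e.\ a single $\tau$-dependent parameter rather than a norm-ball of them. The recursion then closes \emph{exactly}, because $\mu$ is additive on adjacent intervals: applying Lemma~\ref{lemma:small_gamma_recursion} at $t-\tau$ with input parameter $\zeta + 2\mu_{[t',\,t-\tau-\varepsilon]}$ produces, in the leading term, $\zeta + 2\mu_{[t',\,t-\tau-\varepsilon]} + 2\mu_{[t-\tau-\varepsilon,\,t-\tau]} = \zeta + 2\mu_{[t',\,t-\tau]}$, and in the convolution term, $\zeta + 2\mu_{[t',\,t-\tau-\varepsilon]} + 2\mu_{[t-\tau-\varepsilon,\,s']} = \zeta + 2\mu_{[t',\,s']}$. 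Both are again of the prescribed form, so every $\gamma$ on the right is literally some $\Gamma_{X,\cdot}(\cdot)$. This interval-additivity of the $\mu$-shifts, not a uniform norm bound, is the mechanism that collapses the hierarchy; once you have it, your remaining steps (telescoping, the Fubini collapse of the convolution into $\textnormal{TV}(\U)\int\Gamma$, and the appeal to Lemma~\ref{lemma:lr_recursion}) go through verbatim.
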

\begin{proof}
    Given $\zeta_0:\mathbb{R}\to [0, \infty)$, $0 < t' < t$ and for $\tau \in (0, t - t')$, consider $\Gamma_{X, Y}(\tau)$  defined by
    \[
    \Gamma_{X, Y}(\tau) = \gamma^{X, Y}_{\zeta_0 + 2\kappa_{[t', t - \tau]}, \mathcal{B}}(t, t - \tau).
    \]
    From lemma \ref{lemma:small_gamma_recursion}, we then obtain that for sufficiently small $\varepsilon$
    \begin{align}\label{eq:recursion_Gamma}
        &\Gamma_{X, Y}(\tau + \varepsilon) \nonumber\\
        &\leq \Gamma_{X, Y}(\tau) + \varepsilon \sum_{\alpha : S_\alpha \cap Y \neq \phi}\bigg(2\big(1 + 16 \norm{\zeta_0 + 2\mu_{[t', t - \tau - \varepsilon]}}_\infty\big) \Gamma_{X, S_\alpha}(\tau) + 16 \int_{t - \tau}^{t} \U(s' - t + \tau)\Gamma_{X, S_\alpha}(t - s') ds'\bigg) + C_1 \varepsilon^2,\nonumber \\
        &\leq \Gamma_{X, Y}(\tau) + \varepsilon \sum_{\alpha : S_\alpha \cap Y \neq \phi}\bigg(2\big(1 + 16 \norm{\zeta_0}_\infty + 32\text{TV}(\U)\big) \Gamma_{X, S_\alpha}(\tau) + 16 \underbrace{\int_{0}^{\tau} \U(\tau - \tau')\Gamma_{X, S_\alpha}(\tau') d\tau'}_{\Lambda_{X, S_\alpha}(\tau)}\bigg) + C_1\varepsilon^2.
    \end{align}
    This inequality can be converted into its integral form. More explicitly, for a given $\tau$ we choose $\varepsilon = \tau / N$ and define $\tau_n = n\varepsilon$. Then, from Eq.~\ref{eq:recursion_Gamma}, we obtain
    \[
    \Gamma_{X, Y}(\tau_n) \leq \Gamma_{X, Y}(\tau_{n - 1}) + \varepsilon\bigg(\sum_{\alpha : S_\alpha \cap Y \neq \phi} 2\big(1 + 16 \norm{\zeta_0}_\infty + 32\text{TV}(\U)\big)\Gamma_{X, S_\alpha}(\tau_{n - 1}) + 16 \Lambda_{X, S_\alpha}(\tau_{n - 1}) \bigg) + \frac{C_1 \tau^2}{N^2},
    \]
    and consequently
    \[
    \Gamma_{X, Y}(\tau_N) \leq \Gamma_{X, Y}(\tau_0) +  \sum_{\alpha : S_\alpha \cap Y \neq \phi}\bigg[2\big(1 + 16 \norm{\zeta_0}_\infty + 32\text{TV}(\U)\big)\bigg(\frac{\tau}{N}\sum_{n = 0}^{N - 1}\Gamma_{X, S_\alpha}(\tau_n)\bigg) + 16\bigg(\frac{\tau}{N}\sum_{n = 0}^{N - 1}\Lambda_{X, S_\alpha}(\tau_n)\bigg)\bigg] + \frac{C_1 \tau^2}{N}.
    \]
    Taking the limt of $N\to \infty$ in this inequality, and the fact that from lemma \ref{lemma:int_gamma}a and b it follows that $\Gamma_{X, S_\alpha}$ and $\Lambda_{X, S_\alpha}$ are both integrable functions, we obtain that
    \begin{align}\label{eq:integral_ineq}
    \Gamma_{X, Y}(\tau) \leq \Gamma_{X, Y}(0) + \sum_{\alpha : S_\alpha \cap Y \neq \phi}\bigg[2\big(1 + 16 \norm{\zeta_0}_\infty + 32\text{TV}(\U)\big) \int_0^\tau \Gamma_{X, S_\alpha}(\tau') d\tau' + 16 \int_0^\tau \Lambda_{X, S_\alpha}(\tau') d\tau'\bigg].
    \end{align}
    Finally, we observe that
    \begin{align*}
        \int_0^\tau \Lambda_{X, S_\alpha}(\tau') d\tau' &= \int_{\tau' = 0}^\tau \int_{\tau'' =0}^{ \tau'}\U(\tau' - \tau'') \Gamma_{X, S_\alpha}(\tau'') d\tau'' d\tau', \nonumber\\
        &= \int_{\tau'' = 0}^{\tau}\bigg[\int_{\tau' = \tau''}^{\tau}\U(\tau' - \tau'')\bigg] \Gamma_{X, S_\alpha}(\tau'') d\tau'' \leq \text{TV}(\U) \int_0^\tau \Gamma_{X, S_\alpha}(\tau') d\tau',
    \end{align*}
    and consequently, from Eq.~\ref{eq:integral_ineq}, we obtain that
    \[
    \Gamma_{X, Y}(\tau) \leq \Gamma_{X, Y}(0) + \omega_0\sum_{\alpha: S_\alpha\cap Y \neq \phi} \int_0^\tau \Gamma_{X, S_\alpha}(\tau') d\tau' \text{ where }\omega_0 = 2\big(1 + 16\norm{\zeta_0}_\infty + 40\text{TV}(\U)\big). 
    \]
    Now, applying lemma \ref{lemma:lr_recursion} and setting $\tau = t - t'$, we obtain the lemma statement.
\end{proof}
\noindent\textbf{Proof of proposition 1:} We use lemma \ref{lemma:Delta_bd_gamma} together with lemma \ref{lemma:gamma_lr}, from which we obtain
    \[
    \Delta_{O_X}(t, t'; l) \leq 2\norm{O_X}(1 + 4\text{TV}(\U)) \sum_{\substack{\alpha \notin \mathcal{A}_{X[l]} \\ S_\alpha \cap X^c[l] \neq \phi}} e^{-1}\abs{S_\alpha}\int_{s = 0}^{t - t'} \exp\bigg(\frac{v_\text{LR}s - d_{X, S_\alpha}}{a_0} \bigg) ds.
    \]
    where $v_\text{LR} = ea_0 \mathcal{Z}(1 + 56 \text{TV}(\U))$. Now, we note that by definition, for any $\alpha \notin \mathcal{A}_{X[l]}$, $d_{X, S_\alpha} \geq l$. Furthermore, in $d$ dimensions and given the fact that $\text{diam}(S_\alpha) \leq a_0$, $\abs{S_\alpha} \leq a_0^d$. Therefore, 
    \[
    \sum_{\substack{\alpha \notin \mathcal{A}_{X[l]} \\ S_\alpha \cap X^c[l] \neq \phi}} \abs{S_\alpha}\int_{s = 0}^{t - t'} \exp\bigg(\frac{v_\text{LR}s - d_{X, S_\alpha}}{a_0} \bigg) ds \leq \frac{a_0^d}{v_\text{LR}}\exp\bigg(-\frac{l}{a_0}\bigg)\bigg(\exp\bigg(\frac{v_\text{LR}\abs{t - t'}}{a_0}\bigg) - 1\bigg)\bigg(\sum_{\substack{\alpha \notin \mathcal{A}_{X[l]} \\ S_\alpha \cap X^c[l] \neq \phi}} 1\bigg).
    \]
    Now, note that the number of lattice sites in the set $X^c[l] \setminus X[l]$ is $(\text{diam}(X) + 2l + a_0)^d - (\text{diam}(X)+2l)^d$ and since any one lattice site can be in at-most $\mathcal{Z}$ sub-regions $S_\alpha$, we obtain that
    \[
    \abs{\{\alpha : \alpha \notin \mathcal{A}_{X[l]} \text{ and }S_\alpha\cap X^c[l] \neq \phi\}} \leq \mathcal{Z}\big[ (\text{diam}(X) + 2l + a_0)^d - (\text{diam}(X)+2l)^d\big].
    \]
    Therefore, we obtain that
    \[
    \Delta_{O_X}(t, t'; l) \leq \frac{e^{-1}a_0^d\mathcal{Z}}{v_\text{LR}} \big[ (\text{diam}(X) + 2l + a_0)^d - (\text{diam}(X)+2l)^d\big] \exp\bigg(-\frac{l}{a_0}\bigg)\bigg(\exp\bigg(\frac{v_\text{LR}\abs{t - t'}}{a_0}\bigg) - 1\bigg).
    \]
    Noting that $\big[ (\text{diam}(X) + 2l + a_0)^d - (\text{diam}(X)+2l)^d\big] \leq O(l^{d -1 })$ for large $l$, we obtain the proposition statement.
\hfill $\square$

\section{Violation the linear light-cone}
Here we analyze the explicit example, when the memory kernels $\K_\alpha^{\nu, \nu'}$ have infinite memory, which violates a linear Lieb-Robinson light cone. We first provide the system-environment Hamiltonian that was sketched out in the main text. The system will be a 1D chain of $n$ qubits, and the environment will have $n - 1$ baths, each of which will have only one bosonic mode. The annihilation operator of $\alpha^\text{th}$ bosonic mode will be denoted by $a_\alpha$. The $\alpha^\text{th}$ bath will interact with the $\alpha^\text{th}$ and $(\alpha + 1)^\text{th}$ qubit. We will choose $A_{\alpha, t}$, as defined in Eq.~\ref{eq:sys_env_model}, to be $a_\alpha$ (for any time $t$) and assume that all the baths are initially in the vacuum state. Therefore, it follows that
\[
\smallabs{\K^{\nu, \nu'}_\alpha(t)} = \frac{1}{2} \text{ for any }t \implies \text{TV}(\K^{\nu, \nu'}_\alpha) = \infty.
\]
Consider now a time $t = 3T + 1$ for some integer $T$ --- for simplicity, we will choose $T = m^2$ for some integer $m$. The system-environment Hamiltonian $H(t)$ will be constructed by putting together three different Hamiltonians $H_\text{pump}(t), H_\text{ex}(t), H_\text{prop}(t)$:
\begin{align}
H(t) = \begin{cases}
    H_\text{pump}(t) & \text{ if } 0 \leq t \leq 2T, \\
    H_\text{ex}(t) & \text{ if } 2T \leq t \leq 2T + 1, \\
    H_\text{prop}(t) & \text{ if } 2T + 1\leq t \leq 3T + 1,
\end{cases}
\end{align}
where, as the names suggest, $H_\text{pump}(t)$ pumps the environment oscillators with photons, $H_\text{ex}(t)$ puts an excitation in the system and $H_\text{prop}(t)$ propagates the excitation. In order to ensure that the time-dependent system Hamiltonian and the jump operators are smooth, we will use a pulse function $\xi: \mathbb{R}\to \mathbb{R}$ in their construction --- $\xi$ can be chosen to be any smooth function which satisfies
\[
\xi(t) = 0 \text{ for all } t \notin [0, 1] \text{ and }\int_0^1 \xi(s) ds = \frac{\pi}{2}.
\]
\noindent \emph{Description of $H_\textnormal{pump}(t)$}. Physically, the Hamiltonian $H_\text{pump}(t)$ alternates an $\sigma_x$ rotation on each system qubit interacts with the environment through a Jaynes-Cumming interaction. More specifically,
\begin{subequations}\label{eq:supersonic_H_stage_1}
\begin{align}
H_\text{pump}(t) = 
    \sum_{i = 1}^{n - 1} \bigg(\Omega(t)\sigma_{i, x} + g(t) h_{i, \text{JC}}\bigg).
\end{align}
where $h_{i, \text{JC}} = a_i^\dagger \sigma_i + \text{h.c.}$ and
\begin{align}
\Omega(t) = \sum_{j = 0}^{T - 1}\xi(t - 2j) \text{ and }g(t) = \sum_{n = 0}^{T - 1}\frac{1}{\sqrt{j + 1}} \xi(t - (2j + 1)).
\end{align}
\end{subequations}
Note that $\Omega(t)$ and $g(t)$, also depicted schematically below, are never non-zero at the same time --- $H_\text{pump}(t)$ therefore alternates between a $\sigma_x$ Hamiltonian and a Jaynes-Cumming ($a^\dagger \sigma + \text{h.c.}$) Hamiltonian.
\begin{figure*}[htpb]
    \includegraphics[scale=0.75]{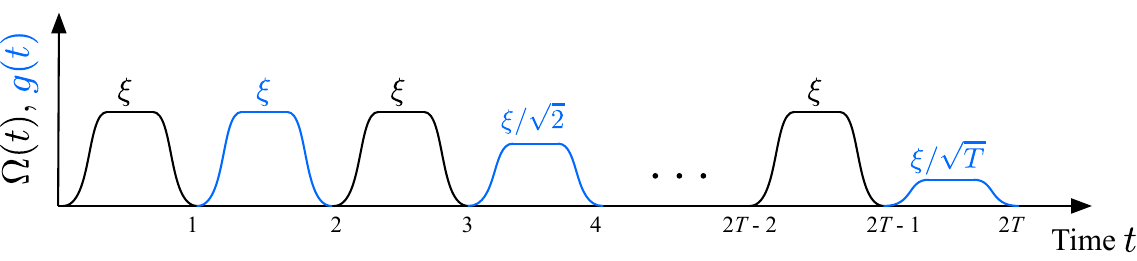}
\end{figure*}

\noindent \emph{Description of $H_\textnormal{ex}(t)$}. $H_\textnormal{ex}(t)$ acts only on the first qubit, and applies a $\sigma_x$ rotation on it.
\[
H_\textnormal{ex}(t) = \xi(t - 2T) \sigma_{1, x}.
\]
\noindent \emph{Description of $H_\textnormal{prop}(t)$}. This Hamiltonian leverages the excitations in the environment to rapidly propagated excitations in the system. We construct $H_\text{prop}(t)$ by first switching on an interaction between qubits 1, 2 and the first bath, then between qubits 2, 3 and the second bath and so on: 
\begin{align}
H_\text{prop}(t) = \sum_{i = 0}^{n - 1} g_i(t) h_{i, i + 1, \text{JC}}^{2Q} \text{ where } h_{i, i + 1, \text{JC}}^{2Q} = \ket{0_i 1_{i + 1}}\!\bra{1_{i}0_{i + 1}}a_i + \text{h.c.} \text{ and } g_i(t) = \xi\big(t\sqrt{T} - i\big).
\end{align}
Again, it can be noted that only one of the functions $g_i(t)$, schematically depicted below, is non-zero at any given time.
\begin{figure*}[htpb]
    \includegraphics[scale=0.75]{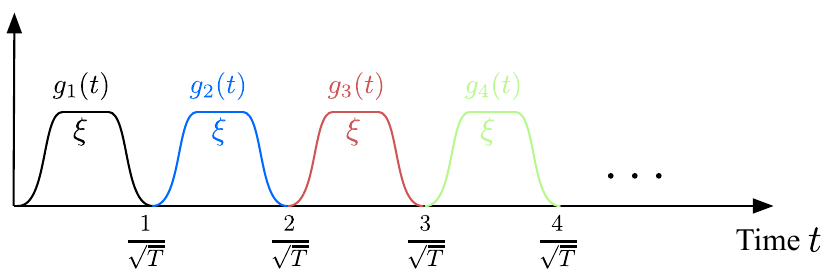}
\end{figure*}

\noindent We now analyze how the system, when initially in $\ket{0}^{\otimes n}$, evolves under the system-environment Hamiltonian described above.

\begin{lemma}\label{lemma:supersonic}
    If the system is initially in $\ket{0}^{\otimes n}$ and the environment is initially in the vacuum state, under the Hamiltonian $H(t)$ and at time $3T + 1$, the system evolves to the reduced state of the system is $\sigma_{T\sqrt{T} + 1}^\dagger (\ket{0}\!\bra{0})^{\otimes n}\sigma_{T\sqrt{T} + 1}$.
\end{lemma}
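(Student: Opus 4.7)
The plan is to exploit the fact that each of the three Hamiltonians acts through a sum of commuting (or time-disjoint) local pulses, so the evolution factorizes into a product of two-level rotations whose angles I can compute exactly. I will track the state through the three stages and verify that at $t=3T+1$ the combined system-environment state remains a product $\vert\psi_S\rangle\otimes\vert\psi_E\rangle$, so that tracing out the environment gives exactly the claimed reduced system state.

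\textbf{Stage 1 (pumping, $0\leq t\leq 2T$):} Since $H_\text{pump}(t)=\sum_{i=1}^{n-1}(\Omega(t)\sigma_{i,x}+g(t)h_{i,\mathrm{JC}})$ is a sum of commuting terms acting on disjoint qubit-bath pairs, it is enough to analyze one pair. I will show by induction on $j\in\{0,1,\dots,T-1\}$ that after the time window $[2j,2(j+1)]$ the pair is in the state $\vert 0\rangle\vert j+1\rangle$ up to a global phase. The window $[2j,2j+1]$ applies $\exp(-i\frac{\pi}{2}\sigma_x)$ (since $\int_0^1\xi=\pi/2$), taking $\vert 0,j\rangle\mapsto -i\vert 1,j\rangle$. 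The window $[2j+1,2j+2]$ applies $\exp(-i\int g(t)dt\cdot h_{i,\mathrm{JC}})$ with $\int g\,dt=\pi/(2\sqrt{j+1})$; restricted to the invariant two-dimensional subspace $\mathrm{span}\{\vert 1,j\rangle,\vert 0,j+1\rangle\}$ the operator $h_{i,\mathrm{JC}}$ acts as $\sqrt{j+1}\,\tilde\sigma_x$, so the net rotation angle is exactly $\pi/2$ and $\vert 1,j\rangle\mapsto -i\vert 0,j+1\rangle$. Composing $T$ such cycles gives $\vert 0,0\rangle\mapsto c_T\vert 0,T\rangle$ for a phase $c_T$, and taking tensor products over pairs yields (up to a global phase) $\vert 0\rangle^{\otimes n}\otimes\vert T\rangle^{\otimes(n-1)}$.

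\textbf{Stage 2 (excite, $2T\leq t\leq 2T+1$):} Here $H_\text{ex}(t)$ is supported entirely on qubit $1$, the remaining qubits and the bath evolve trivially, and the net action on qubit $1$ is the $\pi/2$ rotation $\exp(-i\frac{\pi}{2}\sigma_x)=-i\sigma_{1,x}$. This produces the product state $\vert 1,0,\dots,0\rangle\otimes\vert T\rangle^{\otimes(n-1)}$ up to an irrelevant phase.

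\textbf{Stage 3 (propagation, $2T+1\leq t\leq 3T+1$):} By construction the supports of $g_i(t)$ are the disjoint intervals $[2T+1+i/\sqrt T,\,2T+1+(i+1)/\sqrt T]$, so $H_\text{prop}$ reduces to a sequential application of $T\sqrt T$ pulses. For the $i$th pulse, the Hilbert space of qubits $i,i+1$ and bath $i$ starts in $\vert 10,T\rangle$, and the two-qubit Jaynes-Cummings operator $h^{2Q}_{i,i+1,\mathrm{JC}}$ preserves $\mathrm{span}\{\vert 10,T\rangle,\vert 01,T-1\rangle\}$, acting within it as $\sqrt T\,\tilde\sigma_x$. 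With $\int g_i\,dt=\pi/(2\sqrt T)$, the integrated rotation angle is $\pi/2$, giving $\vert 10,T\rangle\mapsto -i\vert 01,T-1\rangle$ and leaving all other pairs unchanged. Iterating through $i=1,2,\dots,T\sqrt T$, the single excitation hops one site to the right each pulse while one photon is consumed in each visited bath. At time $3T+1$ the joint state is therefore $\vert\psi_E\rangle\otimes\sigma^\dagger_{T\sqrt T+1}\vert 0\rangle^{\otimes n}$ up to an overall phase, where $\vert\psi_E\rangle=\vert T-1\rangle^{\otimes T\sqrt T}\otimes\vert T\rangle^{\otimes(n-1-T\sqrt T)}$.

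\textbf{Conclusion and anticipated difficulty:} Since the final joint state factorizes, tracing out the environment gives exactly $\sigma^\dagger_{T\sqrt T+1}(\vert 0\rangle\langle 0\vert)^{\otimes n}\sigma_{T\sqrt T+1}$ as claimed. The only nontrivial point is the bookkeeping of the two-level blocks in stages 1 and 3: one must verify that the pulse amplitudes $1/\sqrt{j+1}$ in $g(t)$ and $1/\sqrt T$ implicit in $g_i$ precisely cancel the $\sqrt{j+1}$ and $\sqrt T$ enhancements of the Jaynes-Cummings matrix elements so that each pulse executes a clean $\pi/2$ rotation and no population leaks out of the invariant two-dimensional subspace. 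Once this is checked, the rest is a direct product-state calculation.
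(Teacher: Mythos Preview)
Your proposal is correct and follows essentially the same approach as the paper's proof: both decompose the evolution into the three stages, exploit that the local terms commute (stage 1) or have disjoint time supports (stage 3), and reduce each pulse to an exact $\pi/2$ rotation in a two-dimensional Jaynes--Cummings block. Your presentation is slightly more explicit in identifying the invariant two-level subspaces and the $\sqrt{j+1}$ and $\sqrt{T}$ matrix-element enhancements, but the content is identical.
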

\begin{proof}
    The proof of this lemma follows by identifying that different terms in the Hamiltonian drive rabi oscillations in either the system qubits, or between the system qubits and the environment oscillators. More specifically, we will need the following two properties:
    \[
    \exp\bigg(-i\frac{\pi}{2}\sigma_x\bigg)  = -i\sigma_x \text{ and }\exp\bigg(-i\frac{\pi}{2\sqrt{T}}\big(\sigma^\dagger a + \sigma a^\dagger \big)\bigg) \ket{1}\otimes \ket{T - 1} = -i\ket{0}\otimes \ket{T}.
    \]
    Furthermore, we note that for a Hermitian operator $H$ and continuous function $f$,
    \[
    \mathcal{T}\exp\bigg(-\int_{t_i}^{t_f} f(s) H ds\bigg) =\exp\big(-iH\tau\big) \text{ where }\tau = \int_{t_i}^{t_f}f(s) ds.  
    \]
    Suppose $U(t_f, t_i)$ denotes the propagator from $t_i$ to $t_f$ corresponding to the full system-environment Hamiltonian $H(t)$. Similarly, let $U_\text{pump}(t_f, t_i)$, $U_\text{ex}(t_f, t_i)$ and $U_\text{prop}(t_f, t_i)$ be the propagators corresponding to the Hamiltonians $H_\text{pump}(t)$ and $H_\text{ex}(t)$, $H_\text{prop}(t)$ respectively. Then, 
    \begin{align}
    &U_\text{pump}(2T, 0) \ket{0}^{\otimes n} \otimes \ket{0}^{\otimes (n - 1)} \nonumber\\
    &=(-i)^T\bigotimes_{i = 1}^{n - 1}\bigg(  \exp\bigg(-i\frac{\pi}{2\sqrt{T}}h_{i, \text{JC}}\bigg) \sigma_x \exp\bigg(-i\frac{\pi}{2\sqrt{T}}h_{i, \text{JC}}\bigg)\sigma_x  \dots \exp\bigg(-i\frac{\pi}{2}h_{i, \text{JC}}\bigg)\sigma_x\bigg)\ket{0}^{\otimes n} \otimes \ket{0}^{\otimes (n - 1)}, \nonumber\\
    &=(-i)^T \ket{0}^{\otimes n}\otimes \ket{T}^{\otimes (n - 1)}.
    \end{align}
    Next, applying the excitation Hamiltonian, the first qubit is mapped from $\ket{0}$ to $\ket{1}$.
    \begin{align}
        U_\text{ex}(2T + 1, 2T) U_\text{pump}(2T, 0) \ket{0}^{\otimes n} \otimes \ket{0}^{\otimes (n - 1)} = (-i)^{T + 1} \sigma_{1,x}\ket{0}^{\otimes n}\otimes \ket{T}^{\otimes (n - 1)} = (-i)^{T + 1} \sigma_1^\dagger \ket{0}^{\otimes n}\otimes \ket{T}^{\otimes (n - 1)}.
    \end{align}
    Finally, we now analyze the propagation Hamiltonian. We start by noting that
    \begin{align}
        U_\text{prop}(3T + 1, 2T + 1) =\exp\bigg(-i\frac{\pi}{2\sqrt{T}} h_{T\sqrt{T} + 1,T\sqrt{T} ,\text{JC}}^{2Q}\bigg) \dots \exp\bigg(-i\frac{\pi}{2\sqrt{T}} h_{3,2 ,\text{JC}}^{2Q}\bigg)\exp\bigg(-i\frac{\pi}{2\sqrt{T}} h_{2,1 ,\text{JC}}^{2Q}\bigg).
    \end{align}
    Now, note that $h_{i + 1, i, \text{JC}}^{2Q}$, which acts on qubit $i, i + 1$ and the $i^\text{th}$ bath, is not a standard Jaynes Cumming model. However, it can be considered to be a Jaynes cumming model between an effective qubit with $\ket{0_\text{eff}} = \ket{1_i 0_{i + 1}}, \ket{1_\text{eff}} = \ket{0_i 1_{i + 1}}$ and the bosonic mode of the $i^\text{th}$ environment. Consequently, we obtain that
    \[
    \exp\bigg(-\frac{\pi}{2\sqrt{T}}h_{i + 1, i, \text{JC}}^{2Q}\bigg)\big(\ket{1_i, 0_{i + 1}} \otimes \ket{T_i}\big) = -i\ket{0_i, 1_{i + 1}}\otimes\ket{(T - 1)_i},
    \]
    and therefore
    \begin{align*}
    &U_\text{prop}(3T + 1,2T + 1)U_\text{ex}(2T + 1, 2T) U_\text{pump}(2T, 0)\ket{0}^{\otimes n}\otimes \ket{0}^{\otimes (n - 1)} =\nonumber\\
    &\qquad \qquad \qquad \qquad  \qquad \qquad (-i)^{T\sqrt{T} + T + 1}\sigma_{T\sqrt{T} + 1}^\dagger \ket{0}^{\otimes n}\otimes \ket{T - 1}^{\otimes T\sqrt{T}}\otimes \ket{T}^{n - 1 -T\sqrt{T}},
    \end{align*}
    which establishes the lemma statement.
\end{proof}
\noindent Since lemma \ref{lemma:supersonic} indicates that the system-environment Hamiltonian considered would create an excitation at $T\sqrt{T} + 1$ system qubit at time $3T + 1$, in order to explicitly show that the Lieb-Robinson bound in Proposition 1 is violated for this model, we will monitor the observable $O_X = n_{T\sqrt{T} + 1}$, where $n = \sigma^\dagger \sigma$ and $X = \{T\sqrt{T} + 1\}$ and compute $\Delta_{n_{T\sqrt{T} + 1}}(3T + 1, 0, l)$. For this, we also need to evolve the initial state under the Hamiltonian $H_{X[l]}(t)$, which we do in the next lemma.
\begin{lemma}\label{lemma:supersonic_truncated}
    If the system is initially in $\ket{0}^{\otimes n}$ and the environment is initially in the vacuum state, then at time $t = 3T + 1$ and for $l < T\sqrt{T}$, under the Hamiltonian $H_{X[l]}(t)$ and at time $3T + 1$, then system remains in the state $\ket{0}^{\otimes n}\!\bra{0}^{\otimes n}$. 
\end{lemma}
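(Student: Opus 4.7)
\noindent\textbf{Proof plan for Lemma~\ref{lemma:supersonic_truncated}.}
The plan is to walk through the three phases of $H(t)$ in turn, exploiting the fact that the excitation is deposited into the system only by $H_\textnormal{ex}(t)$, which acts on qubit $1$. Since $X[l]=\{T\sqrt{T}+1-l,\dots,T\sqrt{T}+1+l\}$ and $l<T\sqrt{T}$, qubit $1$ lies outside $X[l]$. Consequently $H_{X[l]}$ retains no term of $H_\textnormal{ex}$ and no term elsewhere that can manufacture a system excitation from the initial product state $\ket{0}^{\otimes n}\otimes\ket{\mathrm{vac}}$; the task reduces to checking that the fragments of $H_\textnormal{pump}$ and $H_\textnormal{prop}$ that \emph{are} retained annihilate, or preserve in the relevant sector, the all-zeros state.

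First, I would analyze the pumping phase $0\leq t\leq 2T$. The Hamiltonian $H_\textnormal{pump}(t)$ is a sum of independent per-index blocks $\Omega(t)\sigma_{i,x}+g(t)h_{i,\mathrm{JC}}$, and the restriction to $\mathcal{A}_{X[l]}$ keeps or drops each atomic term individually. I would split the retained blocks into two groups. For ``interior'' blocks, where both the single-qubit rotation $\Omega(t)\sigma_{i,x}$ and the Jaynes--Cummings coupling $g(t)h_{i,\mathrm{JC}}$ are kept, the computation already carried out in the proof of Lemma~\ref{lemma:supersonic} shows that, site by site, the qubit returns to $\ket{0}$ with $T$ photons deposited into bath $i$. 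For ``boundary'' blocks, where only the bath term survives the truncation because bath $i$ couples to a qubit inside $X[l]$ through $h^{2Q}_{i,i+1,\mathrm{JC}}$ in the later propagation phase but $\sigma_{i,x}$ is dropped, the argument is even simpler: a direct calculation gives $h_{i,\mathrm{JC}}\bigl(\ket{0}_i\otimes\ket{0}_{\text{bath}\,i}\bigr)=0$, so the corresponding qubit and bath remain frozen in $\ket{0}\otimes\ket{0}$. Either way, every system qubit sits in $\ket{0}$ at time $2T$.

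Next, I would handle the trivial excitation phase $2T\leq t\leq 2T+1$. Since $\mathrm{supp}(H_\textnormal{ex})=\{1\}\cap X[l]=\emptyset$, truncation removes $H_\textnormal{ex}$ entirely and the state is unchanged. For the propagation phase $2T+1\leq t\leq 3T+1$ I would invoke the identity
\begin{equation*}
 h^{2Q}_{i,i+1,\mathrm{JC}}\bigl(\ket{0_i 0_{i+1}}\otimes\ket{\psi}_{\text{bath}}\bigr)=0
\end{equation*}
valid for any bath state $\ket{\psi}$, because both $\ket{0_i 1_{i+1}}\!\bra{1_i 0_{i+1}}a_i$ and its adjoint require a qubit excitation in the pair $(i,i+1)$ to act nontrivially. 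Since $H_{\textnormal{prop},X[l]}$ is a sum of such terms and the system is still in $\ket{0}^{\otimes n}$ at time $2T+1$, the state is preserved throughout. Concatenating the three phases and tracing over the bath yields the claimed reduced state $(\ket{0}\!\bra{0})^{\otimes n}$.

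The main obstacle, though really only a bookkeeping exercise, is keeping straight which atomic terms survive truncation when the support $S_\alpha$ of a bath straddles the boundary of $X[l]$; as sketched above, however, boundary blocks are harmless because the Jaynes--Cummings coupling alone is impotent on the vacuum input, so no genuine dynamical estimate is needed.
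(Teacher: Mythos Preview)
Your approach is correct and is essentially the same as the paper's: both arguments hinge on the observation that $H_\textnormal{ex}$ is removed by the truncation, that the retained pump blocks still return each involved qubit to $\ket{0}$, and that $h^{2Q}_{i,i+1,\mathrm{JC}}$ annihilates $\ket{0_i 0_{i+1}}$ so the propagation phase is inert. The paper's own proof compresses all of this into two sentences; you simply spell out the phases more explicitly.

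One minor point: your ``boundary block'' case, where only the Jaynes--Cummings term survives while $\sigma_{i,x}$ is dropped, does not actually arise in the paper's indexing. There each bath $\alpha$ carries a single support $S_\alpha=\{\alpha,\alpha+1\}$, and both the system piece $h_\alpha(t)$ (which is $\Omega(t)\sigma_{\alpha,x}$ during the pump phase) and the bath coupling $R_\alpha^\nu(t)$ live on that same $S_\alpha$; the restriction $H_{X[l]}$ therefore keeps or drops the entire block at once. Your extra case is harmless (the argument you give for it is valid), just unnecessary under the paper's conventions.
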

\begin{proof}
    This lemma follows immediately from the fact that, as long as $l < T\sqrt{T}$, $H_{X[l]}(t)$ does not act the first qubit and consequently never applies $H_\text{ex}(t)$. Therefore, at $t = 2T + 1$ (after applying the pumping and excitation Hamiltonian), the system qubits are still in $\ket{0}^{\otimes n}$. Furthermore, since the propagation Hamiltonian only acts when the system qubits have at least one excitation, it leaves the system state unchanged.
\end{proof}

\noindent \textbf{Violation of the Lieb Robinson bound}: Using lemmas \ref{lemma:supersonic} and \ref{lemma:supersonic_truncated}, we obtain that for an initial system state $\rho_S = (\ket{0}\!\bra{0})^{\otimes n}$ and environment state $\rho_E = (\ket{0}\!\bra{0})^{\otimes n} $, 
\[
\text{Tr}(n_{T\sqrt{T} + 1} U(3T + 1, 0) (\rho_S \otimes \rho_E)  U(0, 3T + 1)) = 1 \text{ and }\text{Tr}(n_{T\sqrt{T} + 1} U_{X[l]}(3T + 1, 0) (\rho_S \otimes \rho_E)  U_{X[l]}(0, 3T + 1)) = 0,
\]
where we have assumed $l < T\sqrt{T}$. This immediately implies that $\Delta_{n_{T\sqrt{T} + 1}}(t, 0; l) = 1$ for any such $l$, thus violating the Lieb-Robinson bound in Proposition 1.

\section{Markovian Dilations for local observables}
As an application of the Lieb-Robinson bound, we analyze the star-to-chain transformation which approximates the environment by a finite number of discrete modes with the quality of the approximation governed by the number of modes. Typically, for a system with $n$ qubits, in order to approximate the system state to an $O(1)$ precision, we need a number of modes per bath of the environment that scale as $\text{poly}(n)$. However, for geometrically local many-body open quantum systems, we would expect that to approximate local observables to an $O(1)$ precision, the number of bath modes needed should be independent of $n$. In this section, we show that this qualitative expectation can be made precise using the Lieb Robinson bounds derived in this paper. We first describe the star-to-chain transformation, and review the analysis of its accuracy without using geometrical locality of the system \cite{trivedi2022description}. Then, we combine this analysis with the Lieb-Robinson bounds to provide tighter results for local observables.

\subsection{Preliminaries}

Throughout this section, we will assume that the environment is initially in vacuum state --- consequently, it follows that
\[
\K_\alpha^{\nu, \nu'}(\tau) = g^{\nu, \nu'} \text{V}_\alpha(\tau) \text{ where } g^{x, x} = g^{p, p} = \frac{1}{2} \text{ and } g^{x, p} = g^{p, x^*} = \frac{1}{2i}.
\]
and $\text{V}_\alpha(t - t') = [A_{\alpha, t}, A^\dagger_{\alpha, t'}]$. We will also define by $\text{V} \in \mathcal{K}_1(\mathbb{R})$ as the kernel that upper bounds $\text{V}_\alpha$ for all $\alpha$ and assume that $\text{TV}(\text{V}) < \infty$. We introduce the \emph{spectral density function}, $\hat{\text{V}}_\alpha$, for the $\alpha^\text{th}$ bath as the fourier transform of the Kernel $\hat{\text{V}}_\alpha$
\[
\hat{\text{V}}_\alpha(\omega) = \int_{-\infty}^\infty {\text{V}}_\alpha(t)e^{i\omega t} dt. 
\]
We note that $\hat{\text{V}}_\alpha$ is a bounded and non-negative function --- the fact that it is bounded follows straightforwardly:
\[
\abs{\hat{\text{V}}_\alpha(\omega)} \leq \int_{-\infty}^\infty \abs{\text{V}_\alpha(t)} dt = \text{TV}(\text{V}_\alpha).
\]
To see that $\hat{\text{V}}_\alpha(\omega)$ is non-negative, note that for any smooth function $h(t)$ that $\to 0$ faster than any polynomial in $1/t$ as $\abs{t}\to \infty$,
\begin{align*}
    0 \leq \int_{ -\infty}^\infty \int_{ -\infty}^\infty h(t)[A_{\alpha, t}, A^\dagger_{\alpha, t'}] h^*(t') dt' dt =  \int_{ -\infty}^\infty \int_{ -\infty}^\infty h(t) \text{V}_\alpha(t - t') h^*(t') dt' dt =  \int_{-\infty}^\infty \abs{\hat{h}(\omega)}^2 \hat{\text{V}}_\alpha(\omega) \frac{d\omega}{2\pi},
\end{align*}
where $\hat{h}(\omega) = \int_{-\infty}^{\infty}h(t) e^{-i\omega t} dt$. Now, since this is true for any $\hat{h}(\omega)$, we conclude that $\hat{\text{V}}_\alpha(\omega) \geq 0$ for all $\omega$. It will also be convenient to work with an explicit second quantization of the Hilbert space of the baths --- for describing the $\alpha^{\text{th}}$ bath's Hilbert space, it is always possible to introduce annihilation operators $a_{\alpha, \omega}$, where $\omega \in \mathbb{R}$, which satisfy the canonical commutation relations $[a_{\alpha, \omega}, a^\dagger_{\alpha', \omega'}] = \delta_{\alpha, \alpha'}\delta(\omega - \omega')$ such that
\[
A_{\alpha, t} = \int_{-\infty}^\infty \hat{v}_\alpha(\omega) a_{\alpha, \omega} e^{-i\omega t} \frac{d\omega}{\sqrt{2\pi}},
\]
where $\hat{v}_\alpha(\omega) = \sqrt{\hat{\text{V}}_\alpha(\omega)}$ --- this can be verified by noting that this representation of $A_{\alpha, t}$ reproduces the original commutation relation $[A_{\alpha, t}, A^\dagger_{\alpha, t'}] = \text{V}_\alpha(t - t')$. We will refer to $a_{\alpha, \omega}$ as the frequency-domain annihilation operator, at frequency $\omega$, for the $\alpha^\text{th}$ bath consistent with their physical interpretation as annihilating an excitation at frequency $\omega$ in the bath. In terms of $a_{\alpha,\omega}$, following section \ref{sec:regularization}, the regularized annihilation operator is given by
\[
A_{\alpha, t}^\delta = \int_{-\infty}^\infty \eta_\delta(t - s) A_{\alpha, s}ds = \int_{-\infty}^\infty \hat{v}_{\alpha}^\delta(\omega) a_{\alpha, \omega} e^{-i\omega t} \frac{d\omega}{\sqrt{2\pi}},
\]
where $\hat{v}^\delta_\alpha(\omega) = \hat{v}_\alpha(\omega) \hat{\eta}(\omega \delta)$ with $\hat{\eta}(\omega) = \int_{-1}^1 \eta(t)e^{i\omega t} dt$ being the Fourier transform of the mollifier $\eta$ introduced in section \ref{sec:regularization}. The mollified kernel, ${\text{V}}_\alpha^\delta(t - t') = [A_{\alpha, t}^\delta, A_{\alpha, t'}^{\delta \dagger}]$, is related to $\hat{v}_\alpha^\delta(\omega)$ via
\[
\text{V}^\delta_\alpha(\tau) = \int_{-\infty}^\infty \abs{\hat{v}_\alpha^\delta(\omega)}^2 e^{-i\omega \tau}\frac{d\omega}{2\pi}.
\]
We will also define $\text{V}_\alpha^{\delta, \delta'}(t - t') = [A_{\alpha, t}^\delta, A_{\alpha, t'}^{\delta' \dagger}]$ which can be expressed as 
\[
\text{V}_\alpha^{\delta, \delta'}(\tau) = \int_{-\infty}^\infty \hat{v}^{\delta}_\alpha(\omega) \hat{v}^{\delta'*}_\alpha(\omega) e^{-i\omega \tau}\frac{d\omega}{2\pi}.
\]

\subsection{Analyzing error in star-to-chain transformation}
To perform the star-to-chain transformation, there will be three steps --- first, we will regularize the model as already described in section \ref{sec:regularization}. Next, we will introduce a hard frequency cut-off on the regularization. Finally, we will apply the star-to-chain transformation as laid out in Ref.~\cite{chin2010exact}. We will analyze the error incurred in each of these steps separately.

\emph{Regularization error}. We begin by using the result of section \ref{sec:regularization} to provide an explicit bound for the regularization error.
\begin{lemma}\label{lemma:star_to_chain_reg_error}
    For $\mathcal{B} \subseteq \mathcal{A}$, observable $O$ with $\norm{O} \leq 1$ and sufficiently small $\delta$, 
    \begin{align*}
    &\smallabs{\vecbra{O, I_E}\mathcal{U}_\mathcal{B}^\delta(t, 0) \vecket{\rho_S(0), 0_E} - \lim_{\delta \to 0}\vecbra{O, I_E}\mathcal{U}_\mathcal{B}^\delta(t, 0) \vecket{\rho_S(0), 0_E}} \nonumber\\
    &\qquad \qquad \qquad \qquad \leq 8t\abs{\mathcal{B}}\lambda_0(2\delta)  + 8t\abs{\mathcal{B}}\big[1 + \big(2^2 t + 2^5t\textnormal{TV}(\textnormal{V})\big) \abs{\mathcal{B}}\big]\lambda_1(2\delta),
    \end{align*}
    where
    \[
    \lambda_0(\delta) = 4\sum_{t' \in \{0, t\}} \textnormal{TV}(\textnormal{V}_c; [t' -\delta, t' + \delta]\big) \text{ and }\lambda_1(\delta) = 2\delta \textnormal{TV}(\textnormal{V}),
    \]
    where $\textnormal{V}_c$ is the continuous part of $\textnormal{V}$.
\end{lemma}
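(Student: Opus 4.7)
The plan is to specialize the proof of Lemma \ref{lemma:mollification} to the case $n = 1$, $\mathcal{A}_1 = \mathcal{B}$, $\Omega_1 = I$, $(s_1, t_1) = (0, t)$, $\sigma_S = \rho_S(0)$, $O_S = O$, and to turn that qualitative convergence argument into a quantitative bound by carefully tracking constants. The starting point is the telescoping decomposition used in Eq.~\ref{eq:telescoping_sum}, applied with $\delta_1 = \delta$ and $\delta_2 \to 0$. Because there is only one propagator in the product, all of the off-diagonal terms $\Gamma^{\nu,\sigma;\nu',\sigma'}_{j,j',\alpha}$ (with $j \neq j'$) vanish and only the two single-interval terms $\Gamma^{\nu,\sigma;\nu',\sigma'}_{\pm, 1, \alpha}$ survive.

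Next I would invoke the smallness statement of Eq.~\ref{eq:smallness_delta}: taking $\delta_2 \to 0$ and using the set of time-edge differences $t^* = \{0, t\}$, the kernel differences $\Delta^{\nu,\sigma;\nu',\sigma'}_{\pm;\alpha}$ become $(\lambda_0(2\delta), \lambda_1(2\delta))$-small on $\textnormal{PC}^1_{\{0, t\}}(\mathbb{R})$, where the constants $\lambda_0, \lambda_1$ are precisely those in the lemma statement (with $\text{V}$ in place of the generic kernel $\U$, valid because the bath is in vacuum). I would then apply Lemma \ref{lemma:smallness_double_integral_kernel}a to each $\Gamma^{\nu,\sigma;\nu',\sigma'}_{\pm, 1, \alpha}$, which reduces the problem to uniform bounds $\phi_0$ and $\phi_1$ on the integrand $f^{\nu,\sigma;\nu',\sigma'}_{\pm,1,\alpha}$ and its partial derivatives.

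The uniform bound $\phi_0 \leq 1$ is immediate from $\norm{O} \leq 1$, $\norm{R^\nu_\alpha(s)} \leq 1$, and the unitarity of all propagators. The estimate on $\phi_1$ is the place where the constants $2^2$ and $2^5$ enter: differentiating a mollified propagator $\mathcal{U}^{\delta}_{\mathcal{B}}$ with respect to one of its time arguments produces a commutator with $\mathcal{H}^\delta_{\mathcal{B}}$, which splits into (i) a system piece $\mathcal{C}_{h_{\alpha''}}$ contributing at most $2|\mathcal{B}|$ per vertex and (ii) a bath piece $B^{\nu'',\delta}_{\alpha'',s,\sigma''} R^{\nu''}_{\alpha'',\sigma''}$ which must be handled by a single Wick contraction as in Eq.~\ref{eq:od_der_first_coord}; applying Lemmas \ref{lemma:closure_right_wick}c and \ref{lemma:closure_left_wick} with $\zeta = \mu_{[0,t]}$ and noting that $\norm{\mu_{[0,t]}}_\infty \leq \text{TV}(\text{V})$ for the vacuum state, each such term is bounded by $8|\mathcal{B}|\text{TV}(\text{V})$ per vertex. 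Summing the two kinds of vertex contributions gives $\phi_1 \leq 2^2 |\mathcal{B}| + 2^5 |\mathcal{B}|\text{TV}(\text{V})$.

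Finally I would collect constants: each of the $\pm$ integrals contributes at most $16|\mathcal{B}|\,[\lambda_0(2\delta) t \phi_0 + \lambda_1(2\delta)(\phi_0 + t\phi_1)]$ after summing the indices $(\alpha, \nu, \sigma, \nu', \sigma')$ (cardinality $16|\mathcal{B}|$), and doubling for the two signs $\pm$ and substituting the bounds on $\phi_0, \phi_1$ reproduces the stated inequality. The main obstacle is the bookkeeping in step three: since the proof of Lemma \ref{lemma:mollification} only needed to see that each factor tends to zero, the constants multiplying $|\mathcal{B}|$ and $|\mathcal{B}|\text{TV}(\text{V})$ in $\phi_1$ were never made explicit there, and one has to redo the derivative expansion in Eq.~\ref{eq:od_der_first_coord} (with $n = 1$) to pin down the coefficients $2^2$ and $2^5$ exactly.
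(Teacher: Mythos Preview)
Your approach is essentially the paper's: telescoping the propagators, applying Wick's theorem to the resulting bath operator, invoking the kernel-smallness estimate (Eq.~\ref{eq:smallness_delta} specialized to $t^*=\{0,t\}$), and feeding bounds on the integrand and its derivatives into Lemma~\ref{lemma:smallness_double_integral_kernel}a. The one substantive difference is that the paper does not go through the $(R^\nu_\alpha,B^\nu_\alpha)$ representation of Lemma~\ref{lemma:mollification}: because the bath is in vacuum it works directly in the $(L_\alpha,A_\alpha)$ picture, so the Wick expansion yields four terms $\Gamma^{(1)},\dots,\Gamma^{(4)}$ indexed only by $(\alpha,\sigma)$, with $|f^{(j)}_{\alpha,\sigma}|\leq \|L_\alpha\|^2\leq 2$ and derivative bound $(2^3+2^6\,\textnormal{TV}(\textnormal{V}))|\mathcal{B}|$. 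This halves the index multiplicity relative to your $(\alpha,\nu,\sigma,\nu',\sigma')$ count; your ``cardinality $16|\mathcal{B}|$ times two for $\pm$'' would give a $32|\mathcal{B}|$ prefactor rather than the stated $8|\mathcal{B}|$, so the claim that your bookkeeping ``reproduces the stated inequality'' is not quite right, although the structure of the bound (and its dependence on $t,|\mathcal{B}|,\textnormal{TV}(\textnormal{V})$) is correct.
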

\begin{proof}
    We begin by considering $\delta, \delta' > 0$ and consider the error $\Delta_{\delta, \delta'}$
    \begin{align}\label{eq:error_delta_deltap_decomp}
    \Delta_{\delta, \delta'} &=\abs{\vecbra{O, I_E} \mathcal{U}_\mathcal{B}^\delta(t, 0) - \mathcal{U}_\mathcal{B}^{\delta'}(t, 0)\vecket{\rho_S, 0_E}}, \nonumber \\
    &\leq \Theta_{\mathcal{B}}(\alpha) \bigg(\abs{\int_0^t \vecbra{O, I_E}\mathcal{U}_\mathcal{B}^{\delta'}(t, s)(A_{\alpha, s, \sigma}^\delta - A_{\alpha, s, \sigma}^{\delta'}) L_{\alpha, \sigma}^\dagger\mathcal{U}_\mathcal{B}^\delta (s, 0) \vecket{\rho_S(0), 0_E} ds} + \nonumber \\
    &\qquad \qquad \abs{\int_0^t \vecbra{O, I_E}\mathcal{U}_\mathcal{B}^{\delta'}(t, s)(A_{\alpha, s, \sigma}^{\delta\dagger} - A_{\alpha, s, \sigma}^{\delta'\dagger}) L_{\alpha, \sigma}\mathcal{U}_\mathcal{B}^\delta (s, 0) \vecket{\rho_S(0), 0_E} ds}\bigg), \nonumber \\
    &\leq \Theta_{\mathcal{B}}(\alpha) \big( \abs{\Gamma^{(1)}_{\alpha, \sigma}} + \abs{\Gamma^{(2)}_{\alpha, \sigma}} + \abs{\Gamma^{(3)}_{\alpha, \sigma}} + \abs{\Gamma^{(4)}_{\alpha, \sigma}}\big).
    \end{align}
    where, from Wick's theorem, 
    \begin{align}\label{eq:gamma_split_def}
        &\Gamma_{\alpha, \sigma}^{(1)} = \int_0^t \int_0^s \big(\text{V}^{\delta, \delta}_{\alpha}(s - s') - \text{V}^{\delta, \delta'}_{\alpha}(s - s')\big)f_{\alpha, \sigma}^{(1)}(s, s') ds' ds \text{ with }\nonumber\\
        &f_{\alpha, \sigma}^{(1)}(s, s') = \vecbra{O, I_E} \mathcal{U}_\mathcal{B}^{\delta'}(t, s) L_{\alpha, \sigma}^\dagger \mathcal{U}_\mathcal{B}^\delta(s, s') L_{\alpha, l} \mathcal{U}_\mathcal{B}^{\delta}(s', 0) \vecket{\rho_S(0), 0_E},\nonumber \\
        &\Gamma_{\alpha, \sigma}^{(2)} = \int_0^t \int_{s}^t \big(\text{V}_{\alpha}^{\delta, \delta'}(s - s') - \text{V}_\alpha^{\delta', \delta'}(s-s') \big) f_{\alpha, \sigma}^{(2)}(s, s') ds'ds \text{ with }\nonumber \\
        &f_{\alpha, \sigma}^{(2)}(s, s') = \vecbra{O, I_E}\mathcal{U}_\mathcal{B}^{\delta'}(t, s') L_{\alpha, \sigma} \mathcal{U}_\mathcal{B}^{\delta'}(s', s) L_{\alpha, l}^\dagger \mathcal{U}^{\delta}_\mathcal{B}(s', 0) \vecket{\rho_S(0), 0_E},\nonumber \\
        &\Gamma_{\alpha, \sigma}^{(3)} = \int_0^t \int_0^s \big(\text{V}_{\alpha}^{\delta, \delta}(s' - s) - \text{V}_\alpha^{\delta, \delta'}(s' - s)\big)f_{\alpha, \sigma}^{(3)}(s, s')ds' ds \text{ where }\nonumber\\
        &f_{\alpha, \sigma}^{(3)}(s, s') = \vecbra{O, I_E} \mathcal{U}_\mathcal{B}^{\delta'}(t, s) L_{\alpha, \sigma} \mathcal{U}^\delta_\mathcal{B}(s, s') L_{\alpha, r}^\dagger \mathcal{U}_\mathcal{B}^\delta(s', 0) \vecket{\rho_S(0), 0_E},\nonumber \\
        &\Gamma_{\alpha, \sigma}^{(4)} = \int_0^t \int_{s}^t (\text{V}^{\delta',\delta'}_\alpha(s' - s) - \text{V}_\alpha^{\delta',\delta}(s' - s)) f_{\alpha, \sigma}^{(4)}(s, s') ds' ds \text{ where }\nonumber\\
        &f_{\alpha, \sigma}^{(4)}(s, s') = \vecbra{O, I_E} \mathcal{U}_{\mathcal{B}}^{\delta'}(t, s') L_{\alpha, \sigma}^\dagger \mathcal{U}^{\delta'}_\mathcal{B}(s', s)L_{\alpha,l}\mathcal{U}^\delta_{\mathcal{B}}(s, 0) \vecket{\rho_S(0), 0_E}.
    \end{align}
    We can now bound $\Gamma_{\alpha, \sigma}^{(j)}$ using lemma \ref{lemma:smallness_double_integral_kernel} --- the bound for each $j$ is identical, so we perform the calculation, to illustrate the procedure, only for $j = 1$. We begin by noting from lemma \ref{lemma:kernel_appx}, on the test-function space $\text{PC}_{\{0, t\}}^1(\mathbb{R})$ and for sufficiently small $\delta, \delta'$, $\text{V}_{\alpha}^{\delta, \delta} - \text{V}^{\delta, \delta'}_\alpha$ is $\lambda_0(\Delta), \lambda_1(\Delta)$-small where $\Delta = 2 \max(\delta, \delta')$ and
    \[
    \lambda_0(\Delta) = 4 \sum_{t' \in \{0, t\}} \text{TV}(\text{V}_{c}; [t' - \Delta, t' + \Delta]) \text{ and }\lambda_1(\Delta) = 2 \Delta \textnormal{TV}(\text{V}).
    \]
    To see this, note that for any $f \in \text{PC}_{\{0, t\}}^1(\mathbb{R})$
    \begin{align*}
        \abs{\int_{-\infty}^\infty \big(\text{V}_{\alpha}^{\delta, \delta}(s) - \text{V}^{\delta, \delta'}_\alpha(s)\big)f(s)ds} &\leq \abs{\int_{-\infty}^\infty \big(\text{V}_{\alpha}^{\delta, \delta}(s) - \text{V}_\alpha(s)\big)f(s)ds} + \abs{\int_{-\infty}^\infty \big( \text{V}^{\delta, \delta'}_\alpha(s) - \text{V}_{\alpha}(s)\big)f(s)ds},
    \end{align*}
    and then use lemma \ref{lemma:kernel_appx}. Next, we show that $f^{(1)}_{\alpha, \sigma}$ is both bounded and differentiable with respect to its arguments with bounded bounded partial derivatives. The fact that it is bounded follows immediately from Eq.~\ref{eq:gamma_split_def} --- in particular,
    \begin{align}\label{eq:bound_f_sigma}
        \smallabs{f^{(1)}_{\alpha, \sigma}(s, s')} = \smallabs{\vecbra{O, I_E} \mathcal{U}_\mathcal{B}^{\delta'}(t, s) L_{\alpha, \sigma}^\dagger \mathcal{U}_\mathcal{B}^\delta(s, s') L_{\alpha, l} \mathcal{U}_\mathcal{B}^{\delta}(s', 0) \vecket{\rho_S(0), 0_E}} \leq \norm{L_\alpha}^2 \leq 2.
    \end{align}
    Next, we compute its derivatives --- 
    \begin{align}\label{eq:bound_deriv_f_1}
    \abs{\partial_1 f^{(1)}_{\alpha, \sigma}(s, s')} &\leq \Theta_{\mathcal{B}}(\alpha')\bigg(\abs{\vecbra{O, I_E}\mathcal{U}_\mathcal{B}^{\delta'}(t,s)[\mathcal{C}_{h_{\alpha'}}, L_{\alpha, \sigma}^\dagger ]\mathcal{U}^\delta_\mathcal{B}(s, s') L_{\alpha, l}\mathcal{U}^\delta_\mathcal{B}(s', 0)\vecket{\rho_S(0), 0_E}} +\nonumber \\
    &\qquad \quad \abs{\vecbra{O, I_E}\mathcal{U}_\mathcal{B}^{\delta'}(t,s)(A^{\delta'}_{\alpha', \sigma', s}L_{\alpha', \sigma'}^\dagger + A_{\alpha', \sigma', s}^{\delta' \dagger} L_{\alpha', \sigma'})L_{\alpha, \sigma}^\dagger\mathcal{U}^\delta_\mathcal{B}(s, s') L_{\alpha, l}\mathcal{U}^\delta_\mathcal{B}(s', 0)\vecket{\rho_S(0), 0_E}} + \nonumber\\
    &\qquad \quad \abs{\vecbra{O, I_E}\mathcal{U}_\mathcal{B}^{\delta'}(t,s)L_{\alpha, \sigma}^\dagger (A^\delta_{\alpha', \sigma', s}L_{\alpha', \sigma'}^\dagger + A_{\alpha', \sigma', s}^{\delta \dagger} L_{\alpha', \sigma'})\mathcal{U}^\delta_\mathcal{B}(s, s') L_{\alpha, l}\mathcal{U}^\delta_\mathcal{B}(s', 0)\vecket{\rho_S(0), 0_E}}\bigg), \nonumber\\
    &\leq 4 \Theta_{\mathcal{B}}(\alpha') \norm{h_{\alpha'}}\norm{L_\alpha}^2 + 2^4\Theta_{\mathcal{B}}(\alpha')\norm{L_{\alpha'}}^2\norm{L_\alpha}^2\text{TV}(\text{V}_\alpha), \nonumber \\
    &\leq \big(2^3 + 2^6\text{TV}(\text{V})\big) \abs{\mathcal{B}}.
    \end{align}
    Similarly,
    \begin{align}\label{eq:bound_deriv_f_2}
        \abs{\partial_2 f^{(1)}_{\alpha, \sigma}(s, s')} \leq  \big(2^3 + 2^6\text{TV}(\text{V}_\alpha)\big) \abs{\mathcal{B}}.
    \end{align}
    Using lemma \ref{lemma:smallness_double_integral_kernel} together with the estimates from Eqs.~\ref{eq:bound_f_sigma}, \ref{eq:bound_deriv_f_1} and \ref{eq:bound_deriv_f_2}, we then obtain that
    \begin{align}
        \Gamma^{(1)}_{\alpha, \sigma} \leq 2t\lambda_0(\Delta)  + \lambda_1(\Delta)\big[2 + \big(2^3 t + 2^6t\text{TV}(\text{V})\big) \abs{\mathcal{B}}\big].
    \end{align}
    A similar bound can be calculated for $\Gamma^{(2)}_{\alpha, \sigma}, \Gamma^{(3)}_{\alpha, \sigma}, \Gamma^{(4)}_{\alpha, \sigma}$. We then obtain from Eq.~\ref{eq:error_delta_deltap_decomp} that
    \[
    \abs{\vecbra{O, I_E}\mathcal{U}_\mathcal{B}^\delta(t, 0) - \mathcal{U}_X^{\delta'}(t, 0) \vecket{\rho_S(0), 0_E}} \leq 8t\abs{\mathcal{B}}\lambda_0(\Delta)  + 8t\abs{\mathcal{B}}\big[1 + \big(2^2 t + 2^5t\text{TV}(\text{V})\big) \abs{\mathcal{B}}\big]\lambda_1(\Delta).
    \]
    Taking the limit of $\delta' \to 0$, which amounts to setting $\Delta = 2\max(\delta, \delta')$ to $2\delta$, we obtain the lemma statement.
\end{proof}

\emph{Analyzing the frequency cutoff}.  Next, we will introduce a hard frequency cut-off on the regularized model. Given a frequency-cutoff $\omega_c$, we will introduce the bath annihilation operator
\[
A_{\alpha, t}^{\delta, \omega_c} = \int_{-\omega_c}^{\omega_c} \hat{v}_\alpha^\delta(\omega) a_{\alpha, \omega}e^{-i\omega t}\frac{d\omega}{\sqrt{2\pi}},
\]
and define the Hamiltonian $H_\mathcal{B}^{\delta, \omega_c}$ by
\begin{align}\label{eq:hamil_freq_cutoff}
H^{\delta, \omega_c}_\mathcal{B}(t) = \sum_{\alpha \in \mathcal{B}} h_\alpha + \sum_{\alpha \in \mathcal{B}}\big(L_\alpha^\dagger A_{\alpha, t}^{\delta, \omega_c} + \text{h.c.}\big).
\end{align}
We will now consider the error between the regularized model $H^\delta_\mathcal{B}(t)$ and the regularized model with frequency cut-off $H^{\delta, \omega_c}_\mathcal{B}(t)$. Physically, due to the regularization, $\hat{v}^\delta_\alpha(\omega)$ decreases to $0$ as $\omega \to \infty$ and consequently if $\omega_c$ is sufficiently large, we expect to incur a small error on introducing the frequency cut-off. For this purpose of analyzing this error, it is convenient to first introduce a bound on $\abs{\hat{v}_\alpha^\delta(\omega)}$.
\begin{lemma}\label{lemma:freq_cutoff}
For any $\omega_c > 0$,
\[
\int_{\abs{\omega} \geq \omega_c} \abs{\hat{v}_\alpha^\delta(\omega)}^2 \frac{d\omega}{2\pi} \leq \frac{\gamma_0 }{\delta^2 \omega_c}\textnormal{TV}(\textnormal{V}_\alpha) \exp\bigg[-\bigg(\frac{ {\omega_c}\delta}{16e}\bigg)^{1/2} \bigg],
\]
where $\gamma_0$ is a constant.
\end{lemma}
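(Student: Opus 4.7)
The plan is to reduce the lemma to a purely analytic estimate on the Fourier transform of the standard mollifier and then carry out a Gevrey-type derivative bound. First, I would use the factorization $\hat{v}_\alpha^\delta(\omega) = \hat{v}_\alpha(\omega)\hat{\eta}(\omega\delta)$, together with the pointwise bound $|\hat{v}_\alpha(\omega)|^2 = \hat{\textnormal{V}}_\alpha(\omega) \leq \textnormal{TV}(\textnormal{V}_\alpha)$ that was established earlier in the preliminaries, to obtain
\[
\int_{\abs{\omega}\geq \omega_c}\abs{\hat{v}_\alpha^\delta(\omega)}^2\frac{d\omega}{2\pi}\ \leq\ \textnormal{TV}(\textnormal{V}_\alpha)\,\frac{1}{\delta}\int_{\abs{\xi}\geq \omega_c\delta}\abs{\hat{\eta}(\xi)}^2\frac{d\xi}{2\pi},
\]
which reduces everything to a tail bound for $|\hat{\eta}|^2$ on $\{|\xi|\ge \omega_c\delta\}$, independent of $\alpha$ and of the system.

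The heart of the argument is then to prove a stretched-exponential decay estimate of the form $|\hat{\eta}(\xi)|\leq C\exp(-c|\xi|^{1/2})$. Since $\eta$ is compactly supported, integration by parts $n$ times yields $|\hat{\eta}(\xi)|\leq \|\eta^{(n)}\|_1/|\xi|^n \leq 2\|\eta^{(n)}\|_\infty/|\xi|^n$ for every $n\in\mathbb{N}$. Next, I would estimate $\|\eta^{(n)}\|_\infty$ directly from the explicit form $\eta(x)=A_0 e^{-1/(1-x^2)}$: localising near the boundary $x\approx 1$ via $y=1-x$ reduces the problem to derivatives of $e^{-1/(2y)}$, which by Fa\`a di Bruno (or by maximising $y^{-(2n+1)}e^{-1/(2y)}$) satisfy $\|\eta^{(n)}\|_\infty\leq C\,M^n(n!)^2$ for an explicit absolute $M$. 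This is the Gevrey-$2$ character of the standard mollifier.

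With the derivative bound in hand, I would optimise over $n$: using Stirling, $(n!)^2\leq n^{2n}$ up to a polynomial factor, so $|\hat{\eta}(\xi)|\leq 2C(Mn^2/|\xi|)^n$, and the minimiser $n_*= \lfloor |\xi|^{1/2}/(e\sqrt{M})\rfloor$ gives $|\hat{\eta}(\xi)|^2\leq C'\exp(-(|\xi|/(16e))^{1/2})$, provided the constant $M$ from the previous step is dominated by the value needed to produce the factor $16e$. Plugging this into the tail integral and using the crude estimate
\[
\int_{\abs{\xi}\geq \omega_c\delta}\exp\!\Big(-\big(\tfrac{|\xi|}{16e}\big)^{1/2}\Big)\,d\xi\ \leq\ \frac{\gamma_0'}{\omega_c\delta}\exp\!\Big(-\big(\tfrac{\omega_c\delta}{16e}\big)^{1/2}\Big)
\]
(obtained by the substitution $u=(|\xi|/(16e))^{1/2}$ and monotonicity of the integrand), and combining with the prefactor $1/\delta$ from the change of variables in the first step, produces the stated bound $\gamma_0\textnormal{TV}(\textnormal{V}_\alpha)\exp(-(\omega_c\delta/(16e))^{1/2})/(\delta^2\omega_c)$.

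The main obstacle is the Gevrey-$2$ derivative bound together with tracking the constant $M$ precisely enough to recover the specific $16e$ appearing in the exponent; this is a careful but routine combinatorial computation on the derivatives of $e^{-1/(1-x^2)}$. Every other step (the Plancherel-type reduction, the integration by parts, and the tail integration) is standard. No Lieb--Robinson machinery enters the proof at all: it is a self-contained lemma about the smoothing kernel used in Section~\ref{sec:regularization}.
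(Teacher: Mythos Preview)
Your plan is essentially the paper's own proof: reduce via $|\hat v_\alpha^\delta(\omega)|^2=\hat{\textnormal{V}}_\alpha(\omega)\,|\hat\eta(\omega\delta)|^2\le \textnormal{TV}(\textnormal{V}_\alpha)\,|\hat\eta(\omega\delta)|^2$, establish the Gevrey-$2$ derivative bound $\|\eta^{(k)}\|_\infty\le A_0\,16^k k^{2k}$ (the paper just cites this), optimise over $k$ to get $|\hat\eta(\xi)|\le(2A_0/e)\exp[-(|\xi|/16e)^{1/2}]$, and integrate the tail. The only organisational difference is that you change variables to $\xi=\omega\delta$ first.

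There is, however, one genuine slip in the last step. After squaring you record $|\hat\eta(\xi)|^2\le C'\exp[-(|\xi|/16e)^{1/2}]$ and then claim
\[
\int_{|\xi|\ge \omega_c\delta}\exp\!\Big[-\big(\tfrac{|\xi|}{16e}\big)^{1/2}\Big]\,d\xi\ \le\ \frac{\gamma_0'}{\omega_c\delta}\,\exp\!\Big[-\big(\tfrac{\omega_c\delta}{16e}\big)^{1/2}\Big].
\]
This inequality is false: your own substitution $u=(|\xi|/16e)^{1/2}$ evaluates the left side to $64e(u_0+1)e^{-u_0}$ with $u_0=(\omega_c\delta/16e)^{1/2}$, which is $\Theta\big((\omega_c\delta)^{1/2}e^{-u_0}\big)$, not $O\big((\omega_c\delta)^{-1}e^{-u_0}\big)$. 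The paper avoids this by not discarding the factor of $2$ that squaring produces, i.e.\ it keeps $|\hat\eta(\omega\delta)|^2\le C\exp[-2(\omega\delta/16e)^{1/2}]$, and then \emph{splits} $e^{-2\sqrt{\cdot}}=e^{-\sqrt{\cdot}}\cdot e^{-\sqrt{\cdot}}$: one factor is traded pointwise for $e^{-\sqrt{x}}\le M_0/x^2$ (with $M_0=\sup_{x>0}x^2e^{-\sqrt x}$), which converts the tail into $\int_{|\omega|\ge\omega_c}\omega^{-2}\,d\omega=2/\omega_c$ and supplies the $1/(\delta^2\omega_c)$ prefactor, while the other factor survives as the stretched exponential in the lemma statement. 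If you either restore that factor of $2$, or equivalently arrange constants so that it is $|\hat\eta(\xi)|$ (not $|\hat\eta(\xi)|^2$) that carries the exponent $(|\xi|/16e)^{1/2}$, and then use this splitting trick, your argument goes through and recovers exactly the stated bound.
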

\begin{proof}
    To prove this lemma, we note that $\hat{\eta}(\omega)$, the fourier transform of the standard mollifier $\eta$ defined in Eq.~\ref{eq:standard_mollifiers}, is bounded above by any polynomial in $\omega^{-1}$ --- more precisely, for any $\omega$
    \[
    \abs{\hat{\eta}(\omega)} = \abs{\int_{-1}^1 \eta(t) e^{i\omega t} dt} \numeq{1} \abs{\frac{1}{\omega^k} \int_{-1}^{1}\eta^{(k)}(t)e^{i\omega t}dt} \leq \frac{2\norm{\eta^{(k)}}_\infty}{\abs{\omega}^k},
    \]
    where, in (1), we have used integration by parts $k$ times together with the fact that since $\eta(t)$ is a smooth and compact function on $[-1, 1]$, $\eta^{(k)}(\pm 1) = 0$. Now, for the standard mollifier, $\eta^{(k)}$ can be explicitly evaluated and upper bounded to obtain \cite{israel2015eigenvalue} (here, $A_0$ is the constant, defined in Eq.~\ref{eq:standard_mollifiers}, chosen such that $\int_{[-1, 1]} \eta(x) dx = 1$)
    \[
    \norm{\eta^{(k)}}_\infty \leq A_0 16^k k^{2k} \text{ and therefore }\abs{\hat{\eta}(\omega)} \leq \abs{\hat{\eta}(\omega)} \leq 2A_0 \bigg(\frac{16k^2}{\abs{\omega}}\bigg)^k.
    \]
    Since this is true for any $k$, we can choose $k$ to be the smallest integer such that $k \leq (\abs{\omega} / 16e)^{1/2}$ (and since $k$ is an integer, it is also true that $(\abs{\omega} / 16e)^{1/2} - 1\leq k$) and therefore
    \begin{align}\label{eq:upper_bound_rho}
    \abs{\hat{\eta}(\omega)} \leq \frac{2A_0}{e}\exp\bigg[-\bigg(\frac{\abs{\omega}}{16e}\bigg)^{1/2} \bigg].
    \end{align}
    With this, we can upper bound $\abs{\hat{v}_\alpha^\delta(\omega)}^2$ by noting that
    \begin{align}\label{eq:upper_bound_v_omega}
\abs{\hat{v}_\alpha^\delta(\omega)}^2 &= \abs{\hat{\textnormal{V}}_\alpha(\omega)} \abs{\hat{\eta}(\omega \delta)}^2 =\abs{\int_{-\infty}^\infty \hat{\text{V}}_\alpha(t) e^{i\omega t}dt}  \abs{\hat{\eta}(\omega \delta)}^2,\nonumber\\
&\leq  \text{TV}(\text{V}_\alpha) \abs{\hat{\eta}(\omega \delta)}^2 \leq \frac{4A_0^2}{e^2} \textnormal{TV}(\textnormal{V}_\alpha) \exp\bigg[-2\bigg(\frac{\abs{\omega}}{16e}\bigg)^{1/2}\bigg].
\end{align}
We can now use this bound to obtain the bound in the lemma statement:
\begin{align}
\abs{\int_{\abs{\omega} \geq \omega_c} \abs{\hat{v}_\alpha^\delta(\omega)}^2 \frac{d\omega}{2\pi}}&\leq\frac{2A_0^2}{e^2\pi}\text{TV}(\text{V}_\alpha){\int_{\abs{\omega} \geq \omega_c} \exp\bigg[-2\bigg(\frac{\delta \abs{\omega}}{16e}\bigg)^{1/2}\bigg]  d\omega}, \nonumber \\
        &\numleq{1} \frac{2^9  A_0^2 M_0}{e^2\delta^2 \pi}\text{TV}(\text{V}_\alpha) \int_{\abs{\omega}\geq \omega_c} \exp\bigg[-\bigg(\frac{\delta \abs{\omega}}{16e}\bigg)^{1/2}\bigg] \frac{d\omega}{\omega^2} , \nonumber \\
        &\leq \frac{2^9  A_0^2 M_0}{e^2 \delta^2 \pi}\text{TV}(\text{V}_\alpha) \exp\bigg[-\bigg(\frac{\delta {\omega_c}}{16e}\bigg)^{1/2}\bigg]\int_{\abs{\omega}\geq \omega_c}  \frac{d\omega}{\omega^2},\nonumber \\
        &= \frac{2^{10}  A_0^2 M_0}{e^2 \delta^2 \omega_c \pi}\text{TV}(\text{V}_\alpha) \exp\bigg[-\bigg(\frac{\delta {\omega_c}}{16e}\bigg)^{1/2}\bigg]
\end{align}
where in (1) we have introduced the constant $M_0 = \sup_{\omega > 0}\omega^2 e^{-\sqrt{\omega}}$ and used that $e^{-\sqrt{\delta \abs{\omega}/16e}} \leq 2^8 e^2 M_0 / \abs{\omega}^2 \delta^2$. Noting that $\gamma_0 = 2^{10}A_0^2 M_0 / e^2 \pi$ is a constant, we obtain the lemma statement.
\end{proof}
To analyze this error, it is convenient to introduce the following lemma.
\begin{lemma}\label{lemma:error_two_L2}
    Given $\mathcal{B}\subseteq \mathcal{A}$, suppose 
    \[
    H_\mathcal{B}(t) = \sum_{\alpha \in \mathcal{B}} h_\alpha  + \sum_{\alpha \in \mathcal{B}} (A_{\alpha, t} L_\alpha^\dagger + \textnormal{h.c.}) \text{ and }\tilde{H}_\mathcal{B}(t) = \sum_{\alpha \in \mathcal{B}} h_\alpha  + \sum_{\alpha \in \mathcal{B}} (\tilde{A}_{\alpha, t} L_\alpha^\dagger + \textnormal{h.c.})
    \]
    such that $\textnormal{V}_\alpha(t) := [A_{\alpha, t}, A_{\alpha, 0}^\dagger], \tilde{\textnormal{V}}_\alpha(t) := [\tilde{A}_{\alpha, t}, \tilde{A}^\dagger_{\alpha, 0}]$ are both continuous functions, then for any system observable $O$ and initial system state $\rho(0)$,
    \[
    \abs{\vecbra{O, I_E}\mathcal{U}_\mathcal{B}(t, 0) - \tilde{\mathcal{U}}_\mathcal{B}(t, 0) \vecket{\rho_S(0), 0_E}} \leq {2}t^2 \norm{O} \sum_{\alpha \in \mathcal{B}} \bigg(\big(\sqrt{\textnormal{V}_\alpha(0)} + \sqrt{\tilde{\textnormal{V}}_\alpha(0)}\big)\sup_{s\in [0, t]}\sqrt{[\delta A_{\alpha, s}, \delta A_{\alpha, s}^\dagger]}\bigg),
    \]
    where $\delta A_{\alpha, s} = A_{\alpha, s} - \tilde{A}_{\alpha, s}$.
\end{lemma}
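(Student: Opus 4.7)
The plan is to reduce the statement to an estimate on $\|(U(t)-\tilde U(t))|\psi,0\rangle\|$ for pure system states $|\psi\rangle$, and then use Duhamel together with the vacuum-annihilation property of $\delta A_{\alpha,s}$ and a Cauchy--Schwarz inequality for bath $c$-number commutators.

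\emph{Reduction.} Write $U\rho U^\dagger-\tilde U\rho\tilde U^\dagger=(U-\tilde U)\rho U^\dagger+\tilde U\rho(U-\tilde U)^\dagger$ with $\rho=\rho_S\otimes|0\rangle\langle 0|$, and use $|\textnormal{Tr}(OX)|\le \|O\|\,\|X\|_1$ together with the spectral decomposition of $\rho_S$ and the identity $\|X|\psi\rangle\langle\psi|\|_1=\|X|\psi\rangle\|$ to obtain
\[
|\Delta|\le 2\|O\|\sup_{|\psi\rangle:\,\|\psi\|=1}\|(U(t)-\tilde U(t))|\psi,0\rangle\|.
\]

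\emph{Duhamel in both orderings.} Apply Duhamel's formula in its two equivalent forms, $U-\tilde U=-i\int_0^t U(t,s)\delta H(s)\tilde U(s,0)\,ds=-i\int_0^t \tilde U(t,s)\delta H(s)U(s,0)\,ds$, where $\delta H(s)=\sum_{\alpha\in\mathcal B}(\delta A_{\alpha,s}L_\alpha^\dagger+\delta A^\dagger_{\alpha,s}L_\alpha)$. Using the first form on the $(U-\tilde U)\rho U^\dagger$ half and the second form on $\tilde U\rho(U-\tilde U)^\dagger$ is what will produce the symmetric $\sqrt{\textnormal{V}_\alpha(0)}+\sqrt{\tilde{\textnormal V}_\alpha(0)}$ combination in the final bound.

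\emph{Pushing $\delta A$ onto the vacuum and Cauchy--Schwarz.} The triangle inequality reduces the integrand to $\|\delta H(s)\tilde U(s,0)|\psi,0\rangle\|$ (or the $U$-version). The crucial point is that $\delta A_{\alpha,s}|\psi,0\rangle=0$, so
\[
\delta A_{\alpha,s}\tilde U(s,0)|\psi,0\rangle=[\delta A_{\alpha,s},\tilde U(s,0)]|\psi,0\rangle=-i\int_0^s[\delta A_{\alpha,s},\tilde A^\dagger_{\alpha,\tau}]\,\tilde U(s,\tau)L_\alpha\tilde U(\tau,0)|\psi,0\rangle\,d\tau,
\]
the second equality being a second application of Duhamel to the commutator, valid because $[\delta A_{\alpha,s},\tilde A_{\alpha,\tau}]=0$ (both are linear in the same canonical annihilation operators). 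This inner time integral is the source of the extra factor of $t$ that produces the $t^2$ scaling. The remaining $c$-number commutator equals $\langle 0|\delta A_{\alpha,s}\tilde A^\dagger_{\alpha,\tau}|0\rangle$, so Cauchy--Schwarz on the vacuum inner product gives $|[\delta A_{\alpha,s},\tilde A^\dagger_{\alpha,\tau}]|\le\sqrt{[\delta A_{\alpha,s},\delta A^\dagger_{\alpha,s}]\,\tilde{\textnormal V}_\alpha(0)}$, yielding $\|\delta A_{\alpha,s}\tilde U(s,0)|\psi,0\rangle\|\le\sqrt{2}\,s\sqrt{\tilde{\textnormal V}_\alpha(0)[\delta A_{\alpha,s},\delta A^\dagger_{\alpha,s}]}$; the analogous bound with $\textnormal V_\alpha(0)$ holds for the $U(s,0)$ version. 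For the $\delta A^\dagger$ piece one decomposes $\delta A^\dagger_{\alpha,s}\tilde U=\tilde U\delta A^\dagger_{\alpha,s}+[\delta A^\dagger_{\alpha,s},\tilde U]$ and handles the commutator identically, using the direct bound $\|\delta A^\dagger_{\alpha,s}|0\rangle\|=\sqrt{[\delta A_{\alpha,s},\delta A^\dagger_{\alpha,s}]}$ for the first piece.

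\emph{Assembly and main obstacle.} Integrating over $s\in[0,t]$ promotes the $s$-factor to $t^2/2$, and summing the contributions from the two Duhamel directions delivers the stated $\sum_\alpha(\sqrt{\textnormal V_\alpha(0)}+\sqrt{\tilde{\textnormal V}_\alpha(0)})\sup_s\sqrt{[\delta A_{\alpha,s},\delta A^\dagger_{\alpha,s}]}$ envelope. The hard part is the $\delta A^\dagger$ step: the ``direct'' piece $\|\tilde U\delta A^\dagger|\psi,0\rangle\|=\sqrt{[\delta A,\delta A^\dagger]}$ is $s$-independent and therefore contributes, after integration, a residual term linear in $t$ that does not carry a $\sqrt{\textnormal V}$ or $\sqrt{\tilde{\textnormal V}}$ factor. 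To fit the stated template one either invokes the elementary inequality $\sqrt{[\delta A_{\alpha,s},\delta A^\dagger_{\alpha,s}]}\le\sqrt{\textnormal V_\alpha(0)}+\sqrt{\tilde{\textnormal V}_\alpha(0)}$ (immediate from $\delta A^\dagger|0\rangle=A^\dagger|0\rangle-\tilde A^\dagger|0\rangle$ and the triangle inequality) and absorbs the residual into the $t^2$ term, or -- more cleanly -- squares first and expands $\|(U-\tilde U)|\psi,0\rangle\|^2$ as a double time integral, then uses Wick's theorem (lemma~\ref{lemma:wick_thm_gaussian}) on the pair of bath operators so that every $\delta A^\dagger$ is automatically paired in a $c$-number commutator rather than left acting on the vacuum. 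Tracking the $\sqrt{2}$ factors through the Wick contractions is then a routine bookkeeping exercise that produces the overall prefactor of $2t^2\|O\|$.
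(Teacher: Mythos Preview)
Your reduction to the Hilbert-space norm $\|(U-\tilde U)|\psi,0\rangle\|$ is where the argument breaks, and neither of your proposed fixes repairs it. The ``direct piece'' $\|\tilde U\,\delta A^\dagger_{\alpha,s}|\psi,0\rangle\|=\sqrt{[\delta A_{\alpha,s},\delta A^\dagger_{\alpha,s}]}$ integrates to a term $\propto t\sup_s\sqrt{[\delta A_{\alpha,s},\delta A^\dagger_{\alpha,s}]}$ that carries no $\sqrt{\textnormal V_\alpha(0)}$ or $\sqrt{\tilde{\textnormal V}_\alpha(0)}$ factor. Fix~(a) does not help: invoking $\sqrt{[\delta A,\delta A^\dagger]}\le\sqrt{\textnormal V_\alpha(0)}+\sqrt{\tilde{\textnormal V}_\alpha(0)}$ on the residual produces a term $\propto t(\sqrt{\textnormal V_\alpha(0)}+\sqrt{\tilde{\textnormal V}_\alpha(0)})$ with \emph{no} $\sqrt{[\delta A,\delta A^\dagger]}$ at all, which cannot be absorbed into the stated bound for any uniform constant. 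Fix~(b) does not help either: expanding $\|(U-\tilde U)|\psi,0\rangle\|^2$ as a double time integral and applying Wick's theorem necessarily contains the self-pairing $\langle 0|\delta A_{\alpha,s'}\delta A^\dagger_{\alpha,s}|0\rangle$, which after Cauchy--Schwarz and integration contributes $\sim t^2\sup_s[\delta A_{\alpha,s},\delta A^\dagger_{\alpha,s}]$ and hence, after the square root, the same residual $\sim t\sup_s\sqrt{[\delta A_{\alpha,s},\delta A^\dagger_{\alpha,s}]}$.

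The paper avoids this entirely by \emph{not} passing to the Hilbert-space norm. It works directly with the trace $\vecbra{O,I_E}(\mathcal U_\mathcal{B}-\tilde{\mathcal U}_\mathcal{B})\vecket{\rho_S(0),0_E}$ in the vectorized picture, applies a single Duhamel, and then uses its Wick lemma (Lemma~\ref{lemma:wick_thm_gaussian}) on the \emph{single} bath operator $\delta A_{\alpha,s,\sigma}$ or $\delta A^\dagger_{\alpha,s,\sigma}$. In that picture the ``unpaired'' term vanishes identically because $\vecbra{I_E}\delta A^\dagger_{\alpha,s,\sigma}\vecket{0_E}=\textnormal{Tr}(\delta A^\dagger_{\alpha,s}|0\rangle\langle 0|)=0$; the operator is forced to contract either left with an $A$ or $A^\dagger$ from $\mathcal U_\mathcal{B}(t,s)$ or right with a $\tilde A$ or $\tilde A^\dagger$ from $\tilde{\mathcal U}_\mathcal{B}(s,0)$. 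Every resulting two-point function is then bounded by Cauchy--Schwarz on the vacuum, e.g.\ $|\langle 0|\delta A_{\alpha,s}A^\dagger_{\alpha,s'}|0\rangle|\le\sqrt{\textnormal V_\alpha(0)}\sqrt{[\delta A_{\alpha,s},\delta A^\dagger_{\alpha,s}]}$, so each of the four terms carries a $\sqrt{\textnormal V_\alpha(0)}$ or $\sqrt{\tilde{\textnormal V}_\alpha(0)}$ prefactor, and the double time integral yields the $t^2$ cleanly. The point is that the trace kills a single creation operator on vacuum while the norm does not; by reducing to the norm you have discarded exactly the structure the bound relies on.
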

\begin{proof}
    We begin by the standard first-order perturbation theory expression:
\begin{align}\label{eq:l2_error_manipulation}
    &\abs{\vecbra{O, I_E}\mathcal{U}_\mathcal{B}(t, 0) - \tilde{\mathcal{U}}_\mathcal{B}(t, 0) \vecket{\rho_S(0), 0_E}}\nonumber\\
    &\qquad= \abs{\int_0^t \vecbra{O, I_E}\mathcal{U}_\mathcal{B}(t, s) \big(\mathcal{H}_\mathcal{B}(s) - \tilde{\mathcal{H}}_\mathcal{B}(s)\big)\tilde{\mathcal{U}}_\mathcal{B}(s, 0)\vecket{\rho(0)}ds}, \nonumber \\
    &\qquad\leq \Theta_{\mathcal{B}}(\alpha) \bigg(\abs{\int_0^t \vecbra{O, I_E}\mathcal{U}_\mathcal{B}(t, s)\delta\mathcal{A}_{\alpha, s, \sigma}L_{\alpha, \sigma}^\dagger \tilde{\mathcal{U}}_\mathcal{B}(s, 0)\vecket{\rho_S(0), 0_E} ds} +\nonumber\\
    &\qquad \qquad \qquad \qquad \qquad \qquad \abs{\int_0^t \vecbra{O, I_E}\mathcal{U}_\mathcal{B}(t, s)\delta{A}_{\alpha, s, \sigma}^\dagger L_{\alpha, \sigma} \tilde{\mathcal{U}}_\mathcal{B}(s, 0)\vecket{\rho_S(0), 0_E} ds}\bigg), \nonumber\\
    &\qquad \leq \Theta_{\mathcal{B}}(\alpha)\bigg(\abs{\int_{0}^t \int_s^t \vecbra{I_E}A_{\alpha, s', \sigma'}^\dagger\delta A_{\alpha, s, r}\vecket{0_E}\vecbra{O, I_E}\mathcal{U}_\mathcal{B}(t, s') L_{\alpha, \sigma'}\mathcal{U}_\mathcal{B}(s', s) L^\dagger_{\alpha, \sigma}\tilde{\mathcal{U}}_\mathcal{B}(s, 0)\vecket{\rho_S(0), 0_E}ds' ds} +\nonumber\\
    &\qquad \qquad \qquad \quad \ \abs{\int_0^t \int_0^s \vecbra{I_E}\delta A_{\alpha, s, \sigma} \tilde{A}^\dagger_{\alpha, s', l}\vecket{0_E}\vecbra{O, I_E}\mathcal{U}_\mathcal{B}(t, s) L_{\alpha, \sigma}^\dagger \tilde{\mathcal{U}}_\mathcal{B}(s, s') L_{\alpha, l}\mathcal{U}_\mathcal{B}(s', 0)\vecket{\rho_S(0), 0_E}ds' ds}+ \nonumber\\
    &\qquad \qquad \qquad \quad \abs{\int_{0}^t \int_s^t \vecbra{I_E}A_{\alpha, s', \sigma'}\delta A_{\alpha, s, l}^\dagger\vecket{0_E}\vecbra{O_X, I_E}\mathcal{U}_\mathcal{B}(t, s') L^\dagger_{\alpha, \sigma'}\mathcal{U}_\mathcal{B}(s', s) L_{\alpha, l}\tilde{\mathcal{U}}_\mathcal{B}(s, 0)\vecket{\rho_S(0), 0_E}ds' ds} + \nonumber\\
    &\qquad \qquad \qquad \quad \abs{\int_0^t \int_0^s \vecbra{I_E}\delta A^\dagger_{\alpha, s, \sigma} \tilde{A}_{\alpha, s', r}\vecket{0_E}\vecbra{O, I_E}\mathcal{U}_\mathcal{B}(t, s) L_{\alpha, \sigma} \tilde{\mathcal{U}}_\mathcal{B}(s, s') L_{\alpha, r}^\dagger\mathcal{U}_\mathcal{B}(s', 0)\vecket{\rho_S(0), 0_E}ds' ds}\bigg), \nonumber\\
    &\qquad \leq \Theta_{\mathcal{B}}(\alpha) \bigg(\sup_{s, s' \in [0, t]}\abs{\vecbra{I_E}A^\dagger_{\alpha, s', \sigma'} \delta A_{\alpha, s, r}\vecket{0_E}} + \sup_{s, s' \in [0, t]}\abs{\vecbra{I_E}\delta A_{\alpha, s, \sigma}  \tilde{A}^\dagger_{\alpha, s', l}\vecket{0_E}} + \nonumber\\
    &\qquad \qquad \qquad \qquad \qquad \sup_{s,s' \in [0, t]}\abs{\vecbra{I_E}A_{\alpha, s', \sigma'} \delta A_{\alpha, s, l}^\dagger\vecket{0_E}} + \sup_{s, s' \in [0, t]}\abs{\vecbra{I_E}\delta A^\dagger_{\alpha, s, \sigma}  \tilde{A}_{\alpha, s', r}\vecket{0_E}}\bigg) \frac{t^2\norm{L_\alpha}^2}{2}.
\end{align}
Now, we note that $\norm{L_\alpha} \leq (\norm{R_{\alpha}^x} + \norm{R_{\alpha}^p}) / \sqrt{2} \leq \sqrt{2}$. Furthermore, we also note that
\begin{align}\label{eq:l2_error_sup_bound_1}
\sup_{s, s' \in [0, t]}\abs{\vecbra{I_E}A^\dagger_{\alpha, s', \sigma'} \delta A_{\alpha, s, r}\vecket{0_E}} &= \sup_{s, s' \in [0, t]}\abs{\bra{0_E}\delta A_{\alpha, s} A^\dagger_{\alpha, s'} \ket{0_E}} \nonumber\\
&\leq \sup_{s, s' \in [0, t]} \sqrt{\vecbra{0_E} \delta A_{\alpha, s}\delta A_{\alpha, s}^\dagger \ket{0_E}} \sqrt{\vecbra{0_E} A_{\alpha, s} A_{\alpha, s}^\dagger \ket{0_E}}, \nonumber\\
&\leq \sqrt{\text{V}_\alpha(0)} \sup_{s\in [0, t]}\sqrt{[\delta{A}_{\alpha, s}, \delta{A}^\dagger_{\alpha, s}]}.
\end{align}
Proceeding similarly, we can also obtain that
\begin{align}\label{eq:l2_error_sup_bound_2}
&\sup_{s, s' \in [0, t]}\abs{\vecbra{I_E}A^\dagger_{\alpha, s', \sigma'} \delta A_{\alpha, s, r}\vecket{0_E}} \leq \sqrt{\text{V}_\alpha(0)} \sup_{s\in [0, t]}\sqrt{[\delta A_{\alpha, s}, \delta A^\dagger_{\alpha, s}]}, \nonumber\\
&\sup_{s, s' \in [0, t]}\abs{\vecbra{I_E}\delta A^\dagger_{\alpha, s, \sigma}  \tilde{A}_{\alpha, s', r}\vecket{0_E}}, \sup_{s, s' \in [0, t]}\abs{\vecbra{I_E}\delta A_{\alpha, s, \sigma}  \tilde{A}^\dagger_{\alpha, s', l}\vecket{0_E}} \leq \sqrt{\tilde{\text{V}}_\alpha(0)} \sup_{s\in [0, t]}\sqrt{[\delta{A}_{\alpha, s}, \delta{A}^\dagger_{\alpha, s}]}.
\end{align}
Substituting Eqs.~\ref{eq:l2_error_sup_bound_1} and \ref{eq:l2_error_sup_bound_2} into Eq.~\ref{eq:l2_error_manipulation}, we obtain the lemma statement.
\end{proof}
\noindent We can now use lemma \ref{lemma:freq_cutoff} and \ref{lemma:error_two_L2} to obtain the following lemma quantifying the error made on introducing frequency cut-off.
\begin{lemma}\label{lemma:freq_cutoff_error}
For any $\mathcal{B}\subseteq\mathcal{A}$ and $\omega_c > 0$:
    \begin{align}
        \abs{\vecbra{O, I_E}\mathcal{U}_{\mathcal{B}}^\delta(t, 0) - \mathcal{U}_{\mathcal{B}}^{\delta, \omega_c}(t, 0)\vecket{\rho_S(0), 0_E}} \leq \frac{4\sqrt{\gamma_0} }{\sqrt{\delta^3\omega_c}} t^2 \abs{\mathcal{B}} \norm{O}\textnormal{TV}(\textnormal{V})\exp\bigg[-\frac{1}{2}\bigg(\frac{ {\omega_c}\delta}{16e}\bigg)^{1/2}\bigg].
    \end{align},
    where $\gamma_0$ is the constant introduced in lemma \ref{lemma:freq_cutoff}.
\end{lemma}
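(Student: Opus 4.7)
\textbf{Proof proposal for Lemma \ref{lemma:freq_cutoff_error}.} The plan is to invoke Lemma \ref{lemma:error_two_L2} directly, with $A_{\alpha, t} \leftarrow A^\delta_{\alpha, t}$ and $\tilde{A}_{\alpha, t} \leftarrow A^{\delta, \omega_c}_{\alpha, t}$, and then bound each of the three scalar quantities that appear on its right-hand side: $\textnormal{V}_\alpha^\delta(0)$, the analogous object $\tilde{\textnormal{V}}_\alpha(0)$ for the cutoff model, and $[\delta A_{\alpha, s}, \delta A_{\alpha, s}^\dagger]$ where $\delta A_{\alpha, s} := A^\delta_{\alpha, s} - A^{\delta, \omega_c}_{\alpha, s}$. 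The continuity hypotheses of Lemma \ref{lemma:error_two_L2} hold because the mollifier forces $\hat{v}_\alpha^\delta \in L^2$, so the relevant two-point functions are Fourier transforms of $L^1$ densities and hence continuous.

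The key observation is that all three quantities are integrals over $\omega$ of $|\hat{v}_\alpha^\delta(\omega)|^2$ restricted to different sets. By the frequency-domain representation introduced in the preliminaries, $\delta A_{\alpha, s} = \int_{|\omega| \geq \omega_c} \hat{v}_\alpha^\delta(\omega) a_{\alpha, \omega} e^{-i\omega s} d\omega/\sqrt{2\pi}$, so the canonical commutation relations $[a_{\alpha, \omega}, a^\dagger_{\alpha', \omega'}] = \delta_{\alpha, \alpha'} \delta(\omega - \omega')$ give the $s$-independent identity
\[
[\delta A_{\alpha, s}, \delta A_{\alpha, s}^\dagger] = \int_{|\omega| \geq \omega_c} |\hat{v}_\alpha^\delta(\omega)|^2 \frac{d\omega}{2\pi}.
\]
Lemma \ref{lemma:freq_cutoff} then bounds this by $(\gamma_0/\delta^2 \omega_c)\,\textnormal{TV}(\textnormal{V})\exp[-(\omega_c \delta/16e)^{1/2}]$ after using $\textnormal{TV}(\textnormal{V}_\alpha) \leq \textnormal{TV}(\textnormal{V})$. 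For the ``baseline'' factors, $\textnormal{V}_\alpha^\delta(0)$ equals $\textnormal{V}_\alpha^{\delta,\delta}(0)$ in the notation of Section \ref{sec:regularization}, so Lemma \ref{lemma:mollified_kernel}(a) yields $\textnormal{V}_\alpha^\delta(0) \leq \|\textnormal{V}_\alpha^{\delta,\delta}\|_\infty \leq \textnormal{TV}(\textnormal{V})/\delta$, and the same upper bound applies to $\tilde{\textnormal{V}}_\alpha(0)$ since integration over a strict subset of $\mathbb{R}$ only decreases $\int |\hat{v}_\alpha^\delta(\omega)|^2 d\omega/(2\pi)$.

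Plugging these three bounds into Lemma \ref{lemma:error_two_L2} produces
\[
2 t^2 \|O\| \sum_{\alpha \in \mathcal{B}} \big(2\sqrt{\textnormal{TV}(\textnormal{V})/\delta}\big)\sqrt{\tfrac{\gamma_0}{\delta^2 \omega_c}\textnormal{TV}(\textnormal{V})}\,\exp\!\bigl[-\tfrac{1}{2}(\omega_c \delta/16e)^{1/2}\bigr],
\]
which simplifies exactly to the stated right-hand side after collecting $\textnormal{TV}(\textnormal{V})$ and the powers of $\delta$. The only subtlety I anticipate is keeping track of the $\delta$-dependence cleanly: the mollification contributes a $1/\delta$ to each of $\textnormal{V}_\alpha^\delta(0)$ and $\tilde{\textnormal{V}}_\alpha(0)$ through Lemma \ref{lemma:mollified_kernel}(a), and a further $1/\delta^2 \omega_c$ prefactor to the tail estimate through Lemma \ref{lemma:freq_cutoff}, so the combined $1/\sqrt{\delta^3 \omega_c}$ scaling matches only after taking square roots of the right pieces. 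Apart from this bookkeeping, no new ideas are needed beyond the two supporting lemmas already in place.
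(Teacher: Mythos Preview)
Your proposal is correct and follows essentially the same route as the paper: invoke Lemma~\ref{lemma:error_two_L2}, bound $\textnormal{V}_\alpha^\delta(0)$ and its cutoff analogue by $\textnormal{TV}(\textnormal{V})/\delta$ via the mollifier convolution structure, and bound the commutator of the difference by the tail integral from Lemma~\ref{lemma:freq_cutoff}. The only cosmetic difference is that the paper carries out the $\textnormal{TV}(\textnormal{V})/\delta$ bound by a direct convolution estimate in place rather than citing Lemma~\ref{lemma:mollified_kernel}(a), but the content is identical.
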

\begin{proof}
    We will use lemma \ref{lemma:error_two_L2} --- we recall that $\text{V}_\alpha^\delta(t - t') = [A_{\alpha, t}^\delta, A_{\alpha, t'}^{\delta \dagger}] = \big(\text{V}_\alpha \star \eta_{\delta}\star \eta_\delta\big)(t - t')$ and hence we obtain that
    \[
 \abs{\text{V}^\delta_\alpha(t)} \leq \int_{-\infty}^\infty \abs{\text{V}_\alpha(t -\tau)}\abs{(\rho_\delta \star \rho_\delta)(\tau)}d\tau \leq \text{TV}(\text{V}_\alpha) \norm{\rho_\delta \star \rho_\delta}_\infty \leq \text{TV}(\text{V}_\alpha)\frac{\norm{\rho}_1}{\delta} = \frac{\text{TV}(\text{V}_\alpha)}{\delta},
    \]
    and
    \begin{align}\label{eq:upper_bound_v_omega_c}
    \abs{\text{V}_\alpha^{\delta, \omega_c}(t)} \leq \int_{-\omega_c}^{\omega_c} \abs{\hat{v}_{\alpha}^{\delta}(\omega)}^2 \frac{d\omega}{2\pi} \leq \int_{-\infty}^\infty \abs{\hat{v}_\alpha^\delta(\omega)}^2 \frac{d\omega}{2\pi} = \abs{\text{V}_\alpha^{\delta}(0)} \leq \frac{\text{TV}(\text{V}_\alpha)}{\delta}.
    \end{align}
    Next, we obtain from lemma \ref{lemma:freq_cutoff} that
    \begin{align}\label{eq:high_freq_cutoff}
        \abs{[\delta A_{\alpha, s}, \delta A^\dagger_{\alpha, s}]} = \int_{\abs{\omega} \geq \omega_c} \abs{\hat{v}_\alpha^\delta(\omega)}^2 \frac{d\omega}{2\pi} \leq \frac{\gamma_0}{\delta^2 \omega_c}\text{TV}(\text{V}_\alpha) \exp\bigg[-\bigg(\frac{\omega_c \delta}{16e}\bigg)^{1/2} \bigg] 
    \end{align},
    where $\gamma_0$ is the constant introduced in lemma \ref{lemma:freq_cutoff}. Using lemma \ref{lemma:error_two_L2}, we then obtain that
    \[
    \abs{\vecbra{O, I_E}\mathcal{U}^\delta(t, 0) - \mathcal{U}^{\delta, \omega_c}(t, 0) \vecket{\sigma_0, 0_E}} \leq \frac{4\sqrt{\gamma_0} }{\sqrt{\delta^3\omega_c}} t^2 \norm{O}\bigg(\sum_{\alpha} \text{TV}(\text{V}_\alpha)\bigg)\exp\bigg[-\frac{1}{2}\bigg(\frac{ {\omega_c}\delta}{16e}\bigg)^{1/2}\bigg].
    \]
    Upper bounding $\text{TV}(\text{V}_\alpha) \leq \text{TV}(\text{V})$, we obtain the lemma.
\end{proof}

\emph{Analysis of the star-to-chain transformation}. Having introduced this frequency cut-off, we now generate a set of $N_m$ modes in the bath by performing a Lanczos iteration \cite{chin2010exact} with annihilation operators $b_{\alpha, 1}, b_{\alpha, 2} \dots b_{\alpha, N_m}$ where
\[
b_{\alpha, k} = \frac{1}{\mathfrak{N}_{\alpha, k}} \int_{-\omega_c}^{\omega_c} p_{\alpha, k}(\omega)\hat{v}_{\alpha}^\delta(\omega)a_{\alpha, \omega} d\omega \text{ with }\mathfrak{N}_{\alpha, k} =\bigg[\int_{-\omega_c}^{\omega_c} \abs{p_{\alpha, k}(\omega)}^2 \abs{\hat{v}_{\alpha}^\delta(\omega)}^2 d\omega\bigg]^{1/2},
\]
and $p_{\alpha, k}(\omega)$ is a $(k - 1)^\text{th}$ degree polynomial generated recursively via the rule: $p_{\alpha, 1}(\omega) = 1$, $B_{\alpha, 0}  = 0$,
\begin{align}
&i) \ A_{\alpha, j} = \frac{\int_{-\omega_c}^{\omega_c}\omega p_{\alpha, j}^2(\omega) \abs{\hat{v}^\delta_{\alpha}(\omega)}^2 d\omega}{\int_{-\omega_c}^{\omega_c} p_{\alpha, j}^2(\omega) \abs{\hat{v}^\delta_{\alpha}(\omega)}^2 d\omega}, \nonumber\\
&ii) \ p_{\alpha, j + 1}(\omega) = (\omega - A_{\alpha, j})p_{\alpha, j}(\omega) - B_{\alpha, j - 1} p_{\alpha, j - 1}(\omega), \nonumber\\
&iii) \ B_{\alpha, j} = \frac{\int_{-\omega_c}^{\omega_c} p_{\alpha, j + 1}^2(\omega)\abs{\hat{v}_{\alpha}(\omega)}^2 d\omega}{\int_{-\omega_c}^{\omega_c}p_{\alpha, j}^2(\omega)\abs{\hat{v}_\alpha(\omega)}^2 d\omega},
\end{align}
where we use the convention that $p_{\alpha, 0}(\omega) = 0$. It can be easily verified that the modes so generated are independent i.e.~they satisfy the canonical commutation relation $[b_{\alpha, j}, b_{\alpha', j'}^\dagger] = \delta_{\alpha, \alpha'}\delta_{j, j'}$.
\emph{Finding the chain hamiltonian}. Having generated the $N_m$ modes as in step 2, we then obtain a Hamiltonian of the bath in between these modes --- for the $\alpha^\text{th}$ bath, this is given by
\[
H_{\alpha, E} = \sum_{j = 1}^{N_m}\omega_{\alpha, j}b_{\alpha, j}^\dagger b_{\alpha, j} + \sum_{j = 1}^{N_m - 1}t_{\alpha, j}\big(b_{\alpha, j}^\dagger b_{\alpha, j + 1} + \text{h.c.}\big) \text{ with }\omega_{\alpha, j} = A_{\alpha, j} \text{ and }t_{\alpha, j} = \sqrt{B_{\alpha, j - 1}}.
\]
The approximate system environment Hamiltonian corresponding to $\mathcal{B} \subseteq \mathcal{A}$ obtained after keeping $N_m$ modes of the star-to-chain transformation would then be given by
\begin{align}\label{eq:hamil_particle_num_trunc}
    H_\mathcal{B}^{\delta, \omega_c, N_m}(t) = \sum_{\alpha \in \mathcal{B}}h_\alpha + \sum_{\alpha \in \mathcal{B}} \big( L_{\alpha}^\dagger A^{\delta, \omega_c, N_m }_{\alpha , t} + \text{h.c.}\big),
\end{align}
where
\[
A^{\delta, \omega_c, N_m }_{\alpha, t} =\mathfrak{N}_{\alpha, 1} e^{iH_{\alpha, E}t} b_{\alpha, 1}e^{-iH_{\alpha, E}t}
\]
Next, we consider the error between the regularized model with a frequency cutoff, given by the Hamiltonian $H_\mathcal{B}^{\delta, \omega_c}(t)$, defined in Eq.~\ref{eq:hamil_freq_cutoff} and the Hamiltonian $H_\mathcal{B}^{\delta, \omega_c, N_m}(t)$ in Eq.~\ref{eq:hamil_particle_num_trunc}. For this, we will need the following lemma from Ref.
\begin{lemma}[Ref.~\cite{trivedi2022description}]\label{lemma:old_lemma_star_to_chain}
    Suppose $\hat{v}_\alpha^{\delta, \omega_c, N_m}(\omega; t)$ is such that
    \[
    A^{\delta, \omega_c, N_m}_{\alpha, t} = \int_{-\omega_c}^{\omega_c} \hat{v}_\alpha^{\delta, \omega_c, N_m}(\omega; t)a_{\alpha, \omega} \frac{d\omega}{\sqrt{2\pi}},
    \]
    then
    \[
    \int_{-\omega_c}^{\omega_c}\abs{\hat{v}_\alpha^{\delta, \omega_c, N_m}(\omega; t) - \hat{v}_\alpha^\delta(\omega) e^{-i\omega t}}^2 \frac{d\omega}{2\pi} \leq 2N_m^2 \bigg(\frac{2e\omega_ct}{N_m}\bigg)^{N_m}\bigg(\int_{-\omega_c}^{\omega_c}\abs{\hat{v}_\alpha^{\delta}(\omega)}^2 \frac{d\omega}{2\pi}\bigg) 
    \]
\end{lemma}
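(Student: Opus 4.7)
The strategy is to recognize the star-to-chain construction as a Krylov/Lanczos iteration in disguise, and to reduce the claimed $L^2$ estimate to a polynomial-approximation bound for the scalar exponential $e^{-i\omega t}$ on $[-\omega_c, \omega_c]$.

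\emph{Frequency-domain reformulation.} First, I would pass to the single-particle Hilbert space $\mathcal{H}_\alpha := L^2([-\omega_c,\omega_c], |\hat{v}_\alpha^\delta(\omega)|^2 d\omega/(2\pi))$. The defining Lanczos recursion makes $\{f_k := p_{\alpha,k}/\mathfrak{N}_{\alpha,k}\}_{k=1}^{N_m}$ an orthonormal basis of the subspace $\mathcal{K}_{N_m}\subseteq \mathcal{H}_\alpha$ of polynomials of degree $\leq N_m-1$, and the three-term recurrence shows that the single-excitation action of $H_{\alpha,E}$ is exactly the Jacobi matrix $M := \Pi_{N_m}\mathfrak{M}\Pi_{N_m}$, where $\mathfrak{M}$ denotes multiplication by $\omega$ and $\Pi_{N_m}$ is the orthogonal projector onto $\mathcal{K}_{N_m}$. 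Writing $|e_1\rangle = f_1$, the chain operator corresponds (as a coefficient against $a_{\alpha,\omega}$) to $(e^{-iMt}|e_1\rangle)(\omega)$, whereas the un-truncated frequency-cutoff operator corresponds to $(e^{-i\mathfrak{M} t}|e_1\rangle)(\omega) = e^{-i\omega t} f_1(\omega)$. Consequently, the integral on the left-hand side of the lemma equals $\mathfrak{N}_{\alpha,1}^2 \norm{(e^{-iMt}-e^{-i\mathfrak{M}t})|e_1\rangle}_{\mathcal{H}_\alpha}^2$, with $\mathfrak{N}_{\alpha,1}^2 = \int_{-\omega_c}^{\omega_c}|\hat{v}_\alpha^\delta(\omega)|^2 d\omega/(2\pi)$ by the $k=1$ Lanczos normalization.

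\emph{Krylov identity and Taylor tail.} The core observation is that $M^k|e_1\rangle = \mathfrak{M}^k|e_1\rangle$ for every $k \leq N_m - 1$: since $\mathfrak{M}^k|e_1\rangle \propto \omega^k$ is a polynomial of degree $k$, it lies in $\mathcal{K}_{N_m}$ and is thus invariant under $\Pi_{N_m}$, and a short induction on $k$ using $M = \Pi_{N_m}\mathfrak{M}\Pi_{N_m}$ closes the argument. Expanding both exponentials in Taylor series, the first $N_m$ terms therefore cancel exactly, leaving
\[
(e^{-iMt} - e^{-i\mathfrak{M}t})|e_1\rangle = \sum_{k\geq N_m}\frac{(-it)^k}{k!}(M^k - \mathfrak{M}^k)|e_1\rangle.
\]
Since the spectra of both $M$ and $\mathfrak{M}|_{\mathcal{H}_\alpha}$ lie in $[-\omega_c,\omega_c]$, each summand has $\mathcal{H}_\alpha$-norm at most $2(\omega_c t)^k/k!$. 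Summing this tail geometrically and applying Stirling in the form $k! \geq (k/e)^k$ yields an estimate of the form $(C\omega_c t/N_m)^{N_m}$, and absorbing the leftover geometric and polynomial factors into the stated prefactor $N_m^2$ produces the bound.

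\emph{Main obstacle.} The most delicate piece is the Krylov identity $M^k|e_1\rangle = \mathfrak{M}^k|e_1\rangle$ together with the normalization bookkeeping that matches $\mathfrak{N}_{\alpha,1}^2$ with $\int |\hat{v}_\alpha^\delta|^2 d\omega/(2\pi)$. A secondary subtlety is extracting the \emph{linear} exponent $(2e\omega_c t/N_m)^{N_m}$ appearing in the statement rather than the $(2e\omega_c t/N_m)^{2N_m}$ that a naive squared Taylor estimate would produce: this most likely requires either bounding the operator norm $\norm{e^{-iMt}-e^{-i\mathfrak{M}t}}$ restricted to $\mathcal{K}_{N_m}$ (rather than applying the bound to the specific vector $|e_1\rangle$) before invoking Cauchy--Schwarz, or controlling the remainder by its leading term under $\omega_c t \lesssim N_m$ and verifying the bound trivially in the opposite regime; the $N_m^2$ prefactor then accommodates the loss.
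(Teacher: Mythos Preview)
The paper does not prove this lemma; it is quoted from Ref.~\cite{trivedi2022description} without argument. Your Krylov/Lanczos reformulation followed by the Taylor-tail estimate is the standard route to such bounds and is correct in outline. The identification of $\hat v_\alpha^{\delta,\omega_c,N_m}(\cdot;t)$ with $\mathfrak N_{\alpha,1}\,\hat v_\alpha^\delta(\cdot)\,(e^{-iMt}f_1)(\cdot)$, the Krylov identity $M^k f_1=\mathfrak M^k f_1$ for $k\le N_m-1$, and the spectral bound $\|M\|,\|\mathfrak M\|\le\omega_c$ are exactly the ingredients one needs.

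On the exponent issue you flag: the cleanest resolution is your second option. Set $\theta:=2e\omega_c t/N_m$. In the regime $\theta<1$ (equivalently $2\omega_c t<N_m/e$), the Taylor tail
\[
\sum_{k\ge N_m}\frac{(2\omega_c t)^k}{k!}\;\le\;\frac{(2\omega_c t)^{N_m}}{N_m!}\cdot\frac{1}{1-2\omega_c t/(N_m{+}1)}\;\le\;\frac{e}{e-1}\,\theta^{N_m}
\]
already gives $\|(e^{-iMt}-e^{-i\mathfrak M t})f_1\|^2\lesssim\theta^{2N_m}\le\theta^{N_m}$, so the stated exponent $N_m$ is a deliberate weakening, not a sharper claim. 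In the complementary regime $\theta\ge1$, the trivial unitary bound $\|(e^{-iMt}-e^{-i\mathfrak M t})f_1\|^2\le4$ is dominated by $2N_m^2\theta^{N_m}\ge 2N_m^2$ for $N_m\ge2$. The prefactor $2N_m^2$ is therefore slack in both regimes; you do not need the operator-norm detour.
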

\begin{lemma}\label{lemma:star_to_chain_error}
For any $\mathcal{B} \subseteq \mathcal{A}$ and $N_m > 0$:
    \[\abs{\vecbra{O, I_E}\mathcal{U}_\mathcal{B}^{\delta, \omega_c}(t, 0) - \mathcal{U}_\mathcal{B}^{\delta, \omega_c, N_m}(t, 0)\vecket{\rho_S(0), 0_E}} \leq 2\sqrt{2}t^2 \abs{\mathcal{B}}\norm{O} \textnormal{TV}(\textnormal{V}) \frac{N_m}{\delta}\bigg(\frac{2e\omega_c t}{N_m}\bigg)^{N_m/2}
    \]
\end{lemma}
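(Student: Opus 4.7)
The plan is to directly invoke lemma \ref{lemma:error_two_L2}, which was tailored precisely for situations where two system-environment models differ only in their bath annihilation operators, and then to control the resulting $L^2$-type error using lemma \ref{lemma:old_lemma_star_to_chain}. Concretely, I identify $A_{\alpha, t}$ with $A_{\alpha, t}^{\delta, \omega_c}$ and $\tilde{A}_{\alpha, t}$ with $A_{\alpha, t}^{\delta, \omega_c, N_m}$, so that $\delta A_{\alpha, s} = A^{\delta, \omega_c, N_m}_{\alpha, s} - A^{\delta, \omega_c}_{\alpha, s}$. Both $A^{\delta, \omega_c}_{\alpha, s}$ and $A^{\delta, \omega_c, N_m}_{\alpha, s}$ admit frequency-domain representations supported on $[-\omega_c, \omega_c]$, so the commutator $[\delta A_{\alpha, s}, \delta A^\dagger_{\alpha, s}]$ is exactly the integral $\int_{-\omega_c}^{\omega_c}|\hat{v}_\alpha^{\delta, \omega_c, N_m}(\omega; s) - \hat{v}_\alpha^\delta(\omega) e^{-i\omega s}|^2 \frac{d\omega}{2\pi}$ appearing in lemma \ref{lemma:old_lemma_star_to_chain}.

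The next step is routine: lemma \ref{lemma:old_lemma_star_to_chain} immediately yields
\[
\sup_{s\in[0, t]} \sqrt{[\delta A_{\alpha, s}, \delta A^\dagger_{\alpha, s}]} \leq \sqrt{2}\, N_m \bigg(\frac{2e\omega_c t}{N_m}\bigg)^{N_m/2} \sqrt{\textnormal{V}_\alpha^\delta(0)},
\]
where I have pulled out the normalization factor $\int_{-\omega_c}^{\omega_c}|\hat{v}_\alpha^\delta(\omega)|^2 \frac{d\omega}{2\pi} = \textnormal{V}_\alpha^{\delta, \omega_c}(0) \leq \textnormal{V}_\alpha^\delta(0)$. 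I then use Eq.~\ref{eq:upper_bound_v_omega_c} to bound $\textnormal{V}_\alpha^\delta(0), \textnormal{V}_\alpha^{\delta, \omega_c}(0), \tilde{\textnormal{V}}_\alpha(0) \leq \textnormal{TV}(\textnormal{V}_\alpha)/\delta \leq \textnormal{TV}(\textnormal{V})/\delta$, which gives both the prefactor $\sqrt{\textnormal{V}_\alpha(0)} + \sqrt{\tilde{\textnormal{V}}_\alpha(0)} \leq 2\sqrt{\textnormal{TV}(\textnormal{V})/\delta}$ needed by lemma \ref{lemma:error_two_L2} and the factor $\sqrt{\textnormal{V}_\alpha^\delta(0)} \leq \sqrt{\textnormal{TV}(\textnormal{V})/\delta}$ in the commutator estimate.

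Plugging these bounds into the conclusion of lemma \ref{lemma:error_two_L2} and summing over $\alpha \in \mathcal{B}$, the two square roots combine to give $\textnormal{TV}(\textnormal{V})/\delta$, while the factor $N_m (2e\omega_c t/N_m)^{N_m/2}$ is carried over from lemma \ref{lemma:old_lemma_star_to_chain}, yielding the claimed bound up to an absolute constant (any discrepancy with the stated $2\sqrt{2}$ can be absorbed by a slightly sharper accounting of the $\sqrt{\textnormal{V}_\alpha(0)} + \sqrt{\tilde{\textnormal{V}}_\alpha(0)}$ term, noting that the two contributions are in fact controlled by the same quantity).

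There is no genuine obstacle: the lemma is really a bookkeeping corollary, since both of the hard ingredients are already in place. The only mildly delicate point is verifying that lemma \ref{lemma:error_two_L2} applies, i.e.\ that the kernels $\textnormal{V}_\alpha^{\delta, \omega_c}(t)$ and the chain-truncated analog are continuous in $t$; this follows immediately because both are inverse Fourier transforms of compactly supported $L^2$ functions on $[-\omega_c, \omega_c]$, hence smooth. Everything else is a direct substitution.
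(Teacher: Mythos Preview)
Your proposal is correct and follows essentially the same route as the paper: apply lemma \ref{lemma:error_two_L2} with $A_{\alpha,t}=A_{\alpha,t}^{\delta,\omega_c}$ and $\tilde A_{\alpha,t}=A_{\alpha,t}^{\delta,\omega_c,N_m}$, bound $[\delta A_{\alpha,s},\delta A_{\alpha,s}^\dagger]$ via lemma \ref{lemma:old_lemma_star_to_chain}, and control $\text{V}_\alpha^{\delta,\omega_c}(0)=\text{V}_\alpha^{\delta,\omega_c,N_m}(0)\le \text{TV}(\text{V})/\delta$ using Eq.~\ref{eq:upper_bound_v_omega_c}. The paper's proof is exactly this bookkeeping, and your remark about the harmless constant is consistent with what a direct substitution gives.
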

\begin{proof}
    We will again use lemma \ref{lemma:error_two_L2}. We begin  by noting that
    \[
    \abs{\text{V}_\alpha^{\delta, \omega_c, N_m}(0)} = \abs{\text{V}_\alpha^{\delta, \omega_c}(0)} \leq \frac{\text{TV}(\text{V}_\alpha)}{\delta}.
    \]
    To bound $\abs{[\delta A_{\alpha, s}, \delta A^\dagger_{\alpha, s}]}$, where $\delta A_{\alpha, s} = A_{\alpha, s}^{\delta, \omega_c} - A_{\alpha, s}^{\delta, \omega_c, N_m}$, we use lemma \ref{lemma:old_lemma_star_to_chain}:
    \[
    \abs{[\delta A_{\alpha, s}, \delta A^\dagger_{\alpha, s}]} = \int_{-\omega_c}^{\omega_c} \abs{\hat{v}_\alpha^{\delta, \omega_c, N_m}(\omega; t) - \hat{v}^\delta_\alpha(\omega) e^{-i\omega t}}^2 \frac{d\omega}{2\pi} \leq 2 \text{V}_{\alpha}^{\delta, \omega_c}(0) N_m^2 \bigg(\frac{2e\omega_c t}{N_m}\bigg)^{N_m} \leq \frac{2\text{TV}(\text{V}_\alpha)}{\delta}  N_m^2  \bigg(\frac{2e\omega_c t}{N_m}\bigg)^{N_m}.
    \]
    From these estimates together with $\textnormal{TV}(\text{V}_\alpha) \leq \textnormal{TV}(\text{V})$, we obtain the lemma statement.
\end{proof}

\subsection{Proof of proposition 2}
Given a local observable $O_X$ supported on region $X$, and the initial state being vacuum in the environment. We will consider two models --- one is the original model described by the Hamiltonian $H(t)$, and the next is the model obtained on performing the star to chain transformation $H^{\delta, \omega_c, N_m}(t)$ and we are interested in analyzing the error between the expected value of the local observable in both of these models. We will estimate this error, $\text{Err}_{O_X}$ as a sum of three parts:
\[
\text{Err}_{O_X} \leq \Delta_{O_X}(t, 0; l) + \Delta_{O_X}^{\delta, \omega_c, N_m}(t, 0; l) + \text{Err}_{O_X}^{l},
\]
where $\Delta_{O_X}(t, 0; l)$ is the error in the local observable on truncating the Hamiltonian $H(t)$ locally as defined in Eq.~\ref{eq:delta_Ox_redef}, $\Delta_{O_X}(t, 0; l)$ is this truncation error but for the Hamiltonian $H^{\delta, \omega_c, N_m}(t)$ and $\text{Err}_{O_X}^{l}$ is the error in between the dynamics of the two truncated Hamiltonians which are both now supported only within the region of distance $l$ around $X$. 

The error $\Delta_{O_X}(t, 0; l)$ can be directly estimated from proposition 1. A good estimate of the error $\Delta_{O_X}^{\delta, \omega_c, N_m}(t, 0; l)$ is slightly more subtle, and requires a good estimate of the Lieb-Robinson velocity corresponding to the Hamiltonian $H^{\delta, \omega_c, N_m}(t)$, which in turn requires us to estimate the total variation of its kernels --- more specifically, if $\text{V}_\alpha^{\delta, \omega_c, N_m}(t - t') = [A_{\alpha, t}^{\delta, \omega_c, N_m}, A_{\alpha, t'}^{\delta, \omega_c, N_m \dagger}]$, then we define the upper bounding kernel $\text{V}^{\delta, \omega_c, N_m}$ as usual:
\[
\text{V}^{\delta, \omega_c, N_m}(\tau) = \sup_{\alpha \in \mathcal{A}}\  \abs{\text{V}_\alpha^{\delta, \omega_c, N_m}(\tau)}.
\]
Note that since we are only interested in the dynamics from $0$ to $t$, we will only estimate the total variation of the kernels corresponding to $H^{\delta, \omega_c, N_m}(t)$ in the time interval $[0, t]$:  $\text{TV}(\text{V}^{\delta, \omega_c, N_m}; [0, t])$. To do so, we will utilize the fact that $\text{V}_\alpha^\delta(t)$ is close to $\text{V}_\alpha^{\delta, \omega_c, N_m}(t)$. In particular,
\begin{align*}
\abs{\text{V}_\alpha^\delta(t) - \text{V}_\alpha^{\delta, \omega_c, N_m}(t)} \leq  \abs{\text{V}_\alpha^\delta(t) - \text{V}_\alpha^{\delta, \omega_c}(t)} + \abs{\text{V}_{\alpha, \omega_c}^\delta(t) - \text{V}_\alpha^{\delta, \omega_c, N_m}(t)}.
\end{align*}
Now, note that from lemma \ref{lemma:freq_cutoff} and $\text{TV}(\text{V}_\alpha) \leq \text{TV}(\text{V})$,
\begin{align*}
    \abs{\text{V}_\alpha^\delta(t) - \text{V}_\alpha^{\delta, \omega_c}(t)} = \abs{\int_{\abs{\omega} \geq \omega_c} \abs{\hat{v}_\alpha^\delta(\omega)}^2 \frac{d\omega}{2\pi}} \leq \frac{\gamma_0}{\delta^2 \omega_c} \text{TV}(\text{V}) \exp\bigg[-\bigg(\frac{\omega_c \delta}{16e} \bigg)^{1/2}\bigg],
\end{align*}
where we have used lemma \ref{lemma:freq_cutoff}. Next, consider [$\hat{v}_\alpha^{\delta, \omega_c, N_m}(\omega; t)$ is defined in lemma ]
\begin{align*}
    \abs{\text{V}_{\alpha}^{\delta, \omega_c}(t) - \text{V}_\alpha^{\delta, \omega_c, N_m}(t)} &= \abs{\int_{-\omega_c}^{\omega_c} \abs{\hat{v}^\delta_\alpha(\omega)}^2 e^{-i\omega t}\frac{d\omega}{2\pi} - \int_{-\omega_c}^{\omega_c}\big(\hat{v}_\alpha^{\delta, \omega_c, N_m}(\omega; 0)\big)^*\hat{v}_\alpha^{\delta, \omega_c, N_m}(\omega; t) \frac{d\omega}{2\pi}}, \nonumber\\
    &\numeq{1}\abs{\int_{-\omega_c}^{\omega_c} \big({\hat{v}^\delta_\alpha(\omega)}\big)^*\big(\hat{v}_\alpha^\delta(\omega) e^{-i\omega t} - \hat{v}^{\delta, \omega_c, N_m}_\alpha(\omega; t)\big)\frac{d\omega}{2\pi}}, \nonumber \\
    &\leq \bigg[\int_{-\omega_c}^{\omega_c} \abs{\hat{v}_\alpha^\delta(\omega)}^2 \frac{d\omega}{2\pi}\bigg]^{1/2}\bigg[\int_{-\omega_c}^{\omega_c}\abs{\hat{v}_\alpha^\delta(\omega) e^{-i\omega t} - \hat{v}^{\delta, \omega_c, N_m}_\alpha(\omega; t)}^2 \frac{d\omega}{2\pi}\bigg]^{1/2}, \nonumber\\
    &\numleq{2} \sqrt{2}N_m \bigg(\frac{2e\omega_c t}{N_m}\bigg)^{N_m/2} \int_{-\omega_c}^{\omega_c} \abs{\hat{v}_\alpha^\delta(\omega)}^2 \frac{d\omega}{2\pi}, \nonumber\\
    &=\sqrt{2}N_m \bigg(\frac{2e\omega_c t}{N_m}\bigg)^{N_m/2} \text{V}_\alpha^{\delta, \omega_c}(0) \numleq{3} \sqrt{2}N_m \bigg(\frac{2e\omega_c t}{N_m}\bigg)^{N_m/2} \frac{\text{TV}(\text{V})}{\delta},
\end{align*}
where in (1) we have used the fact that $\hat{v}_\alpha^{\delta, \omega_c, N_m}(\omega; 0) = \hat{v}_\alpha^\delta(\omega)$ for almost all $\omega \in [-\omega_c, \omega_c]$ --- this is implied by lemma \ref{lemma:old_lemma_star_to_chain}, in (2) we have used lemma \ref{lemma:old_lemma_star_to_chain} and in (3) we have used Eq.~\ref{eq:upper_bound_v_omega_c}. Thus, we have that, for any $t$,
\begin{align*}
&\text{TV}(\text{V}^{\delta, \omega_c, N_m}; [0, t]) = \int_0^t \sup_{\alpha \in \mathcal{A}} \abs{\text{V}_\alpha^{\delta, \omega_c, N_m}(s)}ds, \nonumber \\
&\qquad \leq \int_0^t \sup_{\alpha}\abs{\text{V}_\alpha^{\delta, \omega_c, N_m}(s)} ds  +  \frac{\gamma_0 t}{\delta^2 \omega_c} \text{TV}(\text{V}) \exp\bigg[-2\bigg(\frac{\omega_c \delta}{16e}\bigg)^{1/2}\bigg] +   \sqrt{2}N_m t \bigg(\frac{2e\omega_c t}{N_m}\bigg)^{N_m/2} \frac{\text{TV}(\text{V})}{\delta},, \nonumber\\
&\qquad \leq \text{TV}(\text{V}) + \frac{\gamma_0 t}{\delta^2 \omega_c} \text{TV}(\text{V}) \exp\bigg[-\bigg(\frac{\omega_c \delta}{16e}\bigg)^{1/2}\bigg] +   \sqrt{2}N_m t \bigg(\frac{2e\omega_c t}{N_m}\bigg)^{N_m/2} \frac{\text{TV}(\text{V})}{\delta}.
\end{align*}
which gives us that, for dynamics upto time $t$, a lieb-Robinson velocity of the model $H^{\delta, \omega_c, N_m}(t)$ is given by
\begin{align}\label{eq:lieb_robinson_star_to_chain}
v_\text{LR}^{\delta, \omega_c, N_m} = v_\text{LR} + O\bigg(\frac{t}{\delta^2\omega_c} \exp\big(-\Omega(\sqrt{\omega_c \delta})\big)\bigg) + O\bigg(\frac{N_m t}{\delta} \bigg(\frac{2e\omega_c t}{N_m}\bigg)^{N_m/2}\bigg),
\end{align}
where $v_\text{LR}$ is the lieb Robinson velocity of the original model.

Now, we estimate the number of modes $N_m$ required to reduce the total error in the observable, $\text{Err}_{O_X}$, to $\leq \varepsilon$. We first pick the length $l$ such that $\Delta_{O_X}(t, 0; l), \Delta_{O_X}^{\delta, \omega_c, N_m}(t, 0; l) \leq O(\varepsilon)$. Since our estimate for $v_\text{LR}^{\delta, \omega_c, N_m}$ is larger than $v_\text{LR}$, it is sufficient to ensure that $ \Delta_{O_X}^{\delta, \omega_c, N_m}(t, 0; l) \leq O(\varepsilon)$ --- from proposition 1, this is obtained by making the choice
\begin{align}\label{eq:choice_ell}
l = \Theta(\log\varepsilon^{-1}) + \Theta(v_\text{LR}^{\delta, \omega_c, N_m} t).
\end{align}
For this choice $l$, we next estimate $\text{Err}_{O_X}^l$ from lemmas \ref{lemma:star_to_chain_reg_error}, \ref{lemma:freq_cutoff_error} and \ref{lemma:star_to_chain_error}, we obtain that:
\begin{align}
    \text{Err}_{O_X}^l \leq O(l^d t \lambda_0(2\delta)) + O(t^2 l^{2d}\delta) + O\bigg(\frac{l^d t^2}{\sqrt{\delta^3 \omega_c}}  \exp(-\Omega(\sqrt{\omega_c \delta}))\bigg) + O\bigg(l^d t^2\frac{N_m}{\delta}\bigg(\frac{2e\omega_c t}{N_m}\bigg)^{N_m/2}\bigg).
\end{align}
We now make the following choices for the parameters $\omega_c$ and $\delta$ ---
\[
\omega_c = \frac{N_m}{2e^2t} \text{ and }\delta = \frac{N_m^{1 - \epsilon}}{\omega_c } = 2e^2 t N_m^{-\epsilon},
\]
where $\epsilon < 1$ is a small positive constant. With this choice together with Eqs.~\ref{eq:lieb_robinson_star_to_chain} and \ref{eq:choice_ell}, we then have that
\begin{align*}
l &= \Theta(\log \varepsilon^{-1}) + \Theta(t) + O\bigg(\frac{t}{N_m^{1-2\epsilon}} e^{-\Omega(N_m^{(1-\epsilon)/2})}\bigg) + O\big(N_m^{1 + \epsilon}e^{-N_m/2}\big), 
\end{align*}
and
\[
\text{Err}^l_{O_X} \leq O\big(l^d t\lambda_0\big(4e^2 t N_m^{-\epsilon}\big) \big) + O\big(t^3 l^{2d}N_m^{-\epsilon} \big) + O\big( {l^d t N_m^{(3\epsilon - 1)/2}} e^{-\Omega(N_m^{(1 - \epsilon)/2})}\big) + O\big(l^d t N_m^{1 + \epsilon} e^{-N_m/2}\big).
\]
Clearly, the most dominant terms in this upper bound for $\text{Err}_{O_X}^l$ are the first two since the remaining two are exponentially suppressed in $N_m$. We can then pick $N_m$ to make these two terms smaller than $O(\varepsilon)$ to obtain
\[
N_m = \Theta\bigg(\bigg(\frac{l^{2d}t^3}{\varepsilon}\bigg)^{1 + o(1)}\bigg) + 
\Theta\bigg(\bigg(t\lambda_0^{-1}\bigg(\frac{\varepsilon}{l^d t}\bigg)\bigg)^{1 + o(1)}\bigg),
\]
where, since $\epsilon$ can be chosen to be a constant arbitrarily close to 1, we have replaced the power of $1/\epsilon$ in the $O$-notation as $1 + o(1)$. With this choice of $N_m$, we see that as $t \to \infty, \varepsilon \to 0$, we can neglect the terms that are exponentially small in $N_m$ in both $l$ and $\text{Err}_{O_X}^l$ --- in particular, we then obtain that $l = \Theta(\log {\varepsilon}^{-1}) + \Theta(t)$ with which our estimate for $N_m$ becomes:
\[
N_m = \Theta\big(\big(\varepsilon^{-1}{t^{2d + 3}}\big)^{1 + o(1)}\big) + \Theta\big( \big({\varepsilon}^{-1}\log \varepsilon^{-1}\big)^{1 + o(1)}\big) + \Theta\big(\big(t\kappa_0\big(\varepsilon^{-1}\Theta(t^{d + 1}) + \Theta(t \log^d \varepsilon^{-1})\big)^{1 + o(1)} \big),
\]
where we have introduced $\kappa_0(x) = \lambda_0^{-1}(x^{-1})$. Since we have ensured $\Delta_{O_X}(t, 0; l), \Delta_{O_X}^{\delta, \omega_c, N_m}(t, 0, l), \text{Err}_{O_X}^l \leq O(\varepsilon)$, we obtain the proposition statement.

\end{document}